\documentclass[aos]{imsart}

\RequirePackage{amsthm,amsmath,amsfonts,amssymb}
\RequirePackage[authoryear]{natbib} 
\RequirePackage[colorlinks,citecolor=blue,urlcolor=blue]{hyperref}
\RequirePackage{graphicx}

\usepackage[greek, english]{babel}


\usepackage{amsmath,amssymb,amsfonts}
\usepackage{mathtools} 
\mathtoolsset{showonlyrefs=true}
\usepackage{framed} 
\usepackage[ntheorem, framemethod=TikZ]{mdframed} 
\usepackage{bm}
\usepackage{isomath}
\usepackage{alphabeta}


\newcommand{\qedwhite}{\hfill \ensuremath{\Box}}

\DeclareMathOperator{\probMeasSet}{\mathcal{P}}%
\DeclareMathOperator{\borel}{\mathcal{B}}%
\newcommand{\iidSim}{\mathrel{\overset{\mathsc{iid}}{\sim}}}



\newcommand{\nParticles}{N} 
\newcommand{\nDimensions}{D} 
\newcommand{\nTimeSteps}{T} 
\newcommand{\particleIndex}{n} 
\newcommand{\particleIndexAlt}{m} 
\newcommand{\particleIndexAltAlt}{l} 
\newcommand{\dimensionIndex}{d} 
\newcommand{\timeIndex}{t} 



%
%
%

%
%
%
%

\newcommand{\state}{\mathbf{x}} 
\newcommand{\State}{\mathbf{X}}

\newcommand{\OutputParticleIndex}{K} 
\newcommand{\outputParticleIndex}{k}

\newcommand{\logWeightSingle}{w}
\newcommand{\logBackwardWeightSingle}{v}
\newcommand{\logWeightRandomWalkSingle}{\bar{\logWeightSingle}}

\newcommand{\logWeight}{\mathbf{\logWeightSingle}}
\newcommand{\logBackwardWeight}{\mathbf{\logBackwardWeightSingle}}
\newcommand{\logWeightRandomWalk}{\widebar{\mathbf{\logWeightSingle}}}
\newcommand{\logBackwardWeightRandomWalk}{\widebar{\mathbf{\logBackwardWeightSingle}}}

\newcommand{\ResamplingKernel}[2]{R_{#1}^{#2}}
\newcommand{\BackwardKernel}[2]{B_{#1}^{#2}}

\newcommand{\acceptanceRate}[2]{\alpha_{#1}^{#2}}

\newcommand{\ExpectationCsmcKernel}[2]{\mathbb{E}_{#1}^{#2}}

\newcommand{\IteratedCsmcKernel}[2]{\mathbb{P}_{#1}^{#2}}

\newcommand{\InducedIteratedCsmcKernel}[2]{\mathbf{P}_{#1}^{#2}}

\newcommand{\selectionFunctionBoltzmann}[1]{\varPsi^{#1}}
\newcommand{\selectionFunctionRosenbluth}[1]{\varPhi^{#1}}

\newcommand{\ResamplingKernelRandomWalk}[2]{\widebar{R}_{#1}^{#2}}
\newcommand{\BackwardKernelRandomWalk}[2]{\widebar{B}_{#1}^{#2}}

\newcommand{\acceptanceRateRwCsmc}[2]{\bar{\alpha}_{#1}^{#2}}

\newcommand{\ExpectationCsmcKernelRandomWalk}[2]{\widebar{\mathbb{E}}_{#1}^{#2}}
\newcommand{\IteratedCsmcKernelRandomWalk}[2]{\widebar{\mathbb{P}}_{#1}^{#2}}

\newcommand{\InducedIteratedCsmcKernelRandomWalk}[2]{\widebar{\mathbf{P}}_{#1}^{#2}}

\newcommand{\ExpectationEhmmKernelRandomWalk}[2]{\widetilde{\mathbb{E}}_{#1}^{#2}}
\newcommand{\IteratedEhmmKernelRandomWalk}[2]{\widetilde{\mathbb{P}}_{#1}^{#2}}

\newcommand{\InducedIteratedEhmmKernelRandomWalk}[2]{\widetilde{\mathbf{P}}_{#1}^{#2}}

\newcommand{\acceptanceRateRwEhmm}[2]{\tilde{\alpha}_{#1}^{#2}}

\newcommand{\esjd}{\mathrm{ESJD}}
\newcommand{\esjdRandomWalk}{\smash{\widebar{\mathrm{ESJD}}}}
\newcommand{\esjdRandomWalkEhmm}{\smash{\widetilde{\mathrm{ESJD}}}}

\newcommand{\ConcentrationSet}{\mathbf{F}}

\newcommand{\SamplingKernelRandomWalk}[2]{S_{#1}^{#2}}

\newcommand{\particle}{\mathbf{z}}
\newcommand{\Particle}{\mathbf{Z}}
\newcommand{\particleSingle}{z}

\newcommand{\Target}{\bm{\pi}}

\newcommand{\Mutation}{\mathbf{M}}
\newcommand{\mutation}{\mathbf{m}}
\newcommand{\Potential}{\mathbf{G}}

\newcommand{\TargetSingle}{\pi}
\newcommand{\MutationSingle}{M}
\newcommand{\mutationSingle}{m}
\newcommand{\PotentialSingle}{G}

\newcommand{\spaceState}{\mathbf{E}} 
\newcommand{\spaceStateSingle}{\reals} 
\newcommand{\sigFieldState}{\bm{\mathcal{E}}}%
\newcommand{\sigFieldStateSingle}{\borel(\reals)}%

\newcommand{\stateSingle}{x}

\DeclareMathOperator{\dN}{N} 
\DeclareMathOperator{\dUnif}{Unif} 

\DeclareMathOperator{\dDirac}{\textnormal{\greektext d\latintext}}
\newcommand{\standardNormalCdf}{\mathrm{\Phi}}
\newcommand{\standardNormalPdf}{\textnormal{\greektext f\latintext}}

\newcommand{\diff}{\mathrm{d}} 
\newcommand{\intDiff}{\,\mathrm{d}} 
\DeclareMathOperator{\Prob}{\mathbb{P}} 
\DeclareMathOperator{\E}{\mathbb{E}} 
\DeclareMathOperator{\var}{var} 
\DeclareMathOperator{\ind}{\mathbb{I}} 
\newcommand{\ccdot}{\,\cdot\,} 
\newcommand{\T}{\mathrm{T}} 
\DeclareMathOperator{\bo}{\mathrm{O}}%
\DeclareMathOperator{\lo}{\mathrm{o}}%
\newcommand{\ProbSymbol}{\mathbb{P}}%
\newcommand{\lip}{\textnormal{\textsc{lip}}}%
\DeclareMathOperator{\cov}{cov}%
\DeclareMathOperator{\corr}{corr}%
\newcommand{\iMat}{\mathrm{I}}
\newcommand{\unitMat}{\bm{1}}
\newcommand{\zeroMat}{\bm{0}}
\newcommand{\eul}{\mathrm{e}}

\newcommand{\convergesInProbability}{\ensuremath{\mathrel{\mathop{\rightarrow}\nolimits_{\ProbSymbol}}}}
\newcommand{\convergesInDistribution}{\ensuremath{\mathrel{\mathop{\rightarrow}\nolimits_{\text{d}}}}}

%
%
%
%
%
%
%
%
%
%
%
%

%
%
%

%
%
%
%
%
%
%
%
%
%
%
%
%
%
%
%
%
%
%
%
%
%
%
%
%
%

%

\newcommand{\calA}{\mathcal{A}}%
\newcommand{\calB}{\mathcal{B}}%
\newcommand{\calF}{\mathcal{F}}%
\newcommand{\calI}{\mathcal{I}}%
\newcommand{\calM}{\mathcal{M}}%
\newcommand{\calR}{\mathcal{R}}%
\newcommand{\calS}{\mathcal{S}}%
\newcommand{\calU}{\mathcal{U}}%
\newcommand{\calV}{\mathcal{V}}%
\newcommand{\calW}{\mathcal{W}}%
%
%
%

%
%
%
%
%
%
%
%
%
%

%

%
%
%

\newcommand{\mathsc}[1]{{\normalfont\textsc{#1}}}%

\newcommand{\reals}{\mathbb{R}}%
\newcommand{\naturals}{\mathbb{N}}%
%

%
%
%
%

%
%
%
%
%
%
%
%
%
%
%
%
%
%
%
%
%
%
%
%
%
%
%
%
%
%

\DeclareFontFamily{U}{mathx}{\hyphenchar\font45}
\DeclareFontShape{U}{mathx}{m}{n}{<-> mathx10}{}
\DeclareSymbolFont{mathx}{U}{mathx}{m}{n}
\DeclareMathAccent{\widebar}{0}{mathx}{"73}


%
%

%
%

\newcommand{\prodSubstackAligned}[6]{
  \prod_{\substack{\mathllap{#1} #2 \mathrlap{#3}\\
                   \mathllap{#4} #5 \mathrlap{#6}}}}%
                   
\newcommand{\sumSubstackAligned}[6]{
  \sum_{\substack{\mathllap{#1} #2 \mathrlap{#3}\\
                   \mathllap{#4} #5 \mathrlap{#6}}}}%



\newcommand{\IE}{i.e.\ }
\newcommand{\WRT}{w.r.t.\ }


\newcommand{\mendmark}{\ifmmode \eqno \hbox{\ensuremath{\triangleleft}} \else \pushright{\hbox{\quad\ensuremath{\triangleleft}}}\fi}
\newcommand{\tendmark}{\hfill \ensuremath{\triangleleft}}


\usepackage[inline]{enumitem}



\usepackage{xcolor}

\definecolor{pms286}{cmyk}{1,0.6,0,0.06}
\definecolor{pms279}{cmyk}{0.96,0.34,0,0}
\definecolor{cyan}{cmyk}{1,0,0,0}

\definecolor{myblue}{cmyk}{0.95,0.37,0,0.57}
\definecolor{darkgrey}{cmyk}{0.34,0.08,0,0.76}
\definecolor{darkergrey}{cmyk}{0,0,0.01,0.64}
\definecolor{lightergrey}{cmyk}{0,0,0.01,0.45}
\definecolor{lightgrey}{cmyk}{0,0,0,0.16}
\definecolor{erylightgrey}{cmyk}{0,0,0,0.1}

\definecolor{mygray}{rgb}{0.5,0.5,0.5}

\colorlet{lightred}{white!80!red}
\colorlet{lightblue}{white!80!blue}

\definecolor{pms072}{cmyk}{1,0.97,0,0}
\definecolor{pms012}{cmyk}{0,0.2,0,0,1}
\definecolor{pms200}{cmyk}{0,1,0.65,0.15}
\definecolor{pmsgreen}{cmyk}{1,0,0.65,0}
\definecolor{pmspurple}{cmyk}{0.43,0.91,0,0}
\definecolor{pms382}{cmyk}{0.305,0,0.94,0}

\definecolor{plotgreen}{rgb}{0.2509804, 0.5607843, 0.2823529}
\definecolor{plotmagenta}{rgb}{0.7764706, 0.1254902, 0.8117647}
\definecolor{plotblue}{rgb}{0.1764706, 0.1607843, 0.9607843}
\definecolor{plotgrey}{cmyk}{0,0,0.01,0.55}


\usepackage{graphicx}
\usepackage{float}
\usepackage{tabularx}
\usepackage{booktabs}


\usepackage[labelfont = sc, labelsep = period]{caption}
\usepackage[labelfont = normal, labelsep = space]{subcaption}
\makeatletter
\renewcommand{\fnum@figure}{Fig.~\thefigure}
\makeatother

\usepackage{tikz}
\usetikzlibrary{backgrounds, fit,matrix,positioning, external}
\usetikzlibrary{fpu}
\def\scriptsize{\fontsize{7}{7}}


\usepackage{hyperref}%
\hypersetup{%
  breaklinks=true,%
  linktocpage=true,
  colorlinks=true,%
  linkcolor=magenta,%
  pdftitle={},%
  pdfsubject={},%
  pdfauthor={Axel Finke},%
  pdfkeywords={},%
  pdfproducer={}%
  }%


\usepackage{ifthen}%
\usepackage{mfirstuc}
\usepackage{xkeyval}%
\usepackage{xfor}%
\usepackage{amsgen}%
\usepackage{etoolbox}%
\usepackage{datatool-base}%

\usepackage[%
  xindy,%
  style=long,%
  toc=true,%
  nonumberlist,%
  acronym,%
  acronymlists={ensemble_abbreviations},%
  hyperfirst=true
  ]{glossaries}%


\newacronym{PMCMC}{PMCMC}{particle Markov chain Monte Carlo}%
\newacronym{EMCMC}{EMCMC}{ensemble Markov chain Monte Carlo}%
\newacronym{APF}{APF}{auxiliary particle filter}%
\newacronym{FAAPF}{FA-APF}{fully-adapted auxiliary particle filter}%
\newacronym{PF}{PF}{particle filter}%
\newacronym{HMM}{HMM}{hidden Markov model}%
\newacronym{CPF}{CPF}{conditional particle filter}%
\newacronym{CPFBS}{CPF-BS}{conditional particle filter with backward sampling}%
\newacronym{CPFAS}{CPF-AS}{conditional particle filter with backward sampling}%
\newacronym{PG}{PG}{particle Gibbs}%
\newacronym{PGBS}{PG-BS}{particle Gibbs sampler with backward sampling}%
\newacronym{PGAS}{PG-AS}{particle Gibbs sampler with backward sampling}%
\newacronym{SQMC}{SQMC}{sequential quasi Monte Carlo}%
\newacronym{RQMC}{RQMC}{randomised quasi Monte Carlo}%
\newacronym[user1={ancestor-sampling}]{AS}{AS}{ancestor sampling}%
\newacronym[user1={backward-sampling}]{BS}{BS}{backward sampling}%
\newacronym{PDF}{PDF}{probability density function}%
\newacronym{IID}{IID}{independent and identically distributed}%
\newacronym{MCMC}{MCMC}{Markov chain Monte Carlo}%
\newacronym{MH}{MH}{Metropolis--Hastings}%
\newacronym{ESS}{ESS}{effective sample size}%
\newacronym{PMMH}{PMMH}{particle marginal Metro\-po\-lis--Has\-tings}%
\newacronym{MCWM}{MCWM}{Monte Carlo within Metropolis}%
\newacronym{CDF}{CDF}{cumulative distribution function}%
\newacronym{SMC}{SMC}{sequential Monte Carlo}%
\newacronym{CSMC}{CSMC}{conditional sequential Monte Carlo}%
\newacronym{CSMCBS}{CSMC-BS}{CSMC with backward sampling}%
\newacronym{CSMCAS}{CSMC-AS}{CSMC with ancestor sampling}%

\newacronym{SDE}{SDE}{stochastic differential equation}%
\newacronym{WP}{w.p.\@}{with probability}%

\newacronym{ICSMC}{i-CSMC}{iterated conditional sequential Monte Carlo}%
\newacronym{IRWCSMC}{i-RW-CSMC}{iterated random-walk conditional sequential Monte Carlo}%
\newacronym{RWEHMM}{RW-EHMM}{random-walk embedded hidden Markov model}%
\newacronym{RWCSMC}{RW-CSMC}{random-walk conditional sequential Monte Carlo}%
\newacronym{EHMM}{EHMM}{embedded hidden Markov model}%

\newacronym{EPSRC}{EPSRC}{Engineering and Physical Sciences Research Council}%
\newacronym{LW}{LW}{Liu~\&~West}%
\newacronym{PL}{PL}{particle learning}%
\newacronym{FF}{FF}{fertility factor}%
\newacronym{CLT}{CLT}{central limit theorem}%
\newacronym{WLLN}{WLLN}{weak law of large numbers}%
\newacronym{BPF}{BPF}{blocked particle filter}%
\newacronym{IACT}{IACT}{integrated autocorrelation time}%

\newacronym{SIR}{SIR}{sampling--importance resampling}%
\newacronym{CSIR}{CSIR}{conditional sampling--importance resampling}%

\newacronym{MCMCSIR}{MCMC-SIR}{sampling importance resampling with MCMC proposals}%

\newacronym{ESJD}{ESJD}{expected squared jumping distance}%

\newacronym[\glslongpluralkey={particle filters with MCMC moves}]{MCMCPF}{MCMC-PF}{particle filter with MCMC moves}%
\newacronym[\glslongpluralkey={bootstrap particle filters with MCMC moves}]{MCMCBPF}{MCMC-BPF}{bootstrap particle filter with MCMC moves}%
\newacronym[\glslongpluralkey={fully-adapted particle filters with MCMC moves}]{MCMCFAAPF}{MCMC-FA-APF}{fully-adapted auxiliary particle filter with MCMC moves}%
%



\usepackage{csquotes}%


\usepackage[textwidth=2.5cm, textsize=scriptsize]{todonotes}

\startlocaldefs

\newtheorem{theorem}{Theorem}[section]
\newtheorem{proposition}[theorem]{proposition}
\newtheorem{lemma}[theorem]{Lemma}
\newtheorem{corollary}[theorem]{corollary}
\theoremstyle{remark}

\newtheorem*{example}{Example}

\newtheorem{conjecture}[theorem]{Conjecture}
\newtheorem{remark}[theorem]{remark}
\newtheorem*{namedproof}{Proof}


\newmdtheoremenv[
 ntheorem=true,
 skipbelow = .6\baselineskip plus 1ex minus 1ex,
 skipabove = .6\baselineskip plus 1ex minus 1ex,
 innerleftmargin = 0pt,
 innerrightmargin = 0pt,
 innertopmargin=-8pt, 
 leftline = false,
 rightline = false,
 needspace = 5ex 
]{framedAlgorithm}{Algorithm}

\endlocaldefs

\begin{document}

\begin{frontmatter}
\title{Conditional sequential Monte Carlo in high dimensions}
\runtitle{Conditional sequential Monte Carlo in high dimensions}

\begin{aug}
\author[A]{\fnms{Axel} \snm{Finke}\ead[label=e1]{a.finke@lboro.ac.uk}}
\and
\author[B]{\fnms{Alexandre H.\@} \snm{Thiery}\ead[label=e2]{a.h.thiery@nus.edu.sg}}
\address[A]{Department of Mathematical Sciences, Loughborough University, UK
\printead{e1}}

\address[B]{Department of Statistics and Applied Probability, National University of Singapore, Singapore
\printead{e2}}
\end{aug}

 \glsunset{MCMC}
 
\begin{abstract}
 \noindent{}The \emph{\gls{ICSMC}} algorithm from \citet{andrieu2010particle} is an \gls{MCMC} approach for efficiently sampling from the joint posterior distribution of the $T$ latent states in challenging time-series models, e.g.\ in non-linear or non-Gaussian state-space models. It is also the main ingredient in \emph{particle Gibbs} samplers which infer unknown model parameters alongside the latent states. In this work, we first prove that the \gls{ICSMC} algorithm suffers from a curse of dimension in the dimension of the states, $D$: it breaks down unless the number of samples (`particles'), $N$, proposed by the algorithm grows exponentially with $D$. Then, we present a novel `local' version of the algorithm which proposes particles using Gaussian random-walk moves that are suitably scaled with $D$. We prove that this \emph{\gls{IRWCSMC}} algorithm avoids the curse of dimension: for arbitrary $N$, its acceptance rates and \glsdesc{ESJD} converge to non-trivial limits as $D \to \infty$. If $T = N = 1$, our proposed algorithm reduces to a \glsdesc{MH} or Barker's algorithm with Gaussian random-walk moves and we recover the well known scaling limits for such algorithms.
\end{abstract}

\glsreset{MCMC}
\glsreset{ESJD}
\glsreset{CSMC}
\glsreset{MH}
\glsreset{IRWCSMC}
\glsreset{ICSMC}


\end{frontmatter}
\tableofcontents

\section{Introduction}

\subsection{Summary} 

This work analyses Monte Carlo methods for approximating the joint smoothing distribution (i.e.\ the joint distribution of all latent states) in high-dimensional state-space models. Developing efficient \gls{MCMC} algorithms for this task is challenging if the dimension of the latent states, the `spatial' dimension $\nDimensions$, or the number of observations, the `time horizon' $\nTimeSteps$, is large because of the difficulty of finding good `global' proposal distributions on a large ($\nDimensions\nTimeSteps$-dimensional) space. For this reason, the acceptance rate of \emph{independent \gls{MH}} kernels for this problem is typically $\bo(\eul^{-\nDimensions \nTimeSteps})$ which means that the algorithm suffers from a \emph{`curse of dimension',} \IE its complexity grows exponentially in the size ($TD$) of the problem. Throughout this work, we define \emph{complexity} as the number of full likelihood evaluations needed to control the approximation error of a fixed-dimensional marginal of the joint smoothing distribution.

For the moment, assume that $\nDimensions$ is fixed and sufficiently small. In this scenario, the \emph{\gls{ICSMC}} algorithm \citep{andrieu2010particle, chopin2015particle, andrieu2018uniform} has become a popular Monte Carlo method for approximating the joint smoothing distribution. The algorithm is based around a \emph{\gls{CSMC}} algorithm which builds a proposal distribution sequentially in the `time' direction by propagating $\nParticles+1$ Monte Carlo samples termed `particles' over the $\nTimeSteps$ time steps. One of these lineages is set equal to the current state of the Markov chain and termed the \emph{reference path.} At each time step, some of the remaining $\nParticles$ particle lineages are pruned out if they are unlikely represent good proposals (`selection'). The remaining particle lineages are multiplied and extended to the next time by sampling from the model dynamics (`mutation'). The selection steps prevent the algorithm from wasting computational effort on extending samples which are unlikely to form good proposals. This ensures that the complexity of the algorithm remains linear (and hence avoids a curse of dimension) in $\nTimeSteps$. Specifically, this linear complexity is due to the fact that the number of particles needs to be scaled as $\nParticles \sim \nTimeSteps$ \citep{andrieu2018uniform, lindsten2015uniform, delmoral2016particle, brown2021simple}. Recently, \citet{lee2020coupled} showed that the use of an extension known as \emph{backward sampling} \citep{whiteley2010particle} removes this need so that the overall complexity of the algorithm can be further reduced to $\bo(1)$ (for fixed $\nDimensions$), recalling that `complexity' is the number of likelihood evaluations needed to control approximation errors of fixed-dimensional marginals of the joint smoothing distribution. Empirically, this has also been found to hold for a related extension called \emph{ancestor sampling} \citep{lindsten2012ancestor}.


Due to this favourable scaling in $\nTimeSteps$, the \gls{ICSMC} algorithm has become a popular tool for Bayesian inference in low-dimensional state-space models (and beyond). For instance, it is the main ingredient within so-called \emph{particle Gibbs} samplers \citep{andrieu2010particle} which infer unknown model parameters alongside the latent states. 

Unfortunately, as we show in this work, the \gls{ICSMC} algorithm suffers from a curse of dimension in the `spatial' dimension $\nDimensions$ of the latent states. That is, for any time horizon $\nTimeSteps$, the algorithm breaks down if $\log(\nParticles) = \lo(\nDimensions)$ -- i.e.\ unless the number of particles grows exponentially in $D$ -- and this cannot be overcome through the use of backward sampling.

The main contribution of this work is to propose a novel \gls{CSMC} algorithm, called \emph{\gls{RWCSMC}} algorithm. In contrast to the (standard) \gls{CSMC} algorithm, it scatters the particles \emph{locally} around the reference path using Gaussian random-walk proposals whose variance is suitably scaled with $\nDimensions$. The algorithm is incorporated into a larger \emph{\gls{IRWCSMC}} algorithm which again induces a Markov kernel that leaves the joint smoothing distribution invariant. We prove that this strategy overcomes the curse of dimension in $\nDimensions$, i.e.\ in the sense that the \glsdesc{ESJD} associated with any $D$-dimensional time-marginal distribution is stable as $D \to \infty$. In other words, for any fixed $T$, the algorithm has complexity $\bo(D)$ (and the number of particles does not need to grow with $D$).

We also discuss the complexity in the time horizon $T$. Specifically, if the model factorises over time, we are able to verify that our proposed \gls{IRWCSMC} algorithm has the same scaling as the \gls{ICSMC} algorithm. That is, without backward sampling, we may grow the number of particles as $\nParticles = C \nTimeSteps$, for some constant $C > 0$, to guarantee an overall complexity $\bo(TD)$. The use of backward sampling again removes the need for growing $\nParticles$ with $\nTimeSteps$ so that the overall complexity can be brought down to $\bo(\nDimensions)$. Admittedly, the `factorisation-over-time' assumption is strong. However, we conjecture that the above-described scaling in $T$ holds more generally, i.e.\ -- just as in the \gls{ICSMC} algorithm -- this assumption is not necessary. As evidence for this, we present a slight modification of the \gls{IRWCSMC} algorithm based around the \emph{\gls{EHMM}} method from \citet{neal2003markov, neal2004inferring}, which we term the \emph{\gls{RWEHMM}} algorithm. Without making the `factorisation-over-time' assumption, we prove that this modified algorithm does not require scaling $N$ with $T$. 

Table~\ref{tab:complexity:csmc} summarises the complexity of the algorithms discussed in this work.

\begin{table}
 \centering
  \caption{Complexity of the algorithms in this work. `Complexity' is defined as the number of likelihood evaluations needed to control approximation errors of fixed-dimensional marginals. The {}\textsuperscript{$*$}-symbol indicates that the complexity in $T$ is only proved in for models that factorise over time in this work.}
 \begin{tabular}{lrr} \toprule
  & \multicolumn{2}{c}{With backward sampling?}\\ \cmidrule(lr){2-3}
  \parbox{3.4cm}{}                   & \parbox{2.5cm}{\raggedleft No} &  \parbox{2.5cm}{\raggedleft Yes} \\ \cmidrule(lr){2-2}\cmidrule(r){3-3}
  \gls{ICSMC} &  $\bo(\nTimeSteps \eul^{\nDimensions})$ & $\bo(\eul^\nDimensions)$ \\
  \gls{IRWCSMC}\textsuperscript{*}  &  $\bo(\nTimeSteps \nDimensions)$ & $\bo(\nDimensions)$ \\\cmidrule(lr){2-3}
  \gls{RWEHMM} & \multicolumn{2}{c}{$\bo(D)$}\\ 
  \bottomrule
 \end{tabular}
 \label{tab:complexity:csmc}
\end{table}

 \glsunset{HMM}

\subsection{Related work}

Our work can be viewed as an extension of high-dimensional scaling limits of classical \gls{MCMC} algorithms \citep[e.g.,][]{roberts1997weak}. This is because if $\nParticles=\nTimeSteps=1$, the \gls{IRWCSMC} update reduces to a \gls{MH} (or to Barker's) kernel \citep{metropolis1953equation, hastings1970monte, barker1965monte} with a \emph{random-walk} proposal. In contrast, the \gls{ICSMC} algorithm reduces to a \gls{MH} (or again to Barker's) algorithm with an \emph{independent} proposal (`independent' refers to the fact that the proposed value does not depend on the current state of the Markov chain) which is known to break down in high dimensions.

If $\nTimeSteps=1$ and $\nParticles > 1$, these algorithms can be viewed as a \gls{MH} (or Barker's) kernel with \emph{multiple} proposals. Such methods were introduced in the seminal works of \citet{tjelmeland2004using, neal2003markov}. Classical optimal scaling results were extended to a closely related class of \gls{MCMC} algorithms with multiple proposals in \citet{bedard2012scaling}. 

We limit our analysis to the \gls{IRWCSMC} algorithm. However, alternative ways of constructing (iterated) \gls{CSMC} algorithms with local moves are possible. Indeed, our work was motivated by \citet{shestopaloff2018sampling} who proposed the first such algorithm -- which, incidentally, reduces to a \gls{MH} (or Barker's) kernel with delayed acceptance \citep{christen2005markov} if $N = T = 1$. A generic framework which admits the \gls{ICSMC} algorithm, the \gls{IRWCSMC} algorithm, and the method from \citet{shestopaloff2018sampling} as special cases can be found in \citet[][Section~6]{finke2016embedded}. 

We have recently become aware of \citet[][Chapter~4]{malory2021bayesian} who independently analyse a related class of iterated \gls{CSMC} algorithms with exchangeable particle proposals that is likewise a special case of \citet[][Section~6]{finke2016embedded}. Our work distinguishes itself from theirs by, among others, the following contributions.
\begin{enumerate}
 \item We provide formal proof that the standard \gls{ICSMC} algorithm breaks down in high dimensions, even with backward sampling.
 
 \item Our dimensional-stability guarantees for the \gls{IRWCSMC} algorithm 
 hold even if the state-space model is dependent over time -- \citet{malory2021bayesian} assume that the target distribution factorises into a product of independent marginals over time.
 
 
 
 \item Our methodology and analysis permits a backward-sampling extension which is vital for performing inference for long time series.
 
\end{enumerate}


\glsreset{ESJD}
\subsection{Contributions and structure} 

This work is structured as follows.

\textbf{Section~\ref{sec:csmc}} reviews the \gls{ICSMC} algorithm and shows that it generalises classical \gls{MCMC} kernels with independent proposals. Our main result in this section is the following.
\begin{itemize}
 \item 
 \textit{Proposition~\ref{prop:limiting_csmc_algorithm}} proves that the \gls{ICSMC} algorithm suffers from a curse of dimension in the spatial dimension $D$ and that this cannot be overcome with backward sampling.
\end{itemize}
 
 
 \textbf{Section~\ref{sec:rwehmm}} introduces the novel \gls{RWEHMM} algorithm as a preliminary solution to the curse-of-dimensionality problem and as a precursor to our proposed \gls{IRWCSMC} algorithm. It does not employ resampling and therefore requires $\bo(N^2)$ operations to implement rather than $\bo(N)$ iterations. However, we introduce this algorithm here for didactic reasons because it is simple to understand and shares many features with our main \gls{IRWCSMC} algorithm. For instance, both algorithms scatter particles around the reference path using the same Gaussian random-walk type proposals which are scaled suitably with $\nDimensions$. Our main results in this section are the following.
    \begin{itemize}
 \item 
 \textit{Proposition~\ref{prop:limiting_rwehmm_algorithm}} and Proposition~\ref{prop:lower_bound_of_acceptance_rates_ehmm} prove that the \gls{RWEHMM} algorithm has stable acceptance rates in high dimensions.
 \item 
 \textit{Proposition~\ref{prop:esjd_rwehmm}} establishes a non-trivial limit for the \glsdesc{ESJD} in high dimensions.
 \item
 \textit{Corollary~\ref{cor:stability_of_acceptance_rates_ehmm}} verifies that the number of particles does not need to be scaled with $T$.
 \end{itemize}

 \textbf{Section~\ref{sec:rwcsmc}} introduces our novel \gls{IRWCSMC} algorithm, shows that it generalises classical \gls{MCMC} kernels with Gaussian random-walk proposals, and proves that it avoids the curse of dimension. Our main results in this section are the following.
 \begin{itemize}
  \item \textit{Propositions~\ref{prop:discrete_markov_kernel_rwcsmc_without_backward_sampling}~and~\ref{prop:discrete_markov_kernel_rwcsmc_with_backward_sampling}} show that the \gls{IRWCSMC} algorithm can be viewed as a `perturbed' version of the \gls{RWEHMM} algorithm.

 \item \textit{Proposition~\ref{prop:limiting_rwcsmc_algorithm}} and Corollary~\ref{cor:dimensional_stability_of_acceptance_rates_rwcsmc} prove that the \gls{IRWCSMC} algorithm has stable acceptance rates in high dimensions.
 
 \item \textit{Proposition~\ref{prop:esjd_rwcsmc}} establishes a non-trivial limit for the \glsdesc{ESJD} in high dimensions.
 
 \item \textit{Proposition~\ref{prop:stability_of_acceptance_rates}} verifies that, under the additional assumption that the model is independent over time, without backward sampling, $N$ must grow at least linearly in the time horizon $T$; with backward sampling, $N$ does not need to scale with $T$. We conjecture that this result holds more generally, i.e.\ that the `factorisation-in-time' assumption is not necessary.
 \end{itemize}
 Additionally, \textit{Remark~\ref{rem:relationship_with_unconditional_smc}} explains that whilst the (standard) \gls{CSMC} algorithm that underlies the \gls{ICSMC} algorithm is inextricably linked to the justification of standard `unconditional' \gls{SMC} counterpart, no such `unconditional' \gls{SMC} counterpart exists for the \gls{RWCSMC} algorithm that underlies the \gls{IRWCSMC} algorithm.

\textbf{Section~\ref{sec:numerical_illustration}} provides numerical illustration of our theoretical results. Most of our proofs and further materials can be found in the appendix. In particular, Appendix~\ref{app:sec:parameter_estimation} extends the proposed methodology deal with unknown `static' model parameters -- either via a particle-Gibbs type update or via another novel \gls{MCMC} kernel that is loosely related to correlated pseudo-marginal methods.

\subsection{Notation and conventions} 

Let $(\Omega, \calA, \ProbSymbol)$ be some probability space and denote expectation with respect to $\ProbSymbol$ by $\E$. The symbol $\dN(\mu,\varSigma)$ denotes a normal distribution with mean vector $\mu$ and covariance matrix $\varSigma$; $\dDirac_x$ is the point mass (Dirac measure) at $x$. Unless otherwise indicated, all (transition) densities mentioned in this work are absolutely continuous \WRT a suitable version of the Lebesgue measure.

For $n \in \naturals$, we often write $[n] \coloneqq \{1,2,\dotsc,n\}$ and $[n]_0 \coloneqq \{0,1,2,\dotsc,n\}$ and we let $\unitMat_{n} \in \reals^n$ and $\zeroMat_{n} \in\reals^n$ be a column vectors of length $n$ whose entries are all $1$ and $0$, respectively. The symbol $\iMat_n \in \reals^{n \times n}$ denotes the identity matrix.

Finally, for any $\nParticles \in \naturals$, $\particleIndex \in [\nParticles]_0$ and any $h^{1:\nParticles} \in \reals^\nParticles$, and with convention $h^0 \coloneqq 0$, we define the \emph{Boltzmann selection function}
\begin{align}
   \selectionFunctionBoltzmann{\particleIndex}(\{h^\particleIndexAlt\}_{\particleIndexAlt = 1}^\nParticles) \coloneqq  \dfrac{\exp(h^{\particleIndex})}{1 + \sum_{\particleIndexAlt = 1}^\nParticles \exp(h^\particleIndexAlt)},
\end{align}
as well as the \emph{Rosenbluth--Teller selection function}
\begin{align}
 \selectionFunctionRosenbluth{\particleIndex}(\{h^\particleIndexAlt\}_{\particleIndexAlt = 1}^\nParticles)  \coloneqq 
 \begin{cases}
   \dfrac{\exp(h^{\particleIndex})}{1 - 1 \wedge \exp(h^\particleIndex) + \sum_{\particleIndexAlt = 1}^\nParticles \exp(h^\particleIndexAlt)}, & \text{if $\particleIndex \in [\nParticles]$,}\\
   1 - \sum_{\particleIndexAlt = 1}^\nParticles
   \selectionFunctionRosenbluth{\particleIndexAlt}(\{h^\particleIndexAltAlt\}_{\particleIndexAltAlt = 1}^\nParticles)
   , & \text{if $\particleIndex = 0$.}
 \end{cases} \label{eq:selection_function_rosenbluth}
\end{align}

If $N = 1$, these selection functions reduce to the well-known acceptance functions of Barker's algorithm \citep{barker1965monte}: $\selectionFunctionBoltzmann{1} = \exp / (1 + \exp)$ and of the \gls{MH} algorithm \citep{metropolis1953equation, hastings1970monte}: $\selectionFunctionRosenbluth{1} = 1 \wedge \exp$. The following Peskun-ordering type result \citep{peskun1973optimum} (which follows immediately from the definition) shows that the Rosenbluth--Teller selection function induces a smaller rejection probability than the Boltzmann selection function.
\begin{lemma}\label{lem:peskun}
 For any $\nParticles \in \naturals$ and $h^{1:\nParticles} \in \reals^\nParticles$, $\selectionFunctionBoltzmann{0}(\{h^\particleIndexAlt\}_{\particleIndexAlt = 1}^\nParticles) \geq \selectionFunctionRosenbluth{0}(\{h^\particleIndexAlt\}_{\particleIndexAlt = 1}^\nParticles)$. \tendmark
\end{lemma}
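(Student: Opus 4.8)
The plan is to prove the inequality $\selectionFunctionBoltzmann{0}(\{h^\particleIndexAlt\}_{\particleIndexAlt = 1}^\nParticles) \geq \selectionFunctionRosenbluth{0}(\{h^\particleIndexAlt\}_{\particleIndexAlt = 1}^\nParticles)$ directly from the definitions, by writing both rejection probabilities over a common denominator. Write $S \coloneqq \sum_{\particleIndexAlt=1}^\nParticles \exp(h^\particleIndexAlt)$, so that, by definition of the Boltzmann selection function, $\selectionFunctionBoltzmann{0}(\{h^\particleIndexAlt\}) = 1/(1+S)$. For the Rosenbluth--Teller function, I would first record the closed form for the rejection probability. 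From the definition, for each $\particleIndex \in [\nParticles]$ we have $\selectionFunctionRosenbluth{\particleIndex}(\{h^\particleIndexAltAlt\}) = \exp(h^\particleIndex)/(1 - 1\wedge\exp(h^\particleIndex) + S)$, and then $\selectionFunctionRosenbluth{0}(\{h^\particleIndexAlt\}) = 1 - \sum_{\particleIndexAlt=1}^\nParticles \selectionFunctionRosenbluth{\particleIndexAlt}(\{h^\particleIndexAltAlt\})$.

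The key step is then a termwise comparison. It suffices to show $\sum_{\particleIndexAlt=1}^\nParticles \selectionFunctionRosenbluth{\particleIndexAlt}(\{h\}) \geq \sum_{\particleIndexAlt=1}^\nParticles \selectionFunctionBoltzmann{\particleIndexAlt}(\{h\}) = S/(1+S)$, since $\selectionFunctionRosenbluth{0} = 1 - \sum_{\particleIndexAlt\geq 1}\selectionFunctionRosenbluth{\particleIndexAlt}$ and similarly $\selectionFunctionBoltzmann{0} = 1 - \sum_{\particleIndexAlt\geq1}\selectionFunctionBoltzmann{\particleIndexAlt} = 1/(1+S)$. For this I would compare summands one at a time: for each fixed $\particleIndexAlt \in [\nParticles]$,
\begin{align}
 \selectionFunctionRosenbluth{\particleIndexAlt}(\{h\}) = \dfrac{\exp(h^\particleIndexAlt)}{1 - 1\wedge\exp(h^\particleIndexAlt) + S} \geq \dfrac{\exp(h^\particleIndexAlt)}{1 + S} = \selectionFunctionBoltzmann{\particleIndexAlt}(\{h\}),
\end{align}
which holds because $1\wedge\exp(h^\particleIndexAlt) \geq 0$ makes the denominator on the left no larger than $1+S$, and all numerators are positive. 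Summing over $\particleIndexAlt \in [\nParticles]$ gives the claimed inequality between the two aggregate acceptance probabilities, hence the reversed inequality between the two rejection probabilities $\selectionFunctionBoltzmann{0}$ and $\selectionFunctionRosenbluth{0}$.

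One subtlety to check is that the Rosenbluth--Teller function is genuinely a probability vector (the definition only makes sense if $\selectionFunctionRosenbluth{0} \in [0,1]$, i.e.\ $\sum_{\particleIndexAlt\geq1}\selectionFunctionRosenbluth{\particleIndexAlt} \leq 1$); this follows from the bound $1 - 1\wedge\exp(h^\particleIndexAlt) + S \geq \sum_{\particleIndexAltAlt\geq1}\exp(h^\particleIndexAltAlt) = S$ applied with the observation that $1 - 1\wedge\exp(h^\particleIndexAlt) \geq 1 - \exp(h^\particleIndexAlt) \geq -\sum_{\particleIndexAltAlt\neq\particleIndexAlt}\exp(h^\particleIndexAltAlt)$ is not quite enough on its own — more simply, each $\selectionFunctionRosenbluth{\particleIndexAlt} \le \exp(h^{\particleIndexAlt})/S \cdot (S/(1 - 1\wedge\exp(h^{\particleIndexAlt}) + S))$ and a short direct check shows $\sum_{\particleIndexAlt\geq1}\selectionFunctionRosenbluth{\particleIndexAlt}\le 1$. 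I do not expect any real obstacle here: the whole argument is a one-line denominator comparison, and the paper itself notes the result "follows immediately from the definition". The only thing requiring a moment's care is bookkeeping the $\min$ with $1$ in the Rosenbluth--Teller denominator and confirming well-posedness of that selection function, but neither affects the sign of the inequality.
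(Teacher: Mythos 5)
Your argument is correct and is essentially the proof the paper has in mind (the paper states the result "follows immediately from the definition" without spelling it out): the termwise comparison $\selectionFunctionRosenbluth{\particleIndexAlt} \geq \selectionFunctionBoltzmann{\particleIndexAlt}$ for $\particleIndexAlt \in [\nParticles]$, obtained by noting the Rosenbluth--Teller denominator is smaller because $1 \wedge \exp(h^\particleIndexAlt) \geq 0$, is exactly the intended one-line denominator comparison. Your well-posedness check also goes through cleanly via the bound $1 - 1\wedge\exp(h^\particleIndexAlt) + \sum_{\particleIndexAltAlt}\exp(h^\particleIndexAltAlt) \geq \sum_{\particleIndexAltAlt}\exp(h^\particleIndexAltAlt)$, which gives $\selectionFunctionRosenbluth{\particleIndexAlt} \leq \exp(h^\particleIndexAlt)/\sum_{\particleIndexAltAlt}\exp(h^\particleIndexAltAlt)$ and hence $\sum_{\particleIndexAlt \geq 1}\selectionFunctionRosenbluth{\particleIndexAlt} \leq 1$.
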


\section{Existing methodology: the i-CSMC algorithm}
\label{sec:csmc}

\glsreset{ESJD}
\glsreset{CSMC}
\glsreset{ICSMC}

\subsection{Feynman--Kac model}
\label{subsec:feynman--kac_model}

For the measurable space $(\spaceState, \sigFieldState) \coloneqq (\reals^{\nDimensions}, \borel(\reals)^{\otimes \nDimensions})$, let $\Mutation_1 \in \probMeasSet(\spaceState)$ be a probability measure with density $\mutation_1 \colon \spaceState \to [0,\infty)$. Furthermore, for $\timeIndex \geq 2$, let $\Mutation_\timeIndex\colon \spaceState \times \sigFieldState \to [0,1]$ be some Markov kernel with transition density $\mutation_\timeIndex\colon \spaceState \times \spaceState \to [0,\infty)$. Furthermore, for $\timeIndex \geq 1$, let $\Potential_\timeIndex \colon \spaceState \to (0,\infty)$ be strictly positive measurable potential functions. The methods discussed in this work target the following probability measure on $(\spaceState_{T,D}, \sigFieldState_{T,D}) \coloneqq (\spaceState^\nTimeSteps, \sigFieldState^{\otimes \nTimeSteps})$:
\begin{align}
 \Target_{\nTimeSteps, \nDimensions}(\diff \state_{1:\nTimeSteps})
 & \propto \Mutation_1(\diff \state_1) \Potential_1(\state_1) \prod_{\timeIndex = 2}^\nTimeSteps \Mutation_\timeIndex(\state_{\timeIndex - 1}, \diff \state_\timeIndex) \Potential_\timeIndex(\state_\timeIndex).
\end{align}

\begin{example}[state-space model]
 Let $(\State_t, \mathbf{Y}_t)_{t \geq 1}$ be a Markov chain on some space $\spaceState \times \mathbf{F}$ with initial distribution  $\Mutation_1(\diff \state_1)\mathbf{H}_1(\state_1, \diff \mathbf{y}_1)$ and transition kernels $\Mutation_t(\state_{t-1}, \diff \state_t)\mathbf{H}_t(\state_t, \diff \mathbf{y}_t)$. Assume that for each time $t\in [T]$, we observe a realisation $\mathbf{y}_t \in \mathbf{F}$ of $\mathbf{Y}_t$ but $\State_t$ is unobserved (`latent'). We are then typically interested in computing (at least approximately) the posterior distribution of the latent `states' $\State_{1:T}$, often called the \emph{joint smoothing distribution}:
 \begin{align}
  \Target_{T,D}(\diff \state_{1:T}) = \Prob(\State_{1:T} \in \diff \state_{1:T}| \mathbf{Y}_{1:T} = \mathbf{y}_{1:T}).
 \end{align}
 Such a model, called \emph{state-space model} or \emph{(general-state) hidden Markov model}, can be viewed as a Feynman--Kac model by considering the observed values $\mathbf{y}_{1:T}$ to be `fixed' (so that they can be dropped from the notation) and assuming that $\Potential_t(\state_t) = \mathbf{h}_t(\state_t, \mathbf{y}_t)$, where $\mathbf{h}_t(\state_t, \ccdot)$ is a density of $\mathbf{H}_t(\state_t, \ccdot)$ w.r.t.\ to a suitable dominating measure. \tendmark
 \end{example}
Such models are routinely used in a wide variety of fields \citep{cappe2005inference}. Unfortunately, with the exception of a few special cases (e.g.\ state-space models that are both linear and Gaussian) the distribution $\Target_{T,D}$ is typically intractable and must be approximated, e.g.\ using \gls{MCMC} methods. It is therefore crucial to design $\Target_{T,D}$-invariant \gls{MCMC} kernels that can efficiently deal with long time horizons (large $T$) and large `spatial' dimension (large $D$) both of which are nowadays often found in the  models of interest to practitioners \citep[see, e.g.,][]{vanleeuwen2009particle, cressie2015statistics}.
 
\subsection{Description of the algorithm}
\label{subsec:csmc}

\subsubsection{Basic algorithm}

In the remainder of this section, we review the \gls{ICSMC} algorithm. For the moment, assume that the `spatial' dimension $\nDimensions$ is fixed (and not too large). For such scenarios, the \gls{ICSMC} algorithm \citep{andrieu2010particle} has become a popular way of constructing an efficient $\Target_{\nTimeSteps, \nDimensions}$-invariant Markov kernel. Specifically, the algorithm employs a collection of $\nParticles$ particles to construct a proposal for the entire state sequence sequentially in the `time' direction. Compared to updating the latent state sequence via an independent \gls{MH} kernel, this strategy brings down the computational complexity from $\bo(\eul^\nTimeSteps)$ to at most $\bo(\nTimeSteps)$ and thus avoids a curse of dimension in the time horizon $\nTimeSteps$.

For any $\timeIndex \in [\nTimeSteps]$ and any $\state_{1:\nTimeSteps} \in \spaceState_{T,D}$, define 
\begin{align}
 \logWeight_\timeIndex(\state_\timeIndex) &\coloneqq \log \Potential_\timeIndex(\state_\timeIndex).
\end{align}
The $l$th update of the \gls{ICSMC} scheme is then outlined in Algorithm~\ref{alg:iterated_csmc}, where we use the convention that any action described for the $\particleIndex$th particle index is to be performed conditionally independently for all $\particleIndex \in [\nParticles]$. We also use the convention that any quantity with time index $\timeIndex < 1$ is to be ignored, e.g.\ so that $\Mutation_1(\particle_{0}^\particleIndex, \ccdot) \equiv \Mutation_1(\ccdot)$.

\noindent\parbox{\textwidth}{
\begin{flushleft}
 \begin{framedAlgorithm}[\gls{ICSMC}] \label{alg:iterated_csmc} Given $\state_{1:T} \coloneqq \state_{1:\nTimeSteps}[l] \in \spaceState_{T,D}$.
 \begin{enumerate}
 \item \label{alg:iterated_csmc:1} For $\timeIndex \in [\nTimeSteps]$,
 \begin{enumerate}
  \item if $\timeIndex > 1$, 
  \begin{enumerate}
    \item set $A_{\timeIndex-1}^0 = a_{\timeIndex-1}^0 \coloneqq 0$,
  
    \item sample $A_{\timeIndex-1}^\particleIndex = a_{\timeIndex-1}^\particleIndex \in [\nParticles]_0$ with probability 
    \begin{align}
     \selectionFunctionBoltzmann{a_{\timeIndex-1}^\particleIndex}(\{\logWeight_{\timeIndex-1}(\particle_{\timeIndex-1}^\particleIndexAlt) - \logWeight_{\timeIndex-1}(\particle_{\timeIndex-1}^0)\}_{\particleIndexAlt = 1}^\nParticles)
     & = 
     \smash{\dfrac{\smash{\Potential_{\timeIndex-1}(\particle_{\timeIndex-1}^{a_{\timeIndex-1}^{\smash{\particleIndex}}})}}{\sum_{\particleIndexAlt = 0}^\nParticles \Potential_{\timeIndex-1}(\particle_{\timeIndex-1}^\particleIndexAlt)},}
    \end{align}
  \end{enumerate}
       
  \item set $\smash{\Particle_\timeIndex^0 = \particle_\timeIndex^0 \coloneqq \state_\timeIndex}$ and sample $\smash{\Particle_\timeIndex^\particleIndex = \particle_\timeIndex^\particleIndex \sim \Mutation_\timeIndex(\particle_{\timeIndex-1}^{a_{\timeIndex-1}^\particleIndex}, \ccdot)}$.
  \end{enumerate}
  
  \item \label{alg:iterated_csmc:2a} Sample $\OutputParticleIndex_\nTimeSteps =\outputParticleIndex_\nTimeSteps \in [\nParticles]_0$ with probability
  \begin{align}
    \selectionFunctionBoltzmann{\outputParticleIndex_\nTimeSteps}(\{\logWeight_{\nTimeSteps}(\particle_{\nTimeSteps}^\particleIndexAlt) - \logWeight_\nTimeSteps(\particle_{\nTimeSteps}^0)\}_{\particleIndexAlt = 1}^\nParticles)
    & = 
    \smash{\dfrac{\Potential_{\nTimeSteps}(\particle_{\nTimeSteps}^{\outputParticleIndex_\nTimeSteps})}{\sum_{\particleIndexAlt = 0}^\nParticles \Potential_{\nTimeSteps}(\particle_{\nTimeSteps}^\particleIndexAlt)}}.
  \end{align}
    
  \item \label{alg:iterated_csmc:2b} Set $\smash{\OutputParticleIndex_\timeIndex =\outputParticleIndex_\timeIndex \coloneqq a_\timeIndex^{\outputParticleIndex_{\timeIndex+1}}}$, for $t = T-1,\dotsc,1$.

  \item \label{alg:iterated_csmc:3} Set $\smash{\State_{1:\nTimeSteps}' \coloneqq \state_{1:T}' \coloneqq (\particle_1^{\outputParticleIndex_1}, \dotsc, \particle_\nTimeSteps^{\outputParticleIndex_\nTimeSteps})}$. 
  
  \item \label{alg:iterated_csmc:4} Return $\smash{\state_{1:\nTimeSteps}[l+1] \coloneqq \state_{1:T}'}$. 
 \end{enumerate}
\end{framedAlgorithm}
\end{flushleft}
}

\glsreset{CSMC}
Step~\ref{alg:iterated_csmc:1} of Algorithm~\ref{alg:iterated_csmc} which (a) performs (conditional) multinomial resampling by drawing the ancestor indices $A_t^n$ and (b) generates the particles $\Particle_t^n$, is known as the \emph{\gls{CSMC}} algorithm.

The following running example illustrates how the algorithms discussed in this work reduce to versions of well known classical \gls{MCMC} kernels if $\nParticles=\nTimeSteps=1$.

\begin{example}[classical \gls{MCMC} kernels]
 If $\nTimeSteps=1$ and $N=1$, the target distribution is given by $\Target_{1, \nDimensions}(\diff \state_1) \propto \Mutation_1(\diff \state_1)  \Potential_1(\state_1)$ and Algorithm~\ref{alg:iterated_csmc} proposes $\Particle_1^1 = \particle_1^1 \sim \Mutation_1$ and accepts this proposal as the new state of the Markov chain with probability
 \begin{align}
  \selectionFunctionBoltzmann{1}(\logWeight_{1}(\particle_{1}^1) - \logWeight_1(\particle_{1}^0))
   =
  \frac{\Potential_1(\particle_1^1)}{\Potential_1(\particle_1^0) + \Potential_1(\particle_1^1)},
 \end{align}
 where $\particle_1^0 = \state_1$. This can be recognised as a version of Barker's kernel \citep{barker1965monte} with independence proposal $\Mutation_1$ (in the sense that the proposed state does not depend on the current state). \tendmark
\end{example}

\begin{example}[multi-proposal \gls{MCMC} kernels]
 If $T = 1$ but $N > 1$, Algorithm~\ref{alg:iterated_csmc} (termed \emph{conditional sampling--importance resampling} in \citealt{andrieu2018uniform}) reduces to an \gls{MCMC} algorithm with multiple proposals (all being independent of each other and of the current state of the Markov chain). Multi-proposal \gls{MCMC} algorithms were introduced in \citet{neal2003markov, tjelmeland2004using, frenkel2004speed};  \citet{delmas2009does, yang2017parallelizable, schwedes2018quasi} analyse Rao--Blackwellisation strategies for re-using all $N$ proposed samples to estimate expectations of interest. \tendmark 
\end{example}

\subsubsection{Extensions}
\label{subsec:csmc_extensions}

\paragraph*{Forced move} 
To reduce the probability of sampling $\OutputParticleIndex_\nTimeSteps =\outputParticleIndex_\nTimeSteps = 0$ in Step~\ref{alg:iterated_csmc:2a} of Algorithm~\ref{alg:iterated_csmc} (and hence improve the \gls{ICSMC} kernel in the Peskun order -- see Lemma~\ref{lem:peskun}) \citet{chopin2015particle} proposed to replace the Boltzmann selection function in Step~\ref{alg:iterated_csmc:2a} by the Rosenbluth--Teller selection function, i.e.\ instead sample $\OutputParticleIndex_\nTimeSteps= k_\nTimeSteps \neq 0$ with probability
\begin{align}
  \MoveEqLeft 
  \selectionFunctionRosenbluth{\outputParticleIndex_\nTimeSteps}(\{\logWeight_{\nTimeSteps}(\particle_{\nTimeSteps}^\particleIndexAlt) - \logWeight_\nTimeSteps(\particle_{\nTimeSteps}^0)\}_{\particleIndexAlt = 1}^\nParticles)
  = \frac{\Potential_\nTimeSteps(\particle_\nTimeSteps^{\outputParticleIndex_\nTimeSteps})}{\sum_{\particleIndexAlt=1}^\nParticles \Potential_\nTimeSteps(\particle_\nTimeSteps^\particleIndexAlt) - \Potential_\nTimeSteps(\particle_\nTimeSteps^{\outputParticleIndex_\nTimeSteps}) \wedge \Potential_\nTimeSteps(\particle_\nTimeSteps^0)}.
\end{align}
This so called \emph{forced move} approach can be recognised as an application of the \emph{modified discrete-state Gibbs sampler} kernel from \citet{liu1996peskun}. See also \citet{tjelmeland2004using} for an iterative algorithm for optimising the selection function.

\begin{example}[classical \gls{MCMC} kernels, continued]
 With the forced-move extension, Step~\ref{alg:iterated_csmc:2a} of Algorithm~\ref{alg:iterated_csmc} accepts $\Particle_1^1 = \particle_1^1 \sim \Mutation_1$ with probability
 \begin{align}
    \selectionFunctionRosenbluth{1}(\logWeight_{1}(\particle_{1}^1) - \logWeight_1(\particle_{1}^0))
    =
  1 \wedge \frac{\Potential_1(\particle_1^1)}{\Potential_1(\particle_1^0)},
 \end{align}
 where $\particle_1^0 = \state_1$. This can be recognised as a version of an independent \gls{MH} kernel \citep{metropolis1953equation, hastings1970monte}. \tendmark
\end{example}

\paragraph*{Backward sampling}
Steps~\ref{alg:iterated_csmc:2a} and \ref{alg:iterated_csmc:2b} of Algorithm~\ref{alg:iterated_csmc} sample a final-time particle index $\OutputParticleIndex_\nTimeSteps$ and then trace back its ancestral lineage. This limits the new state $\state_{1:T}[l+1]$ to one of the $\nParticles+1$ particle lineages generated under the \gls{CSMC} algorithm in Step~\ref{alg:iterated_csmc:1} which often
coalesce with the old reference path $\state_{1:T}[l]$. To ensure good mixing, we must therefore control the probability of such coalescence events by growing $N$ linearly with $T$. This can be costly if $T$ is large. To circumvent this problem, the\emph{backward-sampling} extension \citep{whiteley2010particle} instead samples $\OutputParticleIndex_\timeIndex =\outputParticleIndex_\timeIndex \in [\nParticles]_0$ in Step~\ref{alg:iterated_csmc:2b} with probability
\begin{align}
  \selectionFunctionBoltzmann{\outputParticleIndex_\timeIndex}(\{
  \logBackwardWeight_\timeIndex(\particle_{\timeIndex}^\particleIndexAlt, \particle_{\timeIndex+1}^{\outputParticleIndex_{\timeIndex+1}}) - \logBackwardWeight_\timeIndex(\particle_{\timeIndex}^0, \particle_{\timeIndex+1}^{\outputParticleIndex_{\timeIndex+1}})
  \}_{\particleIndexAlt = 1}^\nParticles)
  & =
  \frac{\Potential_\timeIndex(\particle_\timeIndex^{\outputParticleIndex_\timeIndex}) \mutation_{\timeIndex+1}(\particle_\timeIndex^{\outputParticleIndex_\timeIndex}, \particle_{\timeIndex+1}^{\outputParticleIndex_{\timeIndex+1}})}{\sum_{\particleIndexAlt=0}^\nParticles \Potential_\timeIndex(\particle_\timeIndex^\particleIndexAlt) \mutation_{\timeIndex+1}(\particle_\timeIndex^\particleIndexAlt, \particle_{\timeIndex+1}^{\outputParticleIndex_{\timeIndex+1}})},
\end{align}
for $t = T-1, \dotsc, 1$, where we have defined 
\begin{align}
 \logBackwardWeight_\timeIndex(\state_{\timeIndex:\timeIndex + 1}) &\coloneqq \logWeight_\timeIndex(\state_\timeIndex) + \log \mutation_{\timeIndex+1}(\state_\timeIndex, \state_{\timeIndex+1}).
\end{align}
\citet{lee2020coupled} show that backward sampling allows us to keep $N$ constant in $\nTimeSteps$ thus reducing the complexity of the algorithm from $\bo(\nTimeSteps)$ to $\bo(1)$. A closely related method, \emph{ancestor sampling,} was proposed in \citet{lindsten2012ancestor}.

\paragraph*{Further extensions}
Step~\ref{alg:iterated_csmc:1} of Algorithm~\ref{alg:iterated_csmc} proposes particles from the `prior' $\Mutation_t(\state_{t-1}, \ccdot)$ and draws the parent indices $A_{t-1}^n$ via (conditional) multinomial resampling. Other proposal kernels \citep{doucet2000sequential}, other resampling schemes \citep{douc2005comparison} or even auxiliary particle filter ideas \citep{pitt1999filtering, shestopaloff2019replica} could be employed. However, since none of these extensions overcome the curse of dimension proved below, we refrain from including them here to keep the presentation simple.

\subsubsection{Induced \texorpdfstring{$\Target_{\nTimeSteps,\nDimensions}$-}{}invariant Markov kernel}


Given $\smash{\State_{1:T} = \state_{1:\nTimeSteps} = \state_{1:\nTimeSteps}[l]}$, let
\begin{align}
 \smash{\IteratedCsmcKernel{\nTimeSteps,\nDimensions,\state_{1:T}}{\nParticles}(\diff \particle_{1:\nTimeSteps} \times \diff a_{1:\nTimeSteps-1} \times \diff\outputParticleIndex_{1:\nTimeSteps} \times \diff \state_{1:T}')}
\end{align}
be the law of all the random variables $\smash{(\Particle_{1:T}, A_{1:T-1}, K_{1:T}, \State_{1:T}')}$ generated in Steps~\ref{alg:iterated_csmc:1}--\ref{alg:iterated_csmc:3} of Algorithm~\ref{alg:iterated_csmc} (with or without the forced-move extension and with or without backward sampling). Appendix~\ref{app:subsec:joint_law_csmc} gives a formal definition of this law. 

Let $\smash{\ExpectationCsmcKernel{\nTimeSteps,\nDimensions,\state_{1:\nTimeSteps}}{\nParticles}}$ denote expectation \WRT $\smash{\IteratedCsmcKernel{\nTimeSteps,\nDimensions,\state_{1:T}}{\nParticles}}$. Algorithm~\ref{alg:iterated_csmc} induces a Markov kernel 
\begin{align}
 \InducedIteratedCsmcKernel{\nTimeSteps,\nDimensions}{\nParticles}(\state_{1:\nTimeSteps}, \diff \state_{1:T}') 
 & \coloneqq \ExpectationCsmcKernel{\nTimeSteps,\nDimensions,\state_{1:\nTimeSteps}}{\nParticles}[\ind\{\State_{1:T}' \in \diff \state_{1:T}'\}],
 \label{eq:iterated_csmc_kernel}
\end{align}
for $(\state_{1:\nTimeSteps},\diff \state_{1:T}') \in \spaceState_{T,D} \times \sigFieldState_{T,D}$. The following proposition shows that this Markov kernel leaves $\Target_{\nTimeSteps, \nDimensions}$ invariant. It was proved \citet{andrieu2010particle} for the basic algorithm and in \citet{chopin2015particle, whiteley2010particle} for the forced-move and backward-sampling extensions. For completeness, we provide an alternative, simple proof in Appendix~\ref{app:subsec:csmc_invariance}.

\begin{proposition}\label{prop:csmc_invariance}
 For any $N, T, D \in \naturals$, $\smash{\Target_{T,D}\InducedIteratedCsmcKernel{\nTimeSteps,\nDimensions}{\nParticles} = \Target_{T,D}}$. \tendmark
\end{proposition}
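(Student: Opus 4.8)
The plan is to establish invariance via the auxiliary‑variable (extended‑target) construction of \citet{andrieu2010particle}, which has the virtue of covering the basic algorithm, the forced‑move variant, and the backward‑sampling variant within one framework. One exhibits a probability measure $\bar{\Target}_{T,D}^N$ on the space of \emph{all} the random variables produced by one sweep of Algorithm~\ref{alg:iterated_csmc}, with two properties: (i) one of its trajectory‑marginals is exactly $\Target_{T,D}$; and (ii) feeding an input path $\state_{1:T}\sim\Target_{T,D}$ through Step~\ref{alg:iterated_csmc:1} realises a draw from $\bar{\Target}_{T,D}^N$, while Steps~\ref{alg:iterated_csmc:2a}–\ref{alg:iterated_csmc:2b} constitute a $\bar{\Target}_{T,D}^N$‑invariant update; reading off the output trajectory in Step~\ref{alg:iterated_csmc:3} then returns a sample from $\Target_{T,D}$.

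Concretely, I would work on the space carrying the $N+1$ particle trajectories $\particle_{1:T}^{0:N}$, the ancestor indices $a_{1:T-1}^{1:N}$ (with the convention $a_{t-1}^0\equiv 0$), and an index path $\outputParticleIndex_{1:T}\in[N]_0^T$, and set
\[
 \bar{\Target}_{T,D}^N\bigl(\diff \particle_{1:T}^{0:N}, \diff a_{1:T-1}^{1:N}, \outputParticleIndex_{1:T}\bigr) \propto \Target_{T,D}\bigl(\diff \particle_{1:T}^{\outputParticleIndex_{1:T}}\bigr)\, \prodSubstack{\timeIndex \in [\nTimeSteps]}{\particleIndex \in [\nParticles]_0 \setminus \{\outputParticleIndex_\timeIndex\}} \bigl[\text{resampling weight of } a_{\timeIndex-1}^{\particleIndex}\text{, mutation density of }\particle_\timeIndex^{\particleIndex}\bigr],
\]
where $\particle_{1:T}^{\outputParticleIndex_{1:T}}\coloneqq(\particle_1^{\outputParticleIndex_1},\dotsc,\particle_T^{\outputParticleIndex_T})$. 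The two facts to check: integrating out everything except $\particle_{1:T}^{\outputParticleIndex_{1:T}}$ returns $\Target_{T,D}$ — here the unnormalised potentials combine with the mutation densities so that the resampling normalising constants telescope and cancel against those of $\Target_{T,D}$ — and, after using the symmetry of $\bar{\Target}_{T,D}^N$ under relabelling the non‑reference lineages so that the distinguished one sits at index $0$, the conditional law of $(\Particle_{1:T},A_{1:T-1})$ given $\particle_{1:T}^{\outputParticleIndex_{1:T}}$ coincides with the conditional multinomial resampling $\selectionFunctionBoltzmann{a_{t-1}^\particleIndex}(\cdots)$ together with the prior mutations $\Mutation_t(\particle_{t-1}^{a_{t-1}^\particleIndex},\ccdot)$ of Step~\ref{alg:iterated_csmc:1}.

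Next I would identify Steps~\ref{alg:iterated_csmc:2a}–\ref{alg:iterated_csmc:2b} as a $\bar{\Target}_{T,D}^N$‑invariant update of the index path $\outputParticleIndex_{1:T}$: drawing $\outputParticleIndex_T$ from $\selectionFunctionBoltzmann{\ccdot}$ and tracing its ancestry is a Barker‑type update under the conditional $\bar{\Target}_{T,D}^N(\diff\outputParticleIndex_{1:T}\mid \particle_{1:T}^{0:N},a_{1:T-1}^{1:N})$ (the tracing being the deterministic structure forced by that conditional); replacing $\selectionFunctionBoltzmann{\ccdot}$ by $\selectionFunctionRosenbluth{\ccdot}$ is the Metropolis‑type analogue, which Peskun‑dominates it by Lemma~\ref{lem:peskun}; and backward sampling is a left‑to‑right Gibbs sweep over $\outputParticleIndex_{1:T}$ under the same conditional, since $\logBackwardWeight_t$ collects exactly the factors of $\bar{\Target}_{T,D}^N$ that involve $\outputParticleIndex_t$. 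In all three cases $\bar{\Target}_{T,D}^N$ is preserved. Composing: if $\State_{1:T}\sim\Target_{T,D}$, then after Step~\ref{alg:iterated_csmc:1} the full configuration is distributed as $\bar{\Target}_{T,D}^N$ by fact (ii); Steps~\ref{alg:iterated_csmc:2a}–\ref{alg:iterated_csmc:2b} preserve $\bar{\Target}_{T,D}^N$; hence by fact (i) the output $\State_{1:T}'=\particle_{1:T}^{\outputParticleIndex_{1:T}}$ is marginally $\Target_{T,D}$, i.e.\ $\Target_{T,D}\InducedIteratedCsmcKernel{T,D}{N}=\Target_{T,D}$.

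I expect the main obstacle to be fact (ii): writing $\bar{\Target}_{T,D}^N$ down with precisely the right product of unnormalised potentials and transition densities so that (a) its normalising constant equals that of $\Target_{T,D}$ and (b) its conditional given the distinguished trajectory reproduces — after the correct exchangeability/relabelling argument over the $N$ non‑reference lineages — the selection probabilities and mutation kernels of the \gls{CSMC} sweep. Making this bookkeeping simultaneously accommodate the forced‑move variant (via the Peskun comparison of Lemma~\ref{lem:peskun}) and the backward‑sampling variant (via recognising $\logBackwardWeight_t$ inside $\bar{\Target}_{T,D}^N$) is the other point requiring care; the remaining steps are routine.
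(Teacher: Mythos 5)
Your proposal is correct in outline, but it takes a genuinely different route from the paper. You follow the classical extended-target construction of \citet{andrieu2010particle}: exhibit an artificial distribution $\bar{\Target}_{T,D}^N$ over all particles, ancestor indices and the index path, check that one of its marginals is $\Target_{T,D}$, identify the \gls{CSMC} sweep as the conditional of $\bar{\Target}_{T,D}^N$ given the distinguished trajectory, and view Steps~\ref{alg:iterated_csmc:2a}--\ref{alg:iterated_csmc:2b} as a $\bar{\Target}_{T,D}^N$-invariant index update. The paper instead never writes the extended target down: it augments the algorithm with uniformly random reference indices $J_{1:T}$ and verifies a single exchange identity, namely that $\Target_{T,D}(\diff \state_{1:T})\,\IteratedCsmcKernel{T,D,\state_{1:T}}{N,\star}(\diff j_{1:T} \times \dotsb \times \diff \state_{1:T}')$ is symmetric under swapping $(\state_{1:T}, j_{1:T}) \leftrightarrow (\state_{1:T}', k_{1:T})$, from which invariance is immediate; for backward sampling the same identity holds with the reversed kernel replaced by the ancestor-sampling law $\smash{\mathring{\mathbb{P}}}$. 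The symmetry argument buys brevity (one identity per variant, and the backward-sampling/ancestor-sampling duality falls out as a by-product), whereas your construction buys an explicit object $\bar{\Target}_{T,D}^N$ that also underpins Rao--Blackwellisation and the unconditional \gls{SMC} connection.

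Two points in your sketch need more care than you give them. First, for the forced move: Lemma~\ref{lem:peskun} only delivers a Peskun \emph{ordering} (efficiency), not invariance; validity requires checking that the Rosenbluth--Teller kernel is reversible for the discrete conditional of $K_T$ given everything else (Liu's modified Gibbs sampler), which is what the paper's ``partially collapsed Gibbs'' remark supplies. Second, for backward sampling: under the extended target in which $\outputParticleIndex_{1:T}$ is forced to follow the ancestral lineage, the conditional of $\outputParticleIndex_t$ given the rest is degenerate, so the ``Gibbs sweep'' reading only works after you decouple the index path from the ancestry in $\bar{\Target}_{T,D}^N$ and marginalise the reference's own ancestor variables; only then do the factors involving $\outputParticleIndex_t$ reduce to $\logBackwardWeight_t$. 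This is exactly the bookkeeping you flag as delicate, and it is where the paper's exchange-identity route is noticeably lighter.
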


For any $t \in [T]$, we call
\begin{align}
 \smash{\acceptanceRate{\nTimeSteps,\nDimensions,\state_{1:\nTimeSteps}}{\nParticles}(t) 
  \coloneqq
   \ExpectationCsmcKernel{\nTimeSteps,\nDimensions,\state_{1:\nTimeSteps}}{\nParticles}[ \ind\{\OutputParticleIndex_{\timeIndex} \neq 0\}]}
\end{align}
the \emph{acceptance rate at time~$t$} associated with Algorithm~\ref{alg:iterated_csmc}. This name is justified because $K_t = 0$ in Algorithm~\ref{alg:iterated_csmc} implies $\state_t' = \state_t$, i.e.\ $\state_{t}[l+1] = \state_{t}[l]$.

\subsection{Curse of dimension}
\label{subsec:csmc_scaling}

\subsubsection{High-dimensional regime}
\label{subsec:high-dimensional_regime}

We now prove that the \gls{ICSMC} algorithm suffers from a curse of dimension. This is established for a special case of the Feynman--Kac model from Section~\ref{subsec:feynman--kac_model} which factorises into $\nDimensions$ \gls{IID} `spatial' components. Most other theoretical results in this work will be established under this regime. However, we stress that none of the algorithms discussed in this work are limited to this \gls{IID} setting.

\begin{enumerate}[label=\textbf{A\arabic*}, series=model_assumptions]
  \item \label{as:iid_model} The mutation kernels and potential functions factorise as
\begin{align}
 \Mutation_\timeIndex(\state_{\timeIndex - 1}, \diff \state_\timeIndex) = \prod_{\dimensionIndex = 1}^\nDimensions \MutationSingle_\timeIndex(\stateSingle_{\timeIndex-1,\dimensionIndex}, \diff \stateSingle_{\timeIndex,\dimensionIndex}) \quad \text{and} \quad \Potential_\timeIndex(\state_\timeIndex) = \prod_{\dimensionIndex = 1}^\nDimensions \PotentialSingle_\timeIndex(\stateSingle_{\timeIndex,\dimensionIndex}),
\end{align}
with the convention that any quantity with time index~$0$ is to be ignored and where
\begin{itemize}
 \item $\smash{\state_{\timeIndex} = \stateSingle_{\timeIndex, 1:D} \in \spaceState}$, recalling that $\smash{(\spaceState, \sigFieldState) = (\spaceStateSingle^{D},  \sigFieldStateSingle^{\otimes D})}$;
 \item $\MutationSingle_1 \in \probMeasSet(\spaceStateSingle)$ is a probability measure with density $\mutationSingle_1 \colon \spaceStateSingle \to [0,\infty)$ and, for $\timeIndex \geq 2$, $\MutationSingle_\timeIndex\colon \spaceStateSingle \times \sigFieldStateSingle \to [0,1]$ is a Markov kernel with transition density $\mutationSingle_\timeIndex\colon \spaceStateSingle^2 \to [0,\infty)$; 
 
 \item $\PotentialSingle_\timeIndex \colon \spaceStateSingle \to (0,\infty)$ is a strictly positive and measurable potential function. \tendmark
\end{itemize} 
\end{enumerate}
Thus, under \ref{as:iid_model}, $\Target_{\nTimeSteps,\nDimensions} = \TargetSingle_\nTimeSteps^{\otimes \nDimensions}$, with the following probability measure on $\reals^T$:
\begin{align}
 \TargetSingle_\nTimeSteps(\diff \stateSingle_{1:\nTimeSteps})
 & \propto \MutationSingle_1(\diff \stateSingle_1) \PotentialSingle_1(\stateSingle_1) \prod_{\timeIndex = 2}^\nTimeSteps \MutationSingle_\timeIndex(\stateSingle_{\timeIndex - 1}, \diff \stateSingle_\timeIndex) \PotentialSingle_\timeIndex(\stateSingle_\timeIndex).
\end{align}

\subsubsection{Convergence to a degenerate limit}

We now show that in high (`spatial') dimensions, the law of genealogies under the \gls{ICSMC} algorithm converges to limit that is degenerate in the sense that all particle lineages immediately coalesce with the reference path so that all acceptance probabilities are zero. Once could na\"ively hope that backward sampling circumvents this problem because it draws a new reference path that is not confined to one of the $N+1$ surviving lineages. Unfortunately, our analysis shows that backward sampling, too, returns the old reference path in high dimensions. Typical behaviour of the genealogies is illustrated in Figure~\ref{fig:breakdown_of_csmc}.


\begin{figure}[!htb]
  \begin{subfigure}[b]{1\linewidth}
  \centering
    \begin{tikzpicture}
    
    \begin{scope}[xshift=0cm]
    \scriptsize
    \matrix (a) [matrix of math nodes, row sep=0.2em, column sep=1.5em, nodes={anchor = center, inner sep=1pt}]
    {
      \particle_1^6 & \particle_2^6 & \particle_3^6 & \particle_4^6 & \particle_5^6\\
      \particle_1^5 & \particle_2^5 & \particle_3^5 & \particle_4^5 & \particle_5^5\\
      \particle_1^4 & \particle_2^4 & \particle_3^4 & \particle_4^4 & \particle_5^4\\
      \particle_1^3 & \particle_2^3 & \particle_3^3 & \particle_4^3 & \particle_5^3\\
      \particle_1^2 & \particle_2^2 & \particle_3^2 & \particle_4^2 & \particle_5^2\\
      \particle_1^1 & \particle_2^1 & \particle_3^1 & \particle_4^1 & \particle_5^1\\
      \particle_1^0 & \particle_2^0 & \particle_3^0 & \particle_4^0 & \particle_5^0\\
    };
    
    \draw[solid, color = red, opacity = 0.2, line width = 3pt](a-7-1) to (a-7-2);
    \draw[solid, color = red, opacity = 0.2, line width = 3pt](a-7-2) to (a-7-3);
    \draw[solid, color = red, opacity = 0.2, line width = 3pt](a-7-3) to (a-7-4);
    \draw[solid, color = red, opacity = 0.2, line width = 3pt](a-7-4) to (a-7-5);
    
    \draw[solid, color = blue, opacity = 0.2, line width = 3pt](a-4-1) to (a-3-2);
    \draw[solid, color = blue, opacity = 0.2, line width = 3pt](a-3-2) to (a-3-3);
    \draw[solid, color = blue, opacity = 0.2, line width = 3pt](a-3-3) to (a-2-4);
    \draw[solid, color = blue, opacity = 0.2, line width = 3pt](a-2-4) to (a-3-5);

    \draw[solid, opacity = 1](a-7-1) to (a-7-2);
    \draw[solid, opacity = 1](a-7-2) to (a-7-3);
    \draw[solid, opacity = 1](a-7-3) to (a-7-4);
    \draw[solid, opacity = 1](a-7-4) to (a-7-5);

    \draw[solid, opacity = 1](a-5-1) to (a-6-2);
    \draw[dotted, opacity = 1](a-5-1) to (a-5-2);
    \draw[dotted, opacity = 1](a-4-1) to (a-4-2);
    \draw[solid, opacity = 1](a-4-1) to (a-3-2);
    \draw[dotted, opacity = 1](a-2-1) to (a-2-2);
    \draw[solid, opacity = 1](a-2-1) to (a-1-2);
    
    \draw[dotted, opacity = 1](a-7-2) to (a-6-3);
    \draw[solid, opacity = 1](a-6-2) to (a-5-3);
    \draw[dotted, opacity = 1](a-6-2) to (a-4-3);
    \draw[solid, opacity = 1](a-3-2) to (a-3-3);
    \draw[dotted, opacity = 1](a-1-2) to (a-2-3);
    \draw[solid, opacity = 1](a-1-2) to (a-1-3);
    
    \draw[dotted, opacity = 1](a-7-3) to (a-6-4);
    \draw[solid, opacity = 1](a-5-3) to (a-5-4);
    \draw[dotted, opacity = 1](a-4-3) to (a-4-4);
    \draw[dotted, opacity = 1](a-3-3) to (a-3-4);
    \draw[solid, opacity = 1](a-3-3) to (a-2-4);
    \draw[solid, opacity = 1](a-1-3) to (a-1-4);
    
    \draw[solid, opacity = 1](a-5-4) to (a-6-5);
    \draw[solid, opacity = 1](a-5-4) to (a-5-5);
    \draw[solid, opacity = 1](a-5-4) to (a-4-5);
    \draw[solid, opacity = 1](a-2-4) to (a-3-5);
    \draw[solid, opacity = 1](a-2-4) to (a-2-5);
    \draw[solid, opacity = 1](a-1-4) to (a-1-5);
    \end{scope}


    \begin{scope}[xshift=4cm]
    \scriptsize
    
    \matrix (b) [matrix of math nodes, row sep=0.2em, column sep=1.5em, nodes={anchor = center, inner sep=1pt}]
    {
      \particle_1^6 & \particle_2^6 & \particle_3^6 & \particle_4^6 & \particle_5^6\\
      \particle_1^5 & \particle_2^5 & \particle_3^5 & \particle_4^5 & \particle_5^5\\
      \particle_1^4 & \particle_2^4 & \particle_3^4 & \particle_4^4 & \particle_5^4\\
      \particle_1^3 & \particle_2^3 & \particle_3^3 & \particle_4^3 & \particle_5^3\\
      \particle_1^2 & \particle_2^2 & \particle_3^2 & \particle_4^2 & \particle_5^2\\
      \particle_1^1 & \particle_2^1 & \particle_3^1 & \particle_4^1 & \particle_5^1\\
      \particle_1^0 & \particle_2^0 & \particle_3^0 & \particle_4^0 & \particle_5^0\\
    };
    
    \draw[solid, color = red, opacity = 0.2, line width = 3pt](b-7-1) to (b-7-2);
    \draw[solid, color = red, opacity = 0.2, line width = 3pt](b-7-2) to (b-7-3);
    \draw[solid, color = red, opacity = 0.2, line width = 3pt](b-7-3) to (b-7-4);
    \draw[solid, color = red, opacity = 0.2, line width = 3pt](b-7-4) to (b-7-5);
    
    \draw[solid, color = blue, opacity = 0.2, line width = 3pt](b-7-1) to (b-7-2);
    \draw[solid, color = blue, opacity = 0.2, line width = 3pt](b-7-2) to (b-7-3);
    \draw[solid, color = blue, opacity = 0.2, line width = 3pt](b-7-3) to (b-2-4);
    \draw[solid, color = blue, opacity = 0.2, line width = 3pt](b-2-4) to (b-2-5);

    \draw[solid, opacity = 1](b-7-1) to (b-7-2);
    \draw[solid, opacity = 1](b-7-2) to (b-7-3);
    \draw[solid, opacity = 1](b-7-3) to (b-7-4);
    \draw[solid, opacity = 1](b-7-4) to (b-7-5);

    \draw[dotted, opacity = 1](b-7-1) to (b-6-2);
    \draw[dotted, opacity = 1](b-7-1) to (b-5-2);
    \draw[dotted, opacity = 1](b-7-1) to (b-4-2);
    \draw[dotted, opacity = 1](b-3-1) to (b-3-2);
    \draw[dotted, opacity = 1](b-3-1) to (b-2-2);
    \draw[dotted, opacity = 1](b-3-1) to (b-1-2);
    
    \draw[dotted, opacity = 1](b-7-2) to (b-6-3);
    \draw[dotted, opacity = 1](b-7-2) to (b-5-3);
    \draw[dotted, opacity = 1](b-7-2) to (b-4-3);
    \draw[dotted, opacity = 1](b-7-2) to (b-3-3);
    \draw[dotted, opacity = 1](b-1-2) to (b-2-3);
    \draw[dotted, opacity = 1](b-1-2) to (b-1-3);
    
    \draw[dotted, opacity = 1](b-7-3) to (b-6-4);
    \draw[dotted, opacity = 1](b-7-3) to (b-5-4);
    \draw[dotted, opacity = 1](b-7-3) to (b-4-4);
    \draw[dotted, opacity = 1](b-7-3) to (b-3-4);
    \draw[solid, opacity = 1](b-7-3) to (b-2-4);
    \draw[dotted, opacity = 1](b-3-3) to (b-1-4);
    
    \draw[solid, opacity = 1](b-7-4) to (b-6-5);
    \draw[solid, opacity = 1](b-7-4) to (b-5-5);
    \draw[solid, opacity = 1](b-7-4) to (b-4-5);
    \draw[solid, opacity = 1](b-7-4) to (b-3-5);
    \draw[solid, opacity = 1](b-2-4) to (b-2-5);
    \draw[solid, opacity = 1](b-2-4) to (b-1-5);

    \end{scope}
    \begin{scope}[xshift=8cm]
    \scriptsize
    \matrix (c) [matrix of math nodes, row sep=0.2em, column sep=1.5em, nodes={anchor = center, inner sep=1pt}]
    {
      \particle_1^6 & \particle_2^6 & \particle_3^6 & \particle_4^6 & \particle_5^6\\
      \particle_1^5 & \particle_2^5 & \particle_3^5 & \particle_4^5 & \particle_5^5\\
      \particle_1^4 & \particle_2^4 & \particle_3^4 & \particle_4^4 & \particle_5^4\\
      \particle_1^3 & \particle_2^3 & \particle_3^3 & \particle_4^3 & \particle_5^3\\
      \particle_1^2 & \particle_2^2 & \particle_3^2 & \particle_4^2 & \particle_5^2\\
      \particle_1^1 & \particle_2^1 & \particle_3^1 & \particle_4^1 & \particle_5^1\\
      \particle_1^0 & \particle_2^0 & \particle_3^0 & \particle_4^0 & \particle_5^0\\
    };
    
    \draw[solid, color = red, opacity = 0.2, line width = 3pt](c-7-1) to (c-7-2);
    \draw[solid, color = red, opacity = 0.2, line width = 3pt](c-7-2) to (c-7-3);
    \draw[solid, color = red, opacity = 0.2, line width = 3pt](c-7-3) to (c-7-4);
    \draw[solid, color = red, opacity = 0.2, line width = 3pt](c-7-4) to (c-7-5);
    
    \draw[solid, color = blue, opacity = 0.2, line width = 3pt](c-7-1) to (c-7-2);
    \draw[solid, color = blue, opacity = 0.2, line width = 3pt](c-7-2) to (c-7-3);
    \draw[solid, color = blue, opacity = 0.2, line width = 3pt](c-7-3) to (c-7-4);
    \draw[solid, color = blue, opacity = 0.2, line width = 3pt](c-7-4) to (c-7-5);

    \draw[solid, opacity = 1](c-7-1) to (c-7-2);
    \draw[solid, opacity = 1](c-7-2) to (c-7-3);
    \draw[solid, opacity = 1](c-7-3) to (c-7-4);
    \draw[solid, opacity = 1](c-7-4) to (c-7-5);

    \draw[dotted, opacity = 1](c-7-1) to (c-6-2);
    \draw[dotted, opacity = 1](c-7-1) to (c-5-2);
    \draw[dotted, opacity = 1](c-7-1) to (c-4-2);
    \draw[dotted, opacity = 1](c-7-1) to (c-3-2);
    \draw[dotted, opacity = 1](c-7-1) to (c-2-2);
    \draw[dotted, opacity = 1](c-7-1) to (c-1-2);
    
    \draw[dotted, opacity = 1](c-7-2) to (c-6-3);
    \draw[dotted, opacity = 1](c-7-2) to (c-5-3);
    \draw[dotted, opacity = 1](c-7-2) to (c-4-3);
    \draw[dotted, opacity = 1](c-7-2) to (c-3-3);
    \draw[dotted, opacity = 1](c-7-2) to (c-2-3);
    \draw[dotted, opacity = 1](c-7-2) to (c-1-3);
    
    \draw[dotted, opacity = 1](c-7-3) to (c-6-4);
    \draw[dotted, opacity = 1](c-7-3) to (c-5-4);
    \draw[dotted, opacity = 1](c-7-3) to (c-4-4);
    \draw[dotted, opacity = 1](c-7-3) to (c-3-4);
    \draw[dotted, opacity = 1](c-7-3) to (c-2-4);
    \draw[dotted, opacity = 1](c-7-3) to (c-1-4);
    
    \draw[solid, opacity = 1](c-7-4) to (c-6-5);
    \draw[solid, opacity = 1](c-7-4) to (c-5-5);
    \draw[solid, opacity = 1](c-7-4) to (c-4-5);
    \draw[solid, opacity = 1](c-7-4) to (c-3-5);
    \draw[solid, opacity = 1](c-7-4) to (c-2-5);
    \draw[solid, opacity = 1](c-7-4) to (c-1-5);
    \end{scope}


    
    \node (low) [above of = a, yshift = 2em, font = \footnotesize] {Low dimensions};
    \node (moderate) [above of = b, yshift = 2em, font = \footnotesize] {Moderate dimensions};
    \node (high) [above of = c, yshift = 2em, font = \footnotesize] {High dimensions};

  \end{tikzpicture}
    \caption{Without backward sampling.}
  \end{subfigure}\\
  \begin{subfigure}[b]{1\linewidth}
  \centering
    \begin{tikzpicture}
    \begin{scope}[xshift=0cm]
    \scriptsize
    \matrix (a) [matrix of math nodes, row sep=0.2em, column sep=1.5em, nodes={anchor = center, inner sep=1pt}]
    {
      \particle_1^6 & \particle_2^6 & \particle_3^6 & \particle_4^6 & \particle_5^6\\
      \particle_1^5 & \particle_2^5 & \particle_3^5 & \particle_4^5 & \particle_5^5\\
      \particle_1^4 & \particle_2^4 & \particle_3^4 & \particle_4^4 & \particle_5^4\\
      \particle_1^3 & \particle_2^3 & \particle_3^3 & \particle_4^3 & \particle_5^3\\
      \particle_1^2 & \particle_2^2 & \particle_3^2 & \particle_4^2 & \particle_5^2\\
      \particle_1^1 & \particle_2^1 & \particle_3^1 & \particle_4^1 & \particle_5^1\\
      \particle_1^0 & \particle_2^0 & \particle_3^0 & \particle_4^0 & \particle_5^0\\
    };
    
    \draw[solid, color = red, opacity = 0.2, line width = 3pt](a-7-1) to (a-7-2);
    \draw[solid, color = red, opacity = 0.2, line width = 3pt](a-7-2) to (a-7-3);
    \draw[solid, color = red, opacity = 0.2, line width = 3pt](a-7-3) to (a-7-4);
    \draw[solid, color = red, opacity = 0.2, line width = 3pt](a-7-4) to (a-7-5);

    \draw[solid, color = blue, opacity = 0.2, line width = 3pt](a-2-1) to (a-3-2);
    \draw[solid, color = blue, opacity = 0.2, line width = 3pt](a-3-2) to (a-3-3);
    \draw[solid, color = blue, opacity = 0.2, line width = 3pt](a-3-3) to (a-3-4);
    \draw[solid, color = blue, opacity = 0.2, line width = 3pt](a-3-4) to (a-3-5);
    
    \draw[solid, opacity = 1](a-7-1) to (a-7-2);
    \draw[solid, opacity = 1](a-7-2) to (a-7-3);
    \draw[solid, opacity = 1](a-7-3) to (a-7-4);
    \draw[solid, opacity = 1](a-7-4) to (a-7-5);

    \draw[solid, opacity = 1](a-5-1) to (a-6-2);
    \draw[dotted, opacity = 1](a-5-1) to (a-5-2);
    \draw[dotted, opacity = 1](a-4-1) to (a-4-2);
    \draw[solid, opacity = 1](a-4-1) to (a-3-2);
    \draw[dotted, opacity = 1](a-2-1) to (a-2-2);
    \draw[solid, opacity = 1](a-2-1) to (a-1-2);
    
    \draw[dotted, opacity = 1](a-7-2) to (a-6-3);
    \draw[solid, opacity = 1](a-6-2) to (a-5-3);
    \draw[dotted, opacity = 1](a-6-2) to (a-4-3);
    \draw[solid, opacity = 1](a-3-2) to (a-3-3);
    \draw[dotted, opacity = 1](a-1-2) to (a-2-3);
    \draw[solid, opacity = 1](a-1-2) to (a-1-3);
    
    \draw[dotted, opacity = 1](a-7-3) to (a-6-4);
    \draw[solid, opacity = 1](a-5-3) to (a-5-4);
    \draw[dotted, opacity = 1](a-4-3) to (a-4-4);
    \draw[dotted, opacity = 1](a-3-3) to (a-3-4);
    \draw[solid, opacity = 1](a-3-3) to (a-2-4);
    \draw[solid, opacity = 1](a-1-3) to (a-1-4);
    
    \draw[solid, opacity = 1](a-5-4) to (a-6-5);
    \draw[solid, opacity = 1](a-5-4) to (a-5-5);
    \draw[solid, opacity = 1](a-5-4) to (a-4-5);
    \draw[solid, opacity = 1](a-2-4) to (a-3-5);
    \draw[solid, opacity = 1](a-2-4) to (a-2-5);
    \draw[solid, opacity = 1](a-1-4) to (a-1-5);
    \end{scope}
    \begin{scope}[xshift=4cm]
    \scriptsize
    \matrix (b) [matrix of math nodes, row sep=0.2em, column sep=1.5em, nodes={anchor = center, inner sep=1pt}]
    {
      \particle_1^6 & \particle_2^6 & \particle_3^6 & \particle_4^6 & \particle_5^6\\
      \particle_1^5 & \particle_2^5 & \particle_3^5 & \particle_4^5 & \particle_5^5\\
      \particle_1^4 & \particle_2^4 & \particle_3^4 & \particle_4^4 & \particle_5^4\\
      \particle_1^3 & \particle_2^3 & \particle_3^3 & \particle_4^3 & \particle_5^3\\
      \particle_1^2 & \particle_2^2 & \particle_3^2 & \particle_4^2 & \particle_5^2\\
      \particle_1^1 & \particle_2^1 & \particle_3^1 & \particle_4^1 & \particle_5^1\\
      \particle_1^0 & \particle_2^0 & \particle_3^0 & \particle_4^0 & \particle_5^0\\
    };
    
    \draw[solid, color = red, opacity = 0.2, line width = 3pt](b-7-1) to (b-7-2);
    \draw[solid, color = red, opacity = 0.2, line width = 3pt](b-7-2) to (b-7-3);
    \draw[solid, color = red, opacity = 0.2, line width = 3pt](b-7-3) to (b-7-4);
    \draw[solid, color = red, opacity = 0.2, line width = 3pt](b-7-4) to (b-7-5);
    
    \draw[solid, color = blue, opacity = 0.2, line width = 3pt](b-7-1) to (b-5-2);
    \draw[solid, color = blue, opacity = 0.2, line width = 3pt](b-5-2) to (b-7-3);
    \draw[solid, color = blue, opacity = 0.2, line width = 3pt](b-7-3) to (b-7-4);
    \draw[solid, color = blue, opacity = 0.2, line width = 3pt](b-7-4) to (b-7-5);

    \draw[solid, opacity = 1](b-7-1) to (b-7-2);
    \draw[solid, opacity = 1](b-7-2) to (b-7-3);
    \draw[solid, opacity = 1](b-7-3) to (b-7-4);
    \draw[solid, opacity = 1](b-7-4) to (b-7-5);

    \draw[dotted, opacity = 1](b-7-1) to (b-6-2);
    \draw[dotted, opacity = 1](b-7-1) to (b-5-2);
    \draw[dotted, opacity = 1](b-7-1) to (b-4-2);
    \draw[dotted, opacity = 1](b-3-1) to (b-3-2);
    \draw[dotted, opacity = 1](b-3-1) to (b-2-2);
    \draw[dotted, opacity = 1](b-3-1) to (b-1-2);
    
    \draw[dotted, opacity = 1](b-7-2) to (b-6-3);
    \draw[dotted, opacity = 1](b-7-2) to (b-5-3);
    \draw[dotted, opacity = 1](b-7-2) to (b-4-3);
    \draw[dotted, opacity = 1](b-7-2) to (b-3-3);
    \draw[dotted, opacity = 1](b-1-2) to (b-2-3);
    \draw[dotted, opacity = 1](b-1-2) to (b-1-3);
    
    \draw[dotted, opacity = 1](b-7-3) to (b-6-4);
    \draw[dotted, opacity = 1](b-7-3) to (b-5-4);
    \draw[dotted, opacity = 1](b-7-3) to (b-4-4);
    \draw[dotted, opacity = 1](b-7-3) to (b-3-4);
    \draw[solid, opacity = 1](b-7-3) to (b-2-4);
    \draw[dotted, opacity = 1](b-3-3) to (b-1-4);
    
    \draw[solid, opacity = 1](b-7-4) to (b-6-5);
    \draw[solid, opacity = 1](b-7-4) to (b-5-5);
    \draw[solid, opacity = 1](b-7-4) to (b-4-5);
    \draw[solid, opacity = 1](b-7-4) to (b-3-5);
    \draw[solid, opacity = 1](b-2-4) to (b-2-5);
    \draw[solid, opacity = 1](b-2-4) to (b-1-5);
    \end{scope}
    
    \begin{scope}[xshift=8cm]
    \scriptsize
    \matrix (c) [matrix of math nodes, row sep=0.2em, column sep=1.5em, nodes={anchor = center, inner sep=1pt}]
    {
      \particle_1^6 & \particle_2^6 & \particle_3^6 & \particle_4^6 & \particle_5^6\\
      \particle_1^5 & \particle_2^5 & \particle_3^5 & \particle_4^5 & \particle_5^5\\
      \particle_1^4 & \particle_2^4 & \particle_3^4 & \particle_4^4 & \particle_5^4\\
      \particle_1^3 & \particle_2^3 & \particle_3^3 & \particle_4^3 & \particle_5^3\\
      \particle_1^2 & \particle_2^2 & \particle_3^2 & \particle_4^2 & \particle_5^2\\
      \particle_1^1 & \particle_2^1 & \particle_3^1 & \particle_4^1 & \particle_5^1\\
      \particle_1^0 & \particle_2^0 & \particle_3^0 & \particle_4^0 & \particle_5^0\\
    };
    
    \draw[solid, color = red, opacity = 0.2, line width = 3pt](c-7-1) to (c-7-2);
    \draw[solid, color = red, opacity = 0.2, line width = 3pt](c-7-2) to (c-7-3);
    \draw[solid, color = red, opacity = 0.2, line width = 3pt](c-7-3) to (c-7-4);
    \draw[solid, color = red, opacity = 0.2, line width = 3pt](c-7-4) to (c-7-5);

    \draw[solid, color = blue, opacity = 0.2, line width = 3pt](c-7-1) to (c-7-2);
    \draw[solid, color = blue, opacity = 0.2, line width = 3pt](c-7-2) to (c-7-3);
    \draw[solid, color = blue, opacity = 0.2, line width = 3pt](c-7-3) to (c-7-4);
    \draw[solid, color = blue, opacity = 0.2, line width = 3pt](c-7-4) to (c-7-5);

    \draw[solid, opacity = 1](c-7-1) to (c-7-2);
    \draw[solid, opacity = 1](c-7-2) to (c-7-3);
    \draw[solid, opacity = 1](c-7-3) to (c-7-4);
    \draw[solid, opacity = 1](c-7-4) to (c-7-5);

    \draw[dotted, opacity = 1](c-7-1) to (c-6-2);
    \draw[dotted, opacity = 1](c-7-1) to (c-5-2);
    \draw[dotted, opacity = 1](c-7-1) to (c-4-2);
    \draw[dotted, opacity = 1](c-7-1) to (c-3-2);
    \draw[dotted, opacity = 1](c-7-1) to (c-2-2);
    \draw[dotted, opacity = 1](c-7-1) to (c-1-2);
    
    \draw[dotted, opacity = 1](c-7-2) to (c-6-3);
    \draw[dotted, opacity = 1](c-7-2) to (c-5-3);
    \draw[dotted, opacity = 1](c-7-2) to (c-4-3);
    \draw[dotted, opacity = 1](c-7-2) to (c-3-3);
    \draw[dotted, opacity = 1](c-7-2) to (c-2-3);
    \draw[dotted, opacity = 1](c-7-2) to (c-1-3);
    
    \draw[dotted, opacity = 1](c-7-3) to (c-6-4);
    \draw[dotted, opacity = 1](c-7-3) to (c-5-4);
    \draw[dotted, opacity = 1](c-7-3) to (c-4-4);
    \draw[dotted, opacity = 1](c-7-3) to (c-3-4);
    \draw[dotted, opacity = 1](c-7-3) to (c-2-4);
    \draw[dotted, opacity = 1](c-7-3) to (c-1-4);
    
    \draw[solid, opacity = 1](c-7-4) to (c-6-5);
    \draw[solid, opacity = 1](c-7-4) to (c-5-5);
    \draw[solid, opacity = 1](c-7-4) to (c-4-5);
    \draw[solid, opacity = 1](c-7-4) to (c-3-5);
    \draw[solid, opacity = 1](c-7-4) to (c-2-5);
    \draw[solid, opacity = 1](c-7-4) to (c-1-5);
    \end{scope}
    \node (low) [above of = a, yshift = 2em, font = \footnotesize] {Low dimensions};
    \node (moderate) [above of = b, yshift = 2em, font = \footnotesize] {Moderate dimensions};
    \node (high) [above of = c, yshift = 2em, font = \footnotesize] {High dimensions};
  \end{tikzpicture}
     \caption{With backward sampling.}
  \end{subfigure}%
  \caption{Breakdown of the \gls{ICSMC} algorithm in high dimensions. Black lines represent particle lineage induced by the algorithm, i.e.\ a line connects $\smash{\particle_{t-1}^m}$ and $\smash{\particle_t^n}$ iff $\smash{a_{t-1}^n = m}$. Solid lines (\protect\tikz[baseline=-0.5ex]\protect\draw[opacity = 1, solid] (0,0) -- (0.5,0);) represent the surviving lineages at time $T = 5$. Dotted lines (\,\protect\tikz[baseline=-0.5ex]\protect\draw[opacity = 1, dotted] (0,0) -- (0.5,0);) represent lineages that have died out. The red line (\protect\tikz[baseline=-0.5ex]\protect\draw[opacity = 0.3, line width = 3pt, solid, color = red] (0,0) -- (0.5,0);) represents the old reference path $\smash{\state_{1:T}[l] = (\particle_1^0, \dotsc, \particle_T^0)}$. The blue line (\protect\tikz[baseline=-0.5ex]\protect\draw[opacity = 0.3, line width = 3pt, solid, color = blue] (0,0) -- (0.5,0);) represents the new reference path $\smash{\state_{1:T}[l+1] = (\particle_1^{k_1}, \dotsc, \particle_T^{k_t})}$. 
  }
  \label{fig:breakdown_of_csmc}
\end{figure}

The degenerate limit of the law of the genealogies and the indices of the new reference path under the \gls{ICSMC} algorithm is the law $\smash{\IteratedCsmcKernel{\nTimeSteps}{\nParticles}(\diff a_{1:\nTimeSteps-1} \times \diff\outputParticleIndex_{1:\nTimeSteps})}$
which deterministically sets all ancestor indices and all indices of the new reference path to $0$. More formally,
\begin{align}
  \IteratedCsmcKernel{\nTimeSteps}{\nParticles}(\diff a_{1:\nTimeSteps-1} \times \diff\outputParticleIndex_{1:\nTimeSteps})
  & \coloneqq \dDirac_0^{\otimes ((\nParticles+1)(\nTimeSteps-1) + \nTimeSteps)}(\diff a_{1:\nTimeSteps-1} \times \diff \outputParticleIndex_{1:\nTimeSteps}).
  \label{eq:joint_law_rwcsmc_high}
\end{align}
We let $\smash{\ExpectationCsmcKernel{\nTimeSteps}{\nParticles}}$ denote expectation \WRT this law. The proof that the law of the genealogies and new reference path indices indeed converges to this trivial limit will be given below in Proposition~\ref{prop:limiting_csmc_algorithm} which relies on the following assumptions, where $\E$ denotes expectation w.r.t.\ $X_{1:T} \sim \pi_T$ and where
\begin{gather}
  r_{t|T} \coloneqq \smash{\E[\log G_t(X_t)] - \E[\log M_t(G_t)(X_{t-1})],}\\
  b_{t|T} \coloneqq \smash{\E[\log G_t(X_t) + \log m_{t+1}(X_t, X_{t+1})] - \E[\log M_t(G_t m_{t+1}(\ccdot, X_{t+1}))(X_{t-1})].}
  \end{gather}
\begin{enumerate}[label=\textbf{A\arabic*}, resume=model_assumptions]
  \item \label{as:degeneracy_of_resampling_kernels} $\inf_{t \in [T]} r_{t|T} \eqqcolon \underline{r}_T > 0$.\tendmark
  \item \label{as:degeneracy_of_backward_kernels} $\inf_{t \in [T-1]} b_{t|T} \eqqcolon \underline{b}_T > 0$. \tendmark
\end{enumerate}
 Assumptions~\ref{as:degeneracy_of_resampling_kernels} and \ref{as:degeneracy_of_backward_kernels} are not restrictive: (a) they do not depend on multiplication of $G_t$ by some positive constant; (b) they hold if the model factorises over time (see Assumption~\ref{as:independent_over_time} in Section~\ref{subsec:large_time-horizon_asymptotics}) unless the potential functions $G_t$ are almost-everywhere constant; (c) Appendix~\ref{app:subsec:verification_of_csmc_assumptions_for_breakdown} shows that these assumptions hold even in a simple linear-Gaussian state-space model. 

We now state our first main result, Proposition~\ref{prop:limiting_csmc_algorithm} (proved in Appendix~\ref{app:subsec:proof_of_prop:limiting_csmc_algorithm}), which shows that in high dimensions, all particle lineages coalesce immediately with the reference path unless in number of particles, $N+1$, $N = N(D)$ grows exponentially in the spatial dimension $D$ -- even with the backward-sampling or forced-move extensions. Let $\lVert \ccdot \rVert$ denote the total variation distance.

\begin{proposition}[curse of dimension]\label{prop:limiting_csmc_algorithm} 
 Let $ \nTimeSteps \in \naturals$. Assume \ref{as:iid_model} and \ref{as:degeneracy_of_resampling_kernels} (as well as \ref{as:degeneracy_of_backward_kernels} if backward sampling is used) and write
 \begin{align}
  \!\!\!\!\smash{d_{T,D,\state_{1:T}}^N \coloneqq \bigl\lVert \ExpectationCsmcKernel{\nTimeSteps,\nDimensions,\state_{1:\nTimeSteps}}{\nParticles}[\ind\{(A_{1:\nTimeSteps-1}, K_{1:T}) \in \ccdot\}] - \ExpectationCsmcKernel{\nTimeSteps}{\nParticles}[\ind\{(A_{1:\nTimeSteps-1}, K_{1:T}) \in \ccdot\}]\bigr\rVert.}\!\!\!\!\!
 \end{align}
 Then there exists a family $\ConcentrationSet_{\nTimeSteps, \nDimensions} \in \sigFieldState_{T,D}$ with $\lim_{\nDimensions \to \infty} \Target_{\nTimeSteps, \nDimensions}(\ConcentrationSet_{\nTimeSteps, \nDimensions}) = 1$ and
 \[
  \smash{\log N = \lo(D) \quad \Longrightarrow \quad \lim\nolimits_{\nDimensions \to \infty} \sup\nolimits_{\state_{1:\nTimeSteps} \in \ConcentrationSet_{\nTimeSteps, \nDimensions}} \smash{d_{T,D,\state_{1:T}}^N} = 0.} \mendmark
 \] 
\end{proposition}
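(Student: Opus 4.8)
The plan is to show that in high dimensions every non-reference lineage is killed at the first resampling that would carry it forward, so that all ancestor indices $A_{1:\nTimeSteps-1}$ and all indices $K_{1:\nTimeSteps}$ of the new reference path are forced to $0$. Since the limiting law $\IteratedCsmcKernel{\nTimeSteps}{\nParticles}$ is the point mass at $(0,\dotsc,0)$, we have $d_{T,D,\state_{1:T}}^\nParticles = 1 - \ExpectationCsmcKernel{\nTimeSteps,\nDimensions,\state_{1:\nTimeSteps}}{\nParticles}[\ind\{(A_{1:\nTimeSteps-1},K_{1:\nTimeSteps}) = 0\}]$, so it suffices to bound, uniformly over $\state_{1:T}\in\ConcentrationSet_{\nTimeSteps,\nDimensions}$, the probability under Algorithm~\ref{alg:iterated_csmc} that at least one of these indices is non-zero by a quantity that is polynomial in $\nParticles$ and exponentially small in $\nDimensions$.

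Under \ref{as:iid_model} we have $\Target_{\nTimeSteps,\nDimensions} = \TargetSingle_\nTimeSteps^{\otimes\nDimensions}$, so the spatial coordinates $(\stateSingle_{1:\nTimeSteps,\dimensionIndex})_{\dimensionIndex=1}^\nDimensions$ of the reference path are \gls{IID} draws from $\TargetSingle_\nTimeSteps$ and a weak law of large numbers controls every additive-over-$\dimensionIndex$ functional of $\state_{1:T}$. I would take $\ConcentrationSet_{\nTimeSteps,\nDimensions}$ to be the event that, for each of the finitely many functions $h$ that occur below -- $\log\PotentialSingle_\timeIndex$, $\stateSingle_{\timeIndex-1}\mapsto\log\MutationSingle_\timeIndex(\PotentialSingle_\timeIndex)(\stateSingle_{\timeIndex-1})$, $\log\mutationSingle_{\timeIndex+1}$, and $(\stateSingle_{\timeIndex-1},\stateSingle_{\timeIndex+1})\mapsto\log\MutationSingle_\timeIndex(\PotentialSingle_\timeIndex\mutationSingle_{\timeIndex+1}(\ccdot,\stateSingle_{\timeIndex+1}))(\stateSingle_{\timeIndex-1})$ for the relevant $\timeIndex$ -- the empirical average $\nDimensions^{-1}\sum_{\dimensionIndex=1}^\nDimensions h$ lies within $\epsilon_\nDimensions$ of its $\TargetSingle_\nTimeSteps$-expectation, where $\epsilon_\nDimensions\downarrow 0$ is chosen slowly enough that $\Target_{\nTimeSteps,\nDimensions}(\ConcentrationSet_{\nTimeSteps,\nDimensions})\to 1$; this is legitimate since \ref{as:degeneracy_of_resampling_kernels}--\ref{as:degeneracy_of_backward_kernels} already presuppose that the relevant expectations are finite.

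The core estimate is a first-moment bound; concentration of the \emph{random} particle weights is never needed. Fix $\state_{1:T}$. On the event that the ancestor indices $A_{\timeIndex-2}^{1:\nParticles}$ all vanish, the time-$(\timeIndex-1)$ particles satisfy $\particle_{\timeIndex-1}^0 = \state_{\timeIndex-1}$ and $\particle_{\timeIndex-1}^1,\dotsc,\particle_{\timeIndex-1}^\nParticles \iidSim \Mutation_{\timeIndex-1}(\state_{\timeIndex-2},\ccdot)$ conditionally on the preceding history $\calF_{\timeIndex-1}$; since $\selectionFunctionBoltzmann{\particleIndex}(\{h^\particleIndexAlt\}_{\particleIndexAlt}) \le \exp(h^\particleIndex)$ and, for $\particleIndex\ge 1$, also $\selectionFunctionRosenbluth{\particleIndex}(\{h^\particleIndexAlt\}_{\particleIndexAlt}) \le \exp(h^\particleIndex)$, a union bound over the $\nParticles$ conditionally \gls{IID} ancestor draws gives, using the factorisation in \ref{as:iid_model},
\begin{align}
 \Prob(\exists\,\particleIndex\in[\nParticles]:A_{\timeIndex-1}^\particleIndex\neq 0 \mid \calF_{\timeIndex-1}) \le \nParticles\,\E\Bigl[\sum_{\particleIndexAlt = 1}^\nParticles \frac{\Potential_{\timeIndex-1}(\particle_{\timeIndex-1}^\particleIndexAlt)}{\Potential_{\timeIndex-1}(\state_{\timeIndex-1})} \,\Bigm|\, \calF_{\timeIndex-1}\Bigr] = \nParticles^2\prod_{\dimensionIndex = 1}^\nDimensions\frac{\MutationSingle_{\timeIndex-1}(\PotentialSingle_{\timeIndex-1})(\stateSingle_{\timeIndex-2,\dimensionIndex})}{\PotentialSingle_{\timeIndex-1}(\stateSingle_{\timeIndex-1,\dimensionIndex})},
\end{align}
whose logarithm is $2\log\nParticles + \sum_\dimensionIndex[\log\MutationSingle_{\timeIndex-1}(\PotentialSingle_{\timeIndex-1})(\stateSingle_{\timeIndex-2,\dimensionIndex}) - \log\PotentialSingle_{\timeIndex-1}(\stateSingle_{\timeIndex-1,\dimensionIndex})] \le 2\log\nParticles - \nDimensions(r_{\timeIndex-1|\nTimeSteps} - 2\epsilon_\nDimensions) \le 2\log\nParticles - \nDimensions\underline r_\nTimeSteps/2$ on $\ConcentrationSet_{\nTimeSteps,\nDimensions}$ for $\nDimensions$ large, by the definition of $r_{t|T}$ and \ref{as:degeneracy_of_resampling_kernels}. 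The identical computation controls the probability that the final selection at time $\nTimeSteps$ returns a non-zero index (same weights, with $r_{\nTimeSteps|\nTimeSteps}$ in place of $r_{\timeIndex-1|\nTimeSteps}$) and -- when backward sampling is used -- the probability that the time-$\timeIndex$ backward selection returns a non-zero index, for which the weight ratio is $\prod_\dimensionIndex\MutationSingle_\timeIndex(\PotentialSingle_\timeIndex\mutationSingle_{\timeIndex+1}(\ccdot,\stateSingle_{\timeIndex+1,\dimensionIndex}))(\stateSingle_{\timeIndex-1,\dimensionIndex})\big/[\PotentialSingle_\timeIndex(\stateSingle_{\timeIndex,\dimensionIndex})\mutationSingle_{\timeIndex+1}(\stateSingle_{\timeIndex,\dimensionIndex},\stateSingle_{\timeIndex+1,\dimensionIndex})]$ and \ref{as:degeneracy_of_backward_kernels} yields the bound $\nParticles^2\,\eul^{-\nDimensions\underline b_\nTimeSteps/2}$. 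Decomposing $\{(A_{1:\nTimeSteps-1},K_{1:\nTimeSteps})\neq 0\}$ according to the first index that fails to vanish -- sweeping the resampling steps $\timeIndex = 2,\dotsc,\nTimeSteps$, then the time-$\nTimeSteps$ selection, then (backward case) the backward selections $\timeIndex = \nTimeSteps-1,\dotsc,1$, with earlier stages telescoped into the conditioning -- and summing the $\bo(\nTimeSteps)$ contributions gives $\sup_{\state_{1:T}\in\ConcentrationSet_{\nTimeSteps,\nDimensions}}d_{T,D,\state_{1:T}}^\nParticles \le \nTimeSteps\nParticles^2(\eul^{-\nDimensions\underline r_\nTimeSteps/2} + \eul^{-\nDimensions\underline b_\nTimeSteps/2})$ for all large $\nDimensions$ (the $\underline b_\nTimeSteps$ term absent without backward sampling), and this $\to 0$ as soon as $\log\nParticles = \lo(\nDimensions)$.

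The step I expect to be most delicate is the conditioning bookkeeping in the chaining, particularly for backward sampling: one must avoid conditioning on the \emph{full} event ``every index so far is $0$'' before taking the expectation of a weight ratio, because the later resampling and backward-selection draws would then reweight the current particle batch in the unfavourable direction (towards large $\Potential_\timeIndex$). The safe route is to use the conditioning first only to substitute $\particle_\timeIndex^0 = \state_\timeIndex$ and, in the backward step, $\particle_{\timeIndex+1}^{\outputParticleIndex_{\timeIndex+1}} = \state_{\timeIndex+1}$ into the ratio -- after which it is a deterministic functional of the reference path and of $\particle_\timeIndex^{1:\nParticles}$ -- and only then to retain the innocuous event that the \emph{immediately preceding} ancestor indices vanish, which lies in $\calF_\timeIndex$ and under which $\particle_\timeIndex^{1:\nParticles}\iidSim\Mutation_\timeIndex(\state_{\timeIndex-1},\ccdot)$, before applying Fubini. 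That no tail assumptions on the model are required is exactly because $r_{\timeIndex|\nTimeSteps}>0$ and $b_{\timeIndex|\nTimeSteps}>0$ already encode exponential decay of the \emph{expected} weight ratios, with the logarithm in their definitions sitting outside the mutation integral.
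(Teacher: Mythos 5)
Your proposal is correct and follows essentially the same route as the paper's proof in Appendix~\ref{app:subsec:proof_of_prop:limiting_csmc_algorithm}: a concentration set $\ConcentrationSet_{T,D}$ on which the empirical averages of the relevant log-weight functionals are within $\lo(1)$ of their $\pi_T$-expectations, followed by a first-moment bound on $\sum_{n=1}^N \exp\{\logWeight_t(\Particle_t^n)-\logWeight_t(\Particle_t^0)\}$ (the paper phrases it through Markov's inequality, you through $\selectionFunctionBoltzmann{k}(\{h^m\}_m)\le \eul^{h^k}$ and a union bound, but both reduce to $N^2\eul^{-Dr_{t|T}(1+\lo(1))}$), chained over the $\bo(T)$ selection stages so that $\log N=\lo(D)$ forces the total to vanish. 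The only differences are refinements in your favour: you use the weak law of large numbers with a diagonal choice of $\epsilon_D$ where the paper invokes the law of the iterated logarithm (which tacitly requires second moments of the log-weights), and you make explicit the enlargement of the conditioning event to its $\calF_t$-measurable part $\{A_{1:t-1}=\zeroMat\}$ before applying Fubini in the backward-sampling chain -- a step the paper performs silently when its display conditions on $(A_{1:T-1},K_{t+1:T})=\zeroMat$ on the left-hand side but only on $A_{1:t-1}=\zeroMat$ on the right.
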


An immediate consequence of Proposition~\ref{prop:limiting_csmc_algorithm} is the following corollary which shows that all acceptance rates vanish in high dimensions.
\begin{corollary}\label{cor:vanishing_acceptance_rates_csmc}
 Under the assumptions of Proposition~\ref{prop:limiting_csmc_algorithm} and with the same sets $\ConcentrationSet_{\nTimeSteps, \nDimensions} \in \sigFieldState_{T,D}$, for any $t \in [\nTimeSteps]$:
 \[
   \log N = \lo(D) \quad \Longrightarrow \quad \lim\nolimits_{\nDimensions \to \infty} \sup\nolimits_{\state_{1:\nTimeSteps} \in \ConcentrationSet_{\nTimeSteps, \nDimensions}}  \acceptanceRate{\nTimeSteps,\nDimensions,\state_{1:T}}{\nParticles}(\timeIndex) =  0. \mendmark
 \]
\end{corollary}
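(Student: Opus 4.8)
The plan is to deduce Corollary~\ref{cor:vanishing_acceptance_rates_csmc} directly from Proposition~\ref{prop:limiting_csmc_algorithm}; the only work is to rephrase the convergence of the genealogy law as a statement about acceptance rates. First I would note that, by definition, $\acceptanceRate{\nTimeSteps,\nDimensions,\state_{1:T}}{\nParticles}(\timeIndex) = \ExpectationCsmcKernel{\nTimeSteps,\nDimensions,\state_{1:\nTimeSteps}}{\nParticles}[\ind\{\OutputParticleIndex_{\timeIndex} \neq 0\}]$, and that $(a_{1:\nTimeSteps-1},\outputParticleIndex_{1:\nTimeSteps}) \mapsto \ind\{\outputParticleIndex_\timeIndex \neq 0\}$ is a $\{0,1\}$-valued measurable function of the \emph{discrete} variables $(A_{1:\nTimeSteps-1},K_{1:\nTimeSteps})$ alone. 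Hence $\acceptanceRate{\nTimeSteps,\nDimensions,\state_{1:T}}{\nParticles}(\timeIndex)$ depends on $\IteratedCsmcKernel{\nTimeSteps,\nDimensions,\state_{1:T}}{\nParticles}$ only through the marginal law $\ExpectationCsmcKernel{\nTimeSteps,\nDimensions,\state_{1:\nTimeSteps}}{\nParticles}[\ind\{(A_{1:\nTimeSteps-1},K_{1:\nTimeSteps})\in\ccdot\}]$ of $(A_{1:\nTimeSteps-1},K_{1:\nTimeSteps})$, which is exactly the object entering the definition of $d_{\nTimeSteps,\nDimensions,\state_{1:T}}^\nParticles$.

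Second, because the limiting law in \eqref{eq:joint_law_rwcsmc_high} fixes all indices to $0$ deterministically, $\ExpectationCsmcKernel{\nTimeSteps}{\nParticles}[\ind\{\OutputParticleIndex_\timeIndex \neq 0\}] = 0$. Since the total variation distance bounds the difference of the masses that two measures assign to any event, it follows that, for every $\state_{1:\nTimeSteps}\in\spaceState_{T,D}$ and $\timeIndex\in[\nTimeSteps]$,
\begin{align}
 \acceptanceRate{\nTimeSteps,\nDimensions,\state_{1:T}}{\nParticles}(\timeIndex)
 &= \bigl\lvert \ExpectationCsmcKernel{\nTimeSteps,\nDimensions,\state_{1:\nTimeSteps}}{\nParticles}[\ind\{\OutputParticleIndex_\timeIndex \neq 0\}] - \ExpectationCsmcKernel{\nTimeSteps}{\nParticles}[\ind\{\OutputParticleIndex_\timeIndex \neq 0\}] \bigr\rvert \\
 &\leq d_{\nTimeSteps,\nDimensions,\state_{1:T}}^\nParticles .
\end{align}
Taking $\sup_{\state_{1:\nTimeSteps}\in\ConcentrationSet_{\nTimeSteps,\nDimensions}}$ on both sides and invoking Proposition~\ref{prop:limiting_csmc_algorithm} — which, with the same concentration sets $\ConcentrationSet_{\nTimeSteps,\nDimensions}$, gives $\sup_{\state_{1:\nTimeSteps}\in\ConcentrationSet_{\nTimeSteps,\nDimensions}} d_{\nTimeSteps,\nDimensions,\state_{1:T}}^\nParticles \to 0$ as $\nDimensions\to\infty$ whenever $\log\nParticles = \lo(\nDimensions)$ — completes the proof.

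The main obstacle here is essentially nil: all the substance sits inside Proposition~\ref{prop:limiting_csmc_algorithm} (the construction of $\ConcentrationSet_{\nTimeSteps,\nDimensions}$ via a weak law of large numbers over the $\nDimensions$ \gls{IID} spatial coordinates, and the uniform-in-$\ConcentrationSet_{\nTimeSteps,\nDimensions}$ control of the log-weight ratios), which we are permitted to assume. The only point meriting a moment's care is the reduction in the first paragraph — that the time-$\timeIndex$ acceptance rate is a functional of the discrete genealogy $(A_{1:\nTimeSteps-1},K_{1:\nTimeSteps})$ alone, equivalently (as recorded after \eqref{eq:iterated_csmc_kernel}) that $\{\OutputParticleIndex_\timeIndex = 0\} = \{\state_\timeIndex[l+1]=\state_\timeIndex[l]\}$ — so that the total variation distance of the genealogy law controls it and the continuous particle positions need not be tracked. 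If one preferred to avoid invoking the full proposition, the same conclusion also follows by rerunning its first-moment (Markov-inequality) bound for the single event $\{\OutputParticleIndex_\timeIndex \neq 0\}$, which is contained in the event that some selection index at time $\timeIndex$ is nonzero: conditionally on all earlier ancestor indices vanishing — an event whose complement has probability tending to $0$ on $\ConcentrationSet_{\nTimeSteps,\nDimensions}$ by the forward resampling argument — the relevant particles are \gls{IID} draws from $\Mutation_\timeIndex(\state_{\timeIndex-1},\ccdot)$, and the expected number of them not copying the reference path is at most a fixed power of $\nParticles$ times $\Mutation_\timeIndex(\Potential_\timeIndex)(\state_{\timeIndex-1})/\Potential_\timeIndex(\state_\timeIndex) = \exp(-\nDimensions\, r_{\timeIndex|\nTimeSteps} + \lo(\nDimensions))$ on $\ConcentrationSet_{\nTimeSteps,\nDimensions}$, which vanishes under \ref{as:degeneracy_of_resampling_kernels} as soon as $\log\nParticles=\lo(\nDimensions)$ (with the obvious modification, using the backward weights and \ref{as:degeneracy_of_backward_kernels}, when backward sampling is used).
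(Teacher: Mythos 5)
Your proof is correct and takes essentially the same route as the paper, which presents the corollary as an immediate consequence of Proposition~\ref{prop:limiting_csmc_algorithm}: since $\{\OutputParticleIndex_\timeIndex \neq 0\}$ is an event determined by the discrete variables $(A_{1:\nTimeSteps-1},K_{1:\nTimeSteps})$ and has probability zero under the degenerate limiting law \eqref{eq:joint_law_rwcsmc_high}, the acceptance rate is bounded by $d_{\nTimeSteps,\nDimensions,\state_{1:T}}^\nParticles$, which vanishes uniformly on $\ConcentrationSet_{\nTimeSteps,\nDimensions}$. The alternative direct argument you sketch at the end is a reasonable fallback but is not needed.
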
 

\section{Simplified dimensionally stable methodology: the RW-EHMM algorithm}
\label{sec:rwehmm}
\glsreset{RWEHMM}
\glsreset{ESJD}
\glsreset{EHMM}

\subsection{Description of the algorithm}
\label{subsec:rwcsmc_main_ideas_in_a_simpler_setting}

Our novel \gls{IRWCSMC} algorithm will be presented in the next section. To ease the exposition, we first -- in this section -- introduce another novel algorithm, the \emph{\gls{RWEHMM}} algorithm. This method can be viewed as a simplified version of the \gls{IRWCSMC} algorithm and is likewise stable in high dimensions. However, the implementation of a single \gls{RWEHMM} update requires $\bo(N^2T)$ operations whilst a single \gls{IRWCSMC} update only requires $\bo(NT)$ operations. 

\subsubsection{Basic algorithm}

The \emph{\gls{RWEHMM}} algorithm proposed in this section also induces a $\Target_{\nTimeSteps, \nDimensions}$-invariant Markov kernel. It can viewed as an instance of the \emph{\gls{EHMM}} methods from \citet{neal2003markov, neal2004inferring} which are closely related to iterated \gls{CSMC} methods with backward sampling as explained in \citet{finke2016embedded}. The main difference between iterated \gls{CSMC} and \gls{EHMM} methods is that the former use resampling steps to permit implementation in $\bo(NT)$ operations wheras the latter typically require $\bo(N^2T)$ operations. 


The $l$th update of the novel \gls{RWEHMM} scheme is outlined in Algorithm~\ref{alg:iterated_rwcsmc} where we use the convention that any action described for the $\particleIndex$th particle index is to be performed conditionally independently for all $\particleIndex \in [\nParticles]$ and any action described for the $\dimensionIndex$th `spatial' component is to be performed conditionally independently for all $\dimensionIndex \in [D]$.

\noindent\parbox{\textwidth}{
\begin{flushleft}
 \begin{framedAlgorithm}[\gls{RWEHMM}] \label{alg:iterated_rwehmm} Given $\state_{1:T} \coloneqq \state_{1:\nTimeSteps}[l] \in \spaceState_{T,D}$.
 \begin{enumerate}
  \item \label{alg:iterated_rwehmm:1} For $\timeIndex \in [\nTimeSteps]$:
  set $\Particle_\timeIndex^0 = \particle_\timeIndex^0 \coloneqq \state_\timeIndex[l]$ and sample $\Particle_\timeIndex^{1:\nParticles} = \particle_\timeIndex^{1:\nParticles}$ as follows:
  \begin{enumerate}
   \item sample $U_{\timeIndex,\dimensionIndex}^{1:\nParticles} \sim \dN(0, \varSigma)$,
   \item set $\smash{\particleSingle_{\timeIndex,\dimensionIndex}^\particleIndex \coloneqq \particleSingle_{\timeIndex,\dimensionIndex}^0 + \sqrt{\ell_\timeIndex /\nDimensions} U_{\timeIndex,\dimensionIndex}^{\particleIndex}}$,
   \item set $\particle_{\timeIndex}^\particleIndex \coloneqq \particleSingle_{\timeIndex,1:D}^\particleIndex$.
  \end{enumerate}
  
   \item \label{alg:iterated_rwehmm:2} Sample $K_{1:T} = k_{1:T} \in [\nParticles]_0$ with probability 
   \begin{align}
     \xi_T(\particle_{1:T}, \{k_{1:T}\}) \coloneqq \smash{\dfrac{\Target_{T,D}(\particle_1^{k_1}, \dotsc, \particle_T^{k_T})}{\sum_{l_{1:T} \in [N]_0^T} \Target_{T,D}(\particle_1^{l_1}, \dotsc, \particle_T^{l_T})}}. \label{eq:rwehmm_discrete_target}
   \end{align}


  \item \label{alg:iterated_rwehmm:3} Set $\State_{1:\nTimeSteps}' \coloneqq \state_{1:T}' \coloneqq (\particle_1^{\outputParticleIndex_1}, \dotsc, \particle_\nTimeSteps^{\outputParticleIndex_\nTimeSteps})$. 
  
  \item \label{alg:iterated_rwehmm:4} Return $\state_{1:\nTimeSteps}[l+1] \coloneqq \state_{1:T}'$. 
  
 \end{enumerate}
\end{framedAlgorithm}
\end{flushleft}
}

Step~\ref{alg:iterated_rwehmm:1} of Algorithm~\ref{alg:iterated_rwehmm} scatters particles around the reference particle $\particle_t^0 = \state_t$ by applying Gaussian noise independently in each dimension $d \in [D]$:
\begin{equation}
  \smash{(Z_{t,d}^1, \dotsc, Z_{t,d}^N)^\T \sim \dN\bigl(\unitMat_N z_{t,d}^0, \tfrac{\ell_t}{D}\varSigma\bigr),} \label{eq:applying_gaussian_noise}
 \end{equation}
 where $\unitMat_N$ is a vector of $1$s of length $N$ and where 
 \begin{itemize}
  \item $\smash{\varSigma \coloneqq \tfrac{1}{2}(\unitMat_N \unitMat_N^\T + \iMat_N)}$ is an $(N \times N)$ covariance matrix with $1$ on the diagonal and $\tfrac{1}{2}$ everywhere else which governs the correlation between (univariate spatial components of) different particles,
  \item $\ell_t > 0$ is some scale factor that governs how far (on average) particles are scattered around the reference path.
 \end{itemize}
This proposal was introduced by \citet{tjelmeland2004using} who noted that
\begin{itemize}
 \item the marginal distributions of individual particles are simply Gaussian random-walk moves with variance $\ell_t/D$, i.e.\ $\smash{\Particle_t^n \sim \dN(\particle_t^0, \tfrac{\ell_t}{D}\iMat_D)}$, for $n \in [N]$;
 
 \item \eqref{eq:applying_gaussian_noise} can be viewed as first sampling a new `centre' $\smash{Z_{t,d}' = z_{t,d}' \sim \dN(z_{t,d}^0, \ell_t/2)}$ and then sampling $\smash{Z_{t,d}^1, \dotsc, Z_{t,d}^N \iidSim \dN(z_{t,d}', \tfrac{\ell_t}{2D})}$:
 \[
  \int_{-\infty}^\infty \dN(z'; z_{t,d}^0, \tfrac{\ell_t}{2D}) \biggl[\prod_{n=1}^N \dN(z_{t,d}^n, z', \tfrac{\ell_t}{2D}) \biggr]\intDiff z' = \dN\bigl(z_{t,d}^{1:N}; \unitMat_N z_{t,d}^0, \tfrac{\ell_t}{D}\varSigma\bigr). 
 \]
\end{itemize}
Other types of `local' proposals (i.e.\ not necessarily based on Gaussian random walks) could be used. However, we limit our analysis to this particular structure because it is symmetric in the sense that its density cancels out in the selection functions. More formally, for any $n,m \in [N]_0$, where $\smash{z_{t,d}^{-n} \coloneqq (z_{t,d}^0, \dotsc, z_{t,d}^{n-1}, z_{t,d}^{n+1}, \dotsc, z_{t,d}^N)}$:
\begin{align}
  \dN\bigl(z_{t,d}^{-m}; \unitMat_N z_{t,d}^m, \tfrac{\ell_t}{D}\varSigma\bigr) = \dN\bigl(z_{t,d}^{-n}; \unitMat_N z_{t,d}^n, \tfrac{\ell_t}{D}\varSigma\bigr).
  \label{eq:rwcsmc_symmetric_proposal:0}
 \end{align}
  



\subsubsection{Implementation in \texorpdfstring{$\bo(N^2T)$}{} operations}

Even though Step~\ref{alg:iterated_rwehmm:2} of Algorithm~\ref{alg:iterated_rwehmm} requires sampling from the distribution $\smash{\xi_T(\particle_{1:T}, \ccdot)}$ whose support is $(N+1)^T$-dimensional, \citet{neal2003markov} recognised that -- by exploiting the Markov property of the model -- this can be achieved in $\bo(N^2T)$ operations using standard forward filtering--backward sampling recursions for finite-state \glsdescplural{HMM} as follows.
\begin{enumerate}
 \item \emph{Forward filtering.} For $t = 1,\dotsc, T$, compute (with convention $w_0^n \coloneqq 1$):
  \begin{align}
   w_t^n \coloneqq \sum_{m \in [N]_0} \frac{w_{t-1}^m}{\sum_{l \in [N]_0} w_{t-1}^l} \mutation_t(\particle_{t-1}^m, \particle_t^n) \Potential_t(\particle_t^n).
  \end{align}
 
 \item \emph{Backward sampling.} For $t = T, \dotsc, 1$ (with convention $\mutation_{T+1} \equiv 1$), sample $K_t = k_t \in [N]_0$ with probability
 \begin{align}
    \dfrac{w_t^{k_t} \mutation_{t+1}(\particle_t^{k_t}, \particle_{t+1}^{k_{t+1}})}{\sum_{n \in [N]_0} w_t^{n} \mutation_{t+1}(\particle_t^{n}, \particle_{t+1}^{k_{t+1}})}.
 \end{align}
\end{enumerate}

\subsubsection{Induced \texorpdfstring{$\Target_{\nTimeSteps,\nDimensions}$-}{}invariant Markov kernel}


Given $\smash{\State_{1:T} = \state_{1:\nTimeSteps} = \state_{1:\nTimeSteps}[l]}$, let
\begin{align}
  \smash{\IteratedEhmmKernelRandomWalk{\nTimeSteps,\nDimensions,\state_{1:T}}{\nParticles}(\diff \particle_{1:\nTimeSteps} \times \diff k_{1:T} \times \diff \state_{1:T}')},
\end{align}
be the law of all the random variables $\smash{(\Particle_{1:T}, K_{1:T}, \State_{1:T}')}$ generated in Steps~\ref{alg:iterated_rwehmm:1}--\ref{alg:iterated_rwehmm:3} of Algorithm~\ref{alg:iterated_rwehmm}.
Appendix~\ref{app:subsec:joint_law_rwehmm} gives a formal definition of this law. 

Let $\smash{\ExpectationEhmmKernelRandomWalk{\nTimeSteps,\nDimensions,\state_{1:T}}{\nParticles}}$ denote expectation \WRT $\smash{\IteratedEhmmKernelRandomWalk{\nTimeSteps,\nDimensions,\state_{1:T}}{\nParticles}}$. Algorithm~\ref{alg:iterated_rwehmm} induces a Markov kernel 
\begin{align}
 \InducedIteratedEhmmKernelRandomWalk{\nTimeSteps,\nDimensions}{\nParticles}(\state_{1:\nTimeSteps}, \diff \state_{1:T}') 
 & \coloneqq \ExpectationEhmmKernelRandomWalk{\nTimeSteps,\nDimensions,\state_{1:\nTimeSteps}}{\nParticles}[\ind\{\State_{1:T}' \in \diff \state_{1:T}'\}], \label{eq:iterated_rwehmm_kernel}
\end{align}
for $(\state_{1:\nTimeSteps},\diff \state_{1:T}') \in \spaceState_{T,D} \times \sigFieldState_{T,D}$. The following proposition shows that this Markov kernel leaves $\Target_{\nTimeSteps, \nDimensions}$ invariant. It can be proved by arguments similar to those in \citet{neal2003markov, neal2004inferring}. For completeness, we give a simple proof in Appendix~\ref{app:subsec:rwehmm_invariance}.

\begin{proposition}\label{prop:rwehmm_invariance}
  For any $N, T, D \in \naturals$, $\Target_{T,D}\InducedIteratedEhmmKernelRandomWalk{\nTimeSteps,\nDimensions}{\nParticles} = \Target_{T,D}$. \tendmark
\end{proposition}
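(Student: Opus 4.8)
The plan is to establish $\Target_{\nTimeSteps,\nDimensions}\InducedIteratedEhmmKernelRandomWalk{\nTimeSteps,\nDimensions}{\nParticles}=\Target_{\nTimeSteps,\nDimensions}$ directly, by a density computation in the spirit of the data-augmentation arguments for embedded-HMM methods in \citet{neal2003markov}. So I would assume $\State_{1:\nTimeSteps}\sim\Target_{\nTimeSteps,\nDimensions}$ and let $\State_{1:\nTimeSteps}'$ be the output of one pass of Algorithm~\ref{alg:iterated_rwehmm}; the goal is to show $\State_{1:\nTimeSteps}'\sim\Target_{\nTimeSteps,\nDimensions}$. Writing $\particle_\timeIndex^0\coloneqq\state_\timeIndex$ and using that the proposal of Step~\ref{alg:iterated_rwehmm:1} factorises over $\timeIndex$ and $\dimensionIndex$ while Step~\ref{alg:iterated_rwehmm:2} samples $K_{1:\nTimeSteps}$ from \eqref{eq:rwehmm_discrete_target}, the joint law of $(\State_{1:\nTimeSteps},\Particle_{1:\nTimeSteps}^{1:\nParticles},K_{1:\nTimeSteps})$ has a density, with respect to the product of Lebesgue measure in the continuous coordinates and counting measure in $k_{1:\nTimeSteps}$, equal to
\[
  \Target_{\nTimeSteps,\nDimensions}(\particle_1^0,\dotsc,\particle_\nTimeSteps^0)\biggl[\,\prod_{\timeIndex=1}^{\nTimeSteps}\prod_{\dimensionIndex=1}^{\nDimensions}\dN\bigl(\particleSingle_{\timeIndex,\dimensionIndex}^{1:\nParticles};\unitMat_\nParticles \particleSingle_{\timeIndex,\dimensionIndex}^0,\tfrac{\ell_\timeIndex}{\nDimensions}\varSigma\bigr)\biggr]\frac{\Target_{\nTimeSteps,\nDimensions}(\particle_1^{k_1},\dotsc,\particle_\nTimeSteps^{k_\nTimeSteps})}{S(\particle_{1:\nTimeSteps})},
\]
where $S(\particle_{1:\nTimeSteps})\coloneqq\sum_{l_{1:\nTimeSteps}\in[\nParticles]_0^{\nTimeSteps}}\Target_{\nTimeSteps,\nDimensions}(\particle_1^{l_1},\dotsc,\particle_\nTimeSteps^{l_\nTimeSteps})$ is the normalising constant of \eqref{eq:rwehmm_discrete_target}. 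The first step is simply to record this expression.

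The key step is a measure-preserving relabelling that exchanges the roles of the \emph{input} path $(\particle_1^0,\dotsc,\particle_\nTimeSteps^0)$ and the \emph{output} path $(\particle_1^{k_1},\dotsc,\particle_\nTimeSteps^{k_\nTimeSteps})$: for each fixed outcome $k_{1:\nTimeSteps}$, swap within every time step $\timeIndex$ the particle labels $0$ and $k_\timeIndex$ (doing nothing at times with $k_\timeIndex=0$). This is a coordinate permutation, hence has unit Jacobian, and I would check that it acts compatibly on the three factors of the density. Since $S(\particle_{1:\nTimeSteps})$ and the numerator $\Target_{\nTimeSteps,\nDimensions}(\particle_1^{k_1},\dotsc,\particle_\nTimeSteps^{k_\nTimeSteps})$ see the particle system only through the unordered collections $\{\particle_\timeIndex^0,\dotsc,\particle_\timeIndex^\nParticles\}$, $S$ is left invariant and the numerator is carried to $\Target_{\nTimeSteps,\nDimensions}(\particle_1^0,\dotsc,\particle_\nTimeSteps^0)$; and because $\varSigma=\tfrac12(\unitMat_\nParticles\unitMat_\nParticles^\T+\iMat_\nParticles)$ is exchangeable, the density $\dN(\ccdot;\unitMat_\nParticles\mu,\tfrac{\ell_\timeIndex}{\nDimensions}\varSigma)$ is symmetric in its $\nParticles$ arguments, so by the symmetry relation \eqref{eq:rwcsmc_symmetric_proposal:0} the proposal factor is carried to $\prod_{\timeIndex,\dimensionIndex}\dN(\particleSingle_{\timeIndex,\dimensionIndex}^{-0};\unitMat_\nParticles\particleSingle_{\timeIndex,\dimensionIndex}^0,\tfrac{\ell_\timeIndex}{\nDimensions}\varSigma)$, where $\particleSingle_{\timeIndex,\dimensionIndex}^{-0}=(\particleSingle_{\timeIndex,\dimensionIndex}^1,\dotsc,\particleSingle_{\timeIndex,\dimensionIndex}^\nParticles)$. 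In short, the relabelling leaves the joint density structurally the same but with the input and output paths interchanged.

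Finally I would read off the marginal law of $\State_{1:\nTimeSteps}'$, which after the relabelling equals $(\particle_1^0,\dotsc,\particle_\nTimeSteps^0)$: summing the relabelled density over $k_{1:\nTimeSteps}$ turns $\Target_{\nTimeSteps,\nDimensions}(\particle_1^{k_1},\dotsc,\particle_\nTimeSteps^{k_\nTimeSteps})$ into $\sum_{k_{1:\nTimeSteps}}\Target_{\nTimeSteps,\nDimensions}(\particle_1^{k_1},\dotsc,\particle_\nTimeSteps^{k_\nTimeSteps})=S(\particle_{1:\nTimeSteps})$, which cancels the denominator; then integrating out $\Particle_{1:\nTimeSteps}^{1:\nParticles}$ uses that each $\dN(\particleSingle_{\timeIndex,\dimensionIndex}^{-0};\unitMat_\nParticles\particleSingle_{\timeIndex,\dimensionIndex}^0,\tfrac{\ell_\timeIndex}{\nDimensions}\varSigma)$ is a probability density on $\reals^\nParticles$; what survives is exactly $\Target_{\nTimeSteps,\nDimensions}(\particle_1^0,\dotsc,\particle_\nTimeSteps^0)$, i.e.\ $\State_{1:\nTimeSteps}'\sim\Target_{\nTimeSteps,\nDimensions}$. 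I expect the only real obstacle to be the bookkeeping around the relabelling — verifying that it is a Jacobian-one bijection, that it leaves $S(\particle_{1:\nTimeSteps})$ invariant, and that its action on the Gaussian factor is precisely \eqref{eq:rwcsmc_symmetric_proposal:0} — after which the three factors telescope as above; the fact that $\Target_{\nTimeSteps,\nDimensions}$ is only specified up to a multiplicative constant is harmless since it enters solely through the ratio in \eqref{eq:rwehmm_discrete_target}. The same computation can be packaged as data augmentation: one introduces the extended target $\bar\pi(\diff k_{1:\nTimeSteps}\times\diff\particle_{1:\nTimeSteps}^{0:\nParticles})\propto\Target_{\nTimeSteps,\nDimensions}(\particle_1^{k_1},\dotsc,\particle_\nTimeSteps^{k_\nTimeSteps})\prod_{\timeIndex,\dimensionIndex}\dN(\particleSingle_{\timeIndex,\dimensionIndex}^{-k_\timeIndex};\unitMat_\nParticles\particleSingle_{\timeIndex,\dimensionIndex}^{k_\timeIndex},\tfrac{\ell_\timeIndex}{\nDimensions}\varSigma)$ on $[\nParticles]_0^{\nTimeSteps}\times\prod_{\timeIndex\in[\nTimeSteps]}(\reals^{\nDimensions})^{\nParticles+1}$, checks via \eqref{eq:rwcsmc_symmetric_proposal:0} that its marginal in $(\particle_1^{k_1},\dotsc,\particle_\nTimeSteps^{k_\nTimeSteps})$ is $\Target_{\nTimeSteps,\nDimensions}$, and recognises Algorithm~\ref{alg:iterated_rwehmm} as a two-stage Gibbs sweep for $\bar\pi$ (redraw the particles around the current state, placing it at index $0$; then redraw $k_{1:\nTimeSteps}$ from its full conditional, which is \eqref{eq:rwehmm_discrete_target}); this route ultimately reduces to the same telescoping, so I would present the direct density computation as the main argument.
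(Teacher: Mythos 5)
Your proof is correct and takes essentially the same route as the paper's: both arguments reduce invariance to the proposal symmetry \eqref{eq:rwcsmc_symmetric_proposal:0} together with the exchangeability of the extended target, the paper merely packaging your explicit label-swapping involution as an exchange identity for a version of the algorithm in which the reference path is placed at uniformly drawn indices $J_{1:T}$ rather than always at index $0$ (Appendix~\ref{app:subsec:rwehmm_invariance}). Your closing data-augmentation remark --- that Algorithm~\ref{alg:iterated_rwehmm} is a two-stage Gibbs sweep for the extended target $\bar{\pi}$ --- is precisely the structure the paper's extended-space construction formalises, so the two write-ups differ only in bookkeeping.
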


We stress that this proposition does not require the high-dimensional regime from Assumption~\ref{as:iid_model}. That is, Algorithm~\ref{alg:iterated_rwehmm} induces a valid (i.e. $\Target_{\nTimeSteps, \nDimensions}$-invariant) Markov kernel even if the model does not factorise into $\nDimensions$ \gls{IID} components.

\subsection{Stability in high dimensions}
\label{subsec:rwehmm_scaling}

We now show that the \gls{RWEHMM} algorithm is stable in high dimensions. For the analysis, we assume the regime from Assumption~\ref{as:iid_model}.

\subsubsection{Non-degenerate limiting law of the particle indices}

In the following, we show that as as $\nDimensions \to \infty$, the law of the particle indices of the new reference path, $K_{1:T}$, under the \gls{RWEHMM} algorithm converges to a limit which is non-degenerate in the sense that the acceptance probabilities are strictly positive at each time step. Using the convention that $\smash{\partial_t^i}$ denotes the $i$th derivative w.r.t.\ $\stateSingle_t$ and with $\smash{\partial_t \coloneqq \partial_t^1}$, as well as with $\pi_T(\varphi) \coloneqq \int_{\reals^T} \varphi(x_{1:T}) \pi_T(x_{1:T}) \intDiff x_{1:T}$, for any $\pi_T$-integrable function $\varphi: \reals^T \to \reals$, we make the following moment assumption.
\begin{enumerate}[label=\textbf{B\arabic*}, series=rwehmm]
  \item \label{as:moments_bounded_ehmm} The density $\pi_T$ is twice continuously differentiable and for any $s,t \in [T]$, 
  \begin{itemize}
   \item $\partial_s \partial_t \log \pi_T$ is Lipschitz-continuous and bounded,
   \item $\pi_T(\lvert \partial_t \log \pi_T \rvert^4) < \infty$. \tendmark
  \end{itemize}
\end{enumerate}

Hereafter, we assume \ref{as:moments_bounded_ehmm}. The limiting law of $K_{1:T}$ (proved below) is then given by 
\begin{align}
  \IteratedEhmmKernelRandomWalk{\nTimeSteps}{\nParticles}(\diff v_{1:\nTimeSteps} \times \diff\outputParticleIndex_{1:\nTimeSteps})
  & \coloneqq \biggl[ \prod_{\timeIndex = 1}^\nTimeSteps \dN(\diff v_t; \tilde{\mu}_{t|T}, \widetilde{\varSigma}_{t|T}) \biggr] \prod_{t=1}^T \selectionFunctionBoltzmann{k_t}(\{v_t^m\}_{m=1}^N).
\end{align}
Expectations w.r.t.\ $\smash{\IteratedEhmmKernelRandomWalk{\nTimeSteps}{\nParticles}}$ are denoted by $\smash{\ExpectationEhmmKernelRandomWalk{\nTimeSteps}{\nParticles}}$. This law is defined through $N$-dimensional Gaussian random vectors $V_t \coloneqq V_t^{1:N}$ which are such that $V_s$ and $V_t$ are independent whenever $s \neq t$ and with mean vector and covariance matrix
\begin{align}
 \E[V_t] =  - \tfrac{1}{2}\ell_t \calI_{t|T} \unitMat_N \eqqcolon \tilde{\mu}_{t|T}, \quad \text{and} \quad \var[V_t] = \ell_t \calI_{t|T} \varSigma \eqqcolon \widetilde{\varSigma}_{t|T},
\end{align}
where by Lemma~\ref{lem:integration_by_parts_identity} in Appendix~\ref{app:subsec:prop:stability_of_acceptance_rates_rwcsmc},
\begin{align}
 \calI_{t|T} \coloneqq \pi_T([\partial_t \log \pi_T]^2) = - \pi_T(\partial_t^2 \log \pi_T).
\end{align}


\subsubsection{Convergence to the non-degenerate limit}

The following proposition (whose proof is a simpler version of the proof of Proposition~\ref{prop:limiting_rwcsmc_algorithm} in Section~\ref{sec:rwcsmc} and is therefore omitted) shows that in high dimensions, the law of the indices $K_{1:T}$ under the \gls{RWEHMM} update specified in Algorithm~\ref{alg:iterated_rwehmm} converges to a limit that is non-trivial in the sense that the events $\{K_t \neq 0\}$ have positive probability. Again, $\lVert \ccdot \rVert$ is the total variation distance.

\begin{proposition}[convergence of the law of the particle indices]\label{prop:limiting_rwehmm_algorithm}
 Let $T, N \in \naturals$, assume \ref{as:iid_model} as well as \ref{as:moments_bounded_ehmm}, and write
  \begin{align}
  \!\!\!\!\! \smash{\tilde{d}_{T,D,\state_{1:T}}^N \coloneqq \bigl\lVert \ExpectationEhmmKernelRandomWalk{T,D,\state_{1:T}}{\nParticles}[\ind\{ \OutputParticleIndex_{1:\nTimeSteps} \in \ccdot\}] - \ExpectationEhmmKernelRandomWalk{\nTimeSteps}{\nParticles}[\ind\{\OutputParticleIndex_{1:\nTimeSteps} \in \ccdot\}]\bigr\rVert.}\!\!\!\!\!
 \end{align}
 Then there exists a family $\ConcentrationSet_{\nTimeSteps, \nDimensions} \in \sigFieldState_{T,D}$ with $\lim_{\nDimensions \to \infty} \Target_{\nTimeSteps, \nDimensions}(\ConcentrationSet_{\nTimeSteps, \nDimensions}) = 1$ and
 \[
  \smash{\lim\nolimits_{\nDimensions \to \infty} \sup\nolimits_{\state_{1:\nTimeSteps} \in \ConcentrationSet_{\nTimeSteps, \nDimensions}} \smash{\tilde{d}_{T,D,\state_{1:T}}^N} = 0.} \mendmark
 \]
\end{proposition}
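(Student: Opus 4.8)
The plan is to condition on the reference path $\state_{1:T}$, use \ref{as:iid_model} (so $\Target_{T,D} = \pi_T^{\otimes\nDimensions}$), and Taylor-expand the discrete target $\xi_T(\particle_{1:T},\ccdot)$ from~\eqref{eq:rwehmm_discrete_target} that drives the selection in Step~\ref{alg:iterated_rwehmm:2}. Write $\stateSingle_{t,d} \coloneqq \particleSingle_{t,d}^0$ for the reference-path coordinates and $\delta_{t,d}^n \coloneqq \particleSingle_{t,d}^n - \stateSingle_{t,d} = \sqrt{\ell_t/\nDimensions}\,U_{t,d}^n$ for the $\bo(\nDimensions^{-1/2})$ perturbations introduced in Step~\ref{alg:iterated_rwehmm:1} (with $U_{t,d}^0 \coloneqq 0$). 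Under \ref{as:iid_model} the log-target difference splits into a sum of i.i.d.\ contributions over the $\nDimensions$ spatial coordinates, so a coordinate-wise second-order Taylor expansion of $\log\pi_T$ gives
\begin{align}
 &\log\Target_{T,D}(\particle_1^{k_1},\dotsc,\particle_T^{k_T}) - \log\Target_{T,D}(\particle_1^0,\dotsc,\particle_T^0)\\
 &\qquad = \sum_{t=1}^T A_t^{k_t} + \tfrac12\sum_{s,t=1}^T B_{s,t}^{k_s,k_t} + \mathcal{E}_\nDimensions,
\end{align}
with $A_t^n \coloneqq \sqrt{\ell_t/\nDimensions}\sum_{d=1}^\nDimensions \partial_t\log\pi_T(\stateSingle_{\cdot,d})\,U_{t,d}^n$ (so $A_t^0 = 0$), $B_{s,t}^{n,m}$ the analogous second-order sum built from $\partial_s\partial_t\log\pi_T$, and $\mathcal{E}_\nDimensions$ a Taylor remainder.

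I would then identify the limit of each of the three pieces after restricting to a set of ``typical'' reference paths. Define $\ConcentrationSet_{\nTimeSteps,\nDimensions}$ as the set of $\state_{1:T}$ on which, for all $s,t\in[\nTimeSteps]$, the empirical averages $\tfrac1\nDimensions\sum_d[\partial_t\log\pi_T(\stateSingle_{\cdot,d})]^2$ and $\tfrac1\nDimensions\sum_d\partial_s\partial_t\log\pi_T(\stateSingle_{\cdot,d})$ lie within $\epsilon_\nDimensions$ of $\calI_{t|T}$ and $-\calI_{t|T}\ind\{s=t\}$ respectively (invoking Lemma~\ref{lem:integration_by_parts_identity} for the latter when $s=t$), and $\tfrac1\nDimensions\sum_d\lvert\partial_t\log\pi_T(\stateSingle_{\cdot,d})\rvert^4 \le C$. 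By Chebyshev's inequality together with \ref{as:moments_bounded_ehmm} (and boundedness of $\partial_s\partial_t\log\pi_T$) one gets $\Target_{T,D}(\ConcentrationSet_{\nTimeSteps,\nDimensions}) \to 1$ for a suitable $\epsilon_\nDimensions\downarrow0$ and large $C$. On $\ConcentrationSet_{\nTimeSteps,\nDimensions}$ the $L^4$ bound forces $\max_d\lvert\partial_t\log\pi_T(\stateSingle_{\cdot,d})\rvert = \lo(\sqrt\nDimensions)$, so a multivariate Lindeberg central limit theorem applied to the coordinate-independent, centred vectors $\sqrt{\ell_t/\nDimensions}\,\partial_t\log\pi_T(\stateSingle_{\cdot,d})\,U_{t,d}^{1:N}$ gives $(A_1^{1:N},\dotsc,A_T^{1:N}) \convergesInDistribution \bigotimes_{t=1}^T \dN(\zeroMat_N, \ell_t\calI_{t|T}\varSigma)$, with independence across $t$ inherited from that of the $U_{s,d}^{1:N}$ and covariance from $\var[U_{t,d}^{1:N}] = \varSigma$. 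A law of large numbers gives $B_{t,t}^{k_t,k_t} \convergesInProbability -\ell_t\calI_{t|T}\ind\{k_t\ne0\}$ (because $(U_{t,d}^0)^2 = 0$ while $\E[(U_{t,d}^n)^2] = 1$ for $n\ge1$, and $\tfrac1\nDimensions\sum_d\partial_t^2\log\pi_T(\stateSingle_{\cdot,d})\to-\calI_{t|T}$), and $B_{s,t}^{k_s,k_t}\convergesInProbability 0$ for $s\ne t$ (centred summands with $\bo(1/\nDimensions)$ variance, using boundedness of $\partial_s\partial_t\log\pi_T$ and independence of $U_{s,d}$ and $U_{t,d}$); and Lipschitz-continuity and boundedness of the Hessian of $\log\pi_T$ bound $\lvert\mathcal{E}_\nDimensions\rvert$ by a constant times $\sum_d(\sum_t\lvert\delta_{t,d}^{k_t}\rvert)^3 = \lo_{\ProbSymbol}(1)$.

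Finally I would assemble the limit. Set $V_t^n \coloneqq A_t^n - \tfrac12\ell_t\calI_{t|T}$ for $n\in[\nParticles]$ and $V_t^0 \coloneqq 0$; the display above and the limits just listed give $\log\Target_{T,D}(\particle_1^{k_1},\dotsc,\particle_T^{k_T}) - \log\Target_{T,D}(\particle_1^0,\dotsc,\particle_T^0) = \sum_{t=1}^T(V_t^{k_t} - V_t^0) + \lo_{\ProbSymbol}(1)$ uniformly over the $(\nParticles+1)^\nTimeSteps$ choices of $k_{1:T}$, and since the denominator in~\eqref{eq:rwehmm_discrete_target} is at least $1$ (its all-zero term), this yields $\xi_T(\Particle_{1:T},\{k_{1:T}\}) = \prod_{t=1}^T\selectionFunctionBoltzmann{k_t}(\{V_t^m\}_{m=1}^\nParticles) + \lo_{\ProbSymbol}(1)$ with $(V_1^{1:N},\dotsc,V_T^{1:N}) \convergesInDistribution \bigotimes_{t=1}^T\dN(\tilde{\mu}_{t|T},\widetilde{\varSigma}_{t|T})$. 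As the selection functions lie in $[0,1]$, bounded convergence then gives $\ExpectationEhmmKernelRandomWalk{T,D,\state_{1:T}}{\nParticles}[\ind\{K_{1:T}=k_{1:T}\}] = \E[\xi_T(\Particle_{1:T},\{k_{1:T}\})] \to \E\bigl[\prod_{t=1}^T\selectionFunctionBoltzmann{k_t}(\{V_t^m\}_{m=1}^\nParticles)\bigr] = \ExpectationEhmmKernelRandomWalk{\nTimeSteps}{\nParticles}[\ind\{K_{1:T}=k_{1:T}\}]$, and summing $\lvert\ccdot\rvert$ over the finitely many $k_{1:T}$ yields $\sup_{\state_{1:T}\in\ConcentrationSet_{\nTimeSteps,\nDimensions}}\tilde{d}_{T,D,\state_{1:T}}^N \to 0$. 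The product form of the limiting law $\IteratedEhmmKernelRandomWalk{\nTimeSteps}{\nParticles}$ comes out for free: it is exactly the statement that the limiting discrete target factorises over $t$, which holds because the first-order term is a sum of one term per time index and the cross-time second-order terms vanish. The step I expect to demand the most care is making the Lindeberg/law-of-large-numbers limits and the remainder bound of the middle paragraph \emph{uniform} over $\state_{1:T}\in\ConcentrationSet_{\nTimeSteps,\nDimensions}$ — i.e.\ with all constants and rates depending only on the thresholds $\epsilon_\nDimensions,C$ defining $\ConcentrationSet_{\nTimeSteps,\nDimensions}$ — which, as the text notes, is a simpler instance of the corresponding step in the proof of Proposition~\ref{prop:limiting_rwcsmc_algorithm}.
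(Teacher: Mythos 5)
Your proposal is correct and is essentially the argument the paper has in mind: the paper omits the proof of this proposition, stating it is a simpler version of the proof of Proposition~\ref{prop:limiting_rwcsmc_algorithm} in Appendix~\ref{app:subsec:prop:limiting_rwcsmc_algorithm}, and your concentration-set\,+\,second-order Taylor expansion\,+\,CLT\,+\,bounded-convergence structure (here applied directly to $\log \pi_T$ rather than to the sequential weights $\logWeightRandomWalkSingle_t$, which is the natural simplification for the global selection in \eqref{eq:rwehmm_discrete_target}) mirrors that proof step for step. Your substitutions of Chebyshev for the law of the iterated logarithm in controlling $\Target_{T,D}(\ConcentrationSet_{T,D})$ and of a multivariate Lindeberg CLT for the paper's Cram\'er--Wold/martingale triangular-array CLT are immaterial, and your observation that the limiting law factorises over $t$ because the cross-time second-order terms vanish is exactly why $\IteratedEhmmKernelRandomWalk{\nTimeSteps}{\nParticles}$ has product form.
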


The following proposition (proved in Appendix~\ref{app:subsec:prop:lower_bound_of_acceptance_rates_ehmm}) shows that the \emph{acceptance rate} at any time $t$ associated with Algorithm~\ref{alg:iterated_rwehmm},
\begin{align}
 \smash{\acceptanceRateRwEhmm{T,D,\state_{1:T}}{\nParticles}(\timeIndex) 
  \coloneqq
   \ExpectationEhmmKernelRandomWalk{\nTimeSteps,\nDimensions,\state_{1:T}}{\nParticles}[ \ind\{\OutputParticleIndex_t \neq 0 \}],}
\end{align}
converges to a strictly positive limit
\begin{align}
 \smash{\acceptanceRateRwEhmm{\nTimeSteps}{\nParticles}(\timeIndex) 
 \coloneqq
   \ExpectationEhmmKernelRandomWalk{\nTimeSteps}{\nParticles}[ \ind\{\OutputParticleIndex_t \neq 0\}].}
\end{align}
Note that the acceptance rates and their limits depend on $\ell_{1:T}$ even though we do not make this explicit in our notation. 

\begin{proposition}[dimensional stability of the acceptance rates] \label{prop:lower_bound_of_acceptance_rates_ehmm} 
 Assume \ref{as:iid_model} and \ref{as:moments_bounded_ehmm}. Then for $T, N \in \naturals$, $t \in [T]$ and $\ConcentrationSet_{\nTimeSteps, \nDimensions} \in \sigFieldState_{T,D}$ as in Proposition~\ref{prop:limiting_rwehmm_algorithm},
 \begin{align}
   \smash{\lim\nolimits_{\nDimensions \to \infty} \sup\nolimits_{\state_{1:\nTimeSteps} \in \ConcentrationSet_{\nTimeSteps, \nDimensions}} \lvert \acceptanceRateRwEhmm{\nTimeSteps,\nDimensions,\state_{1:T}}{\nParticles}(\timeIndex) - \acceptanceRateRwEhmm{\nTimeSteps}{\nParticles}(\timeIndex)\rvert = 0,} \label{eq:prop:lower_bound_of_acceptance_rates_ehmm:1}
  \end{align}
 where
  \[
   \acceptanceRateRwEhmm{\nTimeSteps}{\nParticles}(\timeIndex) \geq \biggl(1 + \frac{\exp(\ell_t \calI_t )}{\nParticles}\biggr)^{\smash{\mathrlap{-1}}} > 0. \mendmark
  \]
  \end{proposition}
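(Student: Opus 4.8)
The plan is to treat the two displayed assertions in turn. The uniform‑convergence statement \eqref{eq:prop:lower_bound_of_acceptance_rates_ehmm:1} is an immediate consequence of Proposition~\ref{prop:limiting_rwehmm_algorithm}: by definition, $\acceptanceRateRwEhmm{\nTimeSteps,\nDimensions,\state_{1:T}}{\nParticles}(\timeIndex)$ and $\acceptanceRateRwEhmm{\nTimeSteps}{\nParticles}(\timeIndex)$ are the masses assigned to the event $\{\OutputParticleIndex_\timeIndex \neq 0\}$ by the $\OutputParticleIndex_{1:\nTimeSteps}$‑marginals of $\IteratedEhmmKernelRandomWalk{\nTimeSteps,\nDimensions,\state_{1:T}}{\nParticles}$ and $\IteratedEhmmKernelRandomWalk{\nTimeSteps}{\nParticles}$ respectively, so $\lvert \acceptanceRateRwEhmm{\nTimeSteps,\nDimensions,\state_{1:T}}{\nParticles}(\timeIndex) - \acceptanceRateRwEhmm{\nTimeSteps}{\nParticles}(\timeIndex)\rvert \leq \tilde{d}_{T,D,\state_{1:T}}^N$; taking the supremum over $\state_{1:\nTimeSteps} \in \ConcentrationSet_{\nTimeSteps, \nDimensions}$ and letting $\nDimensions \to \infty$ yields \eqref{eq:prop:lower_bound_of_acceptance_rates_ehmm:1} via Proposition~\ref{prop:limiting_rwehmm_algorithm}.

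The substance of the argument is the lower bound on the limiting acceptance rate. First I would note that under $\IteratedEhmmKernelRandomWalk{\nTimeSteps}{\nParticles}$, conditionally on $V_\timeIndex = V_\timeIndex^{1:\nParticles}$ one has $\Prob(\OutputParticleIndex_\timeIndex = 0 \mid V_\timeIndex) = \selectionFunctionBoltzmann{0}(\{V_\timeIndex^\particleIndexAlt\}_{\particleIndexAlt=1}^\nParticles) = (1 + S_\timeIndex)^{-1}$ with $S_\timeIndex \coloneqq \sum_{\particleIndexAlt=1}^\nParticles \exp(V_\timeIndex^\particleIndexAlt)$, whence
\begin{align}
 \acceptanceRateRwEhmm{\nTimeSteps}{\nParticles}(\timeIndex) = \ExpectationEhmmKernelRandomWalk{\nTimeSteps}{\nParticles}\!\Bigl[1 - \tfrac{1}{1+S_\timeIndex}\Bigr] = \ExpectationEhmmKernelRandomWalk{\nTimeSteps}{\nParticles}\!\Bigl[\tfrac{1}{1 + 1/S_\timeIndex}\Bigr].
\end{align}
Since $\varSigma$ has unit diagonal, each coordinate of $V_\timeIndex$ is marginally $\dN(-\tfrac12\ell_\timeIndex\calI_{\timeIndex|T}, \ell_\timeIndex\calI_{\timeIndex|T})$, so the Gaussian moment generating function gives $\ExpectationEhmmKernelRandomWalk{\nTimeSteps}{\nParticles}[\exp(-V_\timeIndex^\particleIndexAlt)] = \exp(\tfrac12\ell_\timeIndex\calI_{\timeIndex|T} + \tfrac12\ell_\timeIndex\calI_{\timeIndex|T}) = \exp(\ell_\timeIndex\calI_{\timeIndex|T})$ for every $\particleIndexAlt \in [\nParticles]$.

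Two convexity bounds then close the argument. Because $x \mapsto (1+x)^{-1}$ is convex on $[0,\infty)$, Jensen's inequality applied to $X = 1/S_\timeIndex \geq 0$ gives $\acceptanceRateRwEhmm{\nTimeSteps}{\nParticles}(\timeIndex) \geq \bigl(1 + \ExpectationEhmmKernelRandomWalk{\nTimeSteps}{\nParticles}[1/S_\timeIndex]\bigr)^{-1}$; and the arithmetic--harmonic mean inequality $\nParticles^2 \leq \bigl(\sum_{\particleIndexAlt=1}^\nParticles a_\particleIndexAlt\bigr)\bigl(\sum_{\particleIndexAlt=1}^\nParticles a_\particleIndexAlt^{-1}\bigr)$ with $a_\particleIndexAlt = \exp(V_\timeIndex^\particleIndexAlt)$ gives $1/S_\timeIndex \leq \nParticles^{-2}\sum_{\particleIndexAlt=1}^\nParticles \exp(-V_\timeIndex^\particleIndexAlt)$, so that $\ExpectationEhmmKernelRandomWalk{\nTimeSteps}{\nParticles}[1/S_\timeIndex] \leq \nParticles^{-2}\cdot\nParticles\exp(\ell_\timeIndex\calI_{\timeIndex|T}) = \nParticles^{-1}\exp(\ell_\timeIndex\calI_{\timeIndex|T})$. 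Combining the two estimates gives $\acceptanceRateRwEhmm{\nTimeSteps}{\nParticles}(\timeIndex) \geq \bigl(1 + \nParticles^{-1}\exp(\ell_\timeIndex\calI_{\timeIndex|T})\bigr)^{-1} > 0$, which is the stated bound (writing $\calI_\timeIndex$ for $\calI_{\timeIndex|T}$). I do not expect any genuine obstacle; the only delicate point is selecting this particular chain of inequalities — expressing the integrand as $(1 + 1/S_\timeIndex)^{-1}$ rather than $S_\timeIndex/(1 + S_\timeIndex)$ is what orients Jensen's inequality correctly, and bounding $1/S_\timeIndex$ by $\nParticles^{-2}\sum_\particleIndexAlt \exp(-V_\timeIndex^\particleIndexAlt)$ rather than by a single summand $\exp(-V_\timeIndex^\particleIndexAlt)$ is what produces the $1/\nParticles$ in the denominator and hence the correct $\nParticles$‑dependence. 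Everything else is a routine Gaussian computation, and the first claim is a direct corollary of Proposition~\ref{prop:limiting_rwehmm_algorithm}.
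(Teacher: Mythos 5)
Your proof is correct, and for the lower bound it takes a genuinely different route from the paper's. The paper reduces the claim to Lemma~\ref{lem:lower_bound_on_expectation_of_selection_function}, whose proof first replaces $S_t = \sum_{m=1}^N \eul^{V_t^m}$ by $N\eul^{V_t^1}$ via the convex ordering $\sum_{m} \eul^{V_t^m} \leq_{\mathrm{cx}} N \eul^{V_t^1}$ (a comonotonicity result, supported by Lemma~\ref{lem:convex_ordering_of_lognormals} and a citation to Dhaene et al.\ in the general case) together with concavity of $s \mapsto s/(1+s)$, and only then applies Jensen's inequality to $t \mapsto (1+t)^{-1}$. You instead apply Jensen directly to $\E[(1+1/S_t)^{-1}]$ and control $\E[1/S_t]$ by the arithmetic--harmonic mean inequality $1/S_t \leq N^{-2}\sum_{m}\exp(-V_t^m)$; the two chains of inequalities land on exactly the same bound $\bigl(1 + N^{-1}\exp(\ell_t\calI_{t|T})\bigr)^{-1}$. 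Your version is more elementary and self-contained — it needs no convex-order machinery and no external citation — though it buys only the special case needed here: the paper's lemma is stated with the extra $\delta_i Y_i$ terms precisely because it is reused for the backward-sampling bound in Proposition~\ref{prop:stability_of_acceptance_rates}, where the summands $\eul^{X_i + \delta_i Y_i}$ are no longer identically distributed and your AM--HM step would need the same $\tau^2/2 - b \leq 0$ adjustment the paper makes before it applies. The first part of your argument (bounding the acceptance-rate discrepancy by $\tilde{d}_{T,D,\state_{1:T}}^N$ and invoking Proposition~\ref{prop:limiting_rwehmm_algorithm}) coincides with the paper's.
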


\glsreset{ESJD}

Of course, stabilising the acceptance rates in high dimensions is not sufficient for avoiding a breakdown. A widely used criterion for assessing the performance of \gls{MCMC} algorithms is the \emph{\gls{ESJD}} \citep{sherlock2009optimal}, which (for the time-$t$ component in Algorithm~\ref{alg:iterated_rwehmm}) is given by
\begin{align}
 \smash{\esjdRandomWalkEhmm_{T,D}^N(t) \coloneqq \E[\lVert \State_{t}[l+1] - \State_{t}[l] \rVert_2^2],}
\end{align}
where $\lVert \ccdot \rVert_2$ denotes the Euclidean norm and where $\State_{1:T}[l]$ is the $l$th state of the Markov chain with transition kernel $\smash{\InducedIteratedEhmmKernelRandomWalk{T,D}{N}}$ at stationarity. The following proposition (proved in Appendix~\ref{app:subsec:prop:lower_bound_of_acceptance_rates_ehmm}) shows that the \gls{ESJD} is also stable in high dimensions.

\begin{proposition}[dimensional stability of the \gls{ESJD}]\label{prop:esjd_rwehmm}
 Assume \ref{as:iid_model} and \ref{as:moments_bounded_ehmm} and let $\nTimeSteps, \nParticles \in \naturals$. Then, 
 for any $t \in [T]$,
 \[
  \smash{\lim\nolimits_{\nDimensions \to \infty} \bigl\lvert 
  \esjdRandomWalkEhmm_{T,D}^{N}(t)
  -  \ell_t \acceptanceRateRwEhmm{\nTimeSteps}{\nParticles}(t)\bigr\rvert = 0.} \mendmark
 \]
\end{proposition}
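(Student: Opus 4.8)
The plan is to evaluate the \gls{ESJD} almost in closed form from the explicit Gaussian random-walk proposal of Algorithm~\ref{alg:iterated_rwehmm}, and then to show that the gap between $\esjdRandomWalkEhmm_{T,D}^{N}(t)$ and $\ell_t\acceptanceRateRwEhmm{\nTimeSteps}{\nParticles}(t)$ is controlled entirely by the fluctuations of an average of $\nDimensions$ independent $\chi^2_1$ variables, which vanish as $\nDimensions\to\infty$; the acceptance-rate stability already established in Proposition~\ref{prop:lower_bound_of_acceptance_rates_ehmm} then does the rest.

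First I would exploit stationarity: $\State_{1:\nTimeSteps}[l] = \state_{1:\nTimeSteps}\sim\Target_{\nTimeSteps,\nDimensions} = \TargetSingle_\nTimeSteps^{\otimes\nDimensions}$, and Algorithm~\ref{alg:iterated_rwehmm} sets $\particleSingle_{t,d}^{0} = \stateSingle_{t,d}$, draws $U_{t,d}^{1:N}\sim\dN(0,\varSigma)$ conditionally independently over $t$ and $d$, puts $\particleSingle_{t,d}^{n} = \particleSingle_{t,d}^{0} + \sqrt{\ell_t/\nDimensions}\,U_{t,d}^{n}$, and returns $\State_t[l+1] = \particle_t^{K_t}$. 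Since $\varSigma$ has unit diagonal and $\State_t[l+1] = \State_t[l]$ on $\{K_t = 0\}$, one obtains $\lVert\State_t[l+1] - \State_t[l]\rVert_2^2 = \ind\{K_t\neq 0\}\,(\ell_t/\nDimensions)\sum_{d=1}^{\nDimensions}(U_{t,d}^{K_t})^2$, and hence, splitting over the value of $K_t$,
\begin{align*}
  \esjdRandomWalkEhmm_{T,D}^{N}(t) = \ell_t\sum_{n=1}^{\nParticles}\E\Bigl[\ind\{K_t = n\}\,\tfrac{1}{\nDimensions}\sum_{d=1}^{\nDimensions}(U_{t,d}^{n})^2\Bigr],
\end{align*}
each summand being finite and bounded by $\E[\tfrac{1}{\nDimensions}\sum_{d=1}^{\nDimensions}(U_{t,d}^{n})^2] = 1$. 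Writing $(U_{t,d}^{n})^2 = 1 + Y_{t,d}^{n}$ with $\E[Y_{t,d}^{n}] = 0$ then splits the right-hand side into a main part $\ell_t\,\E[\ind\{K_t\neq 0\}]$ and an error part $\ell_t\sum_{n=1}^{\nParticles}\E[\ind\{K_t = n\}\,\tfrac{1}{\nDimensions}\sum_d Y_{t,d}^{n}]$.

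Next I would dispose of the error part. For fixed $t$ and $n$ the coordinates $(U_{t,d}^{n})_{d\in[\nDimensions]}$ are \gls{IID} standard normal — the correlations encoded in $\varSigma$ act between particle indices, not between spatial coordinates — so $\var(\tfrac{1}{\nDimensions}\sum_d Y_{t,d}^{n}) = \nDimensions^{-1}\var((U_{t,d}^{n})^2) = 2/\nDimensions$. Bounding the indicator by $1$ and using $\E|W|\leq(\var W)^{1/2}$ for centred $W$ gives $|\E[\ind\{K_t = n\}\,\tfrac{1}{\nDimensions}\sum_d Y_{t,d}^{n}]|\leq\sqrt{2/\nDimensions}$, uniformly over the reference path; summing over the $\nParticles$ (finitely many) indices, the error part is at most $\ell_t\nParticles\sqrt{2/\nDimensions}\to 0$. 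The point worth stressing is that no estimate of the dependence between $\ind\{K_t = n\}$ and $\sum_d Y_{t,d}^{n}$ is needed: the empirical average concentrates at its mean however intricately $K_t$ depends on the noise, so boundedness of the indicator alone suffices.

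Finally, for the main part I would condition on $\State_{1:\nTimeSteps}[l] = \state_{1:\nTimeSteps}$ and use stationarity to write $\E[\ind\{K_t\neq 0\}] = \int\acceptanceRateRwEhmm{\nTimeSteps,\nDimensions,\state_{1:T}}{\nParticles}(t)\,\Target_{\nTimeSteps,\nDimensions}(\diff\state_{1:\nTimeSteps})$, and then apply Proposition~\ref{prop:lower_bound_of_acceptance_rates_ehmm}: on the set $\ConcentrationSet_{\nTimeSteps,\nDimensions}$ from Proposition~\ref{prop:limiting_rwehmm_algorithm}, whose $\Target_{\nTimeSteps,\nDimensions}$-probability tends to $1$, the integrand converges to $\acceptanceRateRwEhmm{\nTimeSteps}{\nParticles}(t)$ uniformly, while off that set it is bounded by $1$; splitting the integral accordingly gives $\E[\ind\{K_t\neq 0\}]\to\acceptanceRateRwEhmm{\nTimeSteps}{\nParticles}(t)$. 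Combining this with the error bound of the previous paragraph yields the stated limit. I do not anticipate any genuine obstacle; the only step needing mild care is this last one, where one has to pass from the reference-path-dependent finite-$\nDimensions$ acceptance rate to its dimension-free limit by integrating against $\Target_{\nTimeSteps,\nDimensions}$ and exploiting the high-probability good set $\ConcentrationSet_{\nTimeSteps,\nDimensions}$ — everything else being the elementary $L^2$ concentration estimate above.
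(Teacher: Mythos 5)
Your proof is correct and follows essentially the same route as the paper's: an exact decomposition of the \gls{ESJD} into $\ell_t\,\E[\ind\{K_t\neq 0\}]$ plus a remainder driven by the fluctuations of $D^{-1}\sum_{d=1}^D (U_{t,d}^n)^2$ about $1$, followed by convergence of the acceptance rate via Propositions~\ref{prop:limiting_rwehmm_algorithm} and~\ref{prop:lower_bound_of_acceptance_rates_ehmm} on the high-probability set $\ConcentrationSet_{T,D}$, with the bad set contributing $\bo(\Target_{T,D}(\spaceState_{T,D}\setminus\ConcentrationSet_{T,D}))$. The only (cosmetic) difference is that you control the remainder with an explicit $L^2$ bound of order $\sqrt{2/D}$, uniform in the reference path, where the paper invokes convergence in probability together with uniform integrability of $\lvert\widebar{U}_{t,D}^n-1\rvert$.
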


\subsubsection{Stability \texorpdfstring{as $T \to \infty$}{in the time horizon}}
\label{subsec:large_time-horizon_asymptotics_ehmm}

An immediate consequence of Proposition~\ref{prop:lower_bound_of_acceptance_rates_ehmm} is the following corollary which shows that the limiting acceptance rates $\smash{\acceptanceRateRwEhmm{\nTimeSteps}{\nParticles}(t)}$ are guaranteed to be bounded away from zero as $\nTimeSteps \to \infty$. 

\begin{corollary}[time-horizon stability of the acceptance rates] \label{cor:stability_of_acceptance_rates_ehmm} Assume \ref{as:iid_model} and that  $\calI(\ell) \coloneqq \inf_{T \in \naturals}\inf_{t \in [T]} \ell_t \calI_{t|T} < \infty$. Then, for any $N \in \naturals$, 
  \[
    \inf_{\nTimeSteps\in \naturals} \inf_{t \in [T]} \acceptanceRateRwEhmm{\nTimeSteps}{\nParticles}(\timeIndex) \geq \biggl(1 + \frac{\exp(\calI(\ell))}{N}\biggr)^{\smash{\mathrlap{-1}}} > 0. \mendmark
  \]
\end{corollary}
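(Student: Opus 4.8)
The plan is to derive the corollary directly from Proposition~\ref{prop:lower_bound_of_acceptance_rates_ehmm} by a one-line monotonicity argument; no genuinely new estimate is needed, since all of the work is already contained in that proposition. Concretely, I would start from the bound it supplies: under the standing assumptions \ref{as:iid_model} and \ref{as:moments_bounded_ehmm}, for every $T, N \in \naturals$ and every $t \in [T]$,
\begin{align}
 \acceptanceRateRwEhmm{\nTimeSteps}{\nParticles}(\timeIndex) \geq \biggl(1 + \frac{\exp(\ell_t \calI_{t|T})}{N}\biggr)^{-1}.
\end{align}

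Next I would observe that, for any fixed $N \in \naturals$, the function $x \mapsto (1 + \eul^x/N)^{-1}$ is non-increasing on $\reals$. Since $\ell_t \calI_{t|T} \leq \calI(\ell)$ for all $T \in \naturals$ and all $t \in [T]$ — by the definition of $\calI(\ell)$ — each of the lower bounds above is therefore itself at least $(1 + \exp(\calI(\ell))/N)^{-1}$. Taking the infimum over $T \in \naturals$ and $t \in [T]$ on the left-hand side preserves this common lower bound, which is exactly the claimed inequality. Strict positivity of the right-hand side is immediate from the hypothesis $\calI(\ell) < \infty$, since then $\exp(\calI(\ell))/N$ is finite.

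There is essentially no obstacle here, but a couple of points deserve a sentence. First, the quantity $\acceptanceRateRwEhmm{\nTimeSteps}{\nParticles}(\timeIndex)$ in Proposition~\ref{prop:lower_bound_of_acceptance_rates_ehmm} is the $\nDimensions \to \infty$ limit of the finite-dimensional acceptance rates, so the corollary is a statement purely about this limiting object and involves no further limit in $\nDimensions$. Second, under the standing assumption \ref{as:moments_bounded_ehmm} each $\calI_{t|T} = \pi_T([\partial_t \log \pi_T]^2) = -\pi_T(\partial_t^2 \log \pi_T)$ is finite for fixed $T$, so the real content of the hypothesis $\calI(\ell) < \infty$ is the uniform-in-$(T,t)$ control of the scaled Fisher informations $\ell_t \calI_{t|T}$; this is precisely what permits the bound to be passed through the infimum over the time horizon.
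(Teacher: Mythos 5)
Your argument is the one the paper intends: the corollary is presented there as an ``immediate consequence'' of Proposition~\ref{prop:lower_bound_of_acceptance_rates_ehmm}, with no separate proof given, and passing the uniform lower bound through the infimum over $(T,t)$ via the monotonicity of $x \mapsto (1+\eul^{x}/N)^{-1}$ is exactly that step.

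One point needs attention, though it concerns the statement as much as your write-up. You assert that $\ell_t \calI_{t|T} \leq \calI(\ell)$ ``by the definition of $\calI(\ell)$''. As displayed in the corollary, $\calI(\ell)$ is defined as $\inf_{T \in \naturals}\inf_{t\in[T]} \ell_t\calI_{t|T}$, so the definition literally gives the \emph{reverse} inequality $\calI(\ell) \leq \ell_t\calI_{t|T}$; combined with the fact that $x\mapsto(1+\eul^{x}/N)^{-1}$ is non-increasing, this would push the bound in the wrong direction. The quantity must be read as a supremum for the corollary (and for Proposition~\ref{prop:stability_of_acceptance_rates}, which uses the same $\calI(\ell)$) to be meaningful: an infimum of non-negative finite numbers is automatically finite, so the hypothesis $\calI(\ell)<\infty$ would be vacuous, whereas the real content of the assumption is uniform-in-$(T,t)$ control of $\ell_t\calI_{t|T}$ \emph{from above} — which you yourself identify in your closing remark. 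So your proof is correct for the evidently intended statement, but you should say explicitly that you are taking $\calI(\ell) = \sup_{T\in\naturals}\sup_{t\in[T]}\ell_t\calI_{t|T}$ rather than the infimum that appears in the display, instead of attributing the inequality you need to ``the definition''.
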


\begin{remark}
 Proposition~\ref{prop:esjd_rwehmm} makes it clear that the \gls{ESJD} is also stable under the additional assumption that $\inf_{t \geq 1} \ell_t > 0$. \tendmark
\end{remark}

\section{Proposed dimensionally stable methodology: the i-RW-CSMC algorithm}
\label{sec:rwcsmc}

\glsreset{ESJD}
\glsreset{RWCSMC}
\glsreset{IRWCSMC}

\subsection{Description of the algorithm}
\label{subsec:rwcsmc_description_of_the_algorithm}

\subsubsection{Basic algorithm}

In this section, we introduce our novel iterated \emph{\gls{IRWCSMC}} algorithm which induces an alternative  $\Target_{\nTimeSteps, \nDimensions}$-invariant Markov kernel. We also prove that the algorithm overcomes the curse of dimension suffered by the existing \gls{ICSMC} approach. The proposed algorithm scatters particles locally around the reference path in the same way as the \gls{RWEHMM} method introduced in the previous section. However, recall that this algorithm required $\bo(N^2T)$ operations per iteration for fixed dimensions $D$. In contrast, the algorithm proposed in this section uses resampling steps to ensure that a single update can be implemented in $\bo(NT)$ operations as in the standard \gls{ICSMC} approach.

For any $\timeIndex \in [\nTimeSteps]$ and any $\state_{1:\nTimeSteps} \in \spaceState_{T,D}$, we also define 
\begin{align}
 \smash{\logWeightRandomWalk_\timeIndex(\state_{\timeIndex-1:\timeIndex}) \coloneqq \log \mutation_\timeIndex(\state_{\timeIndex-1}, \state_\timeIndex) + \log \Potential_\timeIndex(\state_\timeIndex),}
\end{align}
with the convention that any quantity with time index $\timeIndex < 1$ is to be ignored. The $l$th update of the novel \gls{IRWCSMC} scheme is then outlined in Algorithm~\ref{alg:iterated_rwcsmc} where we use the convention that any action described for the $\particleIndex$th particle index is to be performed conditionally independently for all $\particleIndex \in [\nParticles]$ and any action described for the $\dimensionIndex$th `spatial' component is to be performed conditionally independently for all $\dimensionIndex \in [D]$. 

\noindent\parbox{\textwidth}{
\begin{flushleft}
 \begin{framedAlgorithm}[\gls{IRWCSMC}] \label{alg:iterated_rwcsmc} Given $\state_{1:T} \coloneqq \state_{1:\nTimeSteps}[l] \in \spaceState_{T,D}$.
 \begin{enumerate}
  \item \label{alg:iterated_rwcsmc:1} For $\timeIndex \in [\nTimeSteps]$:
 \begin{enumerate}
  \item \label{alg:iterated_rwcsmc:1:a} if $\timeIndex > 1$,
  \begin{enumerate}
   \item set $\smash{A_{\timeIndex-1}^0 = a_{\timeIndex-1}^0 \coloneqq 0}$,
   \item sample $\smash{A_{\timeIndex-1}^\particleIndex = a_{\timeIndex-1}^\particleIndex = l \in [\nParticles]_0}$ with probability 
   \begin{align}
    \!\!\!\!\!\!\!\!\!\!\!\!\!\!\!\!\!\!\!\!\!\!\!\!\!\!\!\!\!\!\!\!\!\selectionFunctionBoltzmann{l}(\{\logWeightRandomWalk_{\timeIndex-1}(\particle_{\timeIndex-2}^{a_{\timeIndex-2}^\particleIndexAlt}, \particle_{\timeIndex-1}^\particleIndexAlt) - \logWeightRandomWalk_\timeIndex(\particle_{\timeIndex-2}^0, \particle_{\timeIndex-1}^0)\}_{\particleIndexAlt = 1}^\nParticles)
    = 
    \dfrac{\mutation_{\timeIndex-1}(\particle_{\timeIndex - 2}^{\smash{a_{\timeIndex - 2}^l}}, \particle_{\timeIndex-1}^{l})\Potential_{\timeIndex-1}(\particle_{\timeIndex-1}^{l})}{\sum_{\particleIndexAlt=0}^\nParticles \mutation_{\timeIndex-1}(\particle_{\timeIndex - 2}^{a_{\timeIndex - 2}^\particleIndexAlt}, \particle_{\timeIndex-1}^\particleIndexAlt) \Potential_{\timeIndex-1}(\particle_{\timeIndex-1}^\particleIndexAlt)},\!\!\!\!\!\!\!
    \end{align}
  \end{enumerate}
  
  \item \label{alg:iterated_rwcsmc:1:b} set $\smash{\Particle_\timeIndex^0 = \particle_\timeIndex^0 \coloneqq \state_\timeIndex}$ and sample $\smash{\Particle_\timeIndex^{1:\nParticles} = \particle_\timeIndex^{1:\nParticles}}$ as follows:
  \begin{enumerate}
   \item sample $\smash{U_{\timeIndex,\dimensionIndex}^{1:\nParticles} \sim \dN(0, \varSigma)}$,
   \item set $\smash{\particleSingle_{\timeIndex,\dimensionIndex}^\particleIndex \coloneqq \particleSingle_{\timeIndex,\dimensionIndex}^0 + \sqrt{\ell_\timeIndex/\nDimensions} U_{\timeIndex,\dimensionIndex}^{\particleIndex}}$,
   \item set $\particle_{\timeIndex}^\particleIndex \coloneqq \particleSingle_{\timeIndex,1:D}^\particleIndex$.
  \end{enumerate}
  \end{enumerate}
   \item \label{alg:iterated_rwcsmc:2a} Sample $\smash{\OutputParticleIndex_\nTimeSteps =\outputParticleIndex_\nTimeSteps \in [\nParticles]_0}$ with probability 
      \begin{align}
   \!\!\!\!\!\!\!\!\!\!\!\!\!\!\selectionFunctionBoltzmann{\outputParticleIndex_\nTimeSteps}(\{\logWeightRandomWalk_{\nTimeSteps}(\particle_{\nTimeSteps-1}^{\mathrlap{a_{\nTimeSteps-1}^\particleIndexAlt}}, \particle_{\nTimeSteps}^\particleIndexAlt) - \logWeightRandomWalk_\timeIndex(\particle_{\nTimeSteps-1}^0, \particle_{\nTimeSteps}^0)\}_{\particleIndexAlt = 1}^\nParticles)
    = 
    \smash{\dfrac{\mutation_{\nTimeSteps}(\particle_{\nTimeSteps-1}^{a_{\nTimeSteps-1}^{\outputParticleIndex_\nTimeSteps}}, \particle_{\nTimeSteps}^{\outputParticleIndex_\nTimeSteps})\Potential_{\nTimeSteps}(\particle_{\nTimeSteps}^{\outputParticleIndex_\nTimeSteps})}{\sum_{\particleIndexAlt=0}^\nParticles \mutation_{\nTimeSteps}(\particle_{\nTimeSteps-1}^{a_{\nTimeSteps - 1}^\particleIndexAlt}, \particle_{\nTimeSteps}^\particleIndexAlt) \Potential_{\nTimeSteps}(\particle_{\nTimeSteps}^\particleIndexAlt)}.}
  \end{align}
   
  \item \label{alg:iterated_rwcsmc:2b} Set $\smash{\OutputParticleIndex_\timeIndex =\outputParticleIndex_\timeIndex \coloneqq a_\timeIndex^{\outputParticleIndex_{\timeIndex+1}}}$, for $t = T-1,\dotsc,1$.
  
    \item \label{alg:iterated_rwcsmc:3} Set $\State_{1:\nTimeSteps}'\coloneqq \state_{1:T} \coloneqq (\particle_1^{\outputParticleIndex_1}, \dotsc, \particle_\nTimeSteps^{\outputParticleIndex_\nTimeSteps})$. 
    
  \item \label{alg:iterated_rwcsmc:4} Return $\state_{1:\nTimeSteps}[l+1] \coloneqq \state_{1:T}'$. 
 \end{enumerate}
\end{framedAlgorithm}
\end{flushleft}
}

\glsreset{RWCSMC}
Step~\ref{alg:iterated_rwcsmc:1} of Algorithm~\ref{alg:iterated_rwcsmc} which (a) performs (conditional) multinomial resampling by drawing the ancestor indices $A_t^n$ and (b) generates the particles $\smash{\Particle_t^n}$ by scattering them around the reference particle $\smash{\particle_t^0 = \state_t}$ in the same way as Algorithm~\ref{alg:iterated_rwehmm}, will be referred to as the \emph{\gls{RWCSMC}} algorithm.

 Step~\ref{alg:iterated_rwcsmc:2a} samples a final-time particle index $K_T = k_T$ with probability proportional to the $k_T$th particle weight at time $T$ and Step~\ref{alg:iterated_rwcsmc:2b} traces back the associated ancestral lineage.


The following proposition (proved in Appendix~\ref{app:subsec:rwcsmc_invariance}) shows that the \gls{IRWCSMC} algorithm can be viewed as a `perturbed' version of the \gls{RWEHMM} algorithm. To our knowledge, this insight is novel.

\begin{proposition}\label{prop:discrete_markov_kernel_rwcsmc_without_backward_sampling}
 The combination of Steps~\ref{alg:iterated_rwcsmc:1:a}, \ref{alg:iterated_rwcsmc:2a} and \ref{alg:iterated_rwcsmc:2b} of Algorithm~\ref{alg:iterated_rwcsmc} induces a $\smash{\xi_T(\particle_{1:T}, \ccdot)}$-invariant Markov kernel. \tendmark
\end{proposition}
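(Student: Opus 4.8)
The plan is to recognise Steps~\ref{alg:iterated_rwcsmc:1:a}, \ref{alg:iterated_rwcsmc:2a} and~\ref{alg:iterated_rwcsmc:2b}, with the particle array $\particle_{1:T}$ held fixed, as a conditional-SMC sweep on the \emph{finite} index space $[N]_0$, and to run the usual extended-target argument (cf.\ the proof of Proposition~\ref{prop:csmc_invariance} in Appendix~\ref{app:subsec:csmc_invariance}) in that setting. Fix $\particle_{1:T}$ and write
\begin{align}
 \omega_t(j,n) \coloneqq \exp\bigl(\logWeightRandomWalk_t(\particle_{t-1}^j, \particle_t^n)\bigr) = \mutation_t(\particle_{t-1}^j, \particle_t^n)\Potential_t(\particle_t^n) \quad (t \geq 2), \qquad \omega_1(n) \coloneqq \mutation_1(\particle_1^n)\Potential_1(\particle_1^n),
\end{align}
so that $\xi_T(\particle_{1:T}, \{k_{1:T}\}) \propto \omega_1(k_1)\prod_{t=2}^{T}\omega_t(k_{t-1}, k_t)$ by~\eqref{eq:rwehmm_discrete_target}, the absence of the Gaussian proposal density from these weights being exactly~\eqref{eq:rwcsmc_symmetric_proposal:0}. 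Given a genealogy $a = (a_{t-1}^{0:N})_{t=2}^{T}$, write $W_t^n \coloneqq \omega_t(a_{t-1}^n, n)$ for the weight carried by index $n$ at time $t$ (with $W_1^n \coloneqq \omega_1(n)$). In this language Steps~\ref{alg:iterated_rwcsmc:1:a}, \ref{alg:iterated_rwcsmc:2a} and~\ref{alg:iterated_rwcsmc:2b} read: draw each $a_{t-1}^n$, $n \in [N]$, forward in $t$ from the distribution $\propto_l W_{t-1}^l$ on $[N]_0$ while keeping $a_{t-1}^0 = 0$; then draw $k_T$ from $\propto_l W_T^l$; then set $k_t = a_t^{k_{t+1}}$ for $t < T$.

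The key step is to write down the extended target and identify the two halves of the algorithm with two conditional updates of a Gibbs sampler for it. On the set of pairs $(a_{2:T}^{1:N}, b_{1:T})$ such that $b_{1:T}$ is a lineage of the genealogy into which we have adjoined $a_{t-1}^{b_t} \coloneqq b_{t-1}$ (i.e.\ $b_t = a_t^{b_{t+1}}$ for $t < T$), put
\begin{align}
 \bar\xi(a_{2:T}^{1:N}, b_{1:T}) \propto \xi_T(\particle_{1:T}, \{b_{1:T}\}) \prod_{t=2}^{T} \prod_{\substack{n \in [N]\\ n \neq b_t}} \frac{W_{t-1}^{a_{t-1}^n}}{\sum_{l \in [N]_0} W_{t-1}^l}.
\end{align}
One then checks: (i) summing $\bar\xi$ over the $a$-coordinates returns $\xi_T(\particle_{1:T}, \ccdot)$, by a telescoping computation peeling off, from $t = T$ downwards, the inner sums $\sum_{a_{t-1}^n} W_{t-1}^{a_{t-1}^n}/\sum_l W_{t-1}^l = 1$; (ii) written sequentially in $t$, the conditional $\bar\xi(\ccdot \mid b)$ is exactly the resampling law of Step~\ref{alg:iterated_rwcsmc:1:a} with the reference lineage pinned at $b$ — the displayed convention $a_{t-1}^0 = 0$ being the case $b = (0, \dotsc, 0)$, and the general case reducing to it since $\bar\xi$ is invariant under relabelling the indices $[N]_0$ at each fixed time; (iii) the conditional $\bar\xi(\ccdot \mid a)$ over lineages $b$ of $a$ collapses — after substituting $\xi_T(\particle_{1:T}, \{b\}) \propto \prod_{t=1}^{T} W_t^{b_t}$ (valid because $a_{t-1}^{b_t} = b_{t-1}$ along the lineage) and cancelling it against the proposal factors — to ``draw $b_T$ from $\propto_l W_T^l$, then set $b_t = a_t^{b_{t+1}}$'', i.e.\ to Steps~\ref{alg:iterated_rwcsmc:2a}--\ref{alg:iterated_rwcsmc:2b} with $b = k$. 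Granting (i)--(iii), the combined kernel is the composition of the two conditional, hence $\bar\xi$-reversible, updates ``refresh $a$ given $b$'' and ``refresh $b$ given $a$'', i.e.\ a deterministic-scan Gibbs sampler for $\bar\xi$; it therefore leaves $\bar\xi$ invariant, and projecting onto the $b$-coordinate — the reference index sequence the chain actually carries — and invoking (i) shows that the induced Markov kernel on $[N]_0^T$ leaves $\xi_T(\particle_{1:T}, \ccdot)$ invariant.

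The crux is ingredient (iii): one has to keep straight which of the $a_{t-1}^n$ are genuine free coordinates of $\bar\xi$ and which are pinned to the lineage $b$ (this partition itself depending on $b$), and verify that the $b$-dependence of the normalisers $\sum_l W_{t-1}^l$ and of $\xi_T(\particle_{1:T}, \{b\})$ cancels, so that $\bar\xi(b \mid a)$ depends on $b$ only through $W_T^{b_T}$; this is where the backward recursion of Steps~\ref{alg:iterated_rwcsmc:2a}--\ref{alg:iterated_rwcsmc:2b} comes out. A secondary, purely technical, point is well-definedness of $\bar\xi$ when some $\mutation_t(\particle_{t-1}^j, \particle_t^n)$ vanish, which is harmless because the current state of the chain has $\xi_T$-almost-surely strictly positive density, so the reference lineage carries strictly positive weight and $\sum_{l \in [N]_0} W_{t-1}^l > 0$ throughout. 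An alternative route, bypassing the explicit extended target, notes that — once~\eqref{eq:rwcsmc_symmetric_proposal:0} has removed the proposal density from the weights — Steps~\ref{alg:iterated_rwcsmc:1:a}, \ref{alg:iterated_rwcsmc:2a} and~\ref{alg:iterated_rwcsmc:2b} \emph{are} a conditional SMC sweep, with a symmetric non-bootstrap proposal, for the finite-state Feynman--Kac model on $[N]_0$ whose joint smoothing distribution is $\xi_T(\particle_{1:T}, \ccdot)$, so the conclusion also follows from the general conditional-SMC invariance underlying Proposition~\ref{prop:csmc_invariance}.
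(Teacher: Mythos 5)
Your proof is correct and in substance identical to the paper's: the extended target $\bar\xi(a,b)$ you build is exactly $\xi_T(\particle_{1:T},\{j_{1:T}\})$ multiplied by the genealogy-generating part of the kernel $\varXi_T$ defined in the paper's proof (Appendix~\ref{app:subsec:rwcsmc_invariance}), and the paper's one-line symmetry identity $\xi_T(\particle_{1:T},\diff j_{1:T})\,\varXi_T((\particle_{1:T},j_{1:T}),\diff a_{1:T-1}\times\diff k_{1:T}) = \xi_T(\particle_{1:T},\diff k_{1:T})\,\varXi_T((\particle_{1:T},k_{1:T}),\diff a_{1:T-1}\times\diff j_{1:T})$ encodes precisely the cancellation you carry out in your step (iii), so the two arguments differ only in packaging (a two-block Gibbs sweep versus a single reversibility identity). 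One immaterial quibble: the symmetry \eqref{eq:rwcsmc_symmetric_proposal:0} plays no role in this proposition, since $\xi_T$ is defined directly through $\Target_{T,D}$ and the Gaussian proposal density never enters the discrete kernel; that symmetry is only needed later, for Proposition~\ref{prop:rwcsmc_invariance}.
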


We continue the running example from Section~\ref{sec:csmc} which shows that the algorithms analysed in this work reduce to versions of well known classical \gls{MCMC} kernels if $\nParticles = \nTimeSteps  = 1$.

\begin{example}[classical \gls{MCMC} kernels, continued]\label{ex:barker_as_special_case_of_rwcsmc}
 Algorithm~\ref{alg:iterated_rwcsmc} propoes $\Particle_1^1 = \particle_1^1 \sim \dN(\particle_1^0, \tfrac{\ell_1}{\nDimensions} \iMat_{\nDimensions})$, where $\particle_1^0 = \state_1[l]$, which is then accepted as the new state of the Markov chain with probability
 \begin{align}
   \selectionFunctionBoltzmann{1}(\logWeightRandomWalk_{1}(\particle_{1}^1) - \logWeightRandomWalk_1(\particle_{1}^0)) = 
  \frac{\mutation_1(\particle_1^1) \Potential_1(\particle_1^1)}{\mutation_1(\particle_1^0)\Potential_1(\particle_1^0) + \mutation_1(\particle_1^1) \Potential_1(\particle_1^1)}.
 \end{align}
 This can be recognised as Barker's kernel \citep{barker1965monte} with a Gaussian random-walk proposal. Note that the symmetry property from \eqref{eq:rwcsmc_symmetric_proposal:0} ensures that the proposal density cancels out in the acceptance probability. \tendmark
\end{example}

\begin{example}[multi-proposal \gls{MCMC} kernels, continued]
 For $N > 1$ and $T = 1$, Algorithm~\ref{alg:iterated_rwcsmc} is again a special case of a class of \gls{MCMC} algorithms with multiple proposals \citep{tjelmeland2004using}. Related algorithms were analysed in \citet{bedard2012scaling, bedard2013empirical} who also proved scaling limits in high dimensions. \tendmark
\end{example}

\subsubsection{Extensions}

The extensions discussed for the standard \gls{CSMC} algorithm in Section~\ref{subsec:csmc_extensions} can be used for the \gls{IRWCSMC} algorithms with only minor modifications. 

\paragraph*{Forced move} To use the forced-move approach, we simply replace the Boltzmann selection function by the Rosenbluth--Teller selection function in Step~\ref{alg:iterated_rwcsmc:2a} of Algorithm~\ref{alg:iterated_rwcsmc}.

\begin{example}[classical \gls{MCMC} kernels, continued]\label{ex:mh_as_special_case_of_rwcsmc}
 With the forced-move extension, Algorithm~\ref{alg:iterated_rwcsmc} proposes $\Particle_1^1 = \particle_1^1 \sim \dN(\particle_1^0, \tfrac{\ell_1}{\nDimensions} \iMat_{\nDimensions})$, where $\particle_1^0 = \state_1[l]$, which is then accepted as the new state of the Markov chain with probability
 \begin{align}
 \selectionFunctionRosenbluth{1}(\logWeightRandomWalk_{1}(\particle_{1}^1) - \logWeightRandomWalk_1(\particle_{1}^0)) = 
  1 \wedge \frac{\mutation_1(\particle_1^1) \Potential_1(\particle_1^1)}{\mutation_1(\particle_1^0)\Potential_1(\particle_1^0)}.
 \end{align}
 This can be recognised as a \gls{MH} kernel \citep{metropolis1953equation, hastings1970monte} with a Gaussian random-walk proposal. Again, the symmetry property from \eqref{eq:rwcsmc_symmetric_proposal:0} ensures that the proposal density cancels out in the acceptance ratio. \tendmark
\end{example}

\paragraph*{Backward sampling} To employ backward sampling, we sample each particle index $\OutputParticleIndex_\timeIndex =\outputParticleIndex_\timeIndex \in [\nParticles]_0$ in Step~\ref{alg:iterated_rwcsmc:2b} of Algorithm~\ref{alg:iterated_rwcsmc} with probability
\begin{align} 
  \MoveEqLeft \selectionFunctionBoltzmann{\outputParticleIndex_\timeIndex}(\{
  \logBackwardWeightRandomWalk_\timeIndex(\particle_{\timeIndex-1}^{\mathrlap{a_{\timeIndex-1}^{\smash{\particleIndexAlt}}}}, \particle_{\timeIndex}^\particleIndexAlt, \particle_{\timeIndex+1}^{\outputParticleIndex_{\timeIndex+1}}) - \logBackwardWeightRandomWalk_\timeIndex(\particle_{\timeIndex-1}^0, \particle_{\timeIndex}^0, \particle_{\timeIndex+1}^{\outputParticleIndex_{\timeIndex+1}})
  \}_{\particleIndexAlt = 1}^\nParticles)\\
  & =
  \frac{\mutation_\timeIndex(\particle_{\timeIndex - 1}^{a_{\timeIndex - 1}^{\smash{\outputParticleIndex_\timeIndex}}}, \particle_\timeIndex^{\outputParticleIndex_\timeIndex}) \Potential_\timeIndex(\particle_\timeIndex^{\outputParticleIndex_\timeIndex}) \mutation_{\timeIndex+1}(\particle_\timeIndex^{\outputParticleIndex_\timeIndex}, \particle_{\timeIndex+1}^{\outputParticleIndex_{\timeIndex+1}})}{\sum_{\particleIndexAlt=0}^\nParticles \mutation_\timeIndex(\particle_{\timeIndex - 1}^{\mathrlap{a_{\timeIndex - 1}^\particleIndexAlt}}, \particle_\timeIndex^\particleIndexAlt) \Potential_\timeIndex(\particle_\timeIndex^\particleIndexAlt) \mutation_{\timeIndex+1}(\particle_\timeIndex^\particleIndexAlt, \particle_{\timeIndex+1}^{\outputParticleIndex_{\timeIndex+1}})},
\end{align}
where
\begin{align}
 \smash{\logBackwardWeightRandomWalk_\timeIndex(\state_{\timeIndex-1:\timeIndex+1}) \coloneqq \logWeightRandomWalk_\timeIndex(\state_{\timeIndex-1:\timeIndex}) + \log \mutation_{\timeIndex+1}(\state_\timeIndex, \state_{\timeIndex+1}).}
\end{align}

The following proposition (proved in Appendix~\ref{app:subsec:rwcsmc_invariance}) shows that when backward sampling is used, the \gls{IRWCSMC} algorithm can again be viewed as a `perturbed' version of the \gls{RWEHMM} algorithm. To our knowledge, this insight is novel.

\begin{proposition}\label{prop:discrete_markov_kernel_rwcsmc_with_backward_sampling}
 Proposition~\ref{prop:discrete_markov_kernel_rwcsmc_without_backward_sampling} remains valid if backward sampling is used, i.e.\ the combination of Steps~\ref{alg:iterated_rwcsmc:1:a}, \ref{alg:iterated_rwcsmc:2a} and \ref{alg:iterated_rwcsmc:2b} of Algorithm~\ref{alg:iterated_rwcsmc} again induces a $\smash{\xi_T(\particle_{1:T}, \ccdot)}$-invariant Markov kernel. \tendmark
\end{proposition}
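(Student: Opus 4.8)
\noindent The plan is to prove this exactly as one proves Proposition~\ref{prop:discrete_markov_kernel_rwcsmc_without_backward_sampling}, keeping its entire set-up and changing only the analysis of Step~\ref{alg:iterated_rwcsmc:2b}. Fix the particle array $\particle_{1:\nTimeSteps}=\particle_{1:\nTimeSteps}^{0:\nParticles}$ once and for all and work on the finite space of pairs $(\ancestorIndex,\outputParticleIndex)$, where $\ancestorIndex=(\ancestorIndex_{\timeIndex-1}^{\particleIndex})_{\particleIndex\in[\nParticles],\,2\le\timeIndex\le\nTimeSteps}$ is the table of resampling indices produced in Step~\ref{alg:iterated_rwcsmc:1:a} (with $\ancestorIndex_{\timeIndex-1}^{0}\coloneqq 0$) and $\outputParticleIndex=\outputParticleIndex_{1:\nTimeSteps}\in[\nParticles]_0^{\nTimeSteps}$ is the index path of the reference. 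Since the particles enter Steps~\ref{alg:iterated_rwcsmc:1:a}, \ref{alg:iterated_rwcsmc:2a} and \ref{alg:iterated_rwcsmc:2b} only through their values and through which index is the reference, the induced kernel $P$ extends unambiguously by relabelling to an arbitrary current index path $\outputParticleIndex$, and it suffices to carry out the verification below in the situation for which the algorithm is written, namely $\outputParticleIndex\equiv 0_{1:\nTimeSteps}$.

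\noindent Introduce the extended probability measure $\bar\xi_{\nTimeSteps}(\diff\ancestorIndex,\diff\outputParticleIndex)\coloneqq \xi_{\nTimeSteps}(\particle_{1:\nTimeSteps},\diff\outputParticleIndex)\,\kappa_{\outputParticleIndex}(\diff\ancestorIndex)$, where $\kappa_{\outputParticleIndex}$ is the law of the resampling table generated by Step~\ref{alg:iterated_rwcsmc:1:a} when the reference occupies the index path $\outputParticleIndex$ (so $\kappa_{0_{1:\nTimeSteps}}$ is the law actually used by the algorithm, and $\kappa_{\outputParticleIndex}$ is its relabelling). By construction $\bar\xi_{\nTimeSteps}$ has $\outputParticleIndex$-marginal $\xi_{\nTimeSteps}(\particle_{1:\nTimeSteps},\ccdot)$ and conditional $\bar\xi_{\nTimeSteps}(\diff\ancestorIndex\mid\outputParticleIndex)=\kappa_{\outputParticleIndex}(\diff\ancestorIndex)$, so the first half of $P$ (draw a fresh table given the current reference) is a Gibbs update of the $\ancestorIndex$-coordinate of $\bar\xi_{\nTimeSteps}$. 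The decisive step is to show that the second half of $P$ -- the backward sweep of Steps~\ref{alg:iterated_rwcsmc:2a} and \ref{alg:iterated_rwcsmc:2b} with backward sampling -- draws the new index path $\outputParticleIndex'$ exactly from $\bar\xi_{\nTimeSteps}(\diff\outputParticleIndex'\mid\ancestorIndex)$. Granting this, $P$ is the deterministic-scan two-block Gibbs sampler for $\bar\xi_{\nTimeSteps}$, hence $\bar\xi_{\nTimeSteps}$-invariant, and projecting onto $\outputParticleIndex$ gives $\xi_{\nTimeSteps}(\particle_{1:\nTimeSteps},\ccdot)P=\xi_{\nTimeSteps}(\particle_{1:\nTimeSteps},\ccdot)$, which is the assertion.

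\noindent To identify this second Gibbs update, I would compute $\bar\xi_{\nTimeSteps}(\{\outputParticleIndex'\}\mid\ancestorIndex)\propto \xi_{\nTimeSteps}(\particle_{1:\nTimeSteps},\{\outputParticleIndex'\})\,\kappa_{\outputParticleIndex'}(\{\ancestorIndex\})$ and show that, for $\ancestorIndex$ fixed, this is proportional (in $\outputParticleIndex'$) to a product of consecutive pairwise factors -- explicitly $\exp(\logWeightRandomWalk_{\nTimeSteps}(\particle_{\nTimeSteps-1}^{\ancestorIndex_{\nTimeSteps-1}^{\outputParticleIndex'_{\nTimeSteps}}},\particle_{\nTimeSteps}^{\outputParticleIndex'_{\nTimeSteps}}))\prod_{\timeIndex=1}^{\nTimeSteps-1}\exp(\logBackwardWeightRandomWalk_{\timeIndex}(\particle_{\timeIndex-1}^{\ancestorIndex_{\timeIndex-1}^{\outputParticleIndex'_{\timeIndex}}},\particle_{\timeIndex}^{\outputParticleIndex'_{\timeIndex}},\particle_{\timeIndex+1}^{\outputParticleIndex'_{\timeIndex+1}}))$ -- so that, by the standard forward--backward recursion for a finite-state Markov chain, sampling $\outputParticleIndex'_{\nTimeSteps}$ as in Step~\ref{alg:iterated_rwcsmc:2a} and then $\outputParticleIndex'_{\nTimeSteps-1},\dotsc,\outputParticleIndex'_{1}$ as in Step~\ref{alg:iterated_rwcsmc:2b}-BS reproduces it. The algebra is a telescoping cancellation: expanding $\kappa_{\outputParticleIndex'}(\{\ancestorIndex\})$ as a product of Boltzmann selection probabilities $\selectionFunctionBoltzmann{\ancestorIndex_{\timeIndex-1}^{\particleIndex}}$ over times $\timeIndex$ and over non-reference indices $\particleIndex\neq\outputParticleIndex'_{\timeIndex}$ and multiplying by $\xi_{\nTimeSteps}(\particle_{1:\nTimeSteps},\{\outputParticleIndex'\})$, the resampling denominators and the numerators attached to the non-reference indices combine into factors independent of $\outputParticleIndex'$, leaving exactly the product above. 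This is the same cancellation performed for Step~\ref{alg:iterated_rwcsmc:2b} in the proof of Proposition~\ref{prop:discrete_markov_kernel_rwcsmc_without_backward_sampling}, with the one difference that there $\outputParticleIndex'_{\timeIndex}$ was pinned to $\ancestorIndex_{\timeIndex}^{\outputParticleIndex'_{\timeIndex+1}}$ whereas here it ranges over all of $[\nParticles]_0$ and carries the extra forward factor $\mutation_{\timeIndex+1}(\particle_{\timeIndex}^{\outputParticleIndex'_{\timeIndex}},\particle_{\timeIndex+1}^{\outputParticleIndex'_{\timeIndex+1}})$; carrying these $\mutation_{\timeIndex+1}$-factors through the telescoping is the main obstacle, and it is the only point at which the backward-sampling modification is felt. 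A useful sanity check is that for $\nParticles=\nTimeSteps=1$ the whole construction collapses to the reversibility of Barker's (respectively the Metropolis--Hastings) kernel with a Gaussian random-walk proposal, cf.\ Examples~\ref{ex:barker_as_special_case_of_rwcsmc} and~\ref{ex:mh_as_special_case_of_rwcsmc}. Finally, the $\Target_{\nTimeSteps,\nDimensions}$-level invariance result that invokes this proposition (proved in Appendix~\ref{app:subsec:rwcsmc_invariance}) goes through for backward sampling verbatim once this proposition is in hand, since that argument uses only the fact that the discrete kernel is $\xi_{\nTimeSteps}$-invariant.
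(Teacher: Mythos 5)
Your overall strategy---lift $\xi_T(\particle_{1:T},\ccdot)$ to an extended distribution over (ancestor table, index path) and read the algorithm as alternating conditional updates---is the right family of argument, and your first block is fine. The gap is in your ``decisive step'': backward sampling does \emph{not} draw $\outputParticleIndex'$ from $\bar\xi_T(\ccdot\mid\ancestorIndex)$ when $\bar\xi_T(\diff\ancestorIndex,\diff\outputParticleIndex)=\xi_T(\particle_{1:T},\diff\outputParticleIndex)\,\kappa_{\outputParticleIndex}(\diff\ancestorIndex)$ and $\kappa_{\outputParticleIndex}$ is the law actually used in Step~\ref{alg:iterated_rwcsmc:1:a}. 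That law assigns the reference path's own ancestors deterministically, $\ancestorIndex_{\timeIndex-1}^{\outputParticleIndex_\timeIndex}=\outputParticleIndex_{\timeIndex-1}$, so $\kappa_{\outputParticleIndex'}(\{\ancestorIndex\})$ carries the indicator $\prod_{\timeIndex}\ind\{\ancestorIndex_{\timeIndex-1}^{\outputParticleIndex'_\timeIndex}=\outputParticleIndex'_{\timeIndex-1}\}$, and the conditional $\bar\xi_T(\ccdot\mid\ancestorIndex)$ is supported only on index paths that trace an ancestral lineage of the table $\ancestorIndex$. Your claimed product formula for this conditional is strictly positive on all of $[\nParticles]_0^{\nTimeSteps}$, so it cannot be correct, and the backward-sampling sweep produces paths off the surviving lineages with positive probability---that is its entire purpose---so it is not the Gibbs update of $\bar\xi_T$. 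Concretely, with $T=2$, $N=1$ and a table $a_1^0=a_1^1=0$, your conditional forces $k_1'=0$, whereas the backward-sampling step returns $k_1'=1$ with positive probability. The single-augmentation two-block Gibbs reading therefore only recovers Proposition~\ref{prop:discrete_markov_kernel_rwcsmc_without_backward_sampling}; the ``main obstacle'' is not carrying the $\mutation_{\timeIndex+1}$-factors through the telescoping but the support restriction itself.

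The paper's proof is not a reversibility or single-target Gibbs argument but a duality. It introduces a second kernel $\widehat{\varXi}_T$ in which the reference path's ancestors are drawn from the \emph{backward} kernels (i.e.\ ancestor sampling) and the new path is obtained by lineage tracing, and verifies by direct cancellation that
\begin{align}
 \xi_T(\particle_{1:T},\diff j_{1:T})\,\varXi_T((\particle_{1:T},j_{1:T}),\diff a_{1:T-1}\times\diff k_{1:T})
 = \xi_T(\particle_{1:T},\diff k_{1:T})\,\widehat{\varXi}_T((\particle_{1:T},k_{1:T}),\diff a_{1:T-1}\times\diff j_{1:T});
\end{align}
integrating out $(a_{1:T-1},j_{1:T})$ then gives $\xi_T$-invariance. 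In your language: the forward half is the $\ancestorIndex$-conditional of $\xi_T\otimes\kappa$, while the backward-sampling half is the $\outputParticleIndex$-conditional of a \emph{different} augmentation $\xi_T\otimes\hat\kappa$ (with ancestor sampling replacing the deterministic assignment of the reference's ancestors), and invariance follows because the two augmentations share the same $\ancestorIndex$-marginal---which is exactly what the displayed identity encodes once the path indices are integrated out. To repair your argument you must introduce this second augmentation; with a single $\bar\xi_T$ the argument does not close.
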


\subsubsection{Induced \texorpdfstring{$\Target_{\nTimeSteps,\nDimensions}$-}{}invariant Markov kernel}


Given $\smash{\State_{1:T} = \state_{1:\nTimeSteps} = \state_{1:\nTimeSteps}[l]}$, let
\begin{align}
 \smash{\IteratedCsmcKernelRandomWalk{\nTimeSteps,\nDimensions,\state_{1:T}}{\nParticles}(\diff \particle_{1:\nTimeSteps} \times \diff a_{1:\nTimeSteps-1} \times \diff\outputParticleIndex_{1:\nTimeSteps} \times \diff \state_{1:T}')}
\end{align}
be the law of all the random variables $\smash{(\Particle_{1:T}, A_{1:T-1}, K_{1:T}, \State_{1:T}')}$ generated in Steps~\ref{alg:iterated_rwcsmc:1}--\ref{alg:iterated_rwcsmc:3} of Algorithm~\ref{alg:iterated_rwcsmc} (with or without the forced-move extension and with or without backward sampling). Appendix~\ref{app:subsec:joint_law_rwcsmc} gives a more formal definition of this law.

Let $\smash{\ExpectationCsmcKernelRandomWalk{\nTimeSteps,\nDimensions,\state_{1:\nTimeSteps}}{\nParticles}}$ denote expectation \WRT $\smash{\IteratedCsmcKernelRandomWalk{\nTimeSteps,\nDimensions,\state_{1:T}}{\nParticles}}$. Algorithm~\ref{alg:iterated_rwcsmc} induces a Markov kernel 
\begin{align}
 \smash{\InducedIteratedCsmcKernelRandomWalk{\nTimeSteps,\nDimensions}{\nParticles}(\state_{1:\nTimeSteps}, \diff \state_{1:T}') 
 \coloneqq \ExpectationCsmcKernelRandomWalk{\nTimeSteps,\nDimensions,\state_{1:\nTimeSteps}}{\nParticles}[\ind\{\State_{1:T}' \in \diff \state_{1:T}'\}],}
 \label{eq:iterated_rwcsmc_kernel}
\end{align}
for $\smash{(\state_{1:\nTimeSteps},\diff \state_{1:T}') \in \spaceState_{T,D} \times \sigFieldState_{T,D}}$. The following proposition shows that this Markov kernel leaves $\Target_{\nTimeSteps, \nDimensions}$ invariant. It can be proved by interpreting the \gls{IRWCSMC} kernel as a special case of the generic iterated \gls{CSMC} approach described in \citet{finke2016embedded}. For completeness, we provide a simple proof in Appendix~\ref{app:subsec:rwcsmc_invariance}.  

\begin{proposition}\label{prop:rwcsmc_invariance}
 For any $\smash{N, T, D \in \naturals}$, $\smash{\Target_{T,D}\InducedIteratedCsmcKernelRandomWalk{\nTimeSteps,\nDimensions}{\nParticles} = \Target_{T,D}}$. \tendmark
\end{proposition}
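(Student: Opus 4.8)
The plan is to deduce the $\Target_{T,D}$-invariance of $\smash{\InducedIteratedCsmcKernelRandomWalk{\nTimeSteps,\nDimensions}{\nParticles}}$ from Proposition~\ref{prop:discrete_markov_kernel_rwcsmc_without_backward_sampling} (and Proposition~\ref{prop:discrete_markov_kernel_rwcsmc_with_backward_sampling} in the backward-sampling case) by the same ``enlarge-the-space'' argument that is standard for conditional SMC. First I would construct an extended target distribution on the space of \emph{all} the variables $\smash{(\Particle_{1:T}, A_{1:T-1}, K_{1:T})}$ together with a distinguished reference lineage: concretely, draw $\smash{\State_{1:T}[l] \sim \Target_{T,D}}$, set $\smash{\particle_t^0 := \state_t[l]}$, generate the auxiliary variables $\smash{U_{t,d}^{1:N}}$ and hence $\smash{\particle_t^{1:N}}$ exactly as in Step~\ref{alg:iterated_rwcsmc:1:b}, and then draw a uniform ``reference index'' $\smash{B_T \in [N]_0}$ together with its traced-back lineage. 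The key point is that, because of the symmetry property \eqref{eq:rwcsmc_symmetric_proposal:0} of the Gaussian random-walk proposal, this extended distribution is invariant under relabelling which lineage is called the reference: reconstructing the reference path from index $B_T$ and from index $0$ gives the same law. This is the exact analogue of the well-known extended-target construction of \citet{andrieu2010particle}, adapted to local moves, and is precisely what \citet[Section~6]{finke2016embedded} formalises.

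Next I would observe that, conditionally on the particle system $\smash{\particle_{1:T}}$, Steps~\ref{alg:iterated_rwcsmc:1:a}, \ref{alg:iterated_rwcsmc:2a} and \ref{alg:iterated_rwcsmc:2b} of Algorithm~\ref{alg:iterated_rwcsmc} are exactly the Markov kernel shown in Proposition~\ref{prop:discrete_markov_kernel_rwcsmc_without_backward_sampling} to leave $\smash{\xi_T(\particle_{1:T}, \ccdot)}$ invariant (and Proposition~\ref{prop:discrete_markov_kernel_rwcsmc_with_backward_sampling} handles the backward-sampling variant, while the forced-move variant is covered since swapping the Boltzmann for the Rosenbluth--Teller selection function is exactly the modified Gibbs kernel, which is likewise $\smash{\xi_T}$-reversible). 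Therefore the full update is a composition of two $\smash{\xi_T(\particle_{1:T},\ccdot)}$-invariant moves in the enlarged index space, hence itself leaves the enlarged distribution invariant; marginalising out all auxiliary variables except $\smash{\State_{1:T}'}$ then yields $\smash{\Target_{T,D}\InducedIteratedCsmcKernelRandomWalk{\nTimeSteps,\nDimensions}{\nParticles} = \Target_{T,D}}$. The chain of equalities to write out is: start from $\smash{\State_{1:T}[l]\sim\Target_{T,D}}$; use the symmetry of the proposal to identify the law of $\smash{(\particle_{1:T}, K_{1:T})}$ with the marginal of the extended target; apply the conditional $\smash{\xi_T}$-invariance; and read off the marginal law of the output path, which is again $\smash{\Target_{T,D}}$.

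I expect the main obstacle to be purely bookkeeping rather than conceptual: writing down the extended target and the reference-lineage construction carefully enough that the symmetry argument is airtight, in particular tracking how the ancestor indices $\smash{A_{1:T-1}}$ enter both the proposal density and the selection weights $\smash{\logWeightRandomWalk_t}$, and checking that the normalising constant $\smash{\xi_T(\particle_{1:T},\ccdot)}$ truly is the conditional distribution that appears when one marginalises. Since all of this is a routine adaptation of \citet{finke2016embedded}, and since Propositions~\ref{prop:discrete_markov_kernel_rwcsmc_without_backward_sampling} and \ref{prop:discrete_markov_kernel_rwcsmc_with_backward_sampling} already isolate the only non-trivial invariance fact, the proof in Appendix~\ref{app:subsec:rwcsmc_invariance} should amount to one extended-target definition followed by a short computation, and I would keep it at that level of detail.
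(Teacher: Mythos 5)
Your overall strategy matches the paper's: Appendix~\ref{app:subsec:rwcsmc_invariance} works with the generalised law $\smash{\IteratedCsmcKernelRandomWalk{T,D,\state_{1:T}}{N,\star}}$ from Appendix~\ref{app:subsec:joint_law_rwcsmc}, in which the reference path sits at uniformly random index positions $J_{1:T}$, and then verifies an exchange identity that combines the symmetry \eqref{eq:rwcsmc_symmetric_proposal:0} of the Gaussian proposal with exactly the index-kernel symmetries underlying Propositions~\ref{prop:discrete_markov_kernel_rwcsmc_without_backward_sampling} and~\ref{prop:discrete_markov_kernel_rwcsmc_with_backward_sampling}; backward sampling is handled by exhibiting the reverse kernel as an ancestor-sampling variant, and the forced move as a partially collapsed Gibbs step, just as you propose.

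One detail of your extended-target construction would fail as written: you identify the reference by ``a uniform index $B_T \in [N]_0$ together with its traced-back lineage''. In the \gls{RWCSMC} algorithm there are no ancestor indices at the proposal stage --- the particles at each time $t$ are scattered around $\particle_t^0$ independently across time, and the ancestor indices $A_{1:T-1}$ are only generated in the index-update phase (Steps~\ref{alg:iterated_rwcsmc:1:a}, \ref{alg:iterated_rwcsmc:2a}, \ref{alg:iterated_rwcsmc:2b}) that you are treating as the second, $\xi_T$-invariant move. So there is no lineage to trace back when the extended target is defined, and with a single final-time index the conditional law of the reference position given $\Particle_{1:T}$ would not equal $\xi_T(\particle_{1:T},\ccdot)$. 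The correct construction, and the one the paper uses, draws a full sequence $j_{1:T}$ of independent uniform indices on $[N]_0$, one per time step, and plants $\state_t$ at position $j_t$; only then does \eqref{eq:rwcsmc_symmetric_proposal:0} give that the conditional of $J_{1:T}$ given the particles is exactly $\xi_T(\particle_{1:T},\ccdot)$, which is what makes your Gibbs decomposition (refresh particles given the reference, update the indices $\xi_T$-invariantly, read off the path at the new indices) go through. With that correction your argument is complete and amounts to a repackaging of the paper's exchange-identity computation.
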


We stress that this proposition does not require the high-dimensional regime from Assumption~\ref{as:iid_model}. That is, Algorithm~\ref{alg:iterated_rwcsmc} induces a valid (i.e. $\Target_{\nTimeSteps, \nDimensions}$-invariant) Markov kernel even if the model does not factorise into $\nDimensions$ \gls{IID} components.

\begin{remark}\label{rem:relationship_with_unconditional_smc}
 As explained in \citet{andrieu2010particle}, the (standard) \gls{CSMC} algorithm is closely linked to the justification of a corresponding `unconditional' \gls{SMC} algorithm. However, for the \gls{RWCSMC} algorithm, no such `unconditional' \gls{SMC} counterpart exists. We expand on this in Appendix~\ref{app:subsec:relationship_with_unconditional_smc}. \tendmark
\end{remark}

%
%
%

\subsection{Stability in high dimensions}
\label{subsec:rwcsmc_scaling}

In this section, prove that the \gls{IRWCSMC} algorithm is dimensionally stable. Throughout this section, we assume that the model follows the high-dimensional regime from Assumption~\ref{as:iid_model}. Such assumptions are common in the literature on optimal scaling \citep{roberts1997weak}. However, we stress that the algorithm is agnostic to this structure, i.e.\ it does not exploit the fact that the target distribution factorises. We thus expect such results to hold more generally.

\subsubsection{Non-degenerate limiting law of the genealogies}
\label{subsec:non-degnerate_limiting_genealogies}

In the following, we show that as as $\nDimensions \to \infty$, the law of the genealogies (and the indices of the new reference path) induced by the \gls{IRWCSMC} algorithm converges to a limit that is non-degenerate in the sense that the particle lineages do not necessarily coalesce with the reference path (and, likewise, the indices of the new reference path do not necessarily coincide with those of the old reference path). Define
\begin{align}
 \smash{\logWeightRandomWalkSingle_t = \log G_t + \log m_t,}
\end{align}
with the convention $\logWeightRandomWalkSingle_{T+1} \equiv 0$. We again use the convention that $\partial_t^i$ denotes the $i$th derivative w.r.t. $\stateSingle_t$ and with $\partial_t \coloneqq \partial_t^1$, and we write $\pi_T(\varphi) \coloneqq \smash{\int_{\reals^T} \varphi(x_{1:T}) \pi_T(x_{1:T}) \intDiff x_{1:T}}$, for any $\pi_T$-integrable function $\smash{\varphi: \reals^T \to \reals}$. With these conventions, we make the following moment assumptions which are similar to the assumptions in \citet{bedard2012scaling}, and which are assumed to hold for any $t \in [T]$.
\begin{enumerate}[label=\textbf{C\arabic*}, series=rwcsmc]
  \item \label{as:moments_bounded} $\logWeightRandomWalkSingle_t$ is twice continuously differentiable and
  \begin{itemize}
   \item $\partial_t^2 \logWeightRandomWalkSingle_t$, $\partial_t^2 \logWeightRandomWalkSingle_{t+1}$, $\partial_t \partial_{t+1} \logWeightRandomWalkSingle_{t+1}$ are Lipschitz-continuous and bounded,
   \item $\pi_T(\lvert \partial_t \logWeightRandomWalkSingle_t \rvert^4), \pi_T(\lvert \partial_t \logWeightRandomWalkSingle_{t+1}\rvert^4) < \infty$. \tendmark
  \end{itemize}
\end{enumerate}
Before stating the result, we define a law $\smash{\IteratedCsmcKernelRandomWalk{\nTimeSteps}{\nParticles}(\diff v_{1:\nTimeSteps} \times \diff w_{1:\nTimeSteps} \times \diff a_{1:\nTimeSteps-1} \times \diff k_{1:T})}$. This is a joint distribution of random variables $(V_{1:T}, W_{1:T}, A_{1:T-1}, K_{1:T})$, where $A_{1:T-1}$ and $K_{1:T}$ are collections of ancestor and particle indices as in the fixed-dimensional case and where $V_t \coloneqq V_t^{1:N}$ and $W_t \coloneqq W_t^{1:N}$ are each $N$-dimensional Gaussian vectors which are such that $(V_s, W_s)$ and $(V_t, W_t)$ are independent whenever $s \neq t$ and such that
\begin{align}
  \E
  \begin{bmatrix}
   V_t\\
   W_t
  \end{bmatrix}
  = \tfrac{1}{2}\ell_t
  \begin{bmatrix}
   \pi_T(\partial_t^2 \logWeightRandomWalkSingle_t) \unitMat_N\\\
   \pi_T(\partial_t^2 \logWeightRandomWalkSingle_{t+1}) \unitMat_N
  \end{bmatrix}
   \eqqcolon \bar{\mu}_{t|T},
\end{align}
and, recalling that $\varSigma = \tfrac{1}{2}(\iMat_N + \unitMat_N \unitMat_N^\T)$,
\begin{align}
 \var
 \begin{bmatrix}
  V_t\\
  W_t
 \end{bmatrix}
  = \ell_t
  \begin{bmatrix}
   \pi_T([\partial_{t} \logWeightRandomWalkSingle_t]^2)  \varSigma & \pi_T([\partial_{t} \logWeightRandomWalkSingle_t][\partial_{t} \logWeightRandomWalkSingle_{t+1}]) \varSigma\\
   \pi_T([\partial_{t} \logWeightRandomWalkSingle_t][\partial_{t} \logWeightRandomWalkSingle_{t+1}]) \varSigma & \pi_T([\partial_{t} \logWeightRandomWalkSingle_{t+1}]^2) \varSigma
  \end{bmatrix}
  \eqqcolon \widebar{\varSigma}_{t|T}.
\end{align}
We will also use the convention that $V_t^0 = W_t^0 \equiv 0$ for any $t \in [T]$ and $W_{0}^n \equiv 0$ for any $n \in [N]$.

Note that under Assumption~\ref{as:moments_bounded}, by Lemma~\ref{lem:integration_by_parts_identity} in Appendix~\ref{app:subsec:prop:stability_of_acceptance_rates_rwcsmc},
  \begin{align}
   \calI_{t|T} & \coloneqq \smash{\pi_T([\partial_t \log \pi_t]^2)= - \pi_T(\partial_t^2 \log \pi_T)}\\
   & = \smash{\var[V_t^n + W_t^n]/\ell_t = - 2 \E[V_t^n + W_t^n]/\ell_t < \infty.}
  \end{align}

We are now ready to state the limiting law of the genealogies. Throughout, we assume \ref{as:moments_bounded}. Then we can define the law $\smash{\IteratedCsmcKernelRandomWalk{\nTimeSteps}{\nParticles}(\diff v_{1:\nTimeSteps} \times \diff w_{1:\nTimeSteps} \times \diff a_{1:\nTimeSteps-1} \times \diff k_{1:T})}$ by the following sampling procedure (a formal definition is given in Appendix~\ref{app:subsec:formal_definition_of_the_limiting_law}).
\begin{enumerate}
 \item For $t \in [T]$, sample $\smash{(V_t, W_t) = (v_t, w_t) \sim \dN(\bar{\mu}_{t|T}, \widebar{\varSigma}_{t|T})}$.
 \item For $t \in [T-1]$,
  \begin{enumerate}
   \item set $\smash{A_t^0 = a_t^0 \coloneqq 0}$,
   \item sample $\smash{A_t^n = a_t^n = l \in [N]_0}$ with probability $\smash{\selectionFunctionBoltzmann{l}(\{v_t^m + w_{t-1}^{a_{t-1}^m}\}_{\particleIndexAlt = 1}^\nParticles)}$, independently for $n \in [N]$.
  \end{enumerate}
  \item \label{enum:rwcsmc_limiting_law:3} Sample $\smash{K_T = k_T\in [N]_0}$ with probability $\smash{\selectionFunctionBoltzmann{k_T}(\{v_T^m + w_{T-1}^{a_{T-1}^m}\}_{\particleIndexAlt = 1}^\nParticles)}$.
  \item \label{enum:rwcsmc_limiting_law:4} For $t = T-1, \dotsc, 1$, set $\smash{K_t = k_t \coloneqq a_t^{k_{t+1}}}$.
\end{enumerate}
As usual, if we use the forced-move extension, we must replace $\selectionFunctionBoltzmann{\particleIndex}$ by $\selectionFunctionRosenbluth{\particleIndex}$ in Step~\ref{enum:rwcsmc_limiting_law:3}. Likewise, if we use backward sampling, we must instead sample $\smash{K_t = k_t \in[N]_0}$ with probability $\smash{\selectionFunctionBoltzmann{k_t}(\{v_t^m + w_t^m + w_{t-1}^{a_{t-1}^m}\}_{\particleIndexAlt = 1}^\nParticles)}$ in Step~\ref{enum:rwcsmc_limiting_law:4}. Hereafter, $\ExpectationCsmcKernelRandomWalk{\nTimeSteps}{\nParticles}$ denotes expectation \WRT $\IteratedCsmcKernelRandomWalk{\nTimeSteps}{\nParticles}$.

\subsubsection{Convergence to the non-degenerate limit}

Proposition~\ref{prop:limiting_rwcsmc_algorithm}, proved in Appendix~\ref{app:subsec:prop:limiting_rwcsmc_algorithm}, shows that in high dimensions, the law of the genealogies and the indices of the new reference path under the \gls{IRWCSMC} update specified in Algorithm~\ref{alg:iterated_rwcsmc} converges to the limiting law specified above. Here again, $\lVert \ccdot \rVert$ denotes the total variation distance.

\begin{proposition}[convergence of the law of the genealogies]\label{prop:limiting_rwcsmc_algorithm}
 Let $T, N \in \naturals$, assume \ref{as:iid_model} as well as \ref{as:moments_bounded}, and write
  \begin{align}
  \!\!\!\!\! \smash{\bar{d}_{T,D,\state_{1:T}}^N \coloneqq \bigl\lVert \ExpectationCsmcKernelRandomWalk{T,D,\state_{1:T}}{\nParticles}[\ind\{(A_{1:\nTimeSteps-1}, \OutputParticleIndex_{1:\nTimeSteps}) \in \ccdot\}] - \ExpectationCsmcKernelRandomWalk{\nTimeSteps}{\nParticles}[\ind\{(A_{1:\nTimeSteps-1}, \OutputParticleIndex_{1:\nTimeSteps}) \in \ccdot\}]\bigr\rVert.}\!\!\!\!\!
 \end{align}
 Then there exists a family $\smash{\ConcentrationSet_{\nTimeSteps, \nDimensions} \in \sigFieldState_{T,D}}$ with $\smash{\lim_{\nDimensions \to \infty} \Target_{\nTimeSteps, \nDimensions}(\ConcentrationSet_{\nTimeSteps, \nDimensions}) = 1}$ and
 \[
  \smash{\lim\nolimits_{\nDimensions \to \infty} \sup\nolimits_{\state_{1:\nTimeSteps} \in \ConcentrationSet_{\nTimeSteps, \nDimensions}} \smash{\bar{d}_{T,D,\state_{1:T}}^N} = 0.} \mendmark
 \]
%
%
\end{proposition}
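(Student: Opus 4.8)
The plan is to establish the total-variation convergence by first reducing the problem to a finite-dimensional one: the quantities $A_{1:T-1}$ and $K_{1:T}$ are drawn, in Algorithm~\ref{alg:iterated_rwcsmc}, through the Boltzmann (or Rosenbluth--Teller) selection functions applied to the log-weight increments
\[
  \logWeightRandomWalk_\timeIndex(\particle_{\timeIndex-1}^{a_{\timeIndex-1}^\particleIndex}, \particle_{\timeIndex}^\particleIndex) - \logWeightRandomWalk_\timeIndex(\particle_{\timeIndex-1}^0, \particle_{\timeIndex}^0),
\]
and (with backward sampling) the analogous increments of $\logBackwardWeightRandomWalk_\timeIndex$. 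Since these increments are the only way the particle positions enter the discrete randomness, it suffices to show that the joint law of the full array of such increments converges, as $D \to \infty$ and uniformly over $\state_{1:T}$ in a suitable high-probability set $\ConcentrationSet_{T,D}$, to the Gaussian law of $(V_{1:T}, W_{1:T})$ described before the statement; the map from these increments to $(A_{1:T-1}, K_{1:T})$ is then a fixed measurable (in fact, piecewise-constant) function, so the TV bound transfers by the data-processing inequality.

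First I would fix $t$ and expand, for each particle $n \in [N]$ and each spatial coordinate $d \in [D]$, the increment $\logWeightRandomWalkSingle_t(z_{t-1,d}^{a}, z_{t,d}^n) - \logWeightRandomWalkSingle_t(z_{t-1,d}^0, z_{t,d}^0)$ by a second-order Taylor expansion in the Gaussian perturbations $\sqrt{\ell_t/D}\,U_{t,d}^n$ (and the inherited perturbations from the previous time level). Summing over $d \in [D]$, the first-order terms are sums of $D$ independent mean-zero contributions scaled by $D^{-1/2}$, so a Lindeberg/Lyapunov central limit theorem (using the fourth-moment bound $\pi_T(|\partial_t \logWeightRandomWalkSingle_t|^4)<\infty$ and $\pi_T(|\partial_t \logWeightRandomWalkSingle_{t+1}|^4)<\infty$ from \ref{as:moments_bounded}) gives joint asymptotic normality, with covariance matrix built from $\pi_T([\partial_t\logWeightRandomWalkSingle_t]^2)$, $\pi_T([\partial_t\logWeightRandomWalkSingle_t][\partial_t\logWeightRandomWalkSingle_{t+1}])$ and the within-row correlation encoded by $\varSigma = \tfrac12(\iMat_N + \unitMat_N\unitMat_N^\T)$ coming from the covariance of $U_{t,d}^{1:N}$. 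The second-order terms concentrate: by the law of large numbers and the boundedness of $\partial_t^2\logWeightRandomWalkSingle_t$, $\partial_t^2\logWeightRandomWalkSingle_{t+1}$ and $\partial_t\partial_{t+1}\logWeightRandomWalkSingle_{t+1}$, the average of the $D$ Hessian terms converges to $\pi_T(\partial_t^2\logWeightRandomWalkSingle_t)$ (resp. $\pi_T(\partial_t^2\logWeightRandomWalkSingle_{t+1})$), which after multiplication by $\tfrac12\ell_t$ yields exactly the drift $\bar\mu_{t|T}$; the Lipschitz hypotheses control the Taylor remainders, which are $\bo(D^{-1/2})$. The set $\ConcentrationSet_{T,D}$ is the event that the coordinatewise empirical averages appearing in these expansions — of $\partial_t^2\log\pi_T$-type quantities evaluated at the reference path $\state_{1:T}[l]$ — are within $o(1)$ of their $\pi_T$-expectations; since $\state_{1:T}\sim\pi_T$ has i.i.d. coordinates under \ref{as:iid_model} and the relevant functions are integrable, the weak law of large numbers gives $\Target_{T,D}(\ConcentrationSet_{T,D})\to 1$, and on $\ConcentrationSet_{T,D}$ all estimates are uniform.

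Having identified the Gaussian limit of the increment array, I would propagate it forward through the recursive resampling structure: conditionally on the increments at time $t$ and on the previously drawn indices $a_{t-1}^{1:N}$, the indices $a_t^{1:N}$ are drawn via $\selectionFunctionBoltzmann{\cdot}$ applied to $\{v_t^m + w_{t-1}^{a_{t-1}^m}\}_m$, matching the limiting sampling procedure exactly; combining levels $t = 1, \dots, T$ and then the backward pass for $K_{1:T}$, a telescoping argument over the $T$ time steps (each contributing an $o(1)$ TV error that is uniform over $\ConcentrationSet_{T,D}$) gives the claim. The main obstacle I anticipate is the uniformity over $\state_{1:T} \in \ConcentrationSet_{T,D}$ combined with the need to track the inherited perturbations correctly through the ancestor indices: a perturbation introduced at time $t-1$ is carried into the time-$t$ weight via $\mutation_\timeIndex(\particle_{\timeIndex-1}^{a_{\timeIndex-1}^\particleIndex}, \particle_\timeIndex^\particleIndex)$, so the increments at different times are not independent given the reference path, and one must verify that the joint CLT (not just the marginal ones) holds with the cross-time covariances vanishing — which is why the limiting $(V_t, W_t)$ are taken independent across $t$ but $V_t$ and $W_t$ are correlated within a time step. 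Carefully bookkeeping which Gaussian drives which weight, and checking that the $\bo(D^{-1/2})$ remainders do not accumulate destructively across the $N(N+1)T$ increments, is the delicate part; the rest is a (somewhat lengthy) application of standard CLT and LLN estimates under \ref{as:moments_bounded}.
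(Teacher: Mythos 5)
Your overall architecture is the same as the paper's: a second-order Taylor expansion of the log-weight increments in the $\sqrt{\ell_t/D}$-scaled perturbations, a high-probability concentration set for the reference path, a CLT for the leading terms yielding the Gaussian limit $(V_{1:T},W_{1:T})$ with the drift $\bar\mu_{t|T}$ coming from the averaged second derivatives, Lipschitz control of the Taylor remainders, and a telescoping/recursive pass through the resampling and backward-sampling selection functions. Two steps in your write-up, however, are not right as stated and would need repair.

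First, the reduction via the ``data-processing inequality'' does not work: $(A_{1:T-1},K_{1:T})$ is not a deterministic (piecewise-constant) function of the increments but is sampled from a Markov kernel whose probabilities are the selection functions evaluated at the increments, and, more importantly, the increments converge only \emph{weakly} to the Gaussian limit, so no total-variation bound is available to push through a kernel. What actually saves the argument — and what the paper's proof uses — is that the index space is finite, so the total-variation distance is a finite sum of differences of the form $\lvert \E[\varUpsilon_D] - \E[\varUpsilon]\rvert$ where $\varUpsilon$ is a product of selection functions, which are bounded and Lipschitz (Lemma~\ref{lem:selection_functions_lipschitz}); the Lipschitz property absorbs the $L^1$-vanishing remainders and the boundedness lets the weak convergence of the leading terms conclude via the continuous mapping theorem. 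You should make this the explicit bridge rather than data processing. Second, your concentration set $\ConcentrationSet_{T,D}$ controls only empirical \emph{averages} of derivative functionals; to verify the Lindeberg condition for the triangular-array CLT \emph{uniformly} over $\state_{1:T} \in \ConcentrationSet_{T,D}$ you must also include control of the normalised suprema $\sup_{d}\lvert[\partial_t \logWeightRandomWalkSingle_t](x_{t-1:t,d})\rvert$ (and the analogous $\logWeightRandomWalkSingle_{t+1}$ terms) in the definition of the set — this is precisely where the fourth-moment bounds in \ref{as:moments_bounded} and the restriction of the concentration rate to $D^{-\eta}$ with $\eta<1/4$ enter in the paper (see the terms $\widetilde{\calV}_{t|T}$, $\widetilde{\calW}_{t|T}$ in \eqref{eq:concentration_set_rwcsmc} and Lemma~\ref{lem:concentration_set_rwcsmc}). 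A plain WLLN statement with an unquantified $o(1)$ tolerance does not deliver this uniformity. Your remaining concerns (cross-time independence of $(V_t,W_t)$, bookkeeping of inherited perturbations through the ancestor indices) are handled exactly as you anticipate, since $U_{s,d}^{1:N}$ and $U_{t,d}^{1:N}$ are independent for $s\neq t$ so the cross-time covariances vanish identically.
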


The following corollary shows that the \emph{acceptance rate} at any time~$t$ associated with Algorithm~\ref{alg:iterated_rwcsmc},
\begin{align}
 \smash{\acceptanceRateRwCsmc{T,D,\state_{1:T}}{\nParticles}(t) 
  \coloneqq
   \ExpectationCsmcKernelRandomWalk{\nTimeSteps,\nDimensions,\state_{1:T}}{\nParticles}[ \ind\{\OutputParticleIndex_t \neq 0\}],}
\end{align}
converges to a strictly positive limit
\begin{align}
 \smash{\acceptanceRateRwCsmc{\nTimeSteps}{\nParticles}(t) 
 \coloneqq
   \ExpectationCsmcKernelRandomWalk{\nTimeSteps}{\nParticles}[ \ind\{\OutputParticleIndex_t \neq 0\}].}
\end{align}
Note that the acceptance rates and their limits depend on $\ell_{1:T}$ even though we do not make this explicit in our notation. 

\begin{corollary}[dimensional stability of the acceptance rates] \label{cor:dimensional_stability_of_acceptance_rates_rwcsmc} 
 Assume \ref{as:iid_model} and \ref{as:moments_bounded}, $T, N \in \naturals$, and let $t \in [T]$. Then $\smash{ \acceptanceRateRwCsmc{\nTimeSteps}{\nParticles}(t) > 0}$. Furthermore, with $\ConcentrationSet_{\nTimeSteps, \nDimensions} \in \sigFieldState_{T,D}$ as in Proposition~\ref{prop:limiting_rwcsmc_algorithm}:
 \[
   \smash{\lim\nolimits_{\nDimensions \to \infty} \sup\nolimits_{\state_{1:\nTimeSteps} \in \ConcentrationSet_{\nTimeSteps, \nDimensions}} \lvert \acceptanceRateRwCsmc{\nTimeSteps,\nDimensions,\state_{1:T}}{\nParticles}(t) - \acceptanceRateRwCsmc{\nTimeSteps}{\nParticles}(t)\rvert = 0.} \mendmark
  \]
 \end{corollary}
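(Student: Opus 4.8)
The plan is to obtain the corollary as an essentially immediate consequence of Proposition~\ref{prop:limiting_rwcsmc_algorithm}, together with a direct inspection of the limiting law $\IteratedCsmcKernelRandomWalk{\nTimeSteps}{\nParticles}$ to get strict positivity. First I would note that, for each $t \in [\nTimeSteps]$, the event $\{\OutputParticleIndex_t \neq 0\}$ depends only on $(A_{1:\nTimeSteps-1}, \OutputParticleIndex_{1:\nTimeSteps})$, both under the finite-dimensional law $\IteratedCsmcKernelRandomWalk{\nTimeSteps,\nDimensions,\state_{1:T}}{\nParticles}$ and under the limiting law $\IteratedCsmcKernelRandomWalk{\nTimeSteps}{\nParticles}$. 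Hence, with $\lVert \ccdot \rVert$ the total-variation distance used in Proposition~\ref{prop:limiting_rwcsmc_algorithm},
\begin{align}
 \bigl\lvert \acceptanceRateRwCsmc{\nTimeSteps,\nDimensions,\state_{1:T}}{\nParticles}(t) - \acceptanceRateRwCsmc{\nTimeSteps}{\nParticles}(t) \bigr\rvert
 &= \bigl\lvert \ExpectationCsmcKernelRandomWalk{\nTimeSteps,\nDimensions,\state_{1:T}}{\nParticles}[\ind\{\OutputParticleIndex_t \neq 0\}] - \ExpectationCsmcKernelRandomWalk{\nTimeSteps}{\nParticles}[\ind\{\OutputParticleIndex_t \neq 0\}] \bigr\rvert \\
 &\leq \bar{d}_{T,D,\state_{1:T}}^N ,
\end{align}
so taking $\sup_{\state_{1:\nTimeSteps} \in \ConcentrationSet_{\nTimeSteps, \nDimensions}}$ and letting $\nDimensions \to \infty$, Proposition~\ref{prop:limiting_rwcsmc_algorithm} delivers the displayed limit of the corollary.

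It then remains to prove $\acceptanceRateRwCsmc{\nTimeSteps}{\nParticles}(t) > 0$, which I would do by unwinding the sampling recipe defining $\IteratedCsmcKernelRandomWalk{\nTimeSteps}{\nParticles}$ in Section~\ref{subsec:non-degnerate_limiting_genealogies}. The elementary ingredient is that $\selectionFunctionBoltzmann{l}(\{h^m\}_{m=1}^\nParticles) > 0$ and $\selectionFunctionRosenbluth{l}(\{h^m\}_{m=1}^\nParticles) > 0$ for every $l \in [\nParticles]$ and every finite $h^{1:\nParticles} \in \reals^\nParticles$, since in each case the numerator is a positive exponential and the denominator a finite sum of positive terms. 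As the Gaussian vectors $(V_{1:\nTimeSteps}, W_{1:\nTimeSteps})$ are almost surely finite, conditionally on these vectors the joint law of all index variables generated in Steps~2--4 of the recipe assigns positive conditional probability to every index configuration compatible with the deterministic constraints $A_s^0 = 0$ and (in the non-backward-sampling case) $\OutputParticleIndex_s = a_s^{\outputParticleIndex_{s+1}}$. For $t = \nTimeSteps$ this already gives $\acceptanceRateRwCsmc{\nTimeSteps}{\nParticles}(\nTimeSteps) = \ExpectationCsmcKernelRandomWalk{\nTimeSteps}{\nParticles}\bigl[\sum_{l=1}^\nParticles \selectionFunctionBoltzmann{l}(\{V_\nTimeSteps^m + W_{\nTimeSteps-1}^{A_{\nTimeSteps-1}^m}\}_{m=1}^\nParticles)\bigr] > 0$ (and likewise with $\selectionFunctionRosenbluth{l}$ under the forced-move extension). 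For $t < \nTimeSteps$ without backward sampling, the event $\{\OutputParticleIndex_\nTimeSteps \neq 0\} \cap \bigcap_{s=t}^{\nTimeSteps-1} \bigcap_{n \in [\nParticles]} \{A_s^n \neq 0\}$ is contained in $\{\OutputParticleIndex_t \neq 0\}$ (tracing $\OutputParticleIndex_s = a_s^{\outputParticleIndex_{s+1}}$ back from $s=\nTimeSteps-1$ to $s=t$ keeps the index in $[\nParticles]$) and has positive probability by the same reasoning; with backward sampling, $\{\OutputParticleIndex_t \neq 0\}$ has positive probability directly, because $\OutputParticleIndex_t$ is then itself drawn from a selection function whose nonzero-index masses are positive given the (finite) Gaussians. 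In every variant, $\acceptanceRateRwCsmc{\nTimeSteps}{\nParticles}(t) > 0$.

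I do not expect a genuine obstacle here: the substantive content lives in Proposition~\ref{prop:limiting_rwcsmc_algorithm}, and this corollary is mostly bookkeeping. The only points requiring a little care are (i) checking that $\{\OutputParticleIndex_t \neq 0\}$ is a functional of precisely the variables whose laws $\bar{d}_{T,D,\state_{1:T}}^N$ compares, so that the total-variation bound transfers verbatim, and (ii) handling the basic, forced-move, and backward-sampling variants uniformly when verifying strict positivity of the limit — each of which, as sketched above, reduces to the fact that the Boltzmann and Rosenbluth--Teller selection functions never place zero mass on a nonzero index for finite inputs.
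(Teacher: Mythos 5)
Your proposal is correct and takes essentially the same route as the paper, whose entire proof reads: the convergence follows immediately from Proposition~\ref{prop:limiting_rwcsmc_algorithm}, and the strict positivity of the limit is due to the finite-moments assumption \ref{as:moments_bounded}. Your expanded positivity argument (finite Gaussians $\Rightarrow$ strictly positive selection probabilities $\Rightarrow$ positive mass on lineages avoiding index $0$, handled separately for the plain, forced-move and backward-sampling variants) is just a careful unpacking of what the paper leaves implicit.
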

\begin{proof}
 The convergence follows immediately from Proposition~\ref{prop:limiting_rwcsmc_algorithm}. The strict positivity of the limit is due to the finite-moments assumption \ref{as:moments_bounded}. 
\end{proof}


\begin{example}[classical MCMC kernels, continued]
 As mentioned above, Algorithm~\ref{alg:iterated_rwcsmc} reduces to a classical \gls{MCMC} kernel with a suitably scaled Gaussian random-walk proposal if $T = N = 1$. In this case, we asymptotic acceptance rates for Barker's kernel and for the \gls{MH} kernel derived in \citet{roberts1997weak, bedard2012scaling, agraval2021optimal}:
 \begin{align}
  \acceptanceRateRwCsmc{1}{1}(1)
   = 
  \begin{dcases}
   \E[\selectionFunctionBoltzmann{1}(V_1^1)] = 
   \E\bigl[\tfrac{\exp\{V_1^1\}}{1 +  \exp\{V_1^1\}}\bigr], 
    & \text{without forced-move,\!\!\!\!\!\!\!\!\!\!\!\!\!}\\[-1ex]
   \E[\selectionFunctionRosenbluth{1}(V_1^1)]
   = \E[1 \wedge \exp\{V_1^1\}]
  = 2 \smash{\standardNormalCdf\bigl(- \tfrac{\sqrt{\ell_1 \calI_{1|1}}}{2}\bigr)}, & \text{with forced-move,\!\!\!\!\!\!\!}
  \end{dcases}
 \label{eq:asymptotic_acceptance_rate_of_rwmh}
 \end{align}
 where $\standardNormalCdf$ is standard-normal cumulative distribution function. \tendmark
\end{example}

\glsreset{ESJD}

Of course, stabilising the acceptance rates in high dimensions is not sufficient for avoiding a breakdown. A widely used criterion for assessing the performance of \gls{MCMC} algorithms is the \emph{\gls{ESJD}} \citep{sherlock2009optimal}, which (for the time-$t$ component in Algorithm~\ref{alg:iterated_rwcsmc}) is given by
\begin{align}
 \smash{\esjdRandomWalk_{T,D}^N(t) \coloneqq \E[\lVert \State_{t}[l+1] - \State_{t}[l] \rVert_2^2],}
\end{align}
where $\lVert \ccdot \rVert_2$ denotes the Euclidean norm and where $\State_{1:T}[l]$ is the $l$th state of the Markov chain with transition kernel $\smash{\InducedIteratedCsmcKernelRandomWalk{T,D}{N}}$ at stationarity. The following proposition (whose proof is the same as that of Proposition~\ref{prop:esjd_rwehmm} in Appendix~\ref{app:subsec:prop:esjd_rwehmm} and is therefore omitted) shows that the \gls{ESJD} is stable in high dimensions.

\begin{proposition}[dimensional stability of the \gls{ESJD}]\label{prop:esjd_rwcsmc}
 Assume \ref{as:iid_model} as well as \ref{as:moments_bounded}, and let $\nTimeSteps, \nParticles \in \naturals$. Then, 
 for any $t \in [T]$,
 \[
  \smash{\lim\nolimits_{\nDimensions \to \infty} \bigl\lvert 
  \esjdRandomWalk_{T,D}^N(t)
  -  \ell_t \acceptanceRateRwCsmc{\nTimeSteps}{\nParticles}(t)\bigr\rvert = 0.} \mendmark
 \]
\end{proposition}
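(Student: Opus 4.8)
The plan is to reduce the claim for the \gls{IRWCSMC} algorithm to the already-established statement for the \gls{RWEHMM} algorithm (Proposition~\ref{prop:esjd_rwehmm}), exploiting the `perturbation' relationship of Propositions~\ref{prop:discrete_markov_kernel_rwcsmc_without_backward_sampling}--\ref{prop:discrete_markov_kernel_rwcsmc_with_backward_sampling} together with the convergence of genealogies in Proposition~\ref{prop:limiting_rwcsmc_algorithm} and the acceptance-rate stability in Corollary~\ref{cor:dimensional_stability_of_acceptance_rates_rwcsmc}. Concretely, observe that $\State_t[l+1] = \particle_t^{\outputParticleIndex_t}$ and $\State_t[l] = \particle_t^0$, so that $\lVert \State_t[l+1] - \State_t[l] \rVert_2^2 = \ind\{\OutputParticleIndex_t \neq 0\}\lVert \particle_t^{\outputParticleIndex_t} - \particle_t^0\rVert_2^2$. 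By construction in Step~\ref{alg:iterated_rwcsmc:1:b}, $\particle_t^n - \particle_t^0 = \sqrt{\ell_t/D}\, U_{t,1:D}^n$ with $U_{t,d}^{1:N} \sim \dN(0,\varSigma)$ independently over $d$, hence conditionally on the index $\outputParticleIndex_t = n \neq 0$ and on the event that $n$ is selected, the squared jump is $\tfrac{\ell_t}{D}\sum_{d=1}^D (U_{t,d}^n)^2$.

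The first step is therefore to decompose
\begin{align}
 \esjdRandomWalk_{T,D}^N(t)
 = \E\Bigl[\tfrac{\ell_t}{D} \textstyle\sum_{n=1}^N \ind\{\OutputParticleIndex_t = n\} \sum_{d=1}^D (U_{t,d}^n)^2\Bigr],
\end{align}
and to argue that by the stationarity of the chain (Proposition~\ref{prop:rwcsmc_invariance}) and the concentration sets $\ConcentrationSet_{\nTimeSteps,\nDimensions}$ of Proposition~\ref{prop:limiting_rwcsmc_algorithm}, one may restrict attention to $\State_{1:T}[l] \in \ConcentrationSet_{\nTimeSteps,\nDimensions}$ up to a vanishing error, since $\Target_{T,D}(\ConcentrationSet_{T,D}) \to 1$ and the integrand is uniformly integrable (the fourth-moment bound in \ref{as:moments_bounded} controls the tails). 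Next, on $\ConcentrationSet_{T,D}$, the law of $(A_{1:T-1},K_{1:T})$ is within total-variation distance $\bar d^N_{T,D,\state_{1:T}} \to 0$ of the limiting law $\IteratedCsmcKernelRandomWalk{\nTimeSteps}{\nParticles}$, and — this is the one genuinely new estimate needed — the selection of $\outputParticleIndex_t = n$ becomes \emph{asymptotically independent} of the radial quantity $\tfrac1D\sum_d (U_{t,d}^n)^2$, which concentrates at its mean $1$ (since the diagonal of $\varSigma$ is $1$) by the weak law of large numbers with rate controlled via $\var[(U^n_{t,d})^2] < \infty$. The reason for the asymptotic independence is that, as in the proof of Proposition~\ref{prop:limiting_rwcsmc_algorithm}, the selection weights depend on the $U_{t,d}^n$ only through the $O(1)$-dimensional linearised statistics $V_t^n, W_t^n$ (sums of $\partial_t \logWeightRandomWalkSingle$-type terms scaled by $1/\sqrt D$), and adding the single coordinate-wise contribution of $(U^n_{t,d})^2/D$ to these sums is negligible.

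Putting these together, $\esjdRandomWalk_{T,D}^N(t) = \ell_t\, \E_{\state_{1:T}}[\ind\{\OutputParticleIndex_t \neq 0\}]\cdot(1 + o(1)) + o(1)$ uniformly over $\ConcentrationSet_{T,D}$, and the acceptance-rate term converges to $\acceptanceRateRwCsmc{\nTimeSteps}{\nParticles}(t)$ by Corollary~\ref{cor:dimensional_stability_of_acceptance_rates_rwcsmc}; the claimed limit follows. Since the \gls{RWEHMM} algorithm has exactly the same particle-scattering Step and the same structure of the discrete selection (Proposition~\ref{prop:discrete_markov_kernel_rwcsmc_without_backward_sampling} identifies the $\xi_T$-invariant kernels), this argument is verbatim the one for Proposition~\ref{prop:esjd_rwehmm}, which is why the paper states the proof is omitted. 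The main obstacle, and the only place real work is required, is making rigorous the asymptotic decoupling between `which particle is selected' and `how far that particle moved': one must show that conditioning on $\{\OutputParticleIndex_t = n\}$ does not bias $\tfrac1D\sum_d (U^n_{t,d})^2$ away from $1$ in the limit. This is handled by a coupling/Taylor-expansion argument (as used throughout Appendix~\ref{app:subsec:prop:limiting_rwcsmc_algorithm}): expand the log-selection-weights to second order in $\sqrt{\ell_t/D}$, check that the dependence on $(U^n_{t,d})^2$ enters only at order $1/D$ per coordinate hence $O(1)$ in aggregate but in a way that is a continuous bounded functional of the already-controlled Gaussian limit, and invoke a bounded-convergence argument together with \ref{as:moments_bounded}. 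Everything else is the same bookkeeping (stationarity, concentration sets, uniform integrability) already deployed for Propositions~\ref{prop:limiting_rwcsmc_algorithm}--\ref{prop:esjd_rwehmm}.
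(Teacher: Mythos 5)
Your overall architecture matches the paper's: the claimed limit is obtained by splitting the error into (i) convergence of the acceptance rate on the concentration sets (Corollary~\ref{cor:dimensional_stability_of_acceptance_rates_rwcsmc}), (ii) the fact that the squared jump of the selected particle is $\tfrac{\ell_t}{D}\sum_{d=1}^D (U_{t,d}^{K_t})^2 \approx \ell_t$, and (iii) a vanishing contribution from the complement of $\ConcentrationSet_{T,D}$. This is exactly the three-term decomposition in the proof of Proposition~\ref{prop:esjd_rwehmm}, which the paper applies verbatim here.

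However, the step you single out as ``the one genuinely new estimate needed'' --- establishing asymptotic independence between the selection event $\{\OutputParticleIndex_t = n\}$ and the radial statistic $\widebar{U}_{t,D}^n \coloneqq D^{-1}\sum_{d=1}^D (U_{t,d}^n)^2$ --- is not needed at all, and you leave it unproven (the ``coupling/Taylor-expansion argument'' is only gestured at). The paper's proof sidesteps it with a one-line bound: since the indicator is bounded by one,
\begin{align}
 \Bigl\lvert \textstyle\sum_{n=1}^N \E\bigl[(\widebar{U}_{t,D}^n - 1)\ind\{\OutputParticleIndex_t = n\}\bigr] \Bigr\rvert
 \leq N \,\E\bigl[\lvert \widebar{U}_{t,D}^1 - 1\rvert\bigr] \to 0,
\end{align}
where the convergence holds because $\widebar{U}_{t,D}^1 \convergesInProbability 1$ and $(\lvert \widebar{U}_{t,D}^1 - 1\rvert)_{D \geq 1}$ is uniformly integrable (a property of the Gaussian proposal noise alone, independent of the state and of \ref{as:moments_bounded}), hence $\widebar{U}_{t,D}^1 \to 1$ in $L^1$. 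No decoupling between ``which particle is selected'' and ``how far it moved'' is required, and the bound is uniform in $\state_{1:T}$, which is what makes terms (ii) and (iii) of the decomposition go through. So your proof is not wrong in outline, but its self-declared main obstacle is illusory, and as written the argument is incomplete precisely at the point where you claim real work is required.
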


\subsubsection{Stability \texorpdfstring{as $T \to \infty$}{in the time horizon}}
\label{subsec:large_time-horizon_asymptotics}

In this section, we discuss scaling of the number of particles, $N+1$, in the time horizon, $T$, in high (spatial) dimensions. Throughout, we let $\ell \coloneqq (\ell_\timeIndex)_{\timeIndex \geq 1}$ be a sequence of positive scaling factors.

Under stability assumptions on the Feynman--Kac model and for some fixed spatial dimension $D$, it is well known that the acceptance rates of the \gls{ICSMC} algorithm,  $\acceptanceRate{\nTimeSteps,D,\state_{1:T}}{\nParticles}(t)$, can be bounded away from zero as $\nTimeSteps  \to \infty$ by scaling the number of particles appropriately. To be more specific:
 Without backward sampling, $\acceptanceRate{\nTimeSteps,\nDimensions,\state_{1:T}}{\nParticles}(t)$ can be controlled by growing $\nParticles$ linearly in $\nTimeSteps$ \citep{andrieu2018uniform, lindsten2015uniform, delmoral2016particle}.
 With backward sampling, $\acceptanceRate{\nTimeSteps,\nDimensions,\state_{1:T}}{\nParticles}(t)$ can be controlled without scaling $\nParticles$ with $\nTimeSteps$ \citep{lee2020coupled}.
Proposition~\ref{prop:stability_of_acceptance_rates} (proved in Appendix~\ref{app:subsec:prop:stability_of_acceptance_rates_rwcsmc}) verifies that -- under the following `factorisation-over-time' assumption -- the \gls{IRWCSMC} algorithm admits the same scaling of $N$ with $T$ as the \gls{ICSMC} algorithm.

\begin{enumerate}[label=\textbf{A\arabic*}, resume=model_assumptions]
 \item \label{as:independent_over_time} For any $t \in [T]$ and any $(x_{t-1}, x_{t-1}') \in \reals^2$, $M_t(x_{t-1}, \ccdot) = M_t(x_{t-1}', \ccdot)$. \tendmark
\end{enumerate}

\begin{proposition}[time-horizon stability of the acceptance rates] \label{prop:stability_of_acceptance_rates} Assume \ref{as:iid_model}, \ref{as:independent_over_time} as well as \ref{as:moments_bounded} and that $\calI(\ell) \coloneqq \inf_{T \in \naturals}\inf_{t \in [T]} \ell_t \calI_{t|T} < \infty$.
 \begin{enumerate}
  \item \label{prop:stability_of_acceptance_rates:without_backward_sampling} Without backward sampling, if there exists $C > 0$ such that $\nParticles \geq C \nTimeSteps$:
  \begin{align}
    \inf_{\nTimeSteps\in \naturals} \inf_{t \in [T]}  \acceptanceRateRwCsmc{\nTimeSteps}{\nParticles}(t) \geq \exp\biggl(-\frac{\exp(\calI(\ell))}{C}\biggr) > 0.
  \end{align}
  
  \item \label{prop:stability_of_acceptance_rates:with_backward_sampling} With backward sampling, for any $N \in \naturals$:
  \[
      \inf_{\nTimeSteps\in \naturals} \inf_{t \in [T]}  \acceptanceRateRwCsmc{\nTimeSteps}{\nParticles}(t) \geq \biggl(1 + \frac{\exp(\calI(\ell))}{N}\biggr)^{\mathrlap{-1}} > 0. \mendmark
  \]
 \end{enumerate}
\end{proposition}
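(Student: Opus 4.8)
The plan is to argue entirely within the limiting law $\IteratedCsmcKernelRandomWalk{\nTimeSteps}{\nParticles}$ of Section~\ref{subsec:non-degnerate_limiting_genealogies}, since by definition $\acceptanceRateRwCsmc{\nTimeSteps}{\nParticles}(t) = \ExpectationCsmcKernelRandomWalk{\nTimeSteps}{\nParticles}[\ind\{\OutputParticleIndex_t \neq 0\}]$ (this law is well defined and each $\calI_{t|T}$ is finite under \ref{as:iid_model} and \ref{as:moments_bounded}). The key first step is to simplify this law using \ref{as:independent_over_time}: since $m_{t+1}(\stateSingle_t,\ccdot)$ then does not depend on $\stateSingle_t$, we have $\partial_t \logWeightRandomWalkSingle_{t+1} \equiv 0$, hence $\partial_t^2\logWeightRandomWalkSingle_{t+1}\equiv 0$ and also $\partial_t \log \pi_T = \partial_t \logWeightRandomWalkSingle_t$. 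Consequently the Gaussian vectors $W_t$ appearing in $\IteratedCsmcKernelRandomWalk{\nTimeSteps}{\nParticles}$ have zero mean vector $\tfrac12\ell_t\pi_T(\partial_t^2\logWeightRandomWalkSingle_{t+1})\unitMat_N$ and zero covariance, i.e.\ $W_t \equiv 0$ for all $t$, while (using Lemma~\ref{lem:integration_by_parts_identity}) $V_t \sim \dN\bigl(-\tfrac12 \ell_t\calI_{t|T}\unitMat_N,\, \ell_t\calI_{t|T}\varSigma\bigr)$ are independent over $t$ -- exactly the law of the vectors $V_t$ in the limiting \gls{RWEHMM} law $\IteratedEhmmKernelRandomWalk{\nTimeSteps}{\nParticles}$. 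Writing $\beta_t \coloneqq 1 - \selectionFunctionBoltzmann{0}(\{V_t^m\}_{m=1}^N)$, one then has $\E[\beta_t] = \acceptanceRateRwEhmm{\nTimeSteps}{\nParticles}(t)$, so Proposition~\ref{prop:lower_bound_of_acceptance_rates_ehmm} supplies the pointwise estimate $\E[\beta_t] \geq \bigl(1 + \exp(\ell_t\calI_{t|T})/N\bigr)^{-1} \geq \bigl(1 + \exp(\calI(\ell))/N\bigr)^{-1}$, using $\ell_t\calI_{t|T}\leq\calI(\ell)$.

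Next I would evaluate $\acceptanceRateRwCsmc{\nTimeSteps}{\nParticles}(t) = \Prob(\OutputParticleIndex_t \neq 0)$ under this simplified law. With backward sampling, because $W_t \equiv 0$ the index $\OutputParticleIndex_t$ in Step~\ref{enum:rwcsmc_limiting_law:4} is drawn with probability $\selectionFunctionBoltzmann{\outputParticleIndex_t}(\{V_t^m\}_{m=1}^N)$, depending only on $V_t$; hence the $\OutputParticleIndex_t$ are independent over $t$ and $\acceptanceRateRwCsmc{\nTimeSteps}{\nParticles}(t) = \E[\beta_t] \geq \bigl(1 + \exp(\calI(\ell))/N\bigr)^{-1}$ for every $T$ and $t\in[T]$, which establishes part~\ref{prop:stability_of_acceptance_rates:with_backward_sampling}. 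Without backward sampling, unwinding Steps~\ref{enum:rwcsmc_limiting_law:3}--\ref{enum:rwcsmc_limiting_law:4} gives $\{\OutputParticleIndex_t \neq 0\} = \{\OutputParticleIndex_T \neq 0\}\cap\bigcap_{s=t}^{T-1}\{a_s^{\OutputParticleIndex_{s+1}}\neq 0\}$. The crucial point is that, conditionally on $V_{1:T}$: (i) the ancestor vectors $a_1,\dotsc,a_{T-1}$ and the index $\OutputParticleIndex_T$ are mutually independent (this again uses $W_t \equiv 0$, which deletes the cross-time term $w_{t-1}^{a_{t-1}^m}$ from the selection weights); (ii) for every fixed $k\in[N]$, $\Prob(a_s^k \neq 0 \mid V_{1:T}) = 1 - \selectionFunctionBoltzmann{0}(\{V_s^m\}_{m=1}^N) = \beta_s$, not depending on $k$, with $a_s^0 = 0$. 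Since each level of the trace-back queries $a_s$ only at the single index $\OutputParticleIndex_{s+1}$, which is a function of the draws at levels $> s$, an induction on $s = T, T-1, \dotsc, t$ yields $\Prob(\OutputParticleIndex_t \neq 0 \mid V_{1:T}) = \prod_{s=t}^T \beta_s$; integrating and using independence of $V_1, \dotsc, V_T$,
\begin{align}
 \acceptanceRateRwCsmc{\nTimeSteps}{\nParticles}(t) = \prod_{s=t}^T \E[\beta_s]
 &\geq \prod_{s=t}^T\Bigl(1 + \frac{\exp(\calI(\ell))}{N}\Bigr)^{-1}\\
 &= \Bigl(1 + \frac{\exp(\calI(\ell))}{N}\Bigr)^{-(T - t + 1)} \geq \Bigl(1 + \frac{\exp(\calI(\ell))}{N}\Bigr)^{-T}.
\end{align}
For part~\ref{prop:stability_of_acceptance_rates:without_backward_sampling} I would then insert $N \geq C\nTimeSteps$ and use $(1+x)^{y}\leq \eul^{xy}$ for $x, y \geq 0$ with $x = \exp(\calI(\ell))/(C\nTimeSteps)$ and $y = C\nTimeSteps$, so that $\bigl(1 + \tfrac{\exp(\calI(\ell))}{C\nTimeSteps}\bigr)^{\nTimeSteps} = \bigl[\bigl(1 + \tfrac{\exp(\calI(\ell))}{C\nTimeSteps}\bigr)^{C\nTimeSteps}\bigr]^{1/C} \leq \eul^{\exp(\calI(\ell))/C}$; hence $\acceptanceRateRwCsmc{\nTimeSteps}{\nParticles}(t) \geq (1 + \exp(\calI(\ell))/N)^{-\nTimeSteps} \geq \exp(-\exp(\calI(\ell))/C)$, uniformly in $T$ and $t\in[T]$. (If the forced-move extension is used, the Boltzmann selection function is replaced by the Rosenbluth--Teller one only in the draw of the last index; by Lemma~\ref{lem:peskun} this only increases that index's probability of being nonzero, so both bounds survive verbatim.)

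The step I expect to be the main obstacle is the decoupling in the no-backward-sampling case -- establishing $\Prob(\OutputParticleIndex_t \neq 0 \mid V_{1:T}) = \prod_{s=t}^T \beta_s$ -- because one must keep careful track of the dependence structure of the trace-back: the ancestor vectors at consecutive times must be conditionally independent given $V_{1:T}$, and the index queried at each level must be measurable with respect to the higher levels. This is precisely where \ref{as:independent_over_time} is genuinely used: it is what forces $W_t \equiv 0$ and hence removes the coupling term $w_{t-1}^{a_{t-1}^m}$, and it has no counterpart in the standard \gls{ICSMC} analysis. The remaining ingredients -- identifying the simplified limiting law, the reduction to the \gls{RWEHMM} estimate of Proposition~\ref{prop:lower_bound_of_acceptance_rates_ehmm}, and the telescoping/exponential bound -- are routine.
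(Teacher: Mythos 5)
Your proposal is correct and follows essentially the same route as the paper's proof: Assumption~\ref{as:independent_over_time} forces $W_t^n \equiv 0$ in the limiting law, so the resampling (and backward) kernels lose their dependence on $a_{t-1}$, the acceptance rate factorises as $\prod_{s=t}^T \E[\beta_s]$ (or a single factor with backward sampling), and Lemma~\ref{lem:lower_bound_on_expectation_of_selection_function} supplies the per-time bound; you merely spell out the trace-back induction and the final $(1+x/n)^n \leq \eul^x$ step, which the paper leaves implicit. One small remark: your inequality $\ell_t \calI_{t|T} \leq \calI(\ell)$ requires reading $\calI(\ell)$ as a supremum rather than the infimum written in the statement -- this is evidently a typo in the proposition (an infimum of nonnegative quantities is trivially finite), and the paper's own proof needs the same reading.
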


\begin{remark}
 Proposition~\ref{prop:esjd_rwcsmc} shows that $\smash{\inf_{T \in \naturals} \inf_{t \in [T]} \esjdRandomWalk_{T,D}^N(t) > 0}$ (i.e.\ the \gls{ESJD} is also stable) under the additional assumption $\inf_{t \geq 1} \ell_t > 0$. \tendmark
\end{remark}

Note that the lower bound for the case with backward sampling in Proposition~\ref{prop:stability_of_acceptance_rates} is the same as the one obtained for the \gls{RWEHMM} algorithm in Corollary~\ref{cor:stability_of_acceptance_rates_ehmm}. This is not a coincidence: under  Assumption~\ref{as:independent_over_time}, Algorithm~\ref{alg:iterated_rwcsmc} with backward sampling induces the same Markov kernel as Algorithm~\ref{alg:iterated_rwehmm}, $\smash{\InducedIteratedCsmcKernelRandomWalk{\nTimeSteps,\nDimensions}{\nParticles}}=\smash{\InducedIteratedEhmmKernelRandomWalk{\nTimeSteps,\nDimensions}{\nParticles}}$. However, recall that in Corollary~\ref{cor:stability_of_acceptance_rates_ehmm}, we were able to prove stability of the acceptance rates in $T$ without relying on Assumption~\ref{as:independent_over_time}. This, along with Propositions~\ref{prop:discrete_markov_kernel_rwcsmc_without_backward_sampling} and \ref{prop:discrete_markov_kernel_rwcsmc_with_backward_sampling} (which show that the \gls{IRWCSMC} algorithm can be viewed as a `perturbed' version of the \gls{RWEHMM} algorithm) motivates the following conjecture.

\begin{conjecture}\label{conj:independent_over_time_not_necessary}
 Assumption~\ref{as:independent_over_time} is not necessary to guarantee stability of the acceptance rates in $T$ with the scaling of $N = N(T)$ from Proposition~\ref{prop:stability_of_acceptance_rates}. \tendmark
\end{conjecture}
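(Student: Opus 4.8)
\emph{A proof strategy for Conjecture~\ref{conj:independent_over_time_not_necessary}.}
The plan is to reduce the conjecture, via the $D\to\infty$ results already established, to a statement about the finite-dimensional limiting index chain, and then to prove that this chain ``forgets'' at a rate that is uniform in $T$, in the spirit of the time-uniform analyses of the standard \gls{ICSMC} algorithm \citep{andrieu2018uniform, delmoral2016particle, lee2020coupled}. First I would apply Proposition~\ref{prop:limiting_rwcsmc_algorithm} to replace $\acceptanceRateRwCsmc{T,D,\state_{1:T}}{\nParticles}(t)$ by its limit $\acceptanceRateRwCsmc{T}{\nParticles}(t)$, so that the remaining question only involves the Gaussian vectors $(V_{1:T},W_{1:T})$ of Section~\ref{subsec:non-degnerate_limiting_genealogies} and the selection functions acting on them. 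For this to have any chance of holding without Assumption~\ref{as:independent_over_time}, one must impose a uniform-in-$T$ stability (strong-mixing / forgetting) assumption on the Feynman--Kac model --- e.g.\ a uniform minorisation of the $M_t$ together with two-sided bounds on the $G_t$, as is standard in this literature --- and deduce from it both that $\calI(\ell) \coloneqq \sup_{T}\sup_{t\in[T]}\ell_t\calI_{t|T} < \infty$ and that the off-diagonal correlations $\pi_T([\partial_t\logWeightRandomWalkSingle_t][\partial_t\logWeightRandomWalkSingle_{t+1}])$ entering $\widebar{\varSigma}_{t|T}$ are uniformly bounded and decay as $\lvert s-t\rvert$ grows.

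For the backward-sampling case I would exploit the identity recorded after Proposition~\ref{prop:stability_of_acceptance_rates}: under Assumption~\ref{as:independent_over_time} the \gls{IRWCSMC} kernel with backward sampling equals the \gls{RWEHMM} kernel, for which Corollary~\ref{cor:stability_of_acceptance_rates_ehmm} already gives a time-uniform lower bound with no factorisation assumption. The strategy is therefore to show, in general, that the limiting backward-sampling index law $\IteratedCsmcKernelRandomWalk{T}{\nParticles}$ is a controlled perturbation of the \gls{RWEHMM} limiting index law $\IteratedEhmmKernelRandomWalk{T}{\nParticles}$: both acceptance rates can be written as expectations of $\selectionFunctionBoltzmann{0}$ evaluated at $\{v_t^m + w_t^m + w_{t-1}^{a_{t-1}^m}\}_{m=1}^N$ and $\{v_t^m\}_{m=1}^N$ respectively (after the appropriate reparametrisation of the Gaussian laws), and one would bound the total-variation distance between the two index chains by a quantity that telescopes in $t$ and stays summable because of forgetting, yielding $\acceptanceRateRwCsmc{T}{\nParticles}(t) \geq \acceptanceRateRwEhmm{T}{\nParticles}(t) - \varepsilon$ uniformly in $t$ and $T$. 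Propositions~\ref{prop:discrete_markov_kernel_rwcsmc_without_backward_sampling}~and~\ref{prop:discrete_markov_kernel_rwcsmc_with_backward_sampling} --- the $\xi_T$-invariance of the perturbed index kernel --- are what make this comparison legitimate.

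For the case without backward sampling, the plan is to re-run the coalescence / Doeblin-type argument of \citet{andrieu2018uniform, delmoral2016particle}, now for the locally-scattered particle cloud rather than one drawn from the prior: one estimates the per-step probability that the retained lineage's time-$t$ ancestor index is $0$ and shows it is $1 - \Theta(1/N)$ uniformly over $\state_{1:T}$ in the concentration sets $\ConcentrationSet_{T,D}$, which forces the choice $N \propto T$ in order to keep the cumulative coalescence probability bounded away from $1$. In the limiting description this becomes a uniform lower bound on a ``limiting effective sample size'' built from the random selection weights $\selectionFunctionBoltzmann{l}(\{v_t^m + w_{t-1}^{a_{t-1}^m}\}_{m=1}^N)$, which is where the fourth-moment bounds in \ref{as:moments_bounded} and the forgetting of the forward-propagated terms $w_{t-1}^{a_{t-1}^m}$ enter.

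The main obstacle I expect is precisely this forgetting step. Without Assumption~\ref{as:independent_over_time} the limiting index chain is genuinely non-trivial: the forward-propagated backward weights $w_{t-1}^{a_{t-1}^m}$ carry dependence across time, and the ``potentials'' driving the selection steps are random (essentially log-normal) functionals rather than the deterministic, uniformly bounded potentials $G_t$ available in the classical Feynman--Kac setting --- so the usual Feynman--Kac semigroup contraction estimates do not apply off the shelf. Establishing a quantitative forgetting property for this random-environment discrete chain, strong enough that correlations do not accumulate over $T$ steps, is the crux and will likely require either a renewal/minorisation argument tailored to the Gaussian structure or a direct second-moment control of the limiting normalising constants. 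The remaining ingredients --- the $D\to\infty$ reduction, the reduction to $\calI(\ell)<\infty$, and the translation of \gls{SMC}-style coalescence bounds into the limiting-law language --- should be routine given what is already proved in the paper.
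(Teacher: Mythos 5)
The first thing to note is that the statement you are addressing is stated in the paper as a \emph{conjecture}: the authors offer no proof of it, only heuristic motivation, namely that Corollary~\ref{cor:stability_of_acceptance_rates_ehmm} establishes time-uniform stability for the \gls{RWEHMM} algorithm without the factorisation assumption, and that Propositions~\ref{prop:discrete_markov_kernel_rwcsmc_without_backward_sampling} and~\ref{prop:discrete_markov_kernel_rwcsmc_with_backward_sampling} exhibit the \gls{IRWCSMC} index kernel as a $\xi_T(\particle_{1:T},\ccdot)$-invariant ``perturbation'' of the \gls{RWEHMM} index kernel. Your proposal follows essentially the same motivating logic, so you are aligned with the authors' thinking. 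But what you have written is a research programme, not a proof, and you say so yourself: the entire difficulty is concentrated in the ``forgetting'' step for the limiting index chain, which you flag as the main obstacle and leave unresolved. Since that step is exactly what separates the conjecture from a theorem, the proposal has a genuine gap --- it does not establish the statement, and the reductions you describe as routine (the $D\to\infty$ replacement via Proposition~\ref{prop:limiting_rwcsmc_algorithm}, the reduction to finiteness of $\calI(\ell)$) do not close it.

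Two more specific concerns. First, in the case without backward sampling, the classical coalescence analyses you cite rely on uniformly bounded, deterministic potentials and a Dobrushin-type contraction of the Feynman--Kac semigroup; in the limiting law the ``potentials'' $\exp\{v_t^m + w_{t-1}^{a_{t-1}^m}\}$ are unbounded log-normal random variables whose propagated components $w_{t-1}^{a_{t-1}^m}$ are themselves functions of the realised genealogy, so the limiting index chain evolves in a random environment with feedback: there is no off-the-shelf minorisation, and it is not clear that the renewal argument you gesture at exists. Second, your proposed comparison $\acceptanceRateRwCsmc{T}{N}(t) \geq \acceptanceRateRwEhmm{T}{N}(t) - \varepsilon$ with $\varepsilon$ uniform in $T$ would require the total-variation distance between the two limiting index laws to be uniformly small in $T$, but the perturbation introduced by the resampling steps accumulates over the $T$ selection stages; making your telescoped bound summable is the forgetting problem in disguise, not a way around it. In short: you have correctly located the crux and reproduced the paper's own heuristic, but the missing ingredient is precisely why the statement remains a conjecture.
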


\section{Numerical illustration}
\label{sec:numerical_illustration}

In this section, we illustrate the results on a simple state-space model specified as follows. Let $\standardNormalPdf$ denote a Lebesgue density of the standard normal distribution. Let $\mathbf{y}_{\timeIndex} = (y_{\timeIndex,\dimensionIndex})_{\dimensionIndex \in [D]} \in \reals^{D}$ be some $D$-dimensional vector of observations collected at time $\timeIndex \in [\nTimeSteps]$. Then for any $\dimensionIndex \in [D]$, 
\begin{align}
 \PotentialSingle_\timeIndex(\stateSingle_{\timeIndex,\dimensionIndex}) 
  & \coloneqq \standardNormalPdf(y_{\timeIndex,\dimensionIndex} - \stateSingle_{\timeIndex,\dimensionIndex}),\\
 \mutationSingle_\timeIndex(\stateSingle_{\timeIndex-1,\dimensionIndex}, \stateSingle_{\timeIndex,\dimensionIndex})
  & \coloneqq \standardNormalPdf(\stateSingle_{\timeIndex,\dimensionIndex} - \stateSingle_{\timeIndex-1,\dimensionIndex}).
\end{align}
The results shown below are based on 100 independent runs of each algorithm for each value of $D$; each run uses $L = 25,000$ iterations initialised from stationarity, and each uses a different observation sequence of length $T = 25$ sampled from the model. Throughout, we use $N+1 = 32$ particles. In the \gls{IRWCSMC} algorithm, $\ell_1 = \dotsc = \ell_T = 1$.

\textbf{Figure~\ref{fig:acceptance_rates}} displays the $\Target_{T,D}$-averaged acceptance rates as a function of the time index $t$. More precisely, for $\smash{\State_{1:T} \sim \Target_{T,D}}$, it shows:
\begin{enumerate}
 \item first column: $\smash{\E[\acceptanceRate{\nTimeSteps,\nDimensions,\State_{1:T}}{\nParticles}(\timeIndex)]}$;
 \item second column: $\smash{\E[\acceptanceRateRwCsmc{T,D,\State_{1:T}}{\nParticles}(\timeIndex)]}$.
\end{enumerate}
The upper-left panel shows that for the \gls{ICSMC} algorithm, the acceptance rates vanish in high dimensions. In contrast, the upper-right panel shows that the acceptance rates converge to a non-trivial limit for the \gls{IRWCSMC} algorithm. The first row also shows that, in both algorithms, the acceptance rates are an increasing function of the time index $\timeIndex$ due to the coalescence of the particle paths with the reference path. The second row shows that the backward-sampling extension removes this dependence on the time index and leads to acceptance rates which are stable over time. However, the lower-left panel illustrates that backward sampling does not save the \gls{ICSMC} algorithm in high dimensions -- its acceptance rates still vanish. Additionally, in Appendix~\ref{app:sec:additional_simulation_results}, we illustrate that the \gls{ESS} \citep{kong1994sequential} of the resampling and backward-sampling weights converges to a non-trivial limit $> 1$ under the \gls{IRWCSMC} algorithm, whereas it collapses to $1$ for the \gls{ICSMC} algorithm.

\begin{figure}
 \vspace{-0cm}
 \noindent{}
 \centering  
 \includegraphics{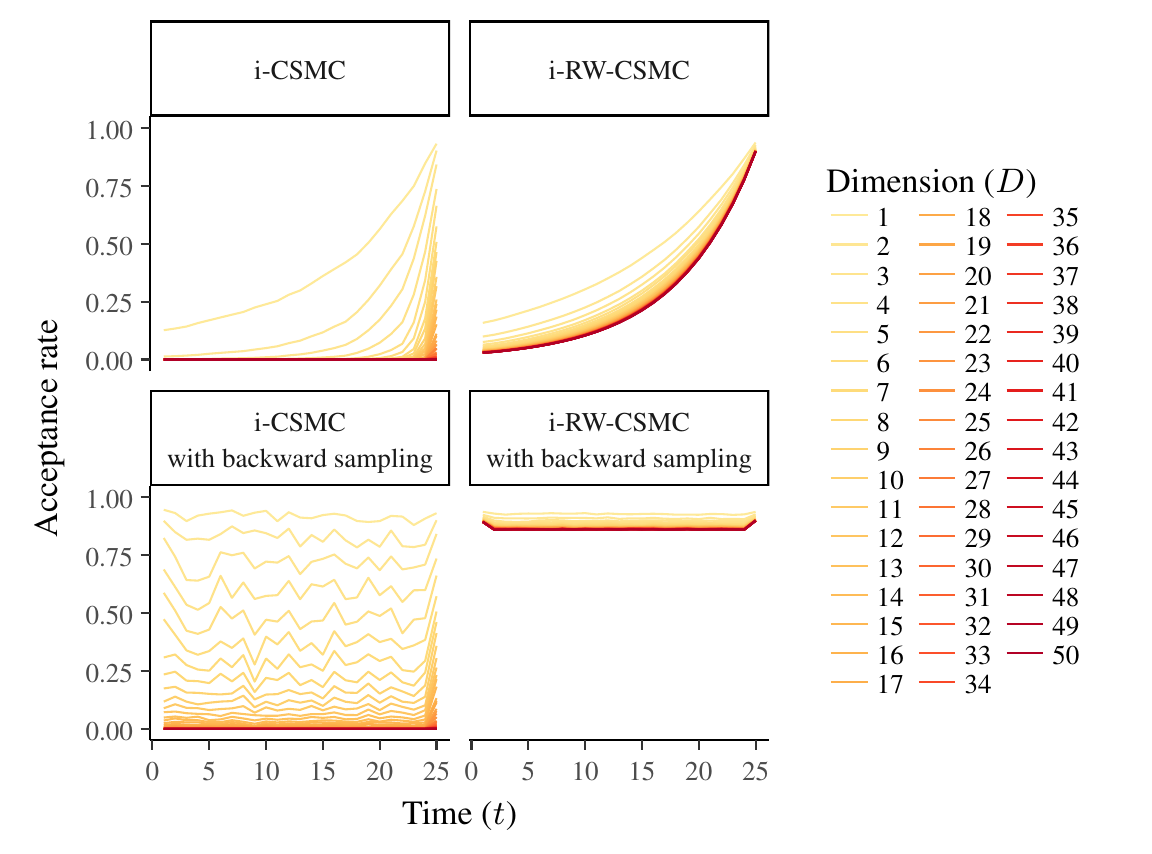}
 \caption{The $\Target_{T,D}$-averaged acceptance rates as a function of $t$.}
 \label{fig:acceptance_rates}
\end{figure}

\glsreset{ESJD}

\textbf{Figure~\ref{fig:esjd}} displays the \gls{ESJD} as a function of $t$. More specifically, it shows:
\begin{enumerate}
 \item first column: $\smash{\esjd_{T,D}^N(t) \coloneqq \E[\lVert \State_{t}[l+1] - \State_{t}[l] \rVert_2^2]}$;
 \item second column: $\smash{\esjdRandomWalk_{T,D}^N(t) \coloneqq \E[\lVert \widebar{\State}_{t}[l+1] - \widebar{\State}_{t}[l] \rVert_2^2]}$,
\end{enumerate}
where $\State_{1:T}[l]$ is the $l$th state of a stationary Markov chain with transition kernels $\InducedIteratedCsmcKernel{T,D}{N}$ and $\smash{\widebar{\State}_{1:T}[l]}$ is the $l$th state of a stationary Markov chain with transition kernels $\smash{\InducedIteratedCsmcKernelRandomWalk{T,D}{N}}$. The first column illustrates that in high dimensions, the \gls{ESJD} of the \gls{ICSMC} algorithm vanishes in high dimensions. In contrast, the \gls{ESJD} of the \gls{IRWCSMC} algorithm converges to $\ell_t \acceptanceRateRwCsmc{T}{N}(t) > 0$ (in accordance with Proposition~\ref{prop:esjd_rwcsmc}).

\begin{figure}
\vspace{-0cm}
 \noindent{}
 \centering
 \includegraphics{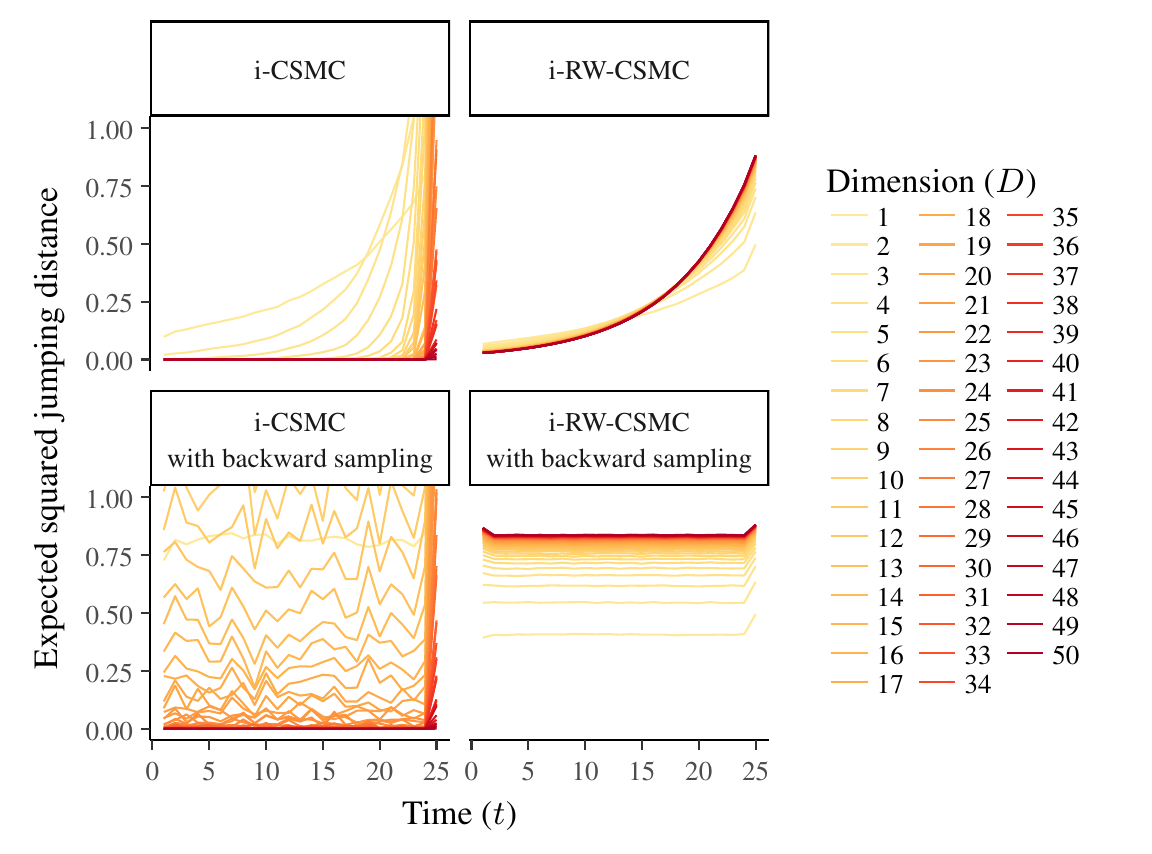}
 \caption{\Glsdesc{ESJD} as a function of $t$.}
 \label{fig:esjd}
\end{figure}

\textbf{Figure~\ref{fig:lag-d_autocorrelations}} displays the lag-$D$ autocorrelation of the sample for the first `spatial' component at each time $t$. More precisely, writing $\State_t[l] = X_{t,1:D}[l]$ and $\smash{\widebar{\State}_t[l] = \widebar{X}_{t,1:D}[l]}$, it shows:
\begin{enumerate}
 \item first column: $\smash{\corr(X_{t,1}[l+D], X_{t,1}[l])}$;
 \item second column: $\smash{\corr(\widebar{X}_{t,1}[l+D], \widebar{X}_{t,1}[l])}$.
\end{enumerate}
The fact that this leads to non-trivial limit in the second column illustrates that the \gls{IRWCSMC} algorithm is stable in high dimensions as long as the number of iterations grows linearly with $D$. However, the first column shows that increasing the number of iterations in this manner does not save the \gls{ICSMC} algorithm from breaking down in high dimensions.

\begin{figure}
 \vspace{-0cm}
 \noindent{}
 \centering
 \includegraphics{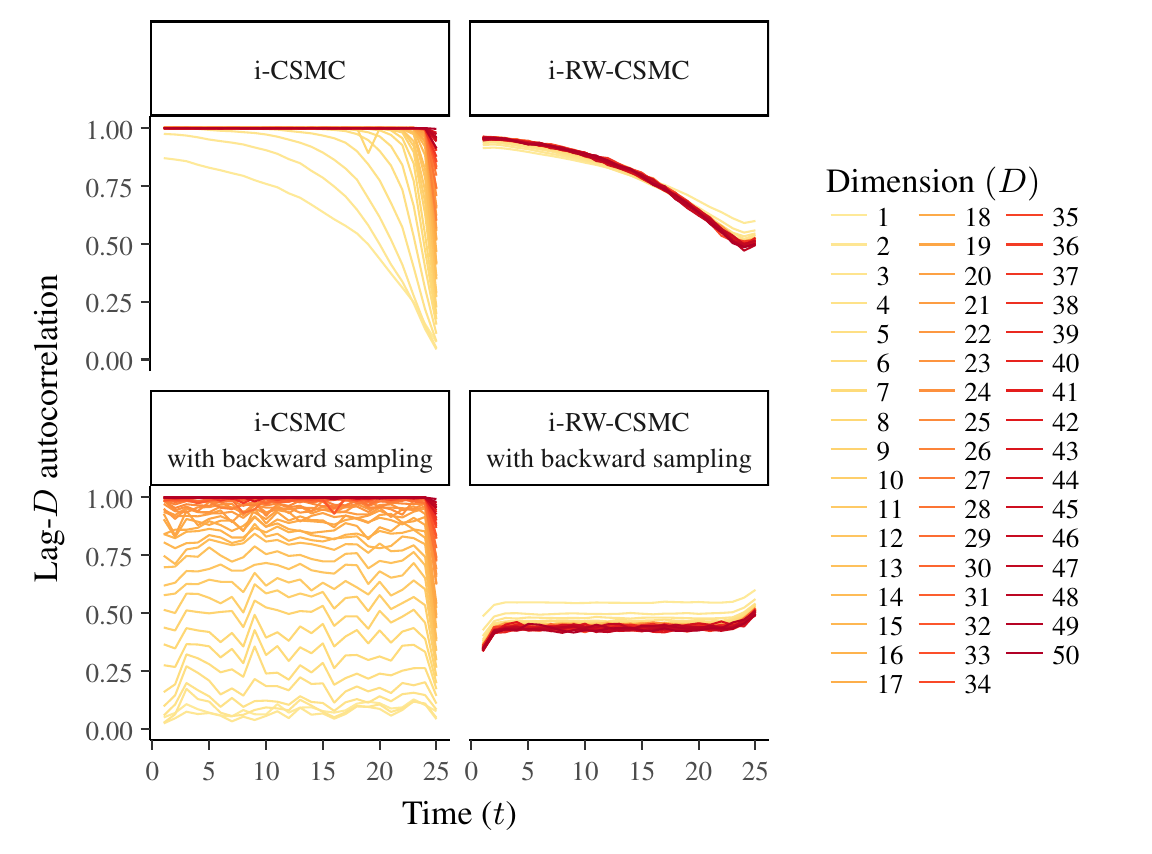}
 \caption{Lag-$\nDimensions$ autocorrelations of the first marginal of the $D$-dimensional time-$t$ state vector as a function of $t$.}
 \label{fig:lag-d_autocorrelations}
\end{figure}


\section{Conclusion}
\glsreset{ICSMC}
\glsreset{MH}
\glsreset{IRWCSMC}

\paragraph*{Comparison with classical MCMC algorithms} The \gls{ICSMC} algorithm \citep{andrieu2010particle} is a powerful tool for inference about the joint smoothing distribution in state-space models and more generally. This is because the algorithm automatically exploits the decorrelation in the `time' direction exhibited by the model. Let $T$ denotes the time horizon and let $D$ denote the `spatial' dimension of each latent state.

For the moment assume that $D$ is fixed and reasonably small.
\begin{itemize}
 \item The \gls{ICSMC} algorithm has $\bo(T)$ complexity\footnote{Recall that `complexity' is measured as the number of full likelihood evaluations needed to control the approximation error of a fixed-dimensional marginal of the joint smoothing distribution.} which is in contrast to a na\"ive independent \gls{MH} algorithm which does not exploit the structure of the model and therefore suffers $\bo(\eul^T)$ complexity, i.e.\ a curse of dimension in $T$.
 
 \item The \gls{ICSMC} algorithm can be combined with backward sampling to reduce the complexity to $\bo(1)$ \citep{lee2020coupled}. This is similar to combining the independent \gls{MH} updates with a `blocking' (in the time direction) strategy. Indeed, \citet{singh2017blocking} reduced the complexity of the standard \gls{ICSMC} algorithm by using time-direction blocking instead of backward sampling.
\end{itemize}

Now, consider the case that the `spatial' dimension $D$ is large. 
\begin{itemize}
 \item Unfortunately, the \gls{ICSMC} algorithm \citep{andrieu2010particle} (with or without backward sampling) then proposed propose `global' moves in the `space' direction and consequently suffer a curse of dimension in $D$ (i.e.\ complexity $\bo(\eul^D)$) in the same way as an independent \gls{MH} algorithm (with or without blocking in the time direction) both. Indeed, if $T = N = 1$ then the \gls{ICSMC} algorithm reduces to a standard independent \gls{MH} algorithm for which this drawback is well known.
 
 \item It is also well known that proposing `local' moves can overcome this curse of dimension \citep{roberts1997weak}. That is, in the classical \gls{MCMC} setting, we must use, e.g., (suitably scaled) random-walk rather than independence proposals. In this work, we have therefore proposed a novel iterated `local' \gls{CSMC} algorithm, termed \gls{IRWCSMC} algorithm, which utilises Gaussian random-walk proposals whose variance is of order $D^{-1}$. The algorithm provably avoids the curse of dimension in $D$ suffered by the original \gls{ICSMC} algorithm. It can be combined with backward sampling -- which is again akin to blocking in the time direction in classical \gls{MCMC} algorithms -- to potentially again remove the need for scaling the number of particles with $T$, leading to a $\bo(D)$ complexity. Our algorithm reduces to a standard Gaussian random-walk \gls{MH} algorithm if $T=N=1$. 
 \end{itemize}

\paragraph*{Limitations} The \gls{IRWCSMC} algorithm shares the well-known limitations of the random-walk \gls{MH} algorithm. That is, it requires the state space to be continuous in order to allow for suitably scaled local proposals; and such `local' proposals may not easily move between well-separated modes. 

\paragraph*{Extensions} Further reductions in the complexity from $\bo(D)$ to $\bo(1)$ may be feasible by employing blocking strategies in the `space' direction \citep{rebeschini2013local, finke2017approximate}. Indeed, \citet{murphy2015blocked} proposed a spatially-blocked \gls{ICSMC} algorithm. However, such strategies require a specific `spatial' (de)correlation structure which can only be found in particular models.

The analysis of the \gls{IRWCSMC} algorithm in high dimensions could be extended in a number of ways. First, we could easily consider models in which the potential functions $G_t$ depend not only on $x_t$ but also on $x_{t-1}$ or allow scale factors $\ell_t$ differ across particles. Second, we could consider the case that ancestor sampling instead of backward sampling is used. Third, in the same way as this has been done in the literature on optimal scaling for classical \gls{MCMC} algorithms, the assumptions on the high-dimensional regime could be relaxed, e.g.\ by allowing for some of the $D$ components of the target distribution to be differently scaled or by allowing for dependence between some of the $D$ components \citep[see, e.g.,][]{sherlock2009optimal, yang2020optimal}. Likewise, we could allow for non-Gaussian proposals \citep{neal2011optimal}. Finally, we could investigate the optimal choice of the scaling factors $\ell_t^\star$ and the associated optimal acceptance rates as well as proving a suitable $T$-dimensional diffusion limit of a time-scaled $T$-dimensional spatial marginal of the Markov chain induced by the algorithm.

We stress that the \gls{IRWCSMC} algorithm introduced in this work is by no means the only way of achieving `local' \gls{CSMC} updates. We have only focussed on this particular algorithm to keep the presentation simple. Alternative local \gls{CSMC} algorithms are possible. For instance, the first such local algorithm proposed in the literature is the method from \citet{shestopaloff2018sampling} which uses \gls{MCMC} kernels for proposing local moves around the reference path. Their algorithm reduces to a delayed-acceptance \gls{MH} algorithm \citep{christen2005markov} if $T=N=1$. The algorithm showed promising performance in some high-dimensional settings in \citep{finke2016embedded}. A general framework which admits this approach as well as the algorithms analysed in this work as special cases is provided in \citet{finke2016embedded}. We are currently exploring the idea of scattering particles locally around the reference path using (partially) deterministic mappings, e.g.\ as in Hamiltonian Monte Carlo methods, or via piecewise-deterministic \gls{MCMC} methods.


\begin{acks}[Acknowledgments]
The first author would like to thank Arnaud Doucet for insightful discussions which led to this research. 
\end{acks}

\begin{funding}
The authors acknowledge support from the Singapore Ministry of Education Tier~2 (MOE2016-T2-2-135) and a Young Investigator Award Grant (NUSYIA FY16 P16; R-155-000-180-133).
\end{funding}



\bibliographystyle{imsart-nameyear} 
\bibliography{csmc_in_high_dim.bbl}       

\begin{appendix}

\section{Details for Section~\ref{sec:csmc}}
\label{app:sec:csmc}

\subsection{Joint law induced by Algorithm~\ref{alg:iterated_csmc}}
\label{app:subsec:joint_law_csmc}

We now formally define the joint law of all random variables generated in Algorithm~\ref{alg:iterated_csmc}. This will be used in some of the proofs below. 

To simplify the presentation, we note that we fixed the reference path in Algorithm~\ref{alg:iterated_csmc} to always have particle index $0$, i.e.\ we always set $\smash{\Particle_t^0 \coloneqq \particle_t^0 \coloneqq \state_t}$ as well as $\smash{A_{t-1}^0 = a_{t-1}^0 \coloneqq 0}$. However, in some of the proofs below, it is more convenient to work with a slightly more general version of the algorithm which, at the beginning each time step, draws a particle index $J_t = j_t$ from a uniform distribution on $[N]_0$ and then sets $\smash{A_{t-1}^{j_t} = a_{t-1}^{j_t} \coloneqq j_{t-1}}$ as well as $\smash{\Particle_t^{j_t} \coloneqq \particle_t^{j_t} \coloneqq \state_t}$. 

Conditional on $\State_{1:T} = \state_{1:\nTimeSteps} = \state_{1:\nTimeSteps}[l]$, the joint law of all random variables $(J_{1:T}, \Particle_{1:T}, A_{1:T-1}, K_{1:T}, \State_{1:T}')$ generated by this slightly generalised version of Algorithm~\ref{alg:iterated_csmc} may be written as
\begin{align}
  \MoveEqLeft\IteratedCsmcKernelRandomWalk{T,D,\state_{1:\nTimeSteps}}{N,\star}(\diff j_{1:T} \times \diff \particle_{1:\nTimeSteps} \times \diff a_{1:\nTimeSteps-1} \times \diff k_T)\\
  & \coloneqq \dUnif_{[N]_0^T}(\diff j_{1:T}) \dDirac_{\state_{1:T}}(\diff \particle_1^{j_1} \times \dotsb \times \diff \particle_T^{j_T}) \Biggl[ \prodSubstackAligned{n}{=}{0}{n}{\neq}{j_1}^N \Mutation_1(\diff \particle_1^\particleIndex)\Biggr]\\[-2ex]
  & \quad \times \Biggl[ \prod_{\smash{\timeIndex = 2}}^{\smash{\nTimeSteps}} \dDirac_{j_{t-1}}(\diff a_{\timeIndex - 1}^{j_t}) \prodSubstackAligned{n}{=}{0}{n}{\neq}{j_t}^N
  \ResamplingKernel{\timeIndex-1,\nDimensions}{\nParticles}(\particle_{\timeIndex - 1}, \diff a_{\timeIndex - 1}^\particleIndex)
  \Mutation_\timeIndex(\particle_{\timeIndex - 1}^{a_{\timeIndex - 1}^\particleIndex}, \diff \particle_\timeIndex^\particleIndex)\Biggr]\\[-2ex]
  & \quad \times \ResamplingKernel{\nTimeSteps, \nDimensions}{\nParticles}(\particle_\nTimeSteps, \diff\outputParticleIndex_\nTimeSteps)\\
  & \quad \times 
  \begin{cases}
  \biggl[\,
   \smashoperator{\prod_{\timeIndex = 1}^{\smash{\nTimeSteps-1}}} \dDirac_{a_t^{k_{t+1}}}(\diff k_t)\biggr] & \text{[without backward sampling]}\\
   \biggl[\,\smashoperator{\prod_{\timeIndex = 1}^{\nTimeSteps-1}} \BackwardKernel{\timeIndex, \nDimensions}{\nParticles}((\particle_{\timeIndex},\particle_{\timeIndex+1}^{\outputParticleIndex_{\timeIndex + 1}}), \diff\outputParticleIndex_\timeIndex)\biggr] & \text{[with backward sampling]}
  \end{cases}\\
  & \quad \times \dDirac_{(\particle_1^{k_1}, \dotsc, \particle_T^{k_T})}(\diff \state_{1:T}'). \label{eq:csmc_generalised_joint_law}
\end{align}
Here, we have defined the following quantities.
\begin{itemize}
 \item \textbf{Resampling kernels.} For any $\particleIndex \in [\nParticles]_0$ and any $\timeIndex \in [\nTimeSteps]$,
\begin{align}
  \ResamplingKernel{\timeIndex,\nDimensions}{\nParticles}(\particle_{\timeIndex}, \{\particleIndex\})
  & \coloneqq 
  \selectionFunctionBoltzmann{\particleIndex}(\{\logWeight_{\timeIndex}(\particle_{\timeIndex}^\particleIndexAlt) - \logWeight_\timeIndex(\particle_{\timeIndex}^0)\}_{\particleIndexAlt = 1}^\nParticles)
  = 
  \dfrac{\Potential_t(\particle_t^n)}{\sum_{\particleIndexAlt = 0}^\nParticles \Potential_t(\particle_t^\particleIndexAlt)}.
\end{align}
When using the forced-move extension, replace $\smash{\ResamplingKernel{T,\nDimensions}{\nParticles}(\particle_T, \{k_T\})}$ at time $t=T$ in \eqref{eq:csmc_generalised_joint_law} by
\begin{align}
 \begin{dcases}
   \dfrac{\Potential_T(\particle_T^{k_T})}{\sum_{\particleIndexAlt = 0}^\nParticles \Potential_T(\particle_T^\particleIndexAlt) - \Potential_T(\particle_T^{k_T}) \wedge \Potential_T(\particle_T^{j_T})}, & \text{if $k_T \neq j_T$,}\\
   1 - \sumSubstackAligned{l}{=}{0}{l}{\neq}{j_T}^N \dfrac{\Potential_T(\particle_T^{l})}{\sum_{\particleIndexAlt = 0}^\nParticles \Potential_T(\particle_T^\particleIndexAlt) - \Potential_T(\particle_T^{l}) \wedge \Potential_T(\particle_T^{j_T})}, & \text{if $k_T = j_T$.}
 \end{dcases}
\end{align}

\item \textbf{Backward kernels.} For $t \in [T-1]$,
 \begin{align}
  \BackwardKernel{\timeIndex,\nDimensions}{\nParticles}((\particle_{\timeIndex},  \particle_{\timeIndex+1}^{\outputParticleIndex_{\timeIndex+1}}), \{\particleIndex\})
  & \coloneqq 
     \selectionFunctionBoltzmann{\particleIndex}(\{
    \logBackwardWeight_\timeIndex(\particle_{\timeIndex}^\particleIndexAlt, \particle_{\timeIndex+1}^{\outputParticleIndex_{\timeIndex+1}}) - \logBackwardWeight_\timeIndex(\particle_{\timeIndex}^0, \particle_{\timeIndex+1}^{\outputParticleIndex_{\timeIndex+1}})
  \}_{\particleIndexAlt = 1}^\nParticles)\\
 & =
  \frac{\Potential_\timeIndex(\particle_\timeIndex^{n}) \mutation_{\timeIndex+1}(\particle_\timeIndex^{n}, \particle_{\timeIndex+1}^{\outputParticleIndex_{\timeIndex+1}})}{\sum_{\particleIndexAlt=0}^\nParticles \Potential_\timeIndex(\particle_\timeIndex^\particleIndexAlt) \mutation_{\timeIndex+1}(\particle_\timeIndex^\particleIndexAlt, \particle_{\timeIndex+1}^{\outputParticleIndex_{\timeIndex+1}})}.
\end{align}
\end{itemize}

From this definition, we can recover the joint law of all random variables $(\Particle_{1:T}, A_{1:T-1}, K_{1:T}, \State_{1:T}')$ generated in Steps~\ref{alg:iterated_csmc:1}--\ref{alg:iterated_csmc:3} of Algorithm~\ref{alg:iterated_csmc} by conditioning on the event $\{J_1 = 0, \dotsc, J_T = 0\}$, i.e.\
\begin{align}
 \IteratedCsmcKernel{T,D,\state_{1:\nTimeSteps}}{N} \coloneqq \IteratedCsmcKernel{T,D,\state_{1:\nTimeSteps}}{N,\star}(\ccdot|J_1 = 0, \dotsc, J_T = 0).
\end{align}
Let $\ExpectationCsmcKernel{T,D,\state_{1:\nTimeSteps}}{N,\star}$ denote expectation w.r.t.\ $\IteratedCsmcKernel{T,D,\state_{1:\nTimeSteps}}{N,\star}$. In the remainder of this section, we will sometimes work with $\smash{\IteratedCsmcKernel{T,D,\state_{1:\nTimeSteps}}{N,\star}}$ rather than $\smash{\IteratedCsmcKernel{T,D,\state_{1:\nTimeSteps}}{N,\star}}$. This is justified because both versions of the \gls{ICSMC} algorithm induce the same Markov kernel:
\begin{align}
 \ExpectationCsmcKernel{\nTimeSteps,\nDimensions,\state_{1:\nTimeSteps}}{\nParticles,\star}[\ind\{\State_{1:T}' \in \diff \state_{1:T}'\}] 
 & = \ExpectationCsmcKernel{\nTimeSteps,\nDimensions,\state_{1:\nTimeSteps}}{\nParticles}[\ind\{\State_{1:T}' \in \diff \state_{1:T}'\}]\\
 & = \InducedIteratedCsmcKernel{\nTimeSteps,\nDimensions}{\nParticles}(\state_{1:\nTimeSteps}, \diff \state_{1:T}'). \label{eq:same_induced_markov_kernel_csmc}
\end{align} 

\subsection{Proof of Proposition~\ref{prop:csmc_invariance}}
\label{app:subsec:csmc_invariance}

\begin{namedproof}[of Proposition~\ref{prop:csmc_invariance}]
 For the plain algorithm (with neither the backward sampling nor the forced-move extension) we can readily check that
 \begin{align}
  \MoveEqLeft \Target_{T,D}(\diff \state_{1:T}) \IteratedCsmcKernel{\nTimeSteps,\nDimensions,\state_{1:T}}{\nParticles, \star}(\diff j_{1:T} \times \diff \particle_{1:\nTimeSteps} \times \diff a_{1:\nTimeSteps-1} \times \diff\outputParticleIndex_{1:\nTimeSteps} \times \diff \state_{1:T}')\\
  & = \Target_{T,D}(\diff \state_{1:T}') \IteratedCsmcKernel{\nTimeSteps,\nDimensions,\state_{1:\nTimeSteps}'}{\nParticles,\star}(\diff k_{1:T} \times \diff \particle_{1:\nTimeSteps} \times \diff a_{1:\nTimeSteps-1} \times \diff j_{1:\nTimeSteps} \times \diff \state_{1:T}), \label{eq:prop:csmc_invariance:1}
 \end{align}
 i.e.\ \eqref{eq:prop:csmc_invariance:1} admits $\Target_{T,D}(\diff \state_{1:T}')$ as a marginal. 
 
 For the backward-sampling extension, let $\smash{\mathring{\mathbb{P}}_{T,D,\state_{1:T}}^{N,\star}}$ be the same as $\smash{\IteratedCsmcKernel{\nTimeSteps,\nDimensions,\state_{1:T}}{\nParticles,\star}}$ (without backward sampling) except that the terms $\smash{ \dDirac_{j_{t-1}}(\diff a_{\timeIndex - 1}^{j_t})}$ in \eqref{eq:csmc_generalised_joint_law} are replaced by $\smash{\BackwardKernel{t-1,\nDimensions}{\nParticles}((\particle_{t-1},  \particle_{t}^{j_t}), \{a_{t-1}^{j_t}\})}$. Then 
  \begin{align}
  \MoveEqLeft \Target_{T,D}(\diff \state_{1:T}) \IteratedCsmcKernel{\nTimeSteps,\nDimensions,\state_{1:T}}{\nParticles,\star}(\diff j_{1:T} \times \diff \particle_{1:\nTimeSteps} \times \diff a_{1:\nTimeSteps-1} \times \diff\outputParticleIndex_{1:\nTimeSteps} \times \diff \state_{1:T}')\\
  & = \Target_{T,D}(\diff \state_{1:T}') \mathring{\mathbb{P}}_{T,D,\state_{1:\nTimeSteps}'}^{N,\star}(\diff k_{1:T} \times \diff \particle_{1:\nTimeSteps} \times \diff a_{1:\nTimeSteps-1} \times \diff j_{1:\nTimeSteps} \times \diff \state_{1:T}), \label{eq:prop:csmc_invariance:2}
 \end{align}
  That is, \eqref{eq:prop:csmc_invariance:2} again admits $\Target_{T,D}(\diff \state_{1:T}')$ as a marginal. Incidentally, $\smash{\mathring{\mathbb{P}}_{T,D,\state_{1:T}}^{N,\star}}$ can be recognised as the law of all the random variables generated by the \gls{ICSMC} algorithm with ancestor sampling from \citet{lindsten2012ancestor}. 
  
  Finally, the algorithm with the forced-move extension can be justified as a partially collapsed Gibbs sampler because this extension leaves the marginal distribution of $K_T$ (under \eqref{eq:prop:csmc_invariance:1} without backward sampling or under \eqref{eq:prop:csmc_invariance:2} with backward sampling) conditional on $(\State_{1:T}, J_{1:T}, \Particle_{1:T}, A_{1:T})$ invariant. \qedwhite
\end{namedproof}

\subsection{Verification of Assumptions~\ref{as:degeneracy_of_resampling_kernels} and \ref{as:degeneracy_of_backward_kernels} in a linear-Gaussian state-space model}
\label{app:subsec:verification_of_csmc_assumptions_for_breakdown}

 Consider a state-space model with $D$-dimensional observations $\mathbf{y}_t = y_{t,1:D} \in \reals^D$ and 
 \begin{align}
   \mathbf{H}_t(\state_t, \diff \mathbf{y}_t) & \textstyle\coloneqq \prod_{d=1}^D \dN(\diff y_{t,d}; x_{t,d}, 1),\\
   \Mutation_t(\state_{t-1}, \diff \state_t) &\textstyle \coloneqq \prod_{d=1}^D \dN(\diff x_{t,d}; x_{t-1,d}, 1).
 \end{align}
 Let $\standardNormalPdf$ be a density function of a univariate standard normal distribution. Then this model satisfies \ref{as:iid_model} with
 \begin{align}
  G_t(x_{t,d}) & = \standardNormalPdf(y_{t,d} - x_{t,d}),\\
  m_{t}(x_{t-1,d}, x_{t,d}) & = \standardNormalPdf(x_{t,d} - x_{t-1,d}).
 \end{align}
 To simplify the notation and calculations, we hereafter drop the subscript $d$, take $y_1 = \dotsc, y_T = 0$ and assume as initial density $\smash{m_1(x_{1})  =\smash{\standardNormalPdf(x_{1} / \sqrt{\sigma^2 + 1}) / \sqrt{\sigma^2 + 1}}}$, where $\smash{\sigma^2 \coloneqq (\sqrt{5}-1)/2}$. Standard Kalman-filtering and Kalman-smoothing recursions then give $\smash{\Prob(X_t \in \diff x_t| Y_{1:t} = y_{1:t}) = \dN(\diff x_t; 0, \sigma^2)}$ as well as $\smash{\pi_T(\diff x_{1:T})} = \smash{\Prob(X_{1:T} \in \diff x_{1:T}|Y_{1:T} = y_{1:T})} = 
 \smash{\dN(\diff x_{1:T}; \zeroMat_T, C)}$ with $\smash{[C]_{s,t} = u^{\lvert t - s\rvert} \sigma_{s \vee t}^2}$, where $u \coloneqq \sigma^2 / (\sigma^2 + 1)$ and
 \begin{align}
  \sigma_t^2 \coloneqq 
   u \frac{1 - (u^2)^{T-t}}{1 - u^2} \ind\{t < T\} + (u^2)^{T-t} \sigma^2.
  \end{align}
  Tedious but simple algebra then shows that \ref{as:degeneracy_of_resampling_kernels} and \ref{as:degeneracy_of_backward_kernels} hold because
  \[
  \underline{b}_T > \underline{r}_T = r_{T|T}  =
  \begin{rcases}
  \begin{dcases}
   \tfrac{1}{2}(\log(\sigma^2 + 2) - \sigma^2),
   & \text{if $T = 1$,}\\
   \tfrac{1}{2}(\log(2) + [\sigma^2(u^2-2) + u]/2),
   & \text{if $T > 1$,}
  \end{dcases}
  \end{rcases}
  > 0.15. 
  \]

\subsection{Proof of Proposition~\ref{prop:limiting_csmc_algorithm}}
\label{app:subsec:proof_of_prop:limiting_csmc_algorithm}

\begin{namedproof}[of Proposition~\ref{prop:limiting_csmc_algorithm}]
A telescoping-sum argument gives the following decomposition which will form the basis of the proof both in the case with and without resampling. Here, we let $\zeroMat$ denote a vector of zeros of appropriate length.
 \begin{align}
  d_{T,D,\state_{1:T}}^N & \leq \sum_{t=1}^{T-1} \sum_{n = 1}^N \bigl\lVert \ExpectationCsmcKernel{\nTimeSteps,\nDimensions,\state_{1:\nTimeSteps}}{\nParticles}[\ind\{A_t^n \in \ccdot\}| A_{1:t-1} = \zeroMat] - \dDirac_0(\ccdot)\bigr\rVert\\
  & \quad + \bigl\lVert \ExpectationCsmcKernel{\nTimeSteps,\nDimensions,\state_{1:\nTimeSteps}}{\nParticles}[\ind\{K_T \in \ccdot\}|A_{1:T-1} = \zeroMat] - \dDirac_0(\ccdot)\bigr\rVert\\
  & \quad + \sum_{t=1}^{T-1} \bigl\lVert \ExpectationCsmcKernel{\nTimeSteps,\nDimensions,\state_{1:\nTimeSteps}}{\nParticles}[\ind\{K_t \in \ccdot\}|(A_{1:T-1}, K_{t+1:T}) = \zeroMat] - \dDirac_0(\ccdot)\bigr\rVert. \label{eq:telescoping-sum_decomposition_csmc}
\end{align}
 
First, we consider the case \emph{without} backward sampling. We write
\begin{align}
 \calR_{t|T,D}(\state_{t-1:t}) 
 & \coloneqq  \textstyle [\frac{1}{D} \sum_{d=1}^D \log M_t(G_t)(x_{t-1,d}) -  \log G_t(x_{t,d})] + r_{t|T}\\
 & = \textstyle \frac{1}{D} \sum_{d=1}^D \log M_t(G_t)(x_{t-1,d}) - \E[\log M_t(G_t)(X_{t-1})]\\
 & \quad +  \textstyle \frac{1}{D} \sum_{d=1}^D \E[\log G_t(X_t)] -  \log G_t(x_{t,d}).
\end{align}
For some $\eta \in (0,1/2)$, we set
\begin{align}
 \ConcentrationSet_{\nTimeSteps, \nDimensions} 
 \coloneqq \{\state_{1:T} \in \spaceState_{T,D} \mid \forall \, t \in [T]: \lvert \calR_{t|T,D}(\state_{t-1:t}) \rvert < D^{-\eta}\}.
\end{align}
 Since $\eta < 1/2$ implies that $D^\eta \sqrt{2 D \log \log D} = \bo(D)$, the law of the iterated logarithm then implies that $\lim_{\nDimensions \to \infty} \Target_{\nTimeSteps, \nDimensions}(\ConcentrationSet_{\nTimeSteps, \nDimensions}) = 1$.
 
  We can now turn to the terms in the decomposition \eqref{eq:telescoping-sum_decomposition_csmc}. Let $\smash{(\state_{1:T,D})_{D \geq 1}}$ be some sequence in $\smash{(\spaceState_{T,D})_{D \geq 1}}$, i.e.\ $\smash{\state_{t,D} = x_{t,1:D,D} \in \reals^{D}}$, for any $D \geq 1$. For any $n \in [N]$ and $t \in [T]$, let $\smash{\Particle_{\timeIndex,D}^n \sim \Mutation_t(\state_{t-1,D}, \ccdot)}$ and $\smash{\Particle_{\timeIndex,D}^0 \coloneqq \state_{t,D}}$. For any $t \in [T]$ (with $A_t^n$ replaced by $K_T$ if $t = T$), we have that 
\begin{align}
 \!\!\!\!\!\!\MoveEqLeft \lvert \ExpectationCsmcKernel{\nTimeSteps,\nDimensions,\state_{1:T,D}}{\nParticles}[\ind\{A_t^n \in [N]\}| A_{1:t-1} = \zeroMat] - \dDirac_{0}([N])\rvert \\
 & = \ExpectationCsmcKernel{\nTimeSteps,\nDimensions,\state_{1:T,D}}{\nParticles}\biggl[ \frac{\sum_{n=1}^N \exp\{\logWeight_{\timeIndex}(\Particle_{\timeIndex,D}^n) - \logWeight_\timeIndex(\Particle_{\timeIndex,D}^0)\}}{1 + \sum_{m=1}^N \exp\{\logWeight_{\timeIndex}(\Particle_{\timeIndex,D}^m) - \logWeight_\timeIndex(\Particle_{\timeIndex,D}^0)\}}  \,\bigg|\, A_{1:t-1} = \zeroMat\biggr].\!\!\!\!\!\! \label{eq:resampling_kernel_expectatation_csmc}
\end{align}
For any $n \in [N]$ and $t \in [T]$, let $\smash{\Particle_{\timeIndex,D}^n \sim \Mutation_t(\state_{t-1,D}, \ccdot)}$ and $\smash{\Particle_{\timeIndex,D}^0 \coloneqq \state_{t,D}}$.  To complete the proof, it then suffices to show that 
\begin{align}
 \textstyle \sum_{n=1}^N \exp\{\logWeight_{\timeIndex}(\Particle_{\timeIndex,D}^n) - \logWeight_\timeIndex(\Particle_{\timeIndex,D}^0)\} \convergesInProbability 0,
\end{align}
as $D \to \infty$, whenever $\log N = \lo(D)$. By Markov's inequality, for any $\varepsilon > 0$,
\begin{align}
  \MoveEqLeft \textstyle \mathbb{P}(\{\sum_{n=1}^N \exp\{\logWeight_{\timeIndex}(\Particle_{\timeIndex,D}^n) - \logWeight_\timeIndex(\Particle_{\timeIndex,D}^0)\} > \varepsilon \})\\
  & \leq \textstyle N \varepsilon^{-1} \exp\{- D r_{t|T} + D\lvert \calR_{t|T,D}(\state_{t-1:t,D}) \rvert\}\\
  & \leq \textstyle N \varepsilon^{-1} \exp\{-D r_{t|T} + D^{1-\eta}\},
\end{align}
and $\smash{r_{t|T} \geq \underline{r}_T > 0}$ by \ref{as:degeneracy_of_resampling_kernels}. This completes the proof in the case without backward sampling.

We now consider the case \emph{with} backward sampling. We write
\begin{align}
 \MoveEqLeft \calB_{t|T,D}(\state_{t-1:t+1})\\
 & \coloneqq  \textstyle [\frac{1}{D} \sum_{d=1}^D \log M_t(G_t m_{t+1}(\ccdot, x_{t+1,d}))(x_{t-1,d}) - \logBackwardWeightSingle_t(x_{t:t+1,d})] + b_{t|T}\\
  & = \textstyle \frac{1}{D} \sum_{d=1}^D \log M_t(G_t m_{t+1}(\ccdot,x_{t+1,d}))(x_{t-1,d}) - \E[\log M_t(G_t m_{t+1}(\ccdot, X_{t+1}))(X_{t-1})]\!\!\!\!\!\!\!\!\!\!\!\\
 & \quad +  \textstyle \frac{1}{D} \sum_{d=1}^D \E[\log \{ G_t(X_t)m_{t+1}(X_t, X_{t+1})\}] -  \log\{G_t(x_{t,d} m_{t+1}(x_{t,d}, x_{t+1,d})\}.
\end{align}
with convention $\calB_{T|T,D} = \calR_{T|T,D}$. For some $\eta \in (0,1/2)$, we set
\begin{align}
 \!\!\!\!\ConcentrationSet_{\nTimeSteps, \nDimensions} 
 & \coloneqq \{\state_{1:T} \in \spaceState_{T,D} \mid \forall \, t \in [T]: \lvert \calR_{t|T,D}(\state_{t-1:t})\rvert  \vee \lvert \calB_{t|T,D}(\state_{t-1:t+1}) \rvert < D^{-\eta} \}.\!\!\!\!
\end{align}
 Since $\eta < 1/2$ implies that $\smash{D^\eta \sqrt{2 D \log \log D} = \bo(D)}$, the law of the iterated logarithm then again implies that $\lim_{\nDimensions \to \infty} \Target_{\nTimeSteps, \nDimensions}(\ConcentrationSet_{\nTimeSteps, \nDimensions}) = 1$. 
 
 Again we consider the decomposition \eqref{eq:telescoping-sum_decomposition_csmc}. We then have that 
\begin{align}
\!\!\!\!\!\!\!\!\! \MoveEqLeft \lvert \ExpectationCsmcKernel{\nTimeSteps,\nDimensions,\state_{1:T,D}}{\nParticles}[\ind\{K_t \in [N]\}| (A_{1:T-1}, K_{t+1}) = \zeroMat] - \dDirac_0([N])\rvert\\
 & \!\!\!\!\!\!\!\!\!\!\!\! = \ExpectationCsmcKernel{\nTimeSteps,\nDimensions,\state_{1:T,D}}{\nParticles}\biggl[ \frac{\sum_{n=1}^N \exp\{\logBackwardWeight_\timeIndex(\Particle_{\timeIndex,D}^n, \Particle_{\timeIndex+1,D}^0) - \logBackwardWeight_\timeIndex(\Particle_{\timeIndex,D}^0, \Particle_{\timeIndex+1,D}^0)\}}{1 + \sum_{m=1}^N \exp\{\logBackwardWeight_\timeIndex(\Particle_{\timeIndex,D}^\particleIndexAlt, \Particle_{\timeIndex+1,D}^0) - \logBackwardWeight_\timeIndex(\Particle_{\timeIndex,D}^0, \Particle_{\timeIndex+1,D}^0)\}}  \,\bigg|\, A_{1:t-1} = \zeroMat\biggr].\!\!\!\!\!\!\!\!\! \label{eq:backward_kernel_expectatation_csmc}
\end{align}
To complete the proof, it suffices to show that
\begin{align}
 \textstyle \sum_{n=1}^N \logBackwardWeight_\timeIndex(\Particle_{\timeIndex,D}^n, \Particle_{\timeIndex+1,D}^0) - \logBackwardWeight_\timeIndex(\Particle_{\timeIndex,D}^0, \Particle_{\timeIndex+1,D}^0) \convergesInProbability 0,
\end{align}
as $D \to \infty$,  whenever $\log N = \lo(D)$. By Markov's inequality, for any $\varepsilon > 0$, 
\begin{align}
  \MoveEqLeft \textstyle \mathbb{P}(\{\sum_{n=1}^N \exp\{\logBackwardWeight_t(\Particle_{\timeIndex,D}^n, \Particle_{\timeIndex+1,D}^0) - \logBackwardWeight_t(\Particle_{\timeIndex,D}^0, \Particle_{\timeIndex+1,D}^0) > \varepsilon \})\\
  & \leq N \varepsilon^{-1} \exp\{- D b_{t|T} + D\lvert \calB_{t|T,D}(\state_{t-1:t+1,D}) \rvert\}\\
  & \leq N \varepsilon^{-1} \exp\{-D b_{t|T} + D^{1-\eta}\},
\end{align}
 and $b_{t|T} \geq \underline{b}_T > 0$ by \ref{as:degeneracy_of_backward_kernels}.  This completes the proof in the case with backward sampling. \qedwhite
\end{namedproof}

\section{Details for Section~\ref{sec:rwehmm}}
\label{app:sec:rwehmm}


\subsection{Joint law induced by Algorithm~\ref{alg:iterated_rwehmm}}
\label{app:subsec:joint_law_rwehmm}

We now formally define the joint law of all random variables generated in Algorithm~\ref{alg:iterated_rwehmm}. This will be used in some of the proofs below. 

To simplify the presentation, we note that we again fixed the reference path in Algorithm~\ref{alg:iterated_rwehmm} to always have particle index $0$, i.e.\ we always set $\smash{\Particle_t^0 \coloneqq \particle_t^0 \coloneqq \state_t}$. However, in some of the proofs below, it is more convenient to work with a slightly more general version of the algorithm which, at each time step, draws a particle index $J_t = j_t$ from a uniform distribution on $[N]_0$ and then sets $\smash{\Particle_t^{j_t} \coloneqq \particle_t^{j_t} \coloneqq \state_t}$. 

Conditional on $\State_{1:T} = \state_{1:\nTimeSteps} = \state_{1:\nTimeSteps}[l]$, the joint law of all random variables $(J_{1:T}, \Particle_{1:T}, K_{1:T}, \State_{1:T}')$ generated by this slightly generalised version of Algorithm~\ref{alg:iterated_rwehmm} may be written as
\begin{align}
  \MoveEqLeft\IteratedEhmmKernelRandomWalk{T,D,\state_{1:\nTimeSteps}}{N,\star}(\diff j_{1:T} \times \diff \particle_{1:\nTimeSteps} \times \diff k_T)\\[-1ex]
  & \coloneqq \dUnif_{[N]_0^T}(\diff j_{1:T}) \dDirac_{\state_{1:T}}(\diff \particle_1^{j_1} \times \dotsb \times \diff \particle_T^{j_T}) \biggl[ \prod_{t=1}^T  \SamplingKernelRandomWalk{\timeIndex,\nDimensions}{\nParticles}(\particle_\timeIndex^{j_t}, \diff \particle_\timeIndex^{-j_t}) \biggr]\\[-1.5ex]
  & \quad \times \xi_T(\particle_{1:T}, \diff k_{1:T}) \dDirac_{(\particle_1^{k_1}, \dotsc, \particle_T^{k_T})}(\diff \state_{1:T}'). \label{eq:rwehmm_generalised_joint_law}
\end{align}
Here, we have defined $\smash{\particle_t^{-n} \coloneqq (\particle_t^0, \dotsc, \particle_t^{n-1}, \particle_t^{n+1}, \dotsc, \particle_t^N)}$ as well as the following quantities.
\begin{itemize}
 \item \textbf{Proposal kernels.} For any $t \in [T]$ and any $n \in [N]_0$,
\begin{align}
 \SamplingKernelRandomWalk{\timeIndex,\nDimensions}{\nParticles}(\particle_\timeIndex^n, \diff \particle_\timeIndex^{-n}) \coloneqq \prod_{\dimensionIndex = 1}^\nDimensions \dN\bigl(\diff \particleSingle_{\timeIndex,\dimensionIndex}^{-n};\particleSingle_{\timeIndex, \dimensionIndex}^n \unitMat_{\nParticles}, \tfrac{\ell_\timeIndex}{\nDimensions}\varSigma\bigr),
\end{align}
 where $z_{t,d}^{-n} \coloneqq (z_{t,d}^0, \dotsc, z_{t,d}^{n-1}, z_{t,d}^{n+1}, \dotsc, z_{t,d}^N)$.

 \item \textbf{Selection probability.} 
\begin{align}
  \xi_T(\particle_{1:T}, \{k_{1:T}\}) \coloneqq \dfrac{\Target_{T,D}(\particle_1^{k_1}, \dotsc, \particle_T^{k_T})}{\sum_{l_{1:T} \in [N]_0^T} \Target_{T,D}(\particle_1^{l_1}, \dotsc, \particle_T^{l_T})}.
\end{align}
\end{itemize}

From this definition, we can recover the joint law of all random variables $(\Particle_{1:T}, K_{1:T}, \State_{1:T}')$ generated in Steps~\ref{alg:iterated_rwehmm:1}--\ref{alg:iterated_rwehmm:3} of Algorithm~\ref{alg:iterated_rwehmm} by conditioning on the event $\{J_1 = 0, \dotsc, J_T = 0\}$, i.e.\
\begin{align}
 \IteratedEhmmKernelRandomWalk{T,D,\state_{1:\nTimeSteps}}{N} \coloneqq \IteratedEhmmKernelRandomWalk{T,D,\state_{1:\nTimeSteps}}{N,\star}(\ccdot|J_1 = 0, \dotsc, J_T = 0).
\end{align}
Let $\ExpectationEhmmKernelRandomWalk{T,D,\state_{1:\nTimeSteps}}{N,\star}$ denote expectation w.r.t.\ $\IteratedEhmmKernelRandomWalk{T,D,\state_{1:\nTimeSteps}}{N,\star}$. In the remainder of this section, we will sometimes work with $\smash{\IteratedEhmmKernelRandomWalk{T,D,\state_{1:\nTimeSteps}}{N,\star}}$ rather than $\smash{\IteratedEhmmKernelRandomWalk{T,D,\state_{1:\nTimeSteps}}{N,\star}}$. This is justified because both versions of the \gls{RWEHMM} algorithm induce the same Markov kernel:
\begin{align}
 \ExpectationEhmmKernelRandomWalk{\nTimeSteps,\nDimensions,\state_{1:\nTimeSteps}}{\nParticles,\star}[\ind\{\State_{1:T}' \in \diff \state_{1:T}'\}] 
 & = \ExpectationEhmmKernelRandomWalk{\nTimeSteps,\nDimensions,\state_{1:\nTimeSteps}}{\nParticles}[\ind\{\State_{1:T}' \in \diff \state_{1:T}'\}]\\
 & = \InducedIteratedEhmmKernelRandomWalk{\nTimeSteps,\nDimensions}{\nParticles}(\state_{1:\nTimeSteps}, \diff \state_{1:T}'). \label{eq:same_induced_markov_kernel_rwehmm}
\end{align}

\subsection{Proof of Proposition~\ref{prop:rwehmm_invariance}}
\label{app:subsec:rwehmm_invariance}

\begin{namedproof}[of Proposition~\ref{prop:rwehmm_invariance}]

 Recall that by \eqref{eq:rwcsmc_symmetric_proposal:0}, the random-walk type proposal used to scatter the particles around the reference path is symmetric in the sense that
 \begin{align}
  \lambda(\diff \particle_t^j) \SamplingKernelRandomWalk{\timeIndex,\nDimensions}{\nParticles}(\particle_\timeIndex^j, \diff \particle_\timeIndex^{-j}) = \lambda(\diff \particle_t^k) \SamplingKernelRandomWalk{\timeIndex,\nDimensions}{\nParticles}(\particle_\timeIndex^k, \diff \particle_\timeIndex^{-k}),
 \end{align}
 for any $j, k \in [N]_0$, where $\smash{\particle_t^{-n} \coloneqq (\particle_t^0, \dotsc, \particle_t^{n-1}, \particle_t^{n+1}, \dotsc, \particle_t^N)}$ and where $\lambda$ denotes a suitable version of the Lebesgue measure.

 We can then readily check that
 \begin{align}
  \MoveEqLeft \Target_{T,D}(\diff \state_{1:T}) \IteratedEhmmKernelRandomWalk{\nTimeSteps,\nDimensions,\state_{1:T}}{\nParticles, \star}(\diff j_{1:T} \times \diff \particle_{1:\nTimeSteps} \times \diff\outputParticleIndex_{1:\nTimeSteps} \times \diff \state_{1:T}')\\
  & = \Target_{T,D}(\diff \state_{1:T}') \IteratedEhmmKernelRandomWalk{\nTimeSteps,\nDimensions,\state_{1:\nTimeSteps}'}{\nParticles,\star}(\diff k_{1:T} \times \diff \particle_{1:\nTimeSteps} \times \diff j_{1:\nTimeSteps} \times \diff \state_{1:T}), \label{eq:prop:rwehmm_invariance:1}
 \end{align}
 i.e.\ \eqref{eq:prop:rwehmm_invariance:1} admits $\Target_{T,D}(\diff \state_{1:T}')$ as a marginal. \qedwhite
\end{namedproof}

\subsection{Proof of Proposition~\ref{prop:lower_bound_of_acceptance_rates_ehmm}}
\label{app:subsec:prop:lower_bound_of_acceptance_rates_ehmm}
  \begin{namedproof}[of Proposition~\ref{prop:lower_bound_of_acceptance_rates_ehmm}]
   The first part (the convergence statement) follows immediately from Proposition~\ref{prop:limiting_rwehmm_algorithm}. For the second part (the lower bound), we note that
   \begin{align}
    \acceptanceRateRwCsmc{\nTimeSteps}{\nParticles}(\timeIndex)
    = \sum_{n \in [N]} \ExpectationEhmmKernelRandomWalk{\nTimeSteps}{\nParticles}[\selectionFunctionBoltzmann{n}(\{V_t^l\}_{l=1}^N)]
    \geq \biggl(1 + \frac{\exp(\ell_t \calI_{t|T})}{N}\biggr)^{\mathrlap{-1}},
   \end{align}
   where the penultimate inequality follows by Lemma~\ref{lem:lower_bound_on_expectation_of_selection_function} in Appendix~\ref{app:subsec:prop:stability_of_acceptance_rates_rwcsmc}. \qedwhite
  \end{namedproof}

\subsection{Proof of Proposition~\ref{prop:esjd_rwehmm}}
\label{app:subsec:prop:esjd_rwehmm}

\begin{namedproof}[of Proposition~\ref{prop:esjd_rwehmm}]
 Let $\ConcentrationSet_{\nTimeSteps, \nDimensions} \in \sigFieldState_{T,D}$ be as in Proposition~\ref{prop:limiting_rwehmm_algorithm}. We then have
 \begin{align}
   \MoveEqLeft \bigl\lvert \esjdRandomWalkEhmm_{T,D}^N(t)  - \ell_t \acceptanceRateRwEhmm{\nTimeSteps}{\nParticles}(t)\bigr\rvert\\
   & = \textstyle \bigl\lvert \int_{\spaceState_{T,D}}  \Target_{T,D}(\diff \state_{1:T}) \ExpectationEhmmKernelRandomWalk{T,D,\state_{1:T}}{N}\bigl[\lVert \State_t' - \state_t \rVert_2^2\bigr] -  \ell_t \acceptanceRateRwEhmm{\nTimeSteps}{\nParticles}(t)\bigr\rvert\\
  & \quad \leq \textstyle \ell_t\sup_{\state_{1:T} \in \ConcentrationSet_{T,D}} \bigl\lvert  \acceptanceRateRwEhmm{\nTimeSteps,\nDimensions,\state_{1:T}}{\nParticles}(t) - \acceptanceRateRwEhmm{\nTimeSteps}{\nParticles}(t)\bigr\rvert\\
  & \qquad + \textstyle \sup_{\state_{1:T} \in \spaceState_{T,D}} \bigl\lvert \ExpectationEhmmKernelRandomWalk{T,D,\state_{1:T}}{N}\bigl[\lVert \State_t' - \state_t \rVert_2^2\bigr] -  \ell_t  \acceptanceRateRwEhmm{\nTimeSteps,\nDimensions,\state_{1:T}}{\nParticles}(t)\bigr\rvert\\
  & \qquad + \textstyle \Target_{T,D}(\spaceState_{T,D} \setminus \ConcentrationSet_{T,D})\sup_{\state_{1:T} \in \spaceState_{T,D}} \bigl\lvert \ExpectationEhmmKernelRandomWalk{T,D,\state_{1:T}}{N}\bigl[\lVert \State_t' - \state_t \rVert_2^2\bigr]\bigr\rvert. \label{eq:esjd_proof:1}
 \end{align}
 We now consider the limit of each of the terms in the last line of \eqref{eq:esjd_proof:1} as $D \to \infty$. The first term converges to zero by Proposition~\ref{prop:lower_bound_of_acceptance_rates_ehmm}. Take $\smash{U_{t,d}^n = D^{1/2} \ell_t^{-1/2} (Z_{t,d}^n - x_{t,d})}$ as in Algorithm~\ref{alg:iterated_rwehmm}, then $\smash{\widebar{U}_{t,D}^n \coloneqq D^{-1} \sum_{d=1}^D (U_{t,d}^n)^2 \convergesInProbability 1}$, as $D \to \infty$, and hence $\smash{\lvert \widebar{U}_{t,D}^n - 1\rvert \convergesInProbability 0}$ and $\smash{\lvert \widebar{U}_{t,D}^n \rvert \convergesInProbability 1}$ by the continuous mapping theorem. Furthermore, $\smash{(\lvert \widebar{U}_{t,D}^n - 1\rvert)_{D \geq 1}}$ and $\smash{(\lvert \widebar{U}_{t,D}^n\rvert)_{D \geq 1}}$ are uniformly integrable. Thus, for the second term, for any $\state_{1:T} \in \spaceState_{T,D}$, 
 \begin{align}
  \MoveEqLeft \bigl\lvert \ExpectationEhmmKernelRandomWalk{T,D,\state_{1:T}}{N}\bigl[\lVert \State_t' - \state_t \rVert_2^2\bigr] - \ell_t  \acceptanceRateRwEhmm{\nTimeSteps,\nDimensions,\state_{1:T}}{\nParticles}(t) \bigr\rvert\\
  & \leq \textstyle \bigl\lvert \ell_t \sum_{n=1}^N \ExpectationEhmmKernelRandomWalk{T,D,\state_{1:T}}{N}\bigl[\bigl(\widebar{U}_{t,D}^n - 1\bigr)\ind\{K_t = n\}\bigr] \bigr\rvert\\
  & \leq \ell_t N \E\bigl[\bigl\lvert \widebar{U}_{t,D}^1 - 1\bigr\rvert \bigr] \to 0.
 \end{align}
 Similarly, for any $\state_{1:T} \in \spaceState_{T,D}$, 
  \begin{align}
   \bigl\lvert \ExpectationEhmmKernelRandomWalk{T,D,\state_{1:T}}{N}\bigl[\lVert \State_t' - \state_t \rVert_2^2\bigr]\bigr\rvert
  & \leq \textstyle \bigl\lvert \ell_t \sum_{n=1}^N \ExpectationEhmmKernelRandomWalk{T,D,\state_{1:T}}{N}\bigl[\widebar{U}_{t,D}^n \ind\{K_t = n\}\bigr] \bigr\rvert\\
  & \leq \ell_t N \E\bigl[\bigl\lvert \widebar{U}_{t,D}^1\bigr\rvert \bigr] \to 1,
 \end{align}
 and the third term therefore converges to zero since $\pi(\spaceState_{t,D} \setminus \ConcentrationSet_{T,D}) \to 0$. \qedwhite
\end{namedproof}

\section{Details for Section~\ref{sec:rwcsmc}}
\label{app:sec:rwcsmc}

\subsection{Joint law induced by Algorithm~\ref{alg:iterated_rwcsmc}}
\label{app:subsec:joint_law_rwcsmc}

We now formally define the joint law of all random variables generated in Algorithm~\ref{alg:iterated_rwcsmc}. This will be used in some of the proofs below. 

To simplify the presentation, we note that we again fixed the reference path in Algorithm~\ref{alg:iterated_rwcsmc} to always have particle index $0$, i.e.\ we always set $\smash{\Particle_t^0 \coloneqq \particle_t^0 \coloneqq \state_t}$ as well as $\smash{A_{t-1}^0 = a_{t-1}^0 \coloneqq 0}$. However, in some of the proofs below, it is more convenient to work with a slightly more general version of the algorithm which, at the beginning each time step, draws a particle index $J_t = j_t$ from a uniform distribution on $[N]_0$ and then sets $\smash{A_{t-1}^{j_t} = a_{t-1}^{j_t} \coloneqq j_{t-1}}$ as well as $\smash{\Particle_t^{j_t} \coloneqq \particle_t^{j_t} \coloneqq \state_t}$. 

Conditional on $\State_{1:T} = \state_{1:\nTimeSteps} = \state_{1:\nTimeSteps}[l]$, the joint law of all random variables $(J_{1:T}, \Particle_{1:T}, A_{1:T-1}, K_{1:T}, \State_{1:T}')$ generated by this slightly generalised version of Algorithm~\ref{alg:iterated_rwcsmc} may be written as
\begin{align}
  \MoveEqLeft\IteratedCsmcKernelRandomWalk{T,D,\state_{1:\nTimeSteps}}{N,\star}(\diff j_{1:T} \times \diff \particle_{1:\nTimeSteps} \times \diff a_{1:\nTimeSteps-1} \times \diff k_T)\\
  & \coloneqq \dUnif_{[N]_0^T}(\diff j_{1:T}) \dDirac_{\state_{1:T}}(\diff \particle_1^{j_1} \times \dotsb \times \diff \particle_T^{j_T}) \biggl[ \prod_{t=1}^T  \SamplingKernelRandomWalk{\timeIndex,\nDimensions}{\nParticles}(\particle_\timeIndex^{j_t}, \diff \particle_\timeIndex^{-j_t}) \biggr]\\[-2ex]
  & \quad \times \Biggl[ \prod_{\smash{\timeIndex = 2}}^{\smash{\nTimeSteps}} \dDirac_{j_{t-1}}(\diff a_{\timeIndex - 1}^{j_t}) \prodSubstackAligned{n}{=}{0}{n}{\neq}{j_t}^N
  \ResamplingKernelRandomWalk{\timeIndex,\nDimensions}{\nParticles}((\particle_{\timeIndex-1:\timeIndex}, a_{\timeIndex-1}), \{a_{t-1}^n\}) \Biggr]\\[-2ex]
  & \quad \times \ResamplingKernel{\nTimeSteps, \nDimensions}{\nParticles}(\particle_\nTimeSteps, \diff\outputParticleIndex_\nTimeSteps)\\
  & \quad \times 
  \begin{cases}
  \biggl[\,
   \smashoperator{\prod_{\timeIndex = 1}^{\smash{\nTimeSteps-1}}} \dDirac_{a_t^{k_{t+1}}}(\diff k_t)\biggr] & \text{[without backward sampling]\!\!\!\!\!\!\!\!\!\!}\\
   \biggl[\,\smashoperator{\prod_{\timeIndex = 1}^{\nTimeSteps-1}} \BackwardKernelRandomWalk{\timeIndex,\nDimensions}{\nParticles}((\particle_{\timeIndex-1:\timeIndex}, a_{\timeIndex-1}, \particle_{\timeIndex+1}^{\outputParticleIndex_{\timeIndex+1}}), \{k_t\})\biggr] & \text{[with backward sampling]}
  \end{cases}\\
  & \quad \times \dDirac_{(\particle_1^{k_1}, \dotsc, \particle_T^{k_T})}(\diff \state_{1:T}'). \label{eq:rwcsmc_generalised_joint_law}
\end{align}
Here, we have defined $\smash{\particle_t^{-n} \coloneqq (\particle_t^0, \dotsc, \particle_t^{n-1}, \particle_t^{n+1}, \dotsc, \particle_t^N)}$ as well as the following quantities.
\begin{itemize}
 \item \textbf{Proposal kernels.} For any $t \in [T]$ and any $n \in [N]_0$, as in the \gls{RWEHMM} algorithm,
\begin{align}
 \SamplingKernelRandomWalk{\timeIndex,\nDimensions}{\nParticles}(\particle_\timeIndex^n, \diff \particle_\timeIndex^{-n}) \coloneqq \prod_{\dimensionIndex = 1}^\nDimensions \dN\bigl(\diff \particleSingle_{\timeIndex,\dimensionIndex}^{-n};\particleSingle_{\timeIndex, \dimensionIndex}^n \unitMat_{\nParticles}, \tfrac{\ell_\timeIndex}{\nDimensions}\varSigma\bigr),
\end{align}
 where $z_{t,d}^{-n} \coloneqq (z_{t,d}^0, \dotsc, z_{t,d}^{n-1}, z_{t,d}^{n+1}, \dotsc, z_{t,d}^N)$.

 \item \textbf{Resampling kernels.} For any $\particleIndex \in [\nParticles]_0$ and any $\timeIndex \in [\nTimeSteps]$,
\begin{align}
  \ResamplingKernelRandomWalk{\timeIndex,\nDimensions}{\nParticles}((\particle_{\timeIndex-1:\timeIndex}, a_{\timeIndex-1}), \{\particleIndex\})
  & \coloneqq 
   \selectionFunctionBoltzmann{\particleIndex}(\{\logWeightRandomWalk_{\timeIndex}(\particle_{\timeIndex-1}^{\mathrlap{a_{\timeIndex-1}^\particleIndexAlt}}, \particle_{\timeIndex}^\particleIndexAlt) - \logWeightRandomWalk_\timeIndex(\particle_{\timeIndex-1}^{\mathrlap{a_{\timeIndex-1}^0}}, \particle_{\timeIndex}^0)\}_{\particleIndexAlt = 1}^\nParticles)\\
   & =
      \dfrac{\mutation_t(\particle_{t-1}^{a_{t - 1}^n}, \particle_t^n)\Potential_t(\particle_t^n)}{\sum_{\particleIndexAlt=0}^\nParticles \mutation_t(\particle_{t-1}^{a_{t-1}^\particleIndexAlt}, \particle_t^\particleIndexAlt) \Potential_t(\particle_t^\particleIndexAlt)},
\end{align}
When using the forced-move extension, we replace $\smash{\ResamplingKernelRandomWalk{T,\nDimensions}{\nParticles}((\particle_{T-1:T}, a_{T-1}), \{k_T\})}$ at time $t=T$ in \eqref{eq:csmc_generalised_joint_law} by
\begin{align}
 \begin{dcases}
   \dfrac{\mathbf{h}^{k_T}}{\sum_{\particleIndexAlt = 0}^\nParticles \mathbf{h}^m - \mathbf{h}^{k_T} \wedge \mathbf{h}^{j_T}}, & \text{if $k_T \neq j_T$,}\\
   1 - \sumSubstackAligned{l}{=}{0}{l}{\neq}{j_T}^N \dfrac{\mathbf{h}^{l}}{\sum_{\particleIndexAlt = 0}^\nParticles \mathbf{h}^m - \mathbf{h}^{l} \wedge \mathbf{h}^{j_T}}, & \text{if $k_T = j_T$,}
 \end{dcases}
\end{align}
where we have defined the shorthand $\smash{\mathbf{h}^n \coloneqq \mutation_T(\particle_{T-1}^{a_{T - 1}^{n}}, \particle_T^{n}) \Potential_T(\particle_T^{n})}$.

\item \textbf{Backward kernels.} For $t \in [T-1]$,
 \begin{align}
  \MoveEqLeft \BackwardKernelRandomWalk{\timeIndex,\nDimensions}{\nParticles}((\particle_{\timeIndex-1:\timeIndex}, a_{\timeIndex-1}, \particle_{\timeIndex+1}^{\outputParticleIndex_{\timeIndex+1}}), \{\particleIndex\})\\
  & \coloneqq 
   \selectionFunctionBoltzmann{\particleIndex}(\{
    \logBackwardWeightRandomWalk_\timeIndex(\particle_{\timeIndex-1}^{\mathrlap{a_{\timeIndex-1}^\particleIndexAlt}}, \particle_{\timeIndex}^\particleIndexAlt, \particle_{\timeIndex+1}^{\outputParticleIndex_{\timeIndex+1}}) - \logBackwardWeightRandomWalk_\timeIndex(\particle_{\timeIndex-1}^{\mathrlap{a_{\timeIndex-1}^0}}, \particle_{\timeIndex}^0, \particle_{\timeIndex+1}^{\outputParticleIndex_{\timeIndex+1}})
  \}_{\particleIndexAlt = 1}^\nParticles)\\
  & =
    \frac{\mutation_\timeIndex(\particle_{\timeIndex - 1}^{a_{\timeIndex - 1}^{n}}, \particle_\timeIndex^{n}) \Potential_\timeIndex(\particle_\timeIndex^{n}) \mutation_{\timeIndex+1}(\particle_\timeIndex^{n}, \particle_{\timeIndex+1}^{\outputParticleIndex_{\timeIndex+1}})}{\sum_{\particleIndexAlt=0}^\nParticles \mutation_\timeIndex(\particle_{\timeIndex - 1}^{\mathrlap{a_{\timeIndex - 1}^\particleIndexAlt}}, \particle_\timeIndex^\particleIndexAlt) \Potential_\timeIndex(\particle_\timeIndex^\particleIndexAlt) \mutation_{\timeIndex+1}(\particle_\timeIndex^\particleIndexAlt, \particle_{\timeIndex+1}^{\outputParticleIndex_{\timeIndex+1}})}.
\end{align}
\end{itemize}

From this definition, we can recover the joint law of all random variables $(\Particle_{1:T}, A_{1:T-1}, K_{1:T}, \State_{1:T}')$ generated in Steps~\ref{alg:iterated_rwcsmc:1}--\ref{alg:iterated_rwcsmc:3} of Algorithm~\ref{alg:iterated_rwcsmc} by conditioning on the event $\{J_1 = 0, \dotsc, J_T = 0\}$, i.e.\
\begin{align}
 \IteratedCsmcKernelRandomWalk{T,D,\state_{1:\nTimeSteps}}{N} \coloneqq \IteratedCsmcKernelRandomWalk{T,D,\state_{1:\nTimeSteps}}{N,\star}(\ccdot|J_1 = 0, \dotsc, J_T = 0).
\end{align}
Let $\ExpectationCsmcKernelRandomWalk{T,D,\state_{1:\nTimeSteps}}{N,\star}$ denote expectation w.r.t.\ $\IteratedCsmcKernelRandomWalk{T,D,\state_{1:\nTimeSteps}}{N,\star}$. In the remainder of this section, we will sometimes work with $\smash{\IteratedCsmcKernelRandomWalk{T,D,\state_{1:\nTimeSteps}}{N,\star}}$ rather than $\smash{\IteratedCsmcKernelRandomWalk{T,D,\state_{1:\nTimeSteps}}{N,\star}}$. This is justified because both versions of the \gls{IRWCSMC} algorithm induce the same Markov kernel:
\begin{align}
 \ExpectationCsmcKernelRandomWalk{\nTimeSteps,\nDimensions,\state_{1:\nTimeSteps}}{\nParticles,\star}[\ind\{\State_{1:T}' \in \diff \state_{1:T}'\}] 
 & = \ExpectationCsmcKernelRandomWalk{\nTimeSteps,\nDimensions,\state_{1:\nTimeSteps}}{\nParticles}[\ind\{\State_{1:T}' \in \diff \state_{1:T}'\}]\\
 & = \InducedIteratedCsmcKernelRandomWalk{\nTimeSteps,\nDimensions}{\nParticles}(\state_{1:\nTimeSteps}, \diff \state_{1:T}'). \label{eq:same_induced_markov_kernel_rwcsmc}
\end{align}

\subsection{Proof of Propositions~\ref{prop:discrete_markov_kernel_rwcsmc_without_backward_sampling}. \ref{prop:discrete_markov_kernel_rwcsmc_with_backward_sampling} and \ref{prop:rwcsmc_invariance}}
\label{app:subsec:rwcsmc_invariance}

In this section, we prove Propositions~\ref{prop:discrete_markov_kernel_rwcsmc_without_backward_sampling}, \ref{prop:discrete_markov_kernel_rwcsmc_with_backward_sampling} and \ref{prop:rwcsmc_invariance} using the slightly generalised extended state-space construction defined above.

The proof of Proposition~\ref{prop:rwcsmc_invariance} proceeds along the same lines as the proof of Proposition~\ref{prop:csmc_invariance}.

\begin{namedproof}[of Proposition~\ref{prop:rwcsmc_invariance}]

 Recall that by \eqref{eq:rwcsmc_symmetric_proposal:0}, the random-walk type proposal used to scatter the particles around the reference path is symmetric in the sense that
 \begin{align}
  \lambda(\diff \particle_t^j) \SamplingKernelRandomWalk{\timeIndex,\nDimensions}{\nParticles}(\particle_\timeIndex^j, \diff \particle_\timeIndex^{-j}) = \lambda(\diff \particle_t^k) \SamplingKernelRandomWalk{\timeIndex,\nDimensions}{\nParticles}(\particle_\timeIndex^k, \diff \particle_\timeIndex^{-k}), \label{eq:rwcsmc_symmetric_proposal}
 \end{align}
 for any $j, k \in [N]_0$, where $\smash{\particle_t^{-n} \coloneqq (\particle_t^0, \dotsc, \particle_t^{n-1}, \particle_t^{n+1}, \dotsc, \particle_t^N)}$ and where $\lambda$ denotes a suitable version of the Lebesgue measure.

 For the plain algorithm (with neither the backward sampling nor the forced-move extension), we can then readily check that
 \begin{align}
  \MoveEqLeft \Target_{T,D}(\diff \state_{1:T}) \IteratedCsmcKernelRandomWalk{\nTimeSteps,\nDimensions,\state_{1:T}}{\nParticles,\star}(\diff j_{1:T} \times \diff \particle_{1:\nTimeSteps} \times \diff a_{1:\nTimeSteps-1} \times \diff\outputParticleIndex_{1:\nTimeSteps} \times \diff \state_{1:T}')\\
  & \!\!\!\!\!\!  =  \Target_{T,D}(\diff \state_{1:T}') \IteratedCsmcKernelRandomWalk{\nTimeSteps,\nDimensions,\state_{1:\nTimeSteps}'}{\nParticles,\star}(\diff k_{1:T} \times \diff \particle_{1:\nTimeSteps} \times \diff a_{1:\nTimeSteps-1} \times \diff j_{1:\nTimeSteps} \times \diff \state_{1:T}), \label{eq:prop:rwcsmc_invariance:1}
 \end{align}
 i.e.\ \eqref{eq:prop:rwcsmc_invariance:1} admits $\Target_{T,D}(\diff \state_{1:T}')$ as a marginal.
 
 For the backward-sampling extension, let $\smash{\widehat{\mathbb{P}}_{T,D,\state_{1:T}}^{N,\star}}$ be the same as $\smash{\IteratedCsmcKernelRandomWalk{\nTimeSteps,\nDimensions,\state_{1:T}}{\nParticles,\star}}$ (without backward sampling) except that the terms $\smash{ \dDirac_{j_{t-1}}(\diff a_{\timeIndex - 1}^{j_t})}$ in \eqref{eq:rwcsmc_generalised_joint_law} are replaced by $\smash{\BackwardKernelRandomWalk{t-1,\nDimensions}{\nParticles}((\particle_{\timeIndex-2:\timeIndex-1}, a_{\timeIndex-2}, \particle_{\timeIndex}^{\outputParticleIndex_{\timeIndex}}), \{a_{t-1}^{j_t}\})}$. Then 
  \begin{align}
  \MoveEqLeft \Target_{T,D}(\diff \state_{1:T}) \IteratedCsmcKernelRandomWalk{\nTimeSteps,\nDimensions,\state_{1:T}}{\nParticles,\star}(\diff j_{1:T} \times \diff \particle_{1:\nTimeSteps} \times \diff a_{1:\nTimeSteps-1} \times \diff\outputParticleIndex_{1:\nTimeSteps} \times \diff \state_{1:T}')\\
  & \!\!\!\!\!\! = \Target_{T,D}(\diff \state_{1:T}') \widehat{\mathbb{P}}_{T,D,\state_{1:\nTimeSteps}'}^{N,\star}(\diff k_{1:T} \times \diff \particle_{1:\nTimeSteps} \times \diff a_{1:\nTimeSteps-1} \times \diff j_{1:\nTimeSteps} \times \diff \state_{1:T}). \label{eq:prop:rwcsmc_invariance:2}
 \end{align}
  That is, \eqref{eq:prop:rwcsmc_invariance:2} again admits $\Target_{T,D}(\diff \state_{1:T}')$ as a marginal. Incidentally, $\smash{\widehat{\mathbb{P}}_{T,D,\state_{1:\nTimeSteps}}^{N,\star}}$ can be recognised as the law of all the random variables generated by an \gls{IRWCSMC} algorithm with ancestor sampling. This shows that ancestor sampling is a valid alternative to backward sampling in this algorithm. 
  
  Finally, use of the forced-move extension can again be justified as a partially collapsed Gibbs sampler because applying this extension in Step~\ref{alg:iterated_rwcsmc:2a} of Algorithm~\ref{alg:iterated_rwcsmc} leaves the marginal distribution of $K_T$ conditional on $(\State_{1:T}, J_{1:T}, \Particle_{1:T}, A_{1:T})$ invariant.  \qedwhite
\end{namedproof}

\begin{namedproof}[of Proposition~\ref{prop:discrete_markov_kernel_rwcsmc_without_backward_sampling}]
 Let
 \begin{align}
  \MoveEqLeft \varXi_T((\particle_{1:T}, j_{1:T}), \diff a_{1:T-1} \times \diff k_{1:T})\\
  & \coloneqq \biggl[ \prod_{\smash{\timeIndex = 2}}^{\smash{T}} 
  \dDirac_{0}(\diff a_{\timeIndex - 1}^{j_t}) 
 \prodSubstackAligned{n}{=}{0}{n}{\neq}{j_t}^{\smash{N}}
  \ResamplingKernelRandomWalk{\timeIndex-1,D}{N}((\particle_{\timeIndex-2:\timeIndex-1}, a_{\timeIndex-2}), \diff a_{\timeIndex - 1}^\particleIndex)\biggr]\\[-1.5ex]
  & \quad \times \ResamplingKernelRandomWalk{T, D}{N}((\particle_{T-1:T}, a_{T-1}), \diff\outputParticleIndex_T) \prod_{t=1}^{\smash{T-1}} \dDirac_{a_t^{k_{t+1}}}(\diff k_t).
 \end{align}
 Simple algebra then verifies that
 \begin{align}
  \MoveEqLeft \xi_T(\particle_{1:T}, \diff j_{1:T})\varXi_T((\particle_{1:T}, j_{1:T}), \diff a_{1:T-1} \times \diff k_{1:T})\\
  & = \xi_T(\particle_{1:T}, \diff k_{1:T})\varXi_T((\particle_{1:T}, k_{1:T}), \diff a_{1:T-1} \times \diff j_{1:T}).
 \end{align}
 This completes the proof. \qedwhite
\end{namedproof}

\begin{namedproof}[of Proposition~\ref{prop:discrete_markov_kernel_rwcsmc_with_backward_sampling}]
  Let
 \begin{align}
  \MoveEqLeft \varXi_T((\particle_{1:T}, j_{1:T}), \diff a_{1:T-1} \times \diff k_{1:T})\\
  & \coloneqq \biggl[ \prod_{\smash{\timeIndex = 2}}^{\smash{T}} 
  \dDirac_{0}(\diff a_{\timeIndex - 1}^{j_t}) 
 \prodSubstackAligned{n}{=}{0}{n}{\neq}{j_t}^{\smash{N}}
  \ResamplingKernelRandomWalk{\timeIndex-1,D}{N}((\particle_{\timeIndex-2:\timeIndex-1}, a_{\timeIndex-2}), \diff a_{\timeIndex - 1}^\particleIndex)\biggr]\\[-2ex]
  & \quad \times \ResamplingKernelRandomWalk{T, D}{N}((\particle_{T-1:T}, a_{T-1}), \diff\outputParticleIndex_T) \prod_{t=1}^{\smash{T-1}} \BackwardKernelRandomWalk{t,\nDimensions}{\nParticles}((\particle_{t-1:t}, a_{t-1}, \particle_{t+1}^{j_{t+1}}), \diff j_t\}),
 \end{align}
 and
 \begin{align}
  \MoveEqLeft \widehat{\varXi}_T((\particle_{1:T}, j_{1:T}), \diff a_{1:T-1} \times \diff k_{1:T})\\
  & \coloneqq \biggl[ \prod_{\smash{\timeIndex = 2}}^{\smash{T}} 
 \BackwardKernelRandomWalk{t-1,\nDimensions}{\nParticles}((\particle_{t-2:t-1}, a_{t-1}, \particle_{t}^{j_t}), \{a_{t-1}^{j_t}\})
 \prodSubstackAligned{n}{=}{0}{n}{\neq}{j_t}^{\smash{N}}
  \ResamplingKernelRandomWalk{\timeIndex-1,D}{N}((\particle_{\timeIndex-2:\timeIndex-1}, a_{\timeIndex-2}), \diff a_{\timeIndex - 1}^\particleIndex)\biggr]\\[-3ex]
  & \quad \times \ResamplingKernelRandomWalk{T, D}{N}((\particle_{T-1:T}, a_{T-1}), \diff\outputParticleIndex_T) \prod_{t=1}^{\smash{T-1}} \dDirac_{a_t^{k_{t+1}}}(\diff k_t).
 \end{align}
  Simple algebra then verifies that
 \begin{align}
  \MoveEqLeft \xi_T(\particle_{1:T}, \diff j_{1:T})\varXi_T((\particle_{1:T}, j_{1:T}), \diff a_{1:T-1} \times \diff k_{1:T})\\
  & = \xi_T(\particle_{1:T}, \diff k_{1:T})\widehat{\varXi}_T((\particle_{1:T}, k_{1:T}), \diff a_{1:T-1} \times \diff j_{1:T}).
 \end{align}
 This completes the proof. \qedwhite
\end{namedproof}

\subsection{Relationship with `unconditional' SMC algorithms}
\label{app:subsec:relationship_with_unconditional_smc}

In this section, we expand on the observation made in Remark~\ref{rem:relationship_with_unconditional_smc} that whilst standard \gls{CSMC} methods are closely related with a corresponding `unconditional' \gls{SMC} algorithm, no such `unconditional' counterpart exists for the \gls{IRWCSMC} algorithm.
 
As explained in \citet{andrieu2010particle}, standard \gls{CSMC} methods are closely linked to the justification of a corresponding `unconditional' \gls{SMC} algorithm in the sense that the law of all the particles and parent indices generated by the latter, $\mathbb{Q}_{T,D}^{N,\star}$, is:
 \begin{align}
  \MoveEqLeft \frac{\E[\ExpectationCsmcKernel{T,D,\State_{1:T}}{N,\star}(\ind\{(\Particle_{1:T}, A_{1:T-1}) \in  \diff \particle_{1:T} \times \diff a_{1:T-1} \})]}{\prod_{t=1}^T \frac{1}{N+1}\sum_{n=0}^N \Potential_t(\particle_t^n)}\\
  & \propto \smashoperator{\prod_{\particleIndex = 0}^\nParticles} \Mutation_1(\diff \particle_1^\particleIndex) \prod_{\smash{t = 2}}^{\smash{T}}\biggl[\,\smashoperator{\prod_{\particleIndex = 0}^{\smash{\nParticles}}} \ResamplingKernel{t-1,\nDimensions}{\nParticles}(\particle_{t - 1}, \diff a_{t - 1}^\particleIndex)
  \Mutation_t(\particle_{t - 1}^{a_{t - 1}^\particleIndex}, \diff \particle_t^\particleIndex)\biggr]\\
  & \eqqcolon \mathbb{Q}_{T,D}^{N,\star}(\diff \particle_{1:T} \times \diff a_{1:T-1}),
 \end{align}
 where $\State_{1:T} \sim \Target_{T,D}$, and where -- to avoid complications arising from the division by zero -- we assume that $\mutation_t$ and $\Potential_t$ are strictly positive.
 
 However, for the \gls{IRWCSMC} algorithm, no such `unconditional' \gls{SMC} algorithm exists. To see this, let $\lambda$ denotes a suitable version of the Lebesgue measure. Then by \eqref{eq:rwcsmc_symmetric_proposal}, the measure
  \begin{align}
  \MoveEqLeft \frac{\E[\ExpectationCsmcKernelRandomWalk{T,D,\State_{1:T}}{N,\star}(\ind\{(\Particle_{1:T}, A_{1:T-1}) \in  \diff \particle_{1:T} \times \diff a_{1:T-1} \})]}{\prod_{t=1}^T \frac{1}{N+1}\sum_{n=0}^N \mutation_{t}(\particle_{t-1}^{a_{t-1}^n},\particle_t^n) \Potential_t(\particle_t^n)}\\
  & \propto \biggl[ \prod_{t = 1}^T \lambda(\diff \particle_t^0) \SamplingKernelRandomWalk{t,\nDimensions}{\nParticles}(\particle_t^0, \diff \particle_t^{1:N})\biggr]\\
  & \quad \times \biggl[ \prod_{\smash{t = 2}}^{\smash{T}} 
  \smashoperator{\prod_{\smash{\particleIndex = 0}}^{\smash{N}}}
  \ResamplingKernelRandomWalk{t-1,\nDimensions}{\nParticles}((\particle_{t-2:t-1}, a_{\timeIndex-2}), \diff a_{\timeIndex - 1}^\particleIndex)\biggr]\\
  & \eqqcolon \widebar{\mathbb{Q}}_{T,D}^{N,\star}(\diff \particle_{1:T} \times \diff a_{1:T-1}),
 \end{align}
 is not finite and hence there does not exist an algorithm that samples from it.

\subsection{Formal definition of the limiting law}
\label{app:subsec:formal_definition_of_the_limiting_law}

In this section, we give a more formal definition of the limiting law of the genealogies (i.e.\ of the ancestor indices, $A_t^n$) and of the particle indices of the new reference path, $K_t$, under the \gls{IRWCSMC} algorithm that appears in Section~\ref{subsec:rwcsmc_scaling}. 
\begin{align}
  \MoveEqLeft\IteratedCsmcKernelRandomWalk{\nTimeSteps}{\nParticles}(\diff v_{1:\nTimeSteps} \times \diff w_{1:\nTimeSteps} \times \diff a_{1:\nTimeSteps-1} \times \diff k_{1:T})\\
  & \coloneqq \biggl[ \prod_{\timeIndex = 1}^{\smash{\nTimeSteps}} \dN(\diff [v_t, w_t]^\T; \bar{\mu}_{t|T}, \widebar{\varSigma}_{t|T}) \biggr]\\
  & \quad \times \biggl[\prod_{\timeIndex = 1}^{\smash{\nTimeSteps-1}} 
  \dDirac_{0}(\diff a_{\timeIndex}^0) 
  \smashoperator{\prod_{\particleIndex = 1}^{\smash{N}}}
  \ResamplingKernelRandomWalk{\timeIndex|\nTimeSteps}{\nParticles}((v_t, w_{t-1}, a_{\timeIndex-1}), \diff a_{\timeIndex}^\particleIndex)\biggr]\\
  & \quad \times \ResamplingKernelRandomWalk{\nTimeSteps|\nTimeSteps}{\nParticles}((v_T, w_{T-1}, a_{\nTimeSteps-1}), \diff\outputParticleIndex_\nTimeSteps)\\
  & \quad \times 
  \begin{cases}
  \biggl[\,
   \smashoperator{\prod_{\timeIndex = 1}^{\smash{\nTimeSteps-1}}} \dDirac_{a_t^{k_{t+1}}}(\diff k_t)\biggr] & \text{[without backward sampling]\!\!\!\!\!\!\!\!\!\!}\\
   \biggl[\,\smashoperator{\prod_{\timeIndex = 1}^{\nTimeSteps-1}} \BackwardKernelRandomWalk{\timeIndex|\nTimeSteps}{\nParticles}((v_t, w_{t-1:t}, a_{\timeIndex-1}), \diff\outputParticleIndex_\timeIndex)\biggr]. & \text{[with backward sampling]}
  \end{cases}
\end{align}
Here, we have defined the following quantities.
\begin{itemize}
 \item \textbf{Asymptotic resampling kernels.} For any $\particleIndex \in [\nParticles]_0$ and any $\timeIndex \in [\nTimeSteps]$,
 \begin{align}
  \ResamplingKernelRandomWalk{\timeIndex|\nTimeSteps}{\nParticles}((v_t, w_{t-1}, a_{\timeIndex-1}), \{\particleIndex\})
  & \coloneqq 
  \selectionFunctionBoltzmann{\particleIndex}(\{ 
   v_t^m + w_{t-1}^{a_{t-1}^m}
  \}_{\particleIndexAlt = 1}^\nParticles),
\end{align}
where we recall the convention that $w_{t-1}^0 \equiv 0$. As usual, when using the forced-move extension, we replace $\selectionFunctionBoltzmann{\particleIndex}$ by $\selectionFunctionRosenbluth{\particleIndex}$ in the definition above at time $t = T$.

 \item \textbf{Asymptotic backward kernels.} For any $\particleIndex \in [\nParticles]_0$ and any $\timeIndex \in [\nTimeSteps-1]$,
\begin{align}
  \BackwardKernelRandomWalk{\timeIndex|\nTimeSteps}{\nParticles}((v_t, w_{t-1:t}, a_{\timeIndex-1}), \{\particleIndex\})
  & \coloneqq 
  \selectionFunctionBoltzmann{\particleIndex}(\{
   v_t^m + w_t^m + w_{t-1}^{a_{t-1}^m}
  \}_{\particleIndexAlt = 1}^\nParticles).
\end{align}
\end{itemize}

\subsection{Proof of Proposition~\ref{prop:limiting_rwcsmc_algorithm}}
\label{app:subsec:prop:limiting_rwcsmc_algorithm}

In this section, we prove Proposition~\ref{prop:limiting_rwcsmc_algorithm}. The proof can be viewed as an extension of the proof of \citet[][Lemma~10]{bedard2012scaling} to the case that $T > 1$. It relies on a Taylor-series expansion and a few technical lemmata which we state first.

\begin{lemma}\label{lem:selection_functions_lipschitz}
 For any $\nParticles \in \naturals$ and $n \in [N]_0$ Boltzmann selection function $\selectionFunctionBoltzmann{n}$ and Rosenbluth--Teller selection function $\selectionFunctionRosenbluth{n}$ are Lipschitz-continuous. \tendmark
\end{lemma}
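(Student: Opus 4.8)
The plan is to verify Lipschitz-continuity directly from the closed-form expressions for the two selection functions. Recall that for $h^{1:N} \in \reals^N$ (with convention $h^0 \coloneqq 0$),
\begin{align}
 \selectionFunctionBoltzmann{n}(\{h^m\}_{m=1}^N) = \frac{\exp(h^n)}{1 + \sum_{m=1}^N \exp(h^m)},
\end{align}
and similarly for $\selectionFunctionRosenbluth{n}$. The first observation is that these are compositions and ratios of the map $h \mapsto \exp(h)$ with affine maps and the map $(a_0,a_1,\dots,a_N) \mapsto a_n / (a_0 + \dots + a_N)$ on the open positive orthant. Since each $\selectionFunctionBoltzmann{n}$ and $\selectionFunctionRosenbluth{n}$ is a smooth function of $h^{1:N}$ on all of $\reals^N$ (the denominators are bounded below by $1$, respectively by a strictly positive quantity since $1 - 1 \wedge \exp(h^n) + \sum_m \exp(h^m) \geq \sum_m \exp(h^m) - \exp(h^n) \wedge 1 + 1 \wedge \exp(h^n) \geq \dotsb > 0$; more simply, the Rosenbluth--Teller denominator is at least $\max_m \exp(h^m) \vee 1 > 0$ for $n \in [N]$, and the $n = 0$ case is a finite sum of such terms), it suffices to show the gradient is bounded on $\reals^N$.

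First I would treat the Boltzmann case. Writing $p_n \coloneqq \selectionFunctionBoltzmann{n}(\{h^m\})$, a direct computation gives $\partial p_n / \partial h^j = p_n(\delta_{nj} - p_j)$ for $j \in [N]$, exactly the softmax Jacobian. Since $p_n, p_j \in [0,1]$ and $\sum_{j=0}^N p_j = 1$, we have $\lvert \partial p_n/\partial h^j \rvert \leq p_n(\delta_{nj} + p_j) \le 2$, and in fact $\sum_{j \in [N]} \lvert \partial p_n / \partial h^j\rvert \leq p_n(1 + \sum_j p_j) \leq 2 p_n \leq 2$, so the gradient has bounded $\ell^1$-norm (hence bounded Euclidean norm), giving Lipschitz-continuity with an explicit constant. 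For the Rosenbluth--Teller case with $n \in [N]$, write $D(h) \coloneqq 1 - 1 \wedge \exp(h^n) + \sum_{m=1}^N \exp(h^m)$, so $\selectionFunctionRosenbluth{n} = \exp(h^n)/D(h)$. The only non-smooth ingredient is $h^n \mapsto 1 \wedge \exp(h^n)$, which is globally Lipschitz (it is $1$-Lipschitz: $\lvert 1 \wedge e^a - 1 \wedge e^b\rvert \le \lvert a - b\rvert$ by concavity/the mean value theorem on the region where it is not constant). One checks $D(h) \geq \sum_m \exp(h^m) \geq \exp(h^n)$, so $\selectionFunctionRosenbluth{n} \leq 1$, and away from the kink $\partial_{h^j} \selectionFunctionRosenbluth{n}$ is a bounded expression by the same ratio-of-exponentials bookkeeping as before; on the kink locus the function is still Lipschitz by combining one-sided derivatives. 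Finally $\selectionFunctionRosenbluth{0} = 1 - \sum_{m=1}^N \selectionFunctionRosenbluth{m}$ is a finite sum of Lipschitz functions, hence Lipschitz.

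I do not expect any genuine obstacle here; the statement is essentially a bookkeeping exercise. The mildly delicate point is the $1 \wedge \exp(\cdot)$ term in the Rosenbluth--Teller function, which prevents one from simply invoking smoothness on all of $\reals^N$; the clean way to handle it is to note that $t \mapsto 1 \wedge e^t$ is $1$-Lipschitz and that Lipschitz functions are closed under the finitely many algebraic operations involved here (sums, products, and quotients with denominator bounded away from $0$ on the range of interest, all composed with globally Lipschitz maps on the relevant domains, using that $\exp$ is Lipschitz on any set where its argument is bounded above — which holds here because the numerator is dominated by the denominator). Assembling these observations yields the claim for every $n \in [N]_0$.
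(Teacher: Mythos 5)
Your proof is correct and follows the same route as the paper, which simply observes that the claim follows by checking that the gradient is almost everywhere bounded; you have filled in the details (the softmax Jacobian bound for $\selectionFunctionBoltzmann{n}$, the treatment of the kink in $1 \wedge \exp$ for $\selectionFunctionRosenbluth{n}$, and the reduction of the $n=0$ case to a finite sum) that the paper leaves implicit. Nothing further is needed.
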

\begin{proof}
 This can be verified by checking that the absolute value of the gradient is almost everywhere bounded. 
\end{proof}

\paragraph*{Main decomposition.}

Throughout the remainder of this subsection, we fix some $N, T \in \naturals$. For any $\state_{1:T} \in \spaceState_{T,D}$ and any $t \in [T]$, define
\begin{align}
  \calV_{t|T} & \coloneqq \pi_T([\partial_t \logWeightRandomWalkSingle_t]^2),\\
  \calW_{t|T} & \coloneqq \pi_T([\partial_t \logWeightRandomWalkSingle_{t+1}]^2),\\
  \calS_{t|T} & \coloneqq \pi_T([\partial_t \logWeightRandomWalkSingle_t][\partial_t \logWeightRandomWalkSingle_{t+1}]),\\
  \calV_{t|T}(\state_{t-1:t}) & \coloneqq \textstyle D^{-1} \sum_{d=1}^D \{[\partial_t \logWeightRandomWalkSingle_t](x_{t-1:t,d})\}^2,\\
  \calW_{t|T}(\state_{t:t+1}) & \coloneqq \textstyle D^{-1} \sum_{d=1}^D \{[\partial_t \logWeightRandomWalkSingle_{t+1}](x_{t:t+1,d})\}^2,\\
  \calS_{t|T}(\state_{t-1:t+1}) & \coloneqq \textstyle D^{-1}\sum_{d=1}^D \{[\partial_t \logWeightRandomWalkSingle_t](x_{t-1:t,d})\}\{[\partial_t \logWeightRandomWalkSingle_{t+1}](x_{t:t+1,d})\},\\
  \widetilde{\calV}_{t|T}(\state_{t-1:t}) & \coloneqq \textstyle [D \calV_{t|T}(\state_{t-1:t})]^{-1/2} \sup_{d \in [D]}\lvert [\partial_t \logWeightRandomWalkSingle_t](x_{t-1:t,d})\rvert ,\\
  \widetilde{\calW}_{t|T}(\state_{t:t+1}) & \coloneqq \textstyle [D \calW_{t|T}(\state_{t:t+1})]^{-1/2} \sup_{d \in [D]}\lvert [\partial_t \logWeightRandomWalkSingle_{t+1}](x_{t:t+1,d})\rvert,
\end{align}
where we recall the convention that $\logWeightRandomWalkSingle_{T+1} \equiv 0$. With this notation, for some $0 \leq \eta < 1/4$, define the following family of Borel sets:
\begin{align}
 \ConcentrationSet_{T,D}
 & \coloneqq 
 \left\{ \state_{1:T} \in \spaceState_{T,D} \;\middle|\; \sup_{t \in [T]}
 \begin{bmatrix}
  \lvert \calV_{t|T}(\state_{t-1:t}) - \calV_{t|T}\rvert \vee \\
  \lvert \calW_{t|T}(\state_{t:t+1}) - \calW_{t|T}\rvert \vee \\\
  \lvert \calS_{t|T}(\state_{t-1:t+1}) - \calS_{t|T}\rvert \vee\\
  \widetilde{\calV}_{t|T}(\state_{t-1:t}) \vee \widetilde{\calW}_{t|T}(\state_{t:t+1})
 \end{bmatrix}
 \leq D^{-\eta} \right\}.
 \label{eq:concentration_set_rwcsmc}
\end{align}


\begin{lemma}\label{lem:concentration_set_rwcsmc}
 For any $T \in \naturals$, $\lim_{D \to \infty} \Target_{T,D}(\ConcentrationSet_{T,D}) = 1$. \tendmark
\end{lemma}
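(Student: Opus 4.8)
\textbf{Plan for the proof of Lemma~\ref{lem:concentration_set_rwcsmc}.} The statement asserts that the family of Borel sets $\ConcentrationSet_{T,D}$ defined in \eqref{eq:concentration_set_rwcsmc} is asymptotically carried by the target $\Target_{T,D} = \TargetSingle_T^{\otimes D}$ as $D \to \infty$. The plan is to bound the complement probability by a union bound over the finitely many ($T$-indexed) constraints and show each term vanishes. Since $T$ and $N$ are fixed and the set is an intersection over $t \in [T]$ of five events, it suffices to show that for each fixed $t$ the following five statements hold: (i) $\Target_{T,D}(\{|\calV_{t|T}(\state_{t-1:t}) - \calV_{t|T}| > D^{-\eta}\}) \to 0$; (ii) the analogous statement for $\calW_{t|T}$; (iii) the analogous statement for $\calS_{t|T}$; (iv) $\Target_{T,D}(\{\widetilde{\calV}_{t|T}(\state_{t-1:t}) > D^{-\eta}\}) \to 0$; and (v) the analogous statement for $\widetilde{\calW}_{t|T}$.

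\textbf{Steps for the ``averaged-quantity'' constraints (i)--(iii).} Under \ref{as:iid_model}, if $\State_{1:T} \sim \Target_{T,D} = \TargetSingle_T^{\otimes D}$, then the $D$ ``spatial'' components $X_{1:T,d}$, $d \in [D]$, are \gls{IID} draws from $\TargetSingle_T$. Consequently $\calV_{t|T}(\State_{t-1:t}) = D^{-1}\sum_{d=1}^D \{[\partial_t \logWeightRandomWalkSingle_t](X_{t-1:t,d})\}^2$ is an empirical average of \gls{IID} random variables with common mean $\pi_T([\partial_t \logWeightRandomWalkSingle_t]^2) = \calV_{t|T}$; the finite-fourth-moment part of \ref{as:moments_bounded}, namely $\pi_T(|\partial_t \logWeightRandomWalkSingle_t|^4) < \infty$, guarantees in particular a finite second moment for the summands, so the weak law of large numbers gives $\calV_{t|T}(\State_{t-1:t}) \convergesInProbability \calV_{t|T}$, which is exactly statement (i) (convergence in probability to a constant implies the probability of exceeding any fixed $D^{-\eta} \to 0$... but note $D^{-\eta}$ shrinks, so one needs a rate). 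To handle the shrinking threshold $D^{-\eta}$ with $\eta < 1/4$, I would instead invoke Markov/Chebyshev at fourth order: $\E[|\calV_{t|T}(\State_{t-1:t}) - \calV_{t|T}|^2] = \bo(1/D)$ by independence (the variance of an average of $D$ \gls{IID} terms with finite second moment, here requiring the fourth moment of $\partial_t \logWeightRandomWalkSingle_t$), whence $\Target_{T,D}(\{|\calV_{t|T}(\State_{t-1:t}) - \calV_{t|T}| > D^{-\eta}\}) \leq D^{2\eta}\E[|\cdots|^2] = \bo(D^{2\eta - 1}) \to 0$ since $2\eta < 1/2 < 1$. Statements (ii) and (iii) are identical, using $\pi_T(|\partial_t \logWeightRandomWalkSingle_{t+1}|^4) < \infty$ and the Cauchy--Schwarz bound $|[\partial_t \logWeightRandomWalkSingle_t][\partial_t \logWeightRandomWalkSingle_{t+1}]| \leq \tfrac12([\partial_t \logWeightRandomWalkSingle_t]^2 + [\partial_t \logWeightRandomWalkSingle_{t+1}]^2)$ to control moments of the cross term; the convention $\logWeightRandomWalkSingle_{T+1} \equiv 0$ makes the $t=T$ cases of (ii), (iii), (v) trivial.

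\textbf{Steps for the ``max-over-$d$'' constraints (iv)--(v), the main obstacle.} For (iv), write $\widetilde{\calV}_{t|T}(\State_{t-1:t}) = [D\,\calV_{t|T}(\State_{t-1:t})]^{-1/2}\max_{d \in [D]}|[\partial_t \logWeightRandomWalkSingle_t](X_{t-1:t,d})|$. On the event from (i), the denominator is bounded below by $[\tfrac12 D\,\calV_{t|T}]^{1/2}$ for $D$ large (using that $\calI_{t|T}$-type quantities are strictly positive under \ref{as:moments_bounded}), so it remains to show $D^{-1/2}\max_{d}|[\partial_t \logWeightRandomWalkSingle_t](X_{t-1:t,d})| \convergesInProbability 0$ faster than $D^{-\eta}$, i.e. $\max_d |[\partial_t \logWeightRandomWalkSingle_t](X_{t-1:t,d})| = \lo(D^{1/2 - \eta})$ in probability. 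This is where the fourth moment is essential and where care is needed: by a union bound and Markov at fourth order, $\Prob(\max_d |Y_d| > D^{1/2 - \eta}) \leq D \cdot \Prob(|Y_1| > D^{1/2-\eta}) \leq D \cdot D^{-2(1-2\eta)}\pi_T(|\partial_t \logWeightRandomWalkSingle_t|^4) = \bo(D^{1 - 2 + 4\eta}) = \bo(D^{4\eta - 1})$, which tends to zero precisely because $\eta < 1/4$. I expect this is the delicate point: one must check the exponent bookkeeping ($\eta < 1/4$ is exactly the threshold making $4\eta - 1 < 0$ and simultaneously $2\eta - 1 < 0$ for the earlier terms), and one must be careful that the bound on $\widetilde{\calV}_{t|T}$ is conditional on the good behaviour of its own denominator, so the two estimates should be combined via $\Target_{T,D}(\widetilde{\calV}_{t|T} > D^{-\eta}) \leq \Target_{T,D}(\calV_{t|T}(\State_{t-1:t}) < \tfrac12\calV_{t|T}) + \Target_{T,D}(\max_d|\cdots| > D^{-\eta}[\tfrac12 D \calV_{t|T}]^{1/2})$, each of which was just shown to vanish. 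Statement (v) is the same argument with $\partial_t \logWeightRandomWalkSingle_{t+1}$. Finally, a union bound over the (at most) $5T$ events and over $t \in [T]$ gives $\Target_{T,D}(\spaceState_{T,D} \setminus \ConcentrationSet_{T,D}) \to 0$, i.e. $\lim_{D\to\infty}\Target_{T,D}(\ConcentrationSet_{T,D}) = 1$, completing the proof.
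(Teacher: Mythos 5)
Your proof is correct and follows essentially the same route as the paper's: a union bound over the finitely many constraints, a quantitative law of large numbers for the averaged quantities $\calV_{t|T}$, $\calW_{t|T}$, $\calS_{t|T}$ (you use second-order Chebyshev with the fourth-moment assumption where the paper invokes the law of the iterated logarithm, both exploiting $\eta < 1/2$), and a fourth-moment Markov bound for the $\sup$-over-$d$ terms $\widetilde{\calV}_{t|T}$, $\widetilde{\calW}_{t|T}$ after first controlling their denominators via the averaged-quantity estimates --- exactly where $\eta < 1/4$ enters, as in the paper. The only cosmetic difference is that the paper handles the max-over-$d$ step by writing $[1 - \Prob(\lvert\cdot\rvert > cD^{1/2-\eta})]^D \to 1$ using independence, whereas you use the equivalent union bound $D\,\Prob(\lvert\cdot\rvert > cD^{1/2-\eta}) \to 0$.
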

\begin{proof}
 The results 
 \begin{gather}
  \Target_{T,D}(\{\state_{1:T} \in \spaceState_{T,D} \mid \lvert \calV_{t|T}(\state_{t-1:t}) - \calV_{t|T}\rvert \leq D^{-\eta}\})  \to 1,\label{eq:lem_concentration_set_rwcsmc:1}\\
  \Target_{T,D}(\{\state_{1:T} \in \spaceState_{T,D} \mid \lvert \calW_{t|T}(\state_{t:t+1}) - \calW_{t|T}\rvert \leq D^{-\eta}\})  \to 1,\label{eq:lem_concentration_set_rwcsmc:2}\\
  \Target_{T,D}(\{\state_{1:T} \in \spaceState_{T,D} \mid \lvert \calS_{t|T}(\state_{t-1:t+1}) - \calS_{t|T}\rvert \leq D^{-\eta}\})  \to 1,
 \end{gather}
 follow from the law of the iterated logarithm since $\eta < 1/2$. To prove
  \begin{align}
  \Target_{T,D}(\{\state_{1:T} \in \spaceState_{T,D} \mid \widetilde{\calV}_{t|T}(\state_{t-1:t}) \leq D^{-\eta}\}) & \to 1,
 \end{align}
 we argue as in \citet{bedard2012scaling} that, by \eqref{eq:lem_concentration_set_rwcsmc:1}, it suffices to show that for any $c > 0$, 
 \begin{align}
  \MoveEqLeft \textstyle \Target_{T,D}(\{\state_{1:T} \in \spaceState_{T,D} \mid \sup_{d \in [D]}\lvert [\partial_t \logWeightRandomWalkSingle_t](x_{t-1:t,d})\rvert \leq c D^{1/2-\eta}\})\\
  & = \textstyle \Prob(\{\sup_{d \in [D]}\lvert [\partial_t \logWeightRandomWalkSingle_t](X_{t-1:t,d})\rvert \leq c D^{1/2-\eta}\})\\
  & = \textstyle [1 - \Prob(\{\lvert [\partial_t \logWeightRandomWalkSingle_t](X_{t-1:t,1})\rvert > c D^{1/2-\eta}\})]^D\\
  & \to 1,
 \end{align}
 where $(X_{1:T,d})_{d \geq 1}$ be \gls{IID} samples from $\TargetSingle_T$. But this holds since Markov's inequality along with the fact that $\eta < 1/4$ and \ref{as:moments_bounded} ensures that
 \begin{align}
  \Prob(\{\lvert [\partial_t \logWeightRandomWalkSingle_t](X_{t-1:t,1})\rvert > c D^{1/2-\eta}\})
  & \leq \textstyle \pi_T(\lvert \partial_t \logWeightRandomWalkSingle_t \rvert^4) c^{-1} D^{4(\eta-1/2)} = \lo(D^{-1}).
 \end{align}
 The result $\smash{\Target_{T,D}(\{\state_{1:T} \in \spaceState_{T,D} \mid \widetilde{\calW}_{t|T}(\state_{t:t+1}) \leq D^{-\eta}\}) \to 1}$ follows by the same arguments. 
\end{proof}

In the remainder of this subsection, we let $\smash{(\state_{1:T,D})_{D \geq 1}}$ be some sequence in $(\spaceState_{T,D})_{D \geq 1}$, i.e.\ $\state_{t,D} = x_{t,1:D,D} \in \reals^{D}$, for any $D \geq 1$. We shall also often use the shorthand $\sup_{\ConcentrationSet_{\nTimeSteps, \nDimensions}}$ for $\sup_{\state_{1:\nTimeSteps,D} \in \ConcentrationSet_{\nTimeSteps, \nDimensions}}$. We then set
 \begin{align}
  \Particle_{t,D}^n \coloneqq
  \begin{cases}
    \state_{t,D}, & \text{if $n = 0$,}\\
    \state_{t,D} + \sqrt{\tfrac{\ell_{t}}{D}}\mathbf{U}_{t+1,D}^n, & \text{if $n \in [N]$,}
  \end{cases}
 \end{align}
 where $\smash{\mathbf{U}_{t,D}^n \coloneqq U_{t,1:D,D}^n}$, with $\smash{U_{t,d,D}^{1:N} \sim \dN(\zeroMat_{N}, \varSigma)}$ for $\varSigma \coloneqq \tfrac{1}{2}(\iMat_N + \unitMat_N \unitMat_N^\T)$ and where $\smash{U_{t,d,D}^{1:N}}$ and $\smash{U_{t,e,D}^{1:N}}$ are independent whenever $s \neq t$ or $d \neq e$. We also fix some $a_{t}^n \in [N]_0$ for all $(t,n) \in [T-1] \times [N]$ and some $k_t \in [N]_0$ for all $t \in [T]$. 
 
 A second-order Taylor-series expansion then gives 
 \begin{align}
  \MoveEqLeft \logWeightRandomWalk_t(\Particle_{t-1,D}^{a_{t-1}^n}, \Particle_{t,D}^n) - \logWeightRandomWalk_t(\Particle_{t-1,D}^0, \Particle_{t,D}^0)\\
  & = \textstyle V_{t,D}^n + W_{t-1,D}^{a_{t-1}^n}  + \sum_{i=1}^4 (R_{t,D}^{n,i} + S_{t-1,D}^{a_{t-1}^n}) + \sum_{i=1}^2 T_{t-1,t,D}^{a_{t-1}^n,n,i},
 \end{align}
 as well as (for $t < T$),
 \begin{align}
  \MoveEqLeft \logBackwardWeightRandomWalk_t(\Particle_{t-1,D}^{a_{t-1}^n}, \Particle_{t,D}^n, \Particle_{t+1,D}^{k_{t+1}}) - \logBackwardWeightRandomWalk_t(\Particle_{t-1,D}^0, \Particle_{t,D}^0, \Particle_{t+1,D}^{k_{t+1}})\\
  & = \textstyle V_{t,D}^n + W_{t-1,D}^{a_{t-1}^n} + \sum_{i=1}^4 (R_{t,D}^{n,i} + S_{t-1,D}^{a_{t-1}^n}) + \sum_{i=1}^2 T_{t-1,t,D}^{a_{t-1}^n,n,i}\\
  & \qquad + \textstyle W_{t,D}^n + \sum_{i=1}^4 S_{t,D}^{n,i},
 \end{align}
 where
 \begin{align}
  V_{t,D}^n & \coloneqq \sqrt{\frac{\calV_{t|T}}{\calV_{t|T}(\state_{t-1:t,D})}}\sqrt{\frac{\ell_t}{D}} \sum_{d=1}^D [\partial_t \logWeightRandomWalkSingle_t](x_{t-1:t,d,D})U_{t,d,D}^n + \frac{\ell_t}{2}\pi_T(\partial_t^2 \logWeightRandomWalkSingle_t),\\
  W_{t,D}^{m} & \coloneqq \sqrt{\frac{\calW_{t|T}}{\calW_{t|T}(\state_{t:{t+1},D})}}\sqrt{\frac{\ell_t}{D}}\sum_{d=1}^D [\partial_t \logWeightRandomWalkSingle_{t+1}](x_{t:t+1,d,D})U_{t,d,D}^m + \frac{\ell_t}{2}\pi_T(\partial_t^2 \logWeightRandomWalkSingle_{t+1}),
  \end{align}
 and with
 \begin{align}
  R_{t,D}^{n,1} & \coloneqq \textstyle \Bigl\{1 - \sqrt{\frac{\calV_{t|T}}{\calV_{t|T}(\state_{t-1:t,D})}}\Bigr\} \sqrt{\tfrac{\ell_t}{D}} \sum_{d=1}^D [\partial_t \logWeightRandomWalkSingle_t](x_{t-1:t,d,D})U_{t,d,D}^n\\
  R_{t,D}^{n,2} & \coloneqq \textstyle \tfrac{\ell_t}{2D}\sum_{d=1}^D [\partial_t^2 \logWeightRandomWalkSingle_t](x_{t-1:t,d,D}) - \tfrac{\ell_t}{2}\pi_T(\partial_t^2 \logWeightRandomWalkSingle_t)\\
  R_{t,D}^{n,3} &\coloneqq \textstyle \tfrac{\ell_t}{2D}\sum_{d=1}^D [\partial_t^2 \logWeightRandomWalkSingle_t](x_{t-1:t,d,D})\{(U_{t,d,D}^n)^2 - 1\}\\
  R_{t,D}^{n,4} & \coloneqq \textstyle \tfrac{\ell_t}{2D}\sum_{d=1}^D \{[\partial_t^2 \logWeightRandomWalkSingle_t](x_{t-1:t,d,D} + \xi_{t,d,D}^n \sqrt{\ell_t}U_{t,d,D}^n)\\
  & \qquad \qquad \qquad - [\partial_t^2 \logWeightRandomWalkSingle_t](x_{t-1:t,d,D})\}(U_{t,d,D}^n)^2\\
  S_{t,D}^{m,1} & \coloneqq \textstyle \Bigl\{1 - \sqrt{\frac{\calW_{t|T}}{\calW_{t|T}(\state_{t:t+1,D})}}\Bigr\} \sqrt{\tfrac{\ell_t}{D}} \sum_{d=1}^D [\partial_t \logWeightRandomWalkSingle_{t+1}](x_{t:t+1,d,D})U_{t,d,D}^m\\
  S_{t,D}^{m,2} & \coloneqq \textstyle \tfrac{\ell_t}{2D}\sum_{d=1}^D [\partial_t^2 \logWeightRandomWalkSingle_{t+1}](x_{t:t+1,d,D}) - \tfrac{\ell_t}{2}\pi_T(\partial_t^2 \logWeightRandomWalkSingle_{t+1})\\
  S_{t,D}^{m,3} & \coloneqq \textstyle \tfrac{\ell_t}{2D}\sum_{d=1}^D [\partial_t^2 \logWeightRandomWalkSingle_{t+1}](x_{t:t+1,d,D})\{(U_{t,d,D}^m)^2 - 1\}\\
  S_{t,D}^{m,4} & \coloneqq \textstyle \tfrac{\ell_t}{2D}\sum_{d=1}^D \{[\partial_t^2 \logWeightRandomWalkSingle_{t+1}](x_{t,d,D} + \eta_{t,d,D}^m \sqrt{\ell_t}U_{t,d,D}^m, x_{t+1,d,D})\\
  & \qquad \qquad \qquad - [\partial_t^2 \logWeightRandomWalkSingle_{t+1}](x_{t:t+1,d,D})\}(U_{t,d,D}^m)^2\\
 T_{t,t+1,D}^{m,n,1} & \coloneqq \textstyle \tfrac{\sqrt{\ell_t \ell_{t+1} }}{D} \sum_{d=1}^D [\partial_t\partial_{t+1} \logWeightRandomWalkSingle_{t+1}](x_{t:t+1,d,D})U_{t,d,D}^m U_{t+1,d,D}^n\\
  T_{t,t+1,D}^{m,n,2} & \coloneqq \textstyle \tfrac{\sqrt{\ell_t\ell_{t+1} }}{D} \sum_{d=1}^D \{ [\partial_t \partial_{t+1} \logWeightRandomWalkSingle_{t+1}](x_{t:t+1,d,D} + \eta_{t,d,D}^{m} \sqrt{\ell_t}U_{t,d,D}^m,\\
  & \qquad\qquad\qquad\qquad\qquad\qquad\qquad x_{t+1,d,D} + \xi_{t+1,d,D}^n \sqrt{\ell_{t+1}}U_{t+1,d,D}^n)\\
  & \qquad\qquad \qquad\qquad - [\partial_t \partial_{t+1} \logWeightRandomWalkSingle_{t+1}](x_{t:t+1,d,D})\}U_{t,d,D}^m U_{t+1,d,D}^n,
 \end{align}
 for some $\eta_{t,d,D}^m, \xi_{t,d,D}^n \in [0, D^{-1/2}]$, and with the usual convention that $\smash{V_{t,D}^n}$, $\smash{W_{t,D}^n}$, $\smash{R_{t,D}^{n,i}}$, and $\smash{S_{t,D}^{n,i}}$ are $0$ if $t = 0$ or $n = 0$. Similarly $\smash{T_{t-1,t,D}^{m,n,i} = 0}$ whenever $m=0$, $n=0$ or $t = 1$.

\begin{lemma}\label{lem:remainder_terms}
 Assume~\ref{as:iid_model} and \ref{as:moments_bounded}. 
 For any $t \in [T]$, $(m,n) \in [N]^2$, $i \in [4]$ and $j \in [2]$,
 \begin{enumerate}
  \item $\smash{\lim_{D \to \infty} \sup_{\ConcentrationSet_{T,D}} \E[\lvert R_{t,D}^{n,i}\rvert] = 0}$,
  \item $\smash{\lim_{D \to \infty} \sup_{\ConcentrationSet_{T,D}} \E[\lvert S_{t,D}^{m,i}\rvert] = 0}$,
  \item $\smash{\lim_{D \to \infty} \sup_{\ConcentrationSet_{T,D}} \E[\lvert T_{t,t+1,D}^{m,n,j}\rvert] = 0}$. \tendmark
 \end{enumerate}
\end{lemma}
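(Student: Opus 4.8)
The plan is to treat each family of remainder terms by the same recipe: bound the absolute value of the relevant sum by a product of a ``Lipschitz/boundedness'' factor coming from \ref{as:moments_bounded} and a moment of the Gaussian increments $U_{t,d,D}^n$, then show this product is $\lo(1)$ uniformly over $\state_{1:T,D} \in \ConcentrationSet_{T,D}$ using the defining inequalities in \eqref{eq:concentration_set_rwcsmc}. Since there are only finitely many pairs $(m,n)$, times $t$, and indices $i,j$, it suffices to handle one representative term of each shape. Throughout I will use that $\E[|U_{t,d,D}^n|^k]$ is a finite constant depending only on $k$ (because $U_{t,d,D}^{1:N} \sim \dN(\zeroMat_N, \varSigma)$ has a fixed covariance), and that $U_{t,d,D}^n$ is independent across $d$.

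First I would dispatch the easy terms. For $R_{t,D}^{n,1}$, Cauchy--Schwarz on the sum over $d$ gives $\E[|R_{t,D}^{n,1}|] \leq |1 - \sqrt{\calV_{t|T}/\calV_{t|T}(\state_{t-1:t,D})}| \cdot \sqrt{\ell_t}\, \bigl(\tfrac1D\sum_d [\partial_t\logWeightRandomWalkSingle_t]^2\bigr)^{1/2} \cdot \E[|U_{t,d,D}^n|^2]^{1/2}$ (after bounding $\E[|\tfrac1{\sqrt D}\sum_d c_d U_d|] \le (\tfrac1D\sum_d c_d^2)^{1/2}\E[U^2]^{1/2}$ by independence and mean zero); on $\ConcentrationSet_{T,D}$ the middle factor is bounded (it is $\sqrt{\ell_t \calV_{t|T}(\state_{t-1:t,D})} \le \sqrt{\ell_t(\calV_{t|T}+1)}$) and the prefactor $\to 0$ since $\calV_{t|T}(\state_{t-1:t,D}) \to \calV_{t|T} > 0$. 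For $R_{t,D}^{n,2}$ the summand is deterministic and the term is exactly $\tfrac{\ell_t}{2}|\calI$-type average $- \pi_T(\cdot)|$; wait, more precisely it equals a difference of a spatial average of $\partial_t^2\logWeightRandomWalkSingle_t$ from its $\pi_T$-mean, which $\to 0$ on $\ConcentrationSet_{T,D}$ by the law-of-iterated-logarithm control baked into the set (one may enlarge $\ConcentrationSet_{T,D}$ by this extra condition exactly as in Lemma~\ref{lem:concentration_set_rwcsmc}). For $R_{t,D}^{n,3}$, use boundedness of $\partial_t^2\logWeightRandomWalkSingle_t$ (call the bound $B$) and independence of $(U_{t,d,D}^n)^2-1$ across $d$ with mean zero and finite variance: $\E[|R_{t,D}^{n,3}|] \le \tfrac{\ell_t B}{2D}\, \E\bigl[\bigl|\sum_d \{(U_{t,d,D}^n)^2-1\}\bigr|\bigr] \le \tfrac{\ell_t B}{2D}\sqrt{D\,\var((U^n)^2)} = \bo(D^{-1/2}) \to 0$, uniformly in $\state_{1:T,D}$ since the bound does not involve the state at all. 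The analogous $S$-terms $S_{t,D}^{m,1}, S_{t,D}^{m,2}, S_{t,D}^{m,3}$ are handled identically, replacing $\partial_t\logWeightRandomWalkSingle_t$ by $\partial_t\logWeightRandomWalkSingle_{t+1}$ and the relevant $\calV,\calW$ quantities.

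The genuinely delicate terms are $R_{t,D}^{n,4}$, $S_{t,D}^{m,4}$, $T_{t-1,t,D}^{m,n,1}$ and $T_{t-1,t,D}^{m,n,2}$, which involve either a second-order cross term or an increment of a second derivative evaluated at a perturbed point. For $T_{t-1,t,D}^{m,n,1}$, the key point is that $U_{t,d,D}^m$ and $U_{t+1,d,D}^n$ are \emph{independent} (different time indices), so $\E[|T_{t-1,t,D}^{m,n,1}|] \le \tfrac{\sqrt{\ell_t\ell_{t+1}}}{D}\E\bigl[|\sum_d c_d U_{t,d,D}^m U_{t+1,d,D}^n|\bigr]$ where $c_d \coloneqq [\partial_{t-1}\partial_t\logWeightRandomWalkSingle_t](x_{t-1:t,d,D})$ is bounded by some $B$ by \ref{as:moments_bounded}; since the products $c_d U_{t,d,D}^m U_{t+1,d,D}^n$ are independent across $d$ with mean zero and variance $\le B^2 \E[U^2]^2$, the $L^1$ norm is $\le \tfrac{\sqrt{\ell_t\ell_{t+1}}}{D}\sqrt{D B^2 \E[U^2]^2} = \bo(D^{-1/2})$, uniformly in the state. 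For $R_{t,D}^{n,4}$ and $S_{t,D}^{m,4}$, write the bracketed difference using Lipschitz-continuity of $\partial_t^2\logWeightRandomWalkSingle_t$ with constant $L$: $\bigl|[\partial_t^2\logWeightRandomWalkSingle_t](x_{t-1:t,d,D} + \xi_{t,d,D}^n\sqrt{\ell_t}U_{t,d,D}^n) - [\partial_t^2\logWeightRandomWalkSingle_t](x_{t-1:t,d,D})\bigr| \le L\sqrt{\ell_t}|\xi_{t,d,D}^n|\,|U_{t,d,D}^n| \le L\sqrt{\ell_t}\,D^{-1/2}|U_{t,d,D}^n|$ since $\xi_{t,d,D}^n \in [0,D^{-1/2}]$; hence $\E[|R_{t,D}^{n,4}|] \le \tfrac{\ell_t L\sqrt{\ell_t}}{2D\sqrt{D}}\sum_d \E[|U_{t,d,D}^n|^3] = \bo(D^{-1/2}) \to 0$, again with a state-free bound. $T_{t-1,t,D}^{m,n,2}$ is the same but with the Lipschitz constant of $\partial_{t-1}\partial_t\logWeightRandomWalkSingle_t$ and a product $|U_{t-1,d,D}^m U_{t,d,D}^n|$ plus two perturbation factors, giving an extra $D^{-1/2}$ and hence $\bo(D^{-1})$.

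The main obstacle — though it is conceptual rather than computational — is to make sure the suprema over $\ConcentrationSet_{T,D}$ are genuinely controlled, i.e. that every bound produced above is either a universal constant times a negative power of $D$ (no state dependence at all, as for the terms using only boundedness/Lipschitz constants) or else a product in which the only state-dependent factor is one of $\calV_{t|T}(\state_{t-1:t,D}), \calW_{t|T}(\state_{t:t+1,D})$, the correction ratios $\sqrt{\calV_{t|T}/\calV_{t|T}(\state_{t-1:t,D})}$, or the quantities $\widetilde\calV_{t|T}, \widetilde\calW_{t|T}$, all of which are forced within $D^{-\eta}$ of their limits (or of zero) by the definition of $\ConcentrationSet_{T,D}$. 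Once one checks that only these ``good'' functionals of the state appear, taking $\sup_{\ConcentrationSet_{T,D}}$ before $\lim_{D\to\infty}$ is harmless, and the lemma follows. I would structure the write-up as: (i) recall $\E[|U|^k] < \infty$ and the independence-across-$d$ / mean-zero facts; (ii) the Cauchy--Schwarz / independence $L^1$-bound lemma for $\tfrac1{\sqrt D}\sum_d c_d U_d$-type sums; (iii) term-by-term estimates as above, each ending with an explicit $\bo(D^{-\alpha})$ or a product of a bounded state factor with a vanishing prefactor; (iv) conclude by taking $\sup_{\ConcentrationSet_{T,D}}$ and $D\to\infty$.
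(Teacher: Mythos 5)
Your proposal is correct and takes essentially the same route as the paper's proof: term-by-term $L^1$ estimates using Jensen/Cauchy--Schwarz, independence and mean-zero of the $U_{t,d,D}^{n}$ across $d$, the boundedness and Lipschitz conditions in \ref{as:moments_bounded}, and the $D^{-\eta}$ controls built into $\ConcentrationSet_{T,D}$, with each bound either state-free or depending on the state only through quantities the set constrains. Two minor points of comparison: for $R_{t,D}^{n,1}$ the paper merges your two factors into $\lvert\sqrt{\calV_{t|T}(\state_{t-1:t,D})}-\sqrt{\calV_{t|T}}\rvert\le\sqrt{\lvert\calV_{t|T}(\state_{t-1:t,D})-\calV_{t|T}\rvert}$, which sidesteps your implicit appeal to $\calV_{t|T}>0$; and your observation that the $i=2$ terms need an extra law-of-the-iterated-logarithm condition on the spatial average of $\partial_t^2\logWeightRandomWalkSingle_t$ appended to $\ConcentrationSet_{T,D}$ is exactly the condition the paper's one-line bound for those terms tacitly relies on.
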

\begin{proof}
 By definition of $\ConcentrationSet_{T,D}$, using that $x \mapsto \sqrt{x}$ is concave and increasing so that $\smash{\lvert \sqrt{x}-\sqrt{y}\rvert \leq \sqrt{\lvert x - y\rvert}}$, and since by Jensen's inequality, $\E[\lvert X \rvert] \leq \E[X^2]^{1/2}$, 
  \begin{align}
   \E[\lvert R_{t,D}^{n,1}\rvert] 
   &\leq \textstyle \sqrt{\tfrac{\lvert \calV_{t|T}(\state_{t-1:t,D}) - \calV_{t|T}\rvert }{\calV_{t|T}(\state_{t-1:t,D})}}\\
   & \qquad \times \textstyle \sqrt{\tfrac{\ell_t}{D}\sum_{d=1}^D \{[\partial_t \logWeightRandomWalkSingle_t](x_{t-1:t,d,D})\}^2 \E[(U_{t,d,D}^n)^2]}\\
   & \leq D^{-\eta/2} \ell_l^{1/2} \to 0,\\
   \E[\lvert S_{t,D}^{m,1}\rvert] 
   &\leq \textstyle \sqrt{\tfrac{\lvert \calW_{t|T}(\state_{t:t+1,D}) - \calV_{t|T}\rvert }{\calW_{t|T}(\state_{t:t+1,D})}}\\
   & \qquad \times \textstyle \sqrt{\tfrac{\ell_t}{D}\sum_{d=1}^D \{[\partial_t \logWeightRandomWalkSingle_{t+1}](x_{t:t+1,d,D})\}^2 \E[(U_{t,d,D}^m)^2]}\\
   & \leq D^{-\eta/2} \ell_l^{1/2} \to 0.
 \end{align}
 From the definition of $\ConcentrationSet_{T,D}$,
 \begin{align}
   \E[\lvert R_{t,D}^{n,2}\rvert] &\leq \tfrac{\ell_t}{2} D^{-\eta} \to 0,\\
   \E[\lvert S_{t,D}^{m,2}\rvert] &\leq \tfrac{\ell_t}{2} D^{-\eta} \to 0.
 \end{align}
 By Jensen's inequality, $\E[\lvert X \rvert] \leq \E[X^2]^{1/2}$ and hence
 \begin{align}
  \E[\lvert R_{t,D}^{n,3}\rvert] 
   & \textstyle \leq \sqrt{\tfrac{\ell_t^2}{4D^2} \sum_{d=1}^D \{[\partial_t^2 \logWeightRandomWalkSingle_t](x_{t-1:t,d,D})\}^2\var[(U_{t,d,D}^n)^2]}\\ 
   & \leq \tfrac{\ell_t}{2 D^{1/2}} \lVert \partial_t^2 \logWeightRandomWalkSingle_t \rVert_\infty \to 0,\\
  \E[\lvert S_{t,D}^{m,3}\rvert]
   & \textstyle \leq \sqrt{\tfrac{\ell_t^2}{4D^2} \sum_{d=1}^D \{[\partial_t^2 \logWeightRandomWalkSingle_{t+1}](x_{t:t+1,d,D})\}^2\var[(U_{t,d,D}^m)^2]}\\
   & \leq \tfrac{\ell_t}{2 D^{1/2}} \lVert \partial_t^2 \logWeightRandomWalkSingle_{t+1}\rVert_\infty \to 0,\\
  \!\!\!\!\!\E[\lvert T_{t,t+1,D}^{m,n,1}\rvert]
   & \textstyle \leq \sqrt{\tfrac{\ell_t \ell_{t+1}}{D^2} \sum_{d=1}^D \{[\partial_t \partial_{t+1} \logWeightRandomWalkSingle_{t+1}](x_{t:t+1,d,D})\}^2\E[U_{t,d,D}^m U_{t+1,d,D}^n]} = 0.\!\!\!\!
 \end{align}
 Since $\eta_{t,d,D}^m, \xi_{t,d,D}^n \in [0, D^{-1/2}]$ and since $\partial_t^2 \logWeightRandomWalkSingle_t$, $\partial_t^2 \logWeightRandomWalkSingle_{t+1}$ and $\partial_t \partial_{t+1} \logWeightRandomWalkSingle_{t+1}$ are Lipschitz-continuous 
  \begin{align}
  \E[\lvert R_{t,D}^{n,4}\rvert] 
   & \textstyle \leq  \tfrac{\ell_t^{3/2}}{2\sqrt{D}}  [\partial_t^2 \logWeightRandomWalkSingle_t]_\lip \E[\lvert U_{t,d,D}^n\rvert^3] \to 0,\\
  \E[\lvert S_{t,D}^{m,4}\rvert] 
   & \textstyle \leq  \tfrac{\ell_t^{3/2}}{2\sqrt{D}}  [\partial_t^2 \logWeightRandomWalkSingle_{t+1}]_\lip \E[\lvert U_{t,d,D}^m\rvert^3] \to 0,\\
  \!\!\!\!\!\!\!\E[\lvert T_{t,t+1,D}^{m,n,2}\rvert] 
   & \textstyle \leq  \tfrac{\sqrt{\ell_t \ell_{t+1}}}{\sqrt{D}}  [\partial_t \partial_{t+1} \logWeightRandomWalkSingle_{t+1}]_\lip \E[\lvert (\ell_t^{1/2} U_{t,d,D}^m +  \ell_{t+1}^{1/2} U_{t+1,d,D}^n) U_{t,d,D}^m U_{t+1,d,D}^n\rvert] \to 0,\!\!\!\!\!\!\!
 \end{align}
 where $[f]_\lip$ denotes the Lipschitz constant of $f$ w.r.t.\ the 1-norm. 
\end{proof}

\begin{lemma}\label{lem:clt}
 Assume~\ref{as:iid_model} and \ref{as:moments_bounded} and let $(V_t^n)_{t \in [T], n\in [N]}$ and $(W_t^n)_{t \in [T-1], n\in [N]}$ be the families of Gaussian random variables defined in Section~\ref{subsec:non-degnerate_limiting_genealogies}. Then, if $\state_{1:T,D} \in \ConcentrationSet_{T,D}$, as $D \to \infty$,
 \begin{align}
  \smash{Y_D \coloneqq (V_{1,D}^{1:N}, \dotsc, V_{T,D}^{1:N}, W_{1,D}^{1:N}, \dotsc, W_{T-1,D}^{1:N})^\T},
 \end{align}
 converges in distribution to
 \[
  \smash{Y \coloneqq (V_{1}^{1:N}, \dotsc, V_{T}^{1:N}, W_{1}^{1:N}, \dotsc, W_{T-1}^{1:N})^\T}. \mendmark
 \]
\end{lemma}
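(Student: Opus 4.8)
\textbf{Proof strategy for Lemma~\ref{lem:clt}.} The plan is to establish a multivariate central limit theorem for the vector $Y_D$ by verifying convergence of means and covariances together with a Lindeberg-type condition, uniformly over sequences $\state_{1:T,D} \in \ConcentrationSet_{T,D}$. First I would note that each entry of $Y_D$ is, by construction, an affine function of the independent standard-Gaussian blocks $U_{t,d,D}^{1:N} \sim \dN(\zeroMat_N, \varSigma)$, with a deterministic shift. Concretely, $V_{t,D}^n$ is the sum over $d \in [D]$ of the terms $c_{t,d,D} U_{t,d,D}^n$ with $c_{t,d,D} \coloneqq \sqrt{\calV_{t|T}/\calV_{t|T}(\state_{t-1:t,D})}\,\sqrt{\ell_t/D}\,[\partial_t \logWeightRandomWalkSingle_t](x_{t-1:t,d,D})$, plus the constant $\tfrac{\ell_t}{2}\pi_T(\partial_t^2 \logWeightRandomWalkSingle_t)$; analogously for $W_{t,D}^m$. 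Since $U_{t,d,D}^{1:N}$ and $U_{s,e,D}^{1:N}$ are independent whenever $(s,e) \neq (t,d)$, and the spatial summands within fixed $t$ are independent across $d$, $Y_D$ is a sum of $D$ independent mean-shifted random vectors plus the fixed shift vector, so the classical Lindeberg--Feller CLT applies to the centred part.

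Second I would compute the limiting mean and covariance. The shift is, by definition, exactly $\bar\mu_{t|T}$ in each block (the $\tfrac12 \ell_t \pi_T(\partial_t^2 \logWeightRandomWalkSingle_t)\unitMat_N$ and $\tfrac12 \ell_t \pi_T(\partial_t^2 \logWeightRandomWalkSingle_{t+1})\unitMat_N$ pieces), so nothing needs to be shown there. For the covariance, using $\E[U_{t,d,D}^m U_{t,d,D}^n] = [\varSigma]_{m,n}$, the covariance of the centred $V_{t,D}$-block is $\bigl(\sum_d c_{t,d,D}^2\bigr)\varSigma = \frac{\calV_{t|T}}{\calV_{t|T}(\state_{t-1:t,D})}\,\ell_t\,\calV_{t|T}(\state_{t-1:t,D})\,\varSigma = \ell_t \calV_{t|T}\varSigma = \ell_t\pi_T([\partial_t\logWeightRandomWalkSingle_t]^2)\varSigma$; the cross-covariance between the $V_{t,D}$ and $W_{t,D}$ blocks likewise telescopes to $\ell_t \calS_{t|T}\varSigma$ using $D^{-1}\sum_d [\partial_t\logWeightRandomWalkSingle_t](x_{t-1:t,d,D})[\partial_t\logWeightRandomWalkSingle_{t+1}](x_{t:t+1,d,D}) = \calS_{t|T}(\state_{t-1:t+1,D})$ and the square-root normalisations, and cross-covariances between distinct time indices vanish because the underlying $U$-blocks are independent. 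This is where the normalisation built into $V_{t,D}^n$ pays off: the covariance is \emph{exactly} $\widebar{\varSigma}_{t|T}$ for every $D$, so no limiting argument is needed for it and, crucially, no appeal to $\ConcentrationSet_{T,D}$ is required for the covariance (the set is needed elsewhere, for the remainder terms in Lemma~\ref{lem:remainder_terms}).

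Third I would verify the Lindeberg condition for the triangular array of centred summands $\{(c_{t,d,D}U_{t,d,D}^{1:N})_{t}, (c'_{t,d,D}U_{t,d,D}^{1:N})_{t}\}_{d \in [D]}$. Since each summand has the same second-moment structure up to the bounded factors $c_{t,d,D}$, it suffices to control $\sum_{d=1}^D \E\bigl[(c_{t,d,D}U_{t,d,D}^n)^2 \ind\{|c_{t,d,D}U_{t,d,D}^n| > \varepsilon\}\bigr]$ and analogously for the $W$-terms. Bounding the indicator crudely via $\ind\{|c_{t,d,D}U_{t,d,D}^n| > \varepsilon\} \leq \ind\{\sup_d|c_{t,d,D}| \cdot |U_{t,d,D}^n| > \varepsilon\}$ and using that, on $\ConcentrationSet_{T,D}$, $\sup_d |c_{t,d,D}| \leq \sqrt{\ell_t}\,\widetilde{\calV}_{t|T}(\state_{t-1:t,D}) \cdot (\text{bounded}) \leq C\sqrt{\ell_t}\,D^{-\eta}$ (this is precisely why $\widetilde{\calV}_{t|T}$ and $\widetilde{\calW}_{t|T}$ were included in the definition \eqref{eq:concentration_set_rwcsmc} of $\ConcentrationSet_{T,D}$), the indicator is eventually zero uniformly in $d$ for any fixed $\varepsilon$, because $U_{t,d,D}^n$ is $\dN(0,1)$ and has sub-Gaussian tails while $D^{-\eta} \to 0$; more carefully, one uses that $\sum_d c_{t,d,D}^2$ is bounded (it equals $\ell_t\calV_{t|T}$) together with $\max_d c_{t,d,D}^2 \to 0$ uniformly over $\ConcentrationSet_{T,D}$, which is the standard sufficient condition for Lindeberg when the array entries are i.i.d.\ up to scaling. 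The multivariate Lindeberg--Feller theorem (equivalently, the Cramér--Wold device plus the one-dimensional Lindeberg CLT applied to arbitrary linear combinations) then yields $Y_D \convergesInDistribution Y$, uniformly in the sense that the convergence holds along every sequence $\state_{1:T,D} \in \ConcentrationSet_{T,D}$.

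The main obstacle is making the uniformity over $\state_{1:T,D} \in \ConcentrationSet_{T,D}$ precise: the CLT must hold along \emph{every} such sequence, not just for a fixed target. This is handled by observing that all quantitative inputs to the Lindeberg--Feller bound --- the covariance (exact), the total variance $\sum_d c_{t,d,D}^2 = \ell_t\calV_{t|T}$ (a constant), and the maximal summand scale $\max_d c_{t,d,D}^2 \leq C\ell_t D^{-2\eta}$ (a deterministic bound valid on $\ConcentrationSet_{T,D}$) --- are controlled by the defining inequalities of $\ConcentrationSet_{T,D}$ with constants independent of the particular sequence, so the rate of convergence in the CLT is uniform over $\ConcentrationSet_{T,D}$. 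I would phrase the conclusion as: for any bounded continuous $\varphi$, $\sup_{\state_{1:T,D}\in\ConcentrationSet_{T,D}}\bigl|\E[\varphi(Y_D)] - \E[\varphi(Y)]\bigr| \to 0$, which is the form in which the lemma is used in the proof of Proposition~\ref{prop:limiting_rwcsmc_algorithm} (combined with Lemmas~\ref{lem:selection_functions_lipschitz} and~\ref{lem:remainder_terms} to pass from $Y_D$ plus negligible remainders to the limiting selection-function expectations).
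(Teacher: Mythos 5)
Your overall strategy is the same as the paper's: reduce to the centred vector via the Cram\'er--Wold device and apply a Lindeberg--Feller triangular-array CLT to the sum over $d \in [D]$ of independent spatial summands, with the defining inequalities of $\ConcentrationSet_{T,D}$ supplying uniform control (the paper invokes the Dvoretzky triangular-array CLT with conditional variances, which in this independent-summand setting is the same tool). Your treatment of the means and of the Lindeberg condition — bounded total variance plus $\max_d c_{t,d,D}^2 \to 0$ via the $\widetilde{\calV}_{t|T}, \widetilde{\calW}_{t|T}$ constraints — matches the paper's argument.

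There is, however, one concrete error: your claim that the covariance of $Y_D$ is \emph{exactly} $\widebar{\varSigma}_{t|T}$ for every $D$, so that ``no appeal to $\ConcentrationSet_{T,D}$ is required for the covariance.'' This is true for the two diagonal blocks, where the normalisations $\sqrt{\calV_{t|T}/\calV_{t|T}(\state_{t-1:t,D})}$ and $\sqrt{\calW_{t|T}/\calW_{t|T}(\state_{t:t+1,D})}$ are built precisely so that the empirical sums cancel. But the $V_{t,D}$--$W_{t,D}$ cross-block does \emph{not} telescope: it equals
\begin{align}
 \sqrt{\frac{\calV_{t|T}\,\calW_{t|T}}{\calV_{t|T}(\state_{t-1:t,D})\,\calW_{t|T}(\state_{t:t+1,D})}}\;\ell_t\,\calS_{t|T}(\state_{t-1:t+1,D})\,\varSigma,
\end{align}
because the two factors carry \emph{different} normalisations while the empirical cross-moment $\calS_{t|T}(\state_{t-1:t+1,D})$ appears in place of $\calS_{t|T}$. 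This quantity only converges to $\ell_t\,\calS_{t|T}\,\varSigma$, and that convergence is exactly where the constraints $\lvert \calV_{t|T}(\state_{t-1:t,D}) - \calV_{t|T}\rvert \vee \lvert \calW_{t|T}(\state_{t:t+1,D}) - \calW_{t|T}\rvert \vee \lvert \calS_{t|T}(\state_{t-1:t+1,D}) - \calS_{t|T}\rvert \leq D^{-\eta}$ in the definition of $\ConcentrationSet_{T,D}$ are used — the paper isolates this discrepancy as the term $H_{t|T,D}$ in its verification of the conditional-variance condition. So the concentration set is needed for the covariance as well, not only for the Lindeberg condition and the remainder terms of Lemma~\ref{lem:remainder_terms}. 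The fix is minor (you already list the relevant inequalities when recalling the definition of $\ConcentrationSet_{T,D}$), but as written the step asserting exactness would fail.
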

\begin{proof}
 Since $\E[V_{t,D}^n] = \E[V_t^n]$ and $\E[W_{t,D}^m] = \E[W_t^m]$, we only need to show convergence in distribution of the centred random vector
 \begin{align}
  \widetilde{Y}_D \coloneqq Y_D - \E[Y_D] = (\widetilde{V}_{1,D}^{1:N}, \dotsc, \widetilde{V}_{T,D}^{1:N}, \widetilde{W}_{1,D}^{1:N}, \dotsc, \widetilde{W}_{T-1,D}^{1:N})^\T
 \end{align}
 to
 \begin{align}
  \widetilde{Y} \coloneqq Y - \E[Y] = (\widetilde{V}_{1}^{1:N}, \dotsc, \widetilde{V}_{T}^{1:N}, \widetilde{W}_{1}^{1:N}, \dotsc, \widetilde{W}_{T-1}^{1:N})^\T.
 \end{align}
 By the Cram\'er--Wold theorem, it then suffices to prove that $\smash{\lambda^\T \widetilde{Y}_D \convergesInDistribution \lambda^\T \widetilde{Y}}$ for any $\smash{\lambda = (\lambda_1^{1:N}, \dotsc, \lambda_T^{1:N}, \bar{\lambda}_1^{1:N}, \dotsc, \bar{\lambda}_{T-1}^{1:N}) \in \reals^{N(2T-1)}}$. Equivalently, we must show that $\smash{\lambda^\T \widetilde{Y}_D \convergesInDistribution \dN(0, \tau^2)}$, where 
 \begin{align}
  \tau^2 
  & \coloneqq \var[\lambda^\T \widetilde{Y}]\\
  & = \sum_{t=1}^T \sum_{n=1}^N \sum_{m=1}^N \lambda_t^n \lambda_t^m \cov[V_t^n, V_t^m] + \bar{\lambda}_t^n \bar{\lambda}_t^m \cov[W_t^n, W_t^m] + 2 \lambda_t^n \bar{\lambda}_t^m \cov[V_t^n, W_t^m],
 \end{align}
  where we recall the convention that $W_T^n \equiv 0$.
  
 Let $\calF_{d,D} \coloneqq \sigma(\{U_{t,e,D}^n \mid t \in [T], e \in [d], n \in [N]\})$ as well as
 \begin{align}
  \!\!\calU_{d,D} & \coloneqq  \sum_{t=1}^T \biggl[ \sqrt{\frac{\calV_{t|T}}{\calV_{t|T}(\state_{t-1:t,D})}} \sqrt{\tfrac{\ell_t}{D}} [\partial_t \logWeightRandomWalkSingle_t](x_{t-1:t,d,D}) \sum_{n=1}^N \lambda_t^n U_{t,d,D}^n\\
  & \qquad \! + \ind\{t < T\} \sqrt{\frac{\calW_{t|T}}{\calW_{t|T}(\state_{t:{t+1},D})}}\sqrt{\tfrac{\ell_t}{D}} [\partial_t \logWeightRandomWalkSingle_{t+1}](x_{t:t+1,d,D}) \sum_{m=1}^N \bar{\lambda}_t^m U_{t,d,D}^m \biggr],\!\!
 \end{align}
 then $\calU_{d,D}$ is $\calF_{d,D}$-measurable. Therefore, for $\state_{1:T,D} \in \ConcentrationSet_{T,D}$, by the central limit theorem for triangular arrays \citep{dvoretzky1972asymptotic},  $\smash{\sum_{d=1}^D \calU_{d,D} = \lambda^\T \widetilde{Y}_D}$ converges in distribution to a zero-mean Gaussian random variable with variance $\tau^2$ if, as $D \to \infty$,
 \begin{enumerate}
  \item $\smash{\sum_{d=1}^D \E[\calU_{d,D}^2 |\calF_{d-1,D}] - \E[\calU_{d,D}|\calF_{d-1,D}]^2 \convergesInProbability \tau^2}$,
  \item $\smash{\sum_{d=1}^D \sum_{d=1}^D \E[\calU_{d,D}^2 \ind\{\lvert \calU_{d,D} \rvert \geq \varepsilon \} |\calF_{d-1,D}] \to 0}$, for any $\varepsilon > 0$.
 \end{enumerate}
 To verify the first condition, we note that
 \begin{align}
  \MoveEqLeft \sum_{d=1}^D \E[\calU_{d,D}^2 |\calF_{d-1,D}] - \E[\calU_{d,D}|\calF_{d-1,D}]^2\\
  & = \tau^2 + 2 \sum_{t=1}^{T-1} \ell_t H_{t|T,D} \sum_{n=1}^N \sum_{m=1}^N \lambda_t^n \bar{\lambda}_t^m \varSigma_{n,m},
 \end{align}
 where
 \begin{align}
  H_{t|T,D} & \coloneqq
  \sqrt{\frac{\calV_{t|T}\calW_{t|T}}{\calV_{t|T}(\state_{t-1:t,D})\calW_{t|T}(\state_{t:t+1,D})}} \calS_{t|T}(\state_{t-1:t+1,D}) - \calS_{t|T},
 \end{align}
 so that $\smash{\lim_{D \to \infty} \sup_{\ConcentrationSet_{T,D}} \lvert H_{t|T,D} \rvert = 0}$ by definition of $\ConcentrationSet_{T,D}$.

 It remains to check the second condition. Let $\varepsilon > 0$ and set
 \begin{align}
  a^{(1)} & \coloneqq \textstyle \sup_{t \in [T]} \ell_t^2 \calV_{t|T} < \infty,\\
  a^{(2)} & \coloneqq \textstyle \sup_{t \in [T-1]} \ell_t^2 \calW_{t|T} < \infty,\\
  a^{(3)} & \coloneqq \textstyle 2 \sup_{t \in [T-1]} \ell_t \ell_{t+1} \sqrt{\calV_{t|T} \calW_{t|T}} < \infty,
 \end{align}
 and $a \coloneqq \sup_{i \in [3]} a^{(i)}$,
 \begin{align}
   b_{t,d,D}^{(i)} & \coloneqq \textstyle [\widetilde{\calV}_{t|T}(\state_{t-1:t,D})]^2,\\
   b_{t,d,D}^{(2)} & \coloneqq \textstyle [\widetilde{\calW}_{t|T}(\state_{t:t+1,D})]^2,\\
   b_{t,d,D}^{(3)} & \coloneqq \textstyle \widetilde{\calV}_{t|T}(\state_{t-1:t,D})\widetilde{\calW}_{t|T}(\state_{t:t+1,D}),
 \end{align}
 as well as
 \begin{align}
  \calM_{t,d}^{(1)} & \coloneqq \textstyle \sum_{n=1}^N \sum_{m=1}^N \lambda_t^n \lambda_t^m U_{t,d,D}^n U_{t,d,D}^m,\\
  \calM_{t,d}^{(2)} & \coloneqq \textstyle \ind\{t < T\} \sum_{n=1}^N \sum_{m=1}^N \bar{\lambda}_t^n \bar{\lambda}_t^m U_{t,d,D}^n U_{t,d,D}^m,\\
  \calM_{t,d}^{(3)} & \coloneqq \textstyle \ind\{t < T\} \sum_{n=1}^N \sum_{m=1}^N \lambda_t^n \bar{\lambda}_t^m U_{t,d,D}^n U_{t,d,D}^m.
 \end{align}
 Then, since $b_{t,d,D}^{(i)} \leq D^{-2\eta}$ for all $i \in [3]$ by definition of $\ConcentrationSet_{T,D}$, 
 \begin{align}
  \{\lvert \calU_{d,D} \rvert \geq \varepsilon \}
  & \subseteq \textstyle  \{ a \sum_{i=1}^3  \sum_{t=1}^T  b_{t,d,D}^{(i)}  \lvert \calM_{t,d}^{(i)}\rvert \geq \varepsilon^2 \}\\
  & \subseteq \textstyle  \{ a \sum_{i=1}^3  \sum_{t=1}^T \lvert \calM_{t,d}^{(i)} \rvert \geq D^{2\eta}\varepsilon^2 \},
 \end{align}
 and thus
 \begin{align}
  \MoveEqLeft \sum_{d=1}^D \E[\calU_{d,D}^2 \ind\{\lvert \calU_{d,D} \rvert \geq \varepsilon \} |\calF_{d-1,D}]\\
  & \leq \sum_{d=1}^D \E\biggl[   a \sum_{i=1}^3 \sum_{t=1}^T \lvert \calM_{t,d}^{(i)}\rvert\ind\biggl\{a \sum_{j=1}^3 \sum_{s=1}^T \lvert \calM_{s,d}^{(j)} \rvert \geq D^{2\eta} \varepsilon^2 \biggr\} \biggr] \to 0.
 \end{align}
 This completes the proof. 
\end{proof}

\begin{namedproof}[of Proposition~\ref{prop:limiting_rwcsmc_algorithm}]
 We present the proof for the general case \emph{with} backward sampling. This immediately implies the proof for the case \emph{without} backward sampling. Likewise, we omit the proof in the case of the forced-move extension.
 
 We fix some $N, T \in \naturals$ and define $\ConcentrationSet_{T,D}$ as in \eqref{eq:concentration_set_rwcsmc}. By Lemma~\ref{lem:concentration_set_rwcsmc}, we then have $\smash{\lim_{D \to \infty} \Target_{T,D}(\ConcentrationSet_{T,D}) = 1}$. We also fix some $a_{t}^n \in [N]_0$ for all $(t,n) \in [T-1] \times [N]$ and some $k_t \in [N]_0$ for all $t \in [T]$. 

 The proof of the statement is then complete upon verifying that, as $D \to \infty$,
 \begin{align}
  \MoveEqLeft \textstyle \sup_{\ConcentrationSet_{\nTimeSteps, \nDimensions}} \lvert \E[\varUpsilon_D(\Particle_D)] - \E[\varUpsilon(Y)]\rvert\\
  & \leq  \textstyle\sup_{\ConcentrationSet_{\nTimeSteps, \nDimensions}} \lvert \E[\varUpsilon_D(\Particle_D)] - \E[\varUpsilon(Y_D)]\rvert +  \textstyle\sup_{\ConcentrationSet_{\nTimeSteps, \nDimensions}} \lvert \E[\varUpsilon(Y_D)] - \E[\varUpsilon(Y)]\rvert \label{eq:main_decomposition}\\
  & \to 0,
 \end{align}
 where $\Particle_D \coloneqq \Particle_{1,D}^{1:N}, \dotsc, \Particle_{T,D}^{1:N}$ and
 \begin{align}
  \!\!\!\!\!\!\!\!\!\! \MoveEqLeft \varUpsilon_D(\Particle_D)\\
  & \coloneqq \textstyle \prod_{t=1}^{T-1} \prod_{n=1}^N \selectionFunctionBoltzmann{a_t^\particleIndex}(\{\logWeightRandomWalk_t(\Particle_{t-1,D}^{a_{t-1}^\particleIndexAlt}, \Particle_{t,D}^\particleIndexAlt) - \logWeightRandomWalk_t(\Particle_{t-1,D}^0, \Particle_{t,D}^0)\}_{\particleIndexAlt = 1}^\nParticles)\\
  & \quad \times \textstyle\selectionFunctionBoltzmann{k_T}(\{\logWeightRandomWalk_T(\Particle_{T-1,D}^{a_{T-1}^\particleIndexAlt}, \Particle_{T-1,D}^\particleIndexAlt) - \logWeightRandomWalk_T(\Particle_{T-1,D}^0, \Particle_{T,D}^0)\}_{\particleIndexAlt = 1}^\nParticles)\\
  & \quad \times \textstyle\prod_{t=1}^{T-1} \selectionFunctionBoltzmann{\outputParticleIndex_t}(\{
  \logBackwardWeightRandomWalk_t(\Particle_{t-1,D}^{\mathrlap{a_{t-1}^\particleIndexAlt}}, \Particle_{t,D}^\particleIndexAlt, \Particle_{t+1,D}^{\outputParticleIndex_{t+1}}) - \logBackwardWeightRandomWalk_t(\Particle_{t-1,D}^0, \Particle_{t,D}^0, \Particle_{t+1,D}^{\outputParticleIndex_{t+1}})
  \}_{\particleIndexAlt = 1}^\nParticles), \!\!\!\!\!\!\!\!\!\!
 \end{align}
 and where $Y$ and $Y_D$ are as in Lemma~\ref{lem:clt} as well as
  \begin{align}
  \MoveEqLeft\varUpsilon((v_1^{1:N}, \dotsc, v_T^{1:N}, w_1^{1:N}, \dotsc, w_{T-1}^{1:N})^\T)\\
  & \coloneqq \textstyle\prod_{t=1}^{T-1} \prod_{n=1}^N \selectionFunctionBoltzmann{a_t^\particleIndex}(\{v_t^m + w_{t-1}^{a_{t-1}^m}\}_{\particleIndexAlt = 1}^\nParticles)\\
  & \quad \textstyle\times \selectionFunctionBoltzmann{k_T}(\{v_T^m + w_{T-1}^{a_{T-1}^m}\}_{\particleIndexAlt = 1}^\nParticles) \prod_{t=1}^{T-1} \selectionFunctionBoltzmann{\outputParticleIndex_t}(\{v_t^m + w_t^m + w_{t-1}^{a_{t-1}^m}\}_{\particleIndexAlt = 1}^\nParticles).
 \end{align} 

 We now consider the two terms on the r.h.s.\ of \eqref{eq:main_decomposition}. For the first term, a standard telescoping-sum decomposition, and using the fact that the selection functions are Lipschitz (see Lemma~\ref{lem:selection_functions_lipschitz}) and bounded above by $1$, gives
 \begin{align}
 \!\!\!\!\!\!\!\!\!\!  \MoveEqLeft \E[\varUpsilon_D(\Particle_D)] - \E[\varUpsilon(Y_D)]\rvert\\
  & \leq \textstyle \bigl[\sup_{m \in [N_0]}[\selectionFunctionBoltzmann{m}]_\lip\bigr]\\
  & \quad \times \textstyle \Bigl[ N \sum_{t=1}^{T-1} \sum_{n=1}^N [\{ \sum_{i=1}^4 \lvert R_{t,D}^{n,i} \rvert + \lvert S_{t-1,D}^{a_{t-1}^n,i}\rvert\} + \sum_{j=1}^2 \lvert T_{t-1,t,D}^{a_{t-1}^n, n,j} \rvert]\\
  & \qquad\textstyle + \sum_{n=1}^N [\{ \sum_{i=1}^4 \lvert R_{T,D}^{n,i} \rvert + \lvert S_{T-1,D}^{a_{T-1}^n,i}\rvert\} + \sum_{j=1}^2 \lvert T_{T-1,T,D}^{a_{t-1}^n, n,j} \rvert]\\
  & \qquad \textstyle + \sum_{t=1}^{T-1} \sum_{n=1}^N [\{ \sum_{i=1}^4 \lvert R_{t,D}^{n,i} \rvert + \lvert S_{t,D}^{n,i} \rvert + \lvert S_{t-1,D}^{a_{t-1}^n,i}\rvert\} + \sum_{j=1}^2 \lvert T_{t-1,t,D}^{a_{t-1}^n, n,j} \rvert]\Bigr],\!\!\!\!\!\!\!\!\!\!
 \end{align}
 so that $\smash{\lim_{D \to \infty} \sup_{\ConcentrationSet_{\nTimeSteps, \nDimensions}} \lvert \E[\varUpsilon_D(\Particle_D)] - \E[\varUpsilon(Y_D)]\rvert \to 0}$, by Lemma~\ref{lem:remainder_terms}.
 
 For the second term on the r.h.s.\ of \eqref{eq:main_decomposition}, Lemma~\ref{lem:clt} and the continuous mapping theorem ensure that $\varUpsilon(Y_D) \convergesInDistribution \varUpsilon(Y)$. Since $0 \leq \varUpsilon \leq 1$, this implies $\smash{\lim_{D \to \infty} \sup_{\ConcentrationSet_{\nTimeSteps, \nDimensions}} \lvert \E[\varUpsilon(Y_D)] - \E[\varUpsilon(Y)]\rvert \to 0}$. \qedwhite
\end{namedproof}

\subsection{Proof of Proposition~\ref{prop:stability_of_acceptance_rates}}
\label{app:subsec:prop:stability_of_acceptance_rates_rwcsmc}

In this section, we prove Proposition~\ref{prop:stability_of_acceptance_rates}. The proof relies on a few lemmata which we state first.

\begin{lemma}\label{lem:convex_ordering_of_lognormals}
 Let $\sigma_2 \geq \sigma_1 > 0$ and $(X_1, X_2) \sim \dN(0_2,\iMat_2)$. Then
 \[
  \smash{\eul^{\sigma_1 X_1 - \sigma_1^2/2} \leq_{\mathrm{cx}} \eul^{\sigma_2 X_2 - \sigma_2^2/2}.} \mendmark
 \]
\end{lemma}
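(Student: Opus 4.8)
The plan is to exploit the fact that the convex order depends only on the marginal laws of the two variables, and to realise the larger-variance log-normal as a \emph{dilation} (mean-preserving spread) of the smaller one. First I would fix notation: set $Y \coloneqq \eul^{\sigma_1 X_1 - \sigma_1^2/2}$ and $Z \coloneqq \eul^{\sigma_2 X_2 - \sigma_2^2/2}$, note that $\E[Y] = \E[Z] = 1$, and put $\tau \coloneqq \sqrt{\sigma_2^2 - \sigma_1^2} \geq 0$. Since $\sigma_2 X_2 \stackrel{d}{=} \sigma_1 X_1 + \tau W$ for $W \sim \dN(0,1)$ independent of $X_1$, one obtains the distributional identity
\begin{align}
 Z \stackrel{d}{=} Y \cdot \eul^{\tau W - \tau^2/2}.
\end{align}

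The key step is then to read off from this representation that $Z$ is a mean-preserving spread of $Y$: conditioning on $X_1$ (equivalently on $Y$) and using that $W$ is independent of $X_1$ together with $\E[\eul^{\tau W - \tau^2/2}] = 1$ gives $\E[Z \mid Y] = Y$. Given this martingale coupling, the convex ordering follows from the conditional Jensen inequality: for any convex $\varphi$ for which the relevant expectations exist,
\begin{align}
 \E[\varphi(Z)] = \E\bigl[\E[\varphi(Z) \mid Y]\bigr] \geq \E\bigl[\varphi(\E[Z \mid Y])\bigr] = \E[\varphi(Y)],
\end{align}
which is precisely $Y \leq_{\mathrm{cx}} Z$.

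I do not expect any genuine obstacle here; the points that merely need care are (i) recording that both variables share the mean $1$, which is why a mean-preserving spread is the appropriate notion; (ii) observing that $\leq_{\mathrm{cx}}$ is insensitive to the dependence between $X_1$ and $X_2$, so the hypothesis $(X_1,X_2)\sim\dN(0_2,\iMat_2)$ is used only to make both marginals standard normal; and (iii) if one wishes to avoid invoking conditional Jensen, one may instead check the single-crossing (Karlin--Novikov cut) criterion for the two log-normal densities, which also delivers the claim once the means are seen to coincide. I would present the dilation/martingale-coupling argument as the proof and relegate the cut criterion to a remark.
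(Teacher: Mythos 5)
Your proof is correct, but it takes a genuinely different route from the paper's. You realise the larger-variance log-normal as a multiplicative dilation of the smaller one, $Z \stackrel{d}{=} Y\,\eul^{\tau W - \tau^2/2}$ with $W$ independent and $\tau^2 = \sigma_2^2 - \sigma_1^2$, observe $\E[Z \mid Y] = Y$, and conclude by conditional Jensen; this is the easy direction of the martingale-coupling characterisation of $\leq_{\mathrm{cx}}$. The paper instead works with the stop-loss transform: it computes $\E[(\eul^{aX+b}-c)_+]$ in closed form (a Black--Scholes-type formula involving $\standardNormalCdf$), checks that this call function is larger for the larger volatility at every strike $d$, and combines this with the equality of means $\E[Y_1]=\E[Y_2]=1$ to deduce the convex order. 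Your argument is shorter, avoids any explicit Gaussian integration, and does not need the (standard but nontrivial) equivalence between convex order and dominance of call functions for equal-mean variables; it also makes transparent exactly where $\sigma_2 \geq \sigma_1$ is used, namely in the existence of the independent mean-one factor. What the paper's computation buys is an explicit expression for the stop-loss transform, though that expression is not reused elsewhere. Your observations that the convex order depends only on the marginals (so the joint law of $(X_1,X_2)$ is irrelevant) and that the case $\sigma_1=\sigma_2$ is trivial are both correct, and relegating the single-crossing cut criterion to a remark is sensible.
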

\begin{proof}
 Let $\standardNormalPdf\colon \reals \to (0,\infty)$ denote the probability density function of a standard normal distribution, let $\mathrm{\standardNormalCdf}\colon \reals \to  (0,1)$ denote the associated cumulative distribution function and $X \sim \dN(0,1)$. Then for any $a>0$, $b \in \reals$ and $c \in \reals$, writing $l(a,c) \coloneqq \log(c)/a + a/2$,
 \begin{align}
  \E[(\eul^{a X + b}-c)_+]
  & = \int_{l(a,c)}^\infty (\eul^{ax + b}-c) \standardNormalPdf(x) \intDiff x\\
  & = \eul^{a^2/2 + b} \int_{l(a,c)}^\infty \standardNormalPdf(x-a) \intDiff x - c \standardNormalCdf(-l(a,c))\\
  & = \eul^{a^2/2 + b} \standardNormalCdf(-l(a,c)+a) \intDiff x - c \standardNormalCdf(-l(a,c)). \label{eq:lem:convex_ordering_of_lognormals}
 \end{align}
 Let $Y_i \coloneqq  \eul^{\sigma_i X_i - \sigma_i^2/2}$ for $i \in \{1,2\}$. Then by \eqref{eq:lem:convex_ordering_of_lognormals}, for any $d \in \reals$,
 \begin{align}
  \E[(Y_2 - d)_+] - \E[(Y_1 - d)_+]
  & = \standardNormalCdf\Bigl(\frac{\sigma_2}{2} - \frac{\log(d)}{\sigma_2}\Bigr) - \standardNormalCdf\Bigl(\frac{\sigma_1}{2} - \frac{\log(d)}{\sigma_1}\Bigr)  \geq 0.
 \end{align}
 Finally, $\E[Y_1] = \E[Y_2]$ by the properties of the log-normal distribution. This completes the proof. 
\end{proof}

\begin{lemma}\label{lem:lower_bound_on_expectation_of_selection_function}
 Let $X \coloneqq (X_1, \dotsc, X_N) \sim \dN(\mu, \sigma^2 \varSigma)$ and $Y \coloneqq (Y_1, \dotsc, Y_N) \sim \dN(\nu, \tau^2 \varSigma)$, where $[\varSigma]_{i,i} = 1$ and $[\varSigma]_{i,j} = 1/2$ for $i \neq j$ and where $\mu \coloneqq -a\bm{1}_N$, $\nu \coloneqq -b\bm{1}_N$ for $a \in \reals$ and $b \geq \tau^2/2$. Assume also that $X$ and $Y$ are independent. Then for any binary vector $\delta \coloneqq \delta_{1:N} \in \{0,1\}^N$,
 \begin{align}
  \E\biggl[\frac{\sum_{i=1}^N \eul^{X_i + \delta_i Y_i}}{1 + \sum_{i=1}^N \eul^{X_i + \delta_i Y_i}}\biggr]
  & \geq \biggl(1 + \frac{\eul^{\sigma^2/2 + a + \tau^2/2 + b}}{N}\biggr)^{\mathrlap{-1}}.
 \end{align}
 In particular, for $\delta = (0,\dotsc,0)$ we have the tighter bound
 \[
  \E\biggl[\frac{\sum_{i=1}^N \eul^{X_i}}{1 + \sum_{i=1}^N \eul^{X_i}}\biggr]
   \geq \biggl(1 + \frac{\eul^{\sigma^2/2 + a}}{N}\biggr)^{\mathrlap{-1}}. \mendmark
 \]
\end{lemma}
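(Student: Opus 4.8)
The plan is to bound the expectation from below by applying Jensen's inequality to the \emph{convex, decreasing} function $g(u) \coloneqq 1/(1+u)$ rather than directly to the concave map $s \mapsto s/(1+s)$, and then to control the reciprocal of the resulting sum of exponentials by the arithmetic--harmonic-mean inequality, after which the remaining moments factorize.

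\textbf{Step 1: reduction via Jensen.} Set $S \coloneqq \sum_{i=1}^N \eul^{X_i + \delta_i Y_i}$, which is almost surely strictly positive since $N \geq 1$ and all summands are positive. Writing $\frac{S}{1+S} = g(1/S)$ with $g(u) = 1/(1+u)$ (convex and decreasing on $(0,\infty)$, since $g''(u) = 2(1+u)^{-3} > 0$ and $g'(u) = -(1+u)^{-2} < 0$), Jensen's inequality applied to the nonnegative random variable $1/S$ gives $\E[S/(1+S)] = \E[g(1/S)] \geq g(\E[1/S])$. Since $g$ is decreasing, it therefore suffices to produce an upper bound on $\E[1/S]$; once we have shown it is finite, the chain of inequalities is justified.

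\textbf{Step 2: arithmetic--harmonic-mean bound and moment computation.} For positive reals $x_1,\dots,x_N$ the inequality $\mathrm{HM} \leq \mathrm{AM}$ reads $N\big/\sum_i(1/x_i) \leq \tfrac1N \sum_i x_i$; applying it with $x_i = \eul^{-(X_i + \delta_i Y_i)}$ yields $\tfrac{1}{S} \leq \tfrac{1}{N^2}\sum_{i=1}^N \eul^{-(X_i + \delta_i Y_i)}$. Taking expectations and using that $X$ and $Y$ are independent together with the Gaussian moment-generating function, $\E[\eul^{-(X_i + \delta_i Y_i)}] = \E[\eul^{-X_i}]\,\E[\eul^{-\delta_i Y_i}] = \eul^{a + \sigma^2/2}\,\eul^{\delta_i(b + \tau^2/2)} \leq \eul^{\sigma^2/2 + a + \tau^2/2 + b}$, where the last inequality uses $\delta_i \in \{0,1\}$ and $b + \tau^2/2 > 0$ (a consequence of $b \geq \tau^2/2 > 0$). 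Hence $\E[1/S] \leq N^{-1}\eul^{\sigma^2/2 + a + \tau^2/2 + b} < \infty$, and combining with Step 1 gives $\E[S/(1+S)] \geq g\big(N^{-1}\eul^{\sigma^2/2 + a + \tau^2/2 + b}\big) = \big(1 + N^{-1}\eul^{\sigma^2/2 + a + \tau^2/2 + b}\big)^{-1}$. When $\delta = (0,\dots,0)$ the factor $\eul^{\delta_i(b+\tau^2/2)}$ equals $1$, so $\E[\eul^{-X_i}] = \eul^{a + \sigma^2/2}$ exactly and the same computation delivers the tighter bound $\big(1 + N^{-1}\eul^{\sigma^2/2 + a}\big)^{-1}$.

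\textbf{Main obstacle.} The argument is essentially mechanical; the only real subtlety is getting the direction of Jensen right --- one must apply it to the convex decreasing function $u \mapsto 1/(1+u)$, since applying it to the concave map $s \mapsto s/(1+s)$ would produce an inequality pointing the wrong way --- and recognising that the AM--HM step is precisely what turns the sum of exponentials in the denominator into a sum whose expectation factorizes across the independent coordinates. Note that the covariance matrix $\varSigma$ plays no role: only the marginal laws of $X_i$ and $Y_i$ and the independence of $X$ and $Y$ are used.
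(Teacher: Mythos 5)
Your proof is correct, and it takes a genuinely different and more elementary route than the paper's. The paper first replaces the sum $\sum_i \eul^{X_i+\delta_i Y_i}$ by the comonotonic sum $N\eul^{X_1+Y_1}$ via a convex-ordering argument (Lemma~\ref{lem:convex_ordering_of_lognormals} together with a comonotonicity theorem of Dhaene et al., after shifting the exponents by $(1-\delta_i)(\tau^2/2-b)\leq 0$), exploiting concavity of $t\mapsto t/(1+t)$, and only then applies Jensen to $t\mapsto 1/(1+t)$. You instead write $S/(1+S)=g(1/S)$ with $g(u)=1/(1+u)$ convex and decreasing, apply Jensen once to $1/S$, and control $\E[1/S]$ by the AM--HM inequality $S\cdot\sum_i \eul^{-(X_i+\delta_i Y_i)}\geq N^2$, after which independence of $X$ and $Y$ and the Gaussian moment-generating function give exactly the same final bound. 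This bypasses the convex-ordering machinery and the external comonotonicity reference entirely, needs only the marginal laws of the coordinates (as you note, $\varSigma$ is irrelevant for both arguments in the end), and both cases ($\delta$ arbitrary and $\delta=\zeroMat_N$) come out of a single computation. The one place to be careful is the step $\E[\eul^{-\delta_i Y_i}]=\eul^{\delta_i(b+\tau^2/2)}\leq \eul^{b+\tau^2/2}$, which uses $b+\tau^2/2\geq 0$; this follows from $b\geq \tau^2/2\geq 0$, so your argument is complete.
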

\begin{proof}
  We begin by proving the bound in the special case $\delta = (0,\dotsc,0)$:
  \begin{align}
    \E\biggl[\frac{\sum_{i=1}^N \eul^{X_i}}{1 + \sum_{i=1}^N \eul^{X_i}}\biggr]
    & \geq \E\biggl[\frac{1}{1 + \eul^{-X_1}/N}\biggr]\\
    & \geq \frac{1}{1 + \E[\eul^{-X_1}/N]}
    = \biggl(1 + \frac{\eul^{\sigma^2/2 + a}}{N}\biggr)^{\mathrlap{-1}},
  \end{align}
  where the first line follows since $t \mapsto t/(1 + t)$ is concave and $\sum_{i=1}^N \eul^{X_i} \leq_{\mathrm{cx}} N\eul^{X_1}$; the second step is due to Jensen's inequality and the fact that $t \mapsto 1/(1 + t)$ is convex; the last step follows from the properties of the log-normal distribution.
  
  We now extend the approach to arbitrary $\delta \in \{0,1\}^N$. Since $\tau^2/2 - b \leq 0$,
  \begin{align}
   \frac{\sum_{i=1}^N \eul^{X_i + \delta_iY_i}}{1 + \sum_{i=1}^N \eul^{X_i + \delta_i Y_i}}
   & \geq \frac{\sum_{i=1}^N \eul^{X_i + \delta_i Y_i + (1-\delta_i)(\tau^2/2 - b)}}{1 + \sum_{i=1}^N \eul^{X_i + \delta_i Y_i + (1-\delta_i)(\tau^2/2 - b)}}.
  \end{align}
  Furthermore, by Lemma~\ref{lem:convex_ordering_of_lognormals} and \citet[][Theorem~5]{dhaene2000comonotonicity},
  \begin{align}
    \sum_{i=1}^N \eul^{X_i + \delta_i Y_i + (1-\delta_i)(\tau^2/2 - b)} 
    \leq_{\mathrm{cx}} N \eul^{X_1 + Y_1},
  \end{align}
  and since $t \mapsto t/(1 + t)$ is concave,
   \begin{align}
   \E\biggl[\frac{\sum_{i=1}^N \eul^{X_i + \delta_iY_i}}{1 + \sum_{i=1}^N \eul^{X_i + \delta_i Y_i}}\biggr]
   & \geq \E\biggl[\frac{\sum_{i=1}^N \eul^{X_i + \delta_i Y_i + (1-\delta_i)(\tau^2/2 - b)}}{1 + \sum_{i=1}^N \eul^{X_i + \delta_i Y_i + (1-\delta_i)(\tau^2/2 - b)}}\biggr]\\
   & \geq \E\biggl[\frac{\sum_{i=1}^N \eul^{X_i + Y_i}}{1 + \sum_{i=1}^N \eul^{X_i + Y_i}}\biggr]\\
   & \geq \E\biggl[\frac{N \eul^{X_1 + Y_1}}{1 + N \eul^{X_1 + Y_1}}\biggr]\\
   & \geq \frac{1}{1 + \E[\eul^{-X_1 - Y_1}]/N}\\
   & = \biggl(1 + \frac{\eul^{\sigma^2/2 + a + \tau^2/2 + b}}{N}\biggr)^{\mathrlap{-1}}.
  \end{align}
  Here, the penultimate line again follows by Jensen's inequality since $t \mapsto 1/(1 + t)$ is convex. \qedwhite 
\end{proof}

\begin{lemma}\label{lem:integration_by_parts_identity}
 Let $\pi(x_{1:T})$ denote some twice differentiable probability density function on $\reals^T$ and write $\smash{\partial_t^i f(x_{1:T})}$ as shorthand for $\smash{\frac{\partial^i}{\partial x_t^i}f(x_{1:T})}$ with $\smash{\partial_t^1 \eqqcolon \partial_t}$. Then
 \[
  \smash{\pi([\partial_t \log \pi]^2) = - \pi(\partial_t^2 \log \pi).} \mendmark
 \]
\end{lemma}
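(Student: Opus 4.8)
The plan is to collapse both sides to a single pointwise identity on the support of $\pi$ and then integrate, the only real work being the vanishing of a boundary term. Throughout, $\pi(f)$ abbreviates $\int_{\reals^T} f(x_{1:T})\pi(x_{1:T})\intDiff x_{1:T}$, and it is understood that the lemma is invoked in a regime (e.g.\ under Assumption~\ref{as:moments_bounded_ehmm} or \ref{as:moments_bounded}) in which $\pi$, $\partial_t\pi$, $\partial_t^2\pi$ and $[\partial_t\log\pi]^2\pi$ are all integrable, so that every integral below is well defined.

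First I would observe that on $\{\pi>0\}$ one has $\partial_t\log\pi = (\partial_t\pi)/\pi$ and $\partial_t^2\log\pi = (\partial_t^2\pi)/\pi - (\partial_t\pi)^2/\pi^2$, whence the pointwise identity
\[
 [\partial_t\log\pi]^2\,\pi + [\partial_t^2\log\pi]\,\pi = \partial_t^2\pi
\]
holds everywhere on $\{\pi>0\}$. Since $\pi$ vanishes off its support, integrating this over $\reals^T$ gives
\[
 \pi([\partial_t\log\pi]^2) + \pi(\partial_t^2\log\pi) = \int_{\reals^T}\partial_t^2\pi(x_{1:T})\intDiff x_{1:T}.
\]

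It then remains to show the right-hand side is zero. By Fubini's theorem I would integrate out the coordinate $x_t$ first: for almost every fixed value of the remaining coordinates, the map $x_t\mapsto\partial_t\pi = \partial_t\pi|_{x_t=0} + \int_0^{x_t}\partial_t^2\pi\,\diff s$ has finite limits as $x_t\to\pm\infty$ (because $x_t\mapsto\partial_t^2\pi$ is integrable), and integrability of $x_t\mapsto\partial_t\pi$ forces both of these limits to equal $0$; hence $\int_\reals\partial_t^2\pi\,\diff x_t = \lim_{R\to\infty}\bigl(\partial_t\pi|_{x_t=R} - \partial_t\pi|_{x_t=-R}\bigr) = 0$. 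Integrating the constant $0$ over the other coordinates yields $\int_{\reals^T}\partial_t^2\pi = 0$, and combining with the previous display gives $\pi([\partial_t\log\pi]^2) = -\pi(\partial_t^2\log\pi)$, as claimed.

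The main obstacle is precisely this last step: justifying the interchange of integrations and the vanishing of the boundary terms. This is automatic given integrability of $\pi$, $\partial_t\pi$ and $\partial_t^2\pi$ (equivalently, finiteness of the relevant Fisher-information-type quantities), which is supplied by the moment assumptions under which the lemma is used; without such hypotheses the statement can fail, so one either records them explicitly or, as here, treats the identity as conditional on the ambient assumptions.
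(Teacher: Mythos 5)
Your proof is correct, and it establishes the same score/Fisher-information identity as the paper, but it is organised differently. The paper first proves the $T=1$ case by integrating $\pi'(\log\pi)'$ by parts (silently discarding the boundary term $\pi'(x)\big|_{-\infty}^{\infty}$), and then handles $T>1$ by passing to the conditional density $\pi_{t|-t}(x_t|x_{-t})$ and applying the one-dimensional result inside the outer integral. You instead work directly in $\reals^T$: you collapse both sides into the pointwise identity $[\partial_t\log\pi]^2\pi+[\partial_t^2\log\pi]\pi=\partial_t^2\pi$ and reduce everything to showing $\int_{\reals^T}\partial_t^2\pi=0$ via Fubini. The two computations are algebraically the same (the paper's boundary term is exactly your $\lim_{R\to\infty}(\partial_t\pi|_{x_t=R}-\partial_t\pi|_{x_t=-R})$), but your version buys two things: it avoids introducing the conditional density, and it actually justifies why the boundary term vanishes — integrability of $x_t\mapsto\partial_t\pi$ together with existence of its limits at $\pm\infty$ (from integrability of $\partial_t^2\pi$) forces those limits to be zero — a step the paper takes for granted. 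You are also right that the lemma as stated does not supply the integrability hypotheses needed for this (or for the paper's own boundary-term cancellation); flagging that the identity is to be read under the ambient moment assumptions \ref{as:moments_bounded_ehmm}/\ref{as:moments_bounded} is appropriate and matches how the lemma is actually used.
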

\begin{proof}
 If $T=1$, using integration by parts, 
  \begin{align}
  \pi([(\log \pi)']^2) 
   & = \int \pi'(x) (\log \pi)'(x) \intDiff x\\
   & = \pi'(x)|_{-\infty}^\infty - \pi((\log \pi)'') = - \pi((\log \pi)''). \label{eq:lem:integration_by_parts_identity}
 \end{align}
 For $T > 1$, we let $\pi_{-t}(x_{-t}) \coloneqq \int_{-\infty}^\infty \pi(x_{1:T}) \intDiff x_t$ denote the marginal density of  $x_{-t} \coloneqq (x_{1:t-1}, x_{t+1:T})$ and let $\pi_{t|-t}(x_t|x_{-t}) \coloneqq \pi(x_{1:T}) /\pi_{-t}(x_{-t})$. Since $\partial_t \log \pi(x_{1:T}) = \partial_t \log \pi_{t|-t}(x_t|x_{-t})$,
 \begin{align}
   \pi([\partial_t \log \pi]^2)
   & = \pi([\partial_t \log \pi_{t|-t}]^2)\\
   & = \int \biggl[\int [\partial_t \log \pi_{t|-t}(x_t|x_{-t})]^2 \pi_{t|-t}(x_t|x_{-t}) \intDiff x_t \biggr] \pi_{-t}(x_{-t}) \intDiff x_{-t}\\
   & = - \pi(\partial_t^2 \log \pi_{t|-t})\\
   & = - \pi(\partial_t^2 \log \pi),
 \end{align}
 where the third line follows by integration by parts in the same way as \eqref{eq:lem:integration_by_parts_identity} with $\pi(x)$ replaced by the conditional distribution $\pi_{t|-t}(x_t|x_{-t})$. 
\end{proof}

\begin{namedproof}[of Proposition~\ref{prop:stability_of_acceptance_rates}]
  By Lemma~\ref{lem:peskun}, it suffices to consider the case without forced move. 
  Under Assumption~\ref{as:independent_over_time}, we have $W_t^n \equiv 0$ for any $t \in [T]$ and any $n \in [N]$, so that 
  \begin{align}
  \ResamplingKernelRandomWalk{\timeIndex|\nTimeSteps}{\nParticles}((v_t, w_{t-1}, a_{\timeIndex-1}), \{\particleIndex\})
  & =
  \selectionFunctionBoltzmann{\particleIndex}(\{ 
   v_t^m
  \}_{\particleIndexAlt = 1}^\nParticles),
\end{align}
 does not depend on $a_{t-1}$ (nor on $w_{t-1}$). As a consequence,
\begin{align}
 \acceptanceRateRwCsmc{\nTimeSteps}{\nParticles}(t)
  & = \prod_{s=t}^T \sum_{n \in [N]}\ExpectationCsmcKernelRandomWalk{\nTimeSteps}{\nParticles}[\selectionFunctionBoltzmann{\particleIndex}(\{ 
   V_s^m
  \}_{\particleIndexAlt = 1}^\nParticles)]
 \geq \prod_{s = t}^T \biggl(1 + \frac{\exp(\ell_s \calI_{s|T})}{N}\biggr)^{\mathrlap{-1}},
 \end{align}
 where the last line follows by Lemma~\ref{lem:lower_bound_on_expectation_of_selection_function}. This completes the proof of the first part of the proposition.

We now prove the lower bound in the case that backward sampling is employed. Since $W_t^n \equiv 0$ for any $t \in [T]$ and any $n \in [N]$ due to Assumption~\ref{as:independent_over_time}, we additionally have that
  \begin{align}
  \BackwardKernelRandomWalk{\timeIndex|\nTimeSteps}{\nParticles}((v_t, w_{t:t-1}, a_{\timeIndex-1}), \{\particleIndex\})
  & =
  \selectionFunctionBoltzmann{\particleIndex}(\{ 
   v_t^m
  \}_{\particleIndexAlt = 1}^\nParticles),
\end{align}
 does not depend on $a_{t-1}$ (nor on $w_{t:t-1}$). As a consequence, 
\begin{align}
 \acceptanceRateRwCsmc{\nTimeSteps}{\nParticles}(t) 
  & = \sum_{n \in [N]}\ExpectationCsmcKernelRandomWalk{\nTimeSteps}{\nParticles}[\selectionFunctionBoltzmann{\particleIndex}(\{ 
   V_t^l
  \}_{l = 1}^\nParticles)]
  \geq \biggl(1 + \frac{\exp(\ell_t \calI_{t|T})}{N}\biggr)^{\mathrlap{-1}},
 \end{align}
 where the last line follows by Lemma~\ref{lem:lower_bound_on_expectation_of_selection_function}. This completes the proof of the second part of the proposition. \qedwhite
\end{namedproof}

\section{Additional simulation results}
\label{app:sec:additional_simulation_results}

\glsreset{ESS}


\textbf{Figure~\ref{fig:ess}} displays the $\Target_{T,D}$-averaged \emph{\gls{ESS}} of the `resampling' and 'backward-sampling' weights at time $t$ for each algorithm. More specifically, let
\begin{align}
 \mathit{ESS}(W^{0:N}) \coloneqq \frac{1}{\sum_{n=0}^N (W^n)^2},
\end{align}
denote the \gls{ESS} for self-normalised importance sampling weights $W_t^{0:N}$ \citep{kong1994sequential}. Below, let $\E$ denote expectation w.r.t.\ $\smash{\State_{1:T} \sim \Target_{T,D}}$.

\begin{enumerate}
 \item The first column shows $\smash{\E\{\ExpectationCsmcKernel{T,D,\State_{1:T}}{N}[\mathit{ESS}(W_t^{0:N})]\}}$,
where,
\begin{align}
 \!\!\!\!\!\!\!\!\!\!\!\!\!\!\!\!\!\!\!\!W_t^n \coloneqq 
 \begin{cases}
  \selectionFunctionBoltzmann{n}(\{\logWeight_t(\Particle_t^\particleIndexAlt) - \logWeight_t(\Particle_t^0)\}_{\particleIndexAlt = 1}^\nParticles), & \text{without backward sampling,}\!\!\!\!\\
  \selectionFunctionBoltzmann{n}(\{
  \logBackwardWeight_\timeIndex(\Particle_{\timeIndex}^{\mathrlap{m}}, \Particle_{\timeIndex+1}^{K_{\timeIndex+1}}) - \logBackwardWeight_\timeIndex(\Particle_{\timeIndex}^0, \Particle_{\timeIndex+1}^{K_{\timeIndex+1}})
  \}_{\particleIndexAlt = 1}^\nParticles), & \text{with backward sampling.}\\
 \end{cases}
\end{align}
\item The second column shows $\smash{\E\{\ExpectationCsmcKernelRandomWalk{T,D,\State_{1:T}}{N}[\mathit{ESS}(\widebar{W}_t^{0:N})]\}}$,
where
\begin{align}
  \!\!\!\!\!\!\!\!\!\!\!\!\!\!\!\!\!\!\!\!\widebar{W}_t^n \coloneqq 
 \begin{cases}
  \selectionFunctionBoltzmann{n}(\{\logWeightRandomWalk_t(\Particle_{\timeIndex-1}^{\mathrlap{A_{\timeIndex-1}^\particleIndexAlt}\,}, \Particle_t^m) - \logWeightRandomWalk_\timeIndex(\Particle_{\timeIndex-1}^0, \Particle_t^0)\}_{\particleIndexAlt = 1}^\nParticles), & \text{without backward sampling,}\!\!\!\!\!\!\!\!\!\!\!\!\!\!\!\!\!\!\!\!\!\!\!\!\!\!\!\!\!\\
  \selectionFunctionBoltzmann{n}(\{
  \logBackwardWeightRandomWalk_\timeIndex(\Particle_{\timeIndex-1}^{\mathrlap{A_{\timeIndex-1}^\particleIndexAlt}\,}, \Particle_{\timeIndex}^{\mathrlap{m}}, \Particle_{\timeIndex+1}^{K_{\timeIndex+1}}) - \logBackwardWeightRandomWalk_\timeIndex(\Particle_{\timeIndex-1}^0, \Particle_{\timeIndex}^0, \Particle_{\timeIndex+1}^{K_{\timeIndex+1}})
  \}_{\particleIndexAlt = 1}^\nParticles), & \text{with backward sampling.}\!\!\!\!\!\!\!\!\!\!\!\!\!\!\!\!\!\!\!\!\!\!\!\!\!\!\!\!\!\\
 \end{cases}
\end{align}
\end{enumerate}
By construction, the \gls{ESS} takes values in $[1, N+1]$. The first column shows that for the \gls{ICSMC} algorithm, the \gls{ESS} degenerates to its smallest possible value, $1$, in high dimensions. In contrast, for the \gls{IRWCSMC} algorithm, the \gls{ESS} converges to a non-trivial limit $>1$.

\begin{figure}[H]
 \vspace{0cm}
 \noindent{}
 \centering  
 \includegraphics{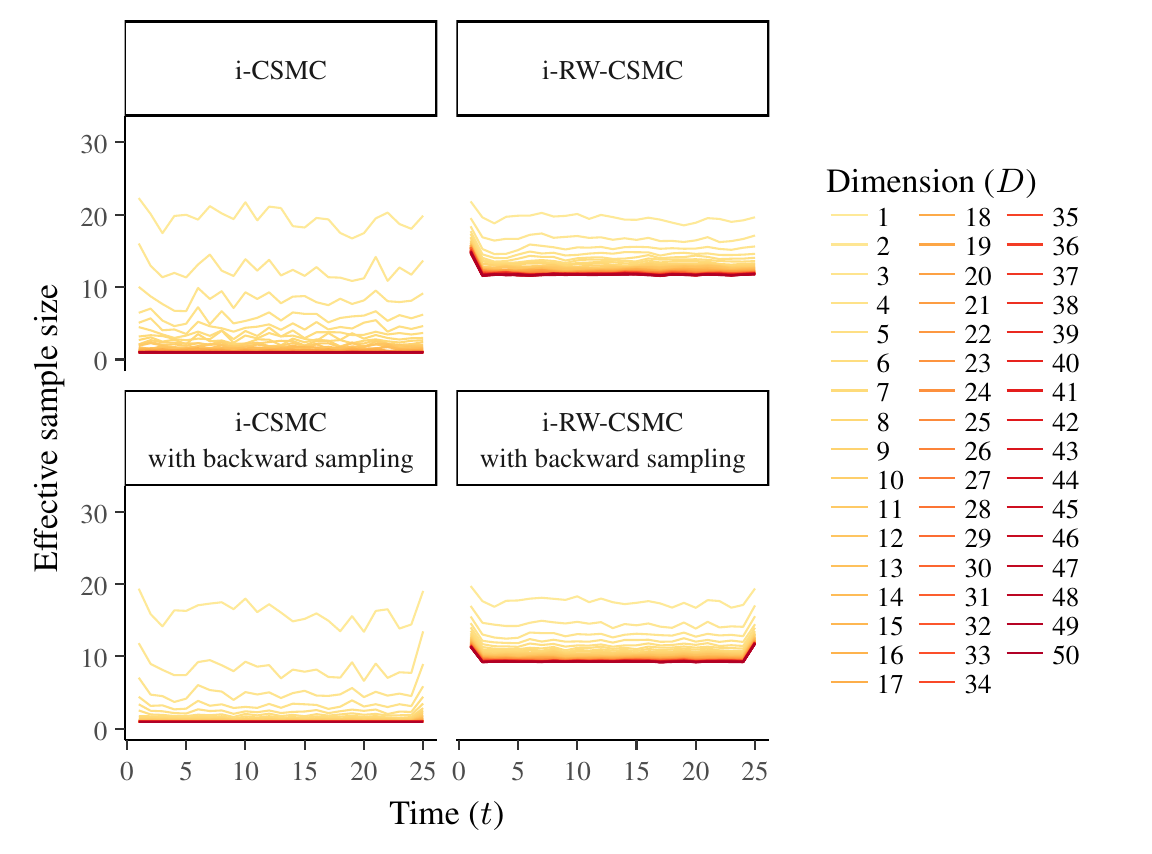}
 \caption{The $\Target_{T,D}$-averaged effective sample sizes of the `resampling weights' (top row) and `backward-sampling weights' (bottom row) as a function of $t$.}
 \label{fig:ess}
\end{figure}

\section{Use for parameter estimation}
\label{app:sec:parameter_estimation}

The Feynman--Kac model is typically specified through a set of parameters $\theta \in \mathrm{\Theta}$, i.e.\ $\Mutation_t = \Mutation_{\theta,t}$, $\mutation_t = \mutation_{\theta,t}$, $\Potential_t = \Potential_{\theta,t}$ and $\Target_{T,D} = \Target_{\theta, T,D}$. In this case, we likewise write \gls{MCMC} kernels induced by Algorithms~\ref{alg:iterated_csmc}, \ref{alg:iterated_rwehmm} and \ref{alg:iterated_rwcsmc}, i.e.\ $S(\ccdot|\ccdot;\theta)$ as $\smash{\InducedIteratedCsmcKernel{T,D}{N} = \InducedIteratedCsmcKernel{\theta, T,D}{N}}$, $\smash{\InducedIteratedEhmmKernelRandomWalk{T,D}{N}=\InducedIteratedEhmmKernelRandomWalk{\theta, T,D}{N}}$ as $\smash{\InducedIteratedCsmcKernelRandomWalk{T,D}{N}=\InducedIteratedCsmcKernelRandomWalk{\theta, T,D}{N}}$ 

If $\theta$ is unknown, then Bayesian inference in this model requires an \gls{MCMC} algorithm that targets the \emph{joint} posterior distribution of the parameters and the latent states which is proportional to $\varpi(\diff\theta \times \diff \state_{1:T}) \propto \mu(\diff \theta) \Target_{\theta,T,D}(\diff \state_{1:T})$, where the probability measure $\mu$ on $\Theta$ is the prior distribution for $\theta$. 

In this section, we discuss two classes pf \gls{MCMC} algorithms which target this joint posterior distribution. The first includes the particle Gibbs sampler from \citet{andrieu2010particle}; the second includes a novel algorithm.

\paragraph*{Particle Gibbs sampler} The first parameter-estimation algorithm is the \emph{particle Gibbs sampler} proposed in \citet{andrieu2010particle}. Its $(l+1)$th iteration is given in Algorithm~\ref{alg:particle_gibbs}, where $R_{\state_{1:T}}(\theta, \diff \vartheta)$ denotes some $\varpi(\diff\theta | \state_{1:T})$-invariant \gls{MCMC} kernel (e.g.\ often a convolution of multiple \gls{MH} updates). 

\noindent\parbox{\textwidth}{
\begin{flushleft}
 \begin{framedAlgorithm}[particle Gibbs sampler] \label{alg:particle_gibbs} Given $(\theta[l], \state_{1:\nTimeSteps}[l]) \in \mathrm{\Theta} \times \spaceState_{T,D}$,
 \begin{enumerate}
 \item \label{alg:particle_gibbs:1} sample $\theta[l+1]  \sim R(\ccdot|\theta[l]; \state_{1:T}[l])$,
 \item \label{alg:particle_gibbs:2} sample $\state_{1:T}[l+1] \sim \smash{\InducedIteratedCsmcKernel{\theta[l+1], T,D}{N}}(\state_{1:T}[l], \ccdot)$. 
 \end{enumerate}
\end{framedAlgorithm}
\end{flushleft}
}

In Step~\ref{alg:particle_gibbs:2} of the particle Gibbs sampler, it is straightforward to instead use the Markov kernel induced by Algorithm~\ref{alg:iterated_rwehmm} or \ref{alg:iterated_rwcsmc}, i.e.\ $\smash{\InducedIteratedEhmmKernelRandomWalk{\theta, T,D}{N}}$ or $\smash{\InducedIteratedCsmcKernelRandomWalk{\theta, T,D}{N}}$. To see this, note that these kernels, too, leave $\Target_{\theta,T,D}(\diff \state_{1:T}) = \varpi(\diff \state_{1:T}|\theta)$ invariant.


\paragraph*{Alternative algorithm} For the \gls{RWEHMM} and \gls{IRWCSMC} an alternative type of parameter-estimation algorithm, in which the $\theta$-updates make use of the information contained in \emph{all} particles $\smash{\Particle_t^n}$ is possible. 

The \gls{RWEHMM}-based algorithm is outlined in Algorithm~\ref{alg:rwehmm_based_parameter_estimation}, where $q_{\particle_{1:T}}(\theta, \diff \theta')$ is some proposal kernel for the parameters which may depend on the values of the particles. It can be viewed as a version of the parameter-estimation algorithms based around embedded \gls{HMM} methods proposed in \citet{shestopaloff2013mcmc} who argued that averaging over multiple particles may allow for larger steps to be taken in the $\theta$-direction compared to conditioning on a particular sequence of latent states. 

\noindent\parbox{\textwidth}{
\begin{flushleft}
 \begin{framedAlgorithm}[alternative \gls{RWEHMM}-based parameter estimation] \label{alg:rwehmm_based_parameter_estimation} Given $(\theta, \state_{1:T}) \coloneqq (\theta[l], \state_{1:\nTimeSteps}[l]) \in \mathrm{\Theta} \times \spaceState_{T,D}$,
 \begin{enumerate}
 \item \label{alg:rwehmm_based_parameter_estimation:1} sample $\Particle_{1:T} = \particle_{1:T}$ via Step~\ref{alg:iterated_rwehmm:1} of Algorithm~\ref{alg:iterated_rwehmm}, 
 
 \item \label{alg:rwehmm_based_parameter_estimation:2} sample $\varTheta' = \theta' \sim q_{\particle_{1:T}}(\theta, \ccdot)$ and set 
 \begin{align}
  r \coloneqq \frac{q_{\particle_{1:T}}(\theta\mathrlap{'}, \theta) \mu(\theta') \sum_{n_{1:T} \in [N]_0^T} \Target_{\theta\mathrlap{'}, T,D}(\particle_1^{n_1}, \dotsc, \particle_T^{n_T})}{q_{\particle_{1:T}}(\theta, \theta') \mu(\theta) \sum_{n_{1:T} \in [N]_0^T} \Target_{\theta, T,D}(\particle_1^{n_1}, \dotsc, \particle_T^{n_T})},
 \end{align}
  
 \item \label{alg:rwehmm_based_parameter_estimation:3} sample $U = u \sim \dUnif_{[0,1]}$,  
 \item \label{alg:rwehmm_based_parameter_estimation:4} if $u \leq r$,
 \begin{itemize}
   \item sample $K_{1:T} = k_{1:T} \sim \xi_{\theta\mathrlap{'},T}(\particle_{1:T}, \ccdot)$,
  \item return $(\theta[l+1], \state_{1:T}[l+1]) \coloneqq (\theta\mathrlap{'}, (\particle_1^{k_1}, \dotsc, \particle_T^{k_T}))$;
 \end{itemize}
 else,
  \begin{itemize}
   \item sample $K_{1:T} = k_{1:T} \sim \xi_{\theta,T}(\particle_{1:T}, \ccdot)$,
  \item return $(\theta[l+1], \state_{1:T}[l+1]) \coloneqq (\theta, (\particle_1^{k_1}, \dotsc, \particle_T^{k_T}))$.
 \end{itemize}


 \end{enumerate}
\end{framedAlgorithm}
\end{flushleft}
}

Since Algorithm~\ref{alg:rwehmm_based_parameter_estimation} relies on the \gls{RWEHMM} scheme, its computational cost again grows quadratically in $N$. This motivates us to propose Algorithm~\ref{alg:rwcsmc_based_parameter_estimation} which only requires $\bo(N)$ operations. To our knowledge, Algorithm~\ref{alg:rwcsmc_based_parameter_estimation} is novel. 
For simplicity, we only state the version of the algorithm with the backward-sampling but without the forced-move extension. Here,  $q_{\particle_{1:T}, a_{1:T-1}}(\theta, \diff \theta')$ is some proposal kernel for the parameters which may depend on the values of the particles and ancestor indices. Likewise, we have used the following notation for the probability of resampling the $n$th particle at time $t$ which was already introduced in Appendix~\ref{app:subsec:joint_law_rwcsmc}:
\begin{align}
  \ResamplingKernelRandomWalk{\theta,\timeIndex,\nDimensions}{\nParticles}((\particle_{\timeIndex-1:\timeIndex}, a_{\timeIndex-1}), \{\particleIndex\})
  & \coloneqq 
   \selectionFunctionBoltzmann{\particleIndex}(\{\logWeightRandomWalk_{\theta, \timeIndex}(\particle_{\timeIndex-1}^{\mathrlap{a_{\timeIndex-1}^\particleIndexAlt}}, \particle_{\timeIndex}^\particleIndexAlt) - \logWeightRandomWalk_{\theta, \timeIndex}(\particle_{\timeIndex-1}^{\mathrlap{a_{\timeIndex-1}^0}}, \particle_{\timeIndex}^0)\}_{\particleIndexAlt = 1}^\nParticles)\\
   & =
      \dfrac{\mutation_{\theta, t}(\particle_{t-1}^{a_{t - 1}^n}, \particle_t^n)\Potential_{\theta, t}(\particle_t^n)}{\sum_{\particleIndexAlt=0}^\nParticles \mutation_{\theta, t}(\particle_{t-1}^{a_{t-1}^\particleIndexAlt}, \particle_t^\particleIndexAlt) \Potential_{\theta, t}(\particle_t^\particleIndexAlt)}.
\end{align}
In addition, $\smash{a_t' \coloneqq {a_{\mathrlap{t}}'}^{0:N} \in [N]_0^{N+1}}$ denote values of a second set of proposed time-$t$ ancestor indices $\smash{A_t' \coloneqq {A_{\mathrlap{t}}'}^{0:N}}$.
 
\noindent\parbox{\textwidth}{
\begin{flushleft}
 \begin{framedAlgorithm}[alternative \gls{IRWCSMC}-based parameter estimation] \label{alg:rwcsmc_based_parameter_estimation} Given $(\theta, \state_{1:T}) \coloneqq (\theta[l], \state_{1:\nTimeSteps}[l]) \in \mathrm{\Theta} \times \spaceState_{T,D}$,
 \begin{enumerate}
 \item \label{alg:rwcsmc_based_parameter_estimation:1} sample $(\Particle_{1:T}, A_{1:T-1}) = (\particle_{1:T}, a_{1:T-1})$ via Step~\ref{alg:iterated_rwcsmc:1} of Algorithm~\ref{alg:iterated_rwcsmc}, 
 
 \item \label{alg:rwcsmc_based_parameter_estimation:2} sample $\varTheta' = \theta' \sim q_{\particle_{1:T}, a_{1:T-1}}(\theta, \ccdot)$, $\smash{A_{1:T-1}' = a_{1:T-1}' \sim \prod_{t=1}^{T-1} \prod_{n=0}^N \ResamplingKernelRandomWalk{\theta\mathrlap{'},t,\nDimensions}{\nParticles}((\particle_{t-1:t}, a_{t-1}'), \{{a_{\mathrlap{t}}'}^n\})}$,
 and set 
 \begin{align}
  r \coloneqq \frac{q_{\particle_{1:T}, a_{1:T-1}'}(\theta\mathrlap{'}, \theta) \mu(\theta') \prod_{t=1}^T \sum_{n=0}^N \mutation_{\theta\mathrlap{'},t}(\particle_{t-1}^{{a_{\mathrlap{t-1}}'}^n}, \particle_t^n) \Potential_{\theta\mathrlap{'},t}(\particle_t^n)}{q_{\particle_{1:T}, a_{1:T-1}}(\theta, \theta') \mu(\theta) \prod_{t=1}^T \sum_{n=0}^N \mutation_{\theta\mathrlap{'},t}(\particle_{t-1}^{a_{t-1}^n}, \particle_t^n) \Potential_{\theta\mathrlap{'},t}(\particle_t^n)},
 \end{align}
  
 \item \label{alg:rwcsmc_based_parameter_estimation:3}  sample $U = u \sim \dUnif_{[0,1]}$,  
 \item \label{alg:rwcsmc_based_parameter_estimation:4} if $u \leq r$,
 \begin{itemize}
 
    \item sample $\smash{\OutputParticleIndex_\nTimeSteps =\outputParticleIndex_\nTimeSteps \in [\nParticles]_0}$ with probability $\smash{\ResamplingKernelRandomWalk{\theta\mathrlap{'},T,\nDimensions}{\nParticles}((\particle_{T-1:T}, a_{T-1}'), \{k_T\})}$,
  
   \item  for $t = T-1,\dotsc,1$, set $\smash{\OutputParticleIndex_\timeIndex =\outputParticleIndex_\timeIndex \coloneqq {a_{\mathrlap{\timeIndex}}'}^{\outputParticleIndex_{\timeIndex+1}}}$,
  

  \item return $(\theta[l+1], \state_{1:T}[l+1]) \coloneqq (\theta\mathrlap{'}, (\particle_1^{k_1}, \dotsc, \particle_T^{k_T}))$;
 \end{itemize}
 else,
  \begin{itemize}
    \item sample $\smash{\OutputParticleIndex_\nTimeSteps =\outputParticleIndex_\nTimeSteps \in [\nParticles]_0}$ with probability $\smash{\ResamplingKernelRandomWalk{\theta,T,\nDimensions}{\nParticles}((\particle_{T-1:T}, a_{T-1}), \{k_T\})}$,
  \item for $t = T-1,\dotsc,1$, sample $K_t = k_t \in [N]_0$ with probability
  \begin{align}
   \frac{\mutation_{\theta,t}(\particle_{\timeIndex - 1}^{a_{\timeIndex - 1}^{\smash{\outputParticleIndex_\timeIndex}}}, \particle_\timeIndex^{\outputParticleIndex_\timeIndex}) \Potential_{\theta,t}(\particle_\timeIndex^{\outputParticleIndex_\timeIndex}) \mutation_{\theta,t+1}(\particle_\timeIndex^{\outputParticleIndex_\timeIndex}, \particle_{\timeIndex+1}^{\outputParticleIndex_{\timeIndex+1}})}{\sum_{\particleIndexAlt=0}^\nParticles \mutation_{\theta,t}(\particle_{\timeIndex - 1}^{\mathrlap{a_{\timeIndex - 1}^\particleIndexAlt}}, \particle_\timeIndex^\particleIndexAlt) \Potential_{\theta,t}(\particle_\timeIndex^\particleIndexAlt) \mutation_{\theta, \timeIndex+1}(\particle_\timeIndex^\particleIndexAlt, \particle_{\timeIndex+1}^{\outputParticleIndex_{\timeIndex+1}})},
  \end{align}

  \item return $(\theta[l+1], \state_{1:T}[l+1]) \coloneqq (\theta, (\particle_1^{k_1}, \dotsc, \particle_T^{k_T}))$.
 \end{itemize}


 \end{enumerate}
\end{framedAlgorithm}
\end{flushleft}
}

Algorithm~\ref{alg:rwcsmc_based_parameter_estimation} could potentially be improved by employing (conditional) systematic rather than multinomial resampling. In this case, the ancestor indices in both the numerator and denominator can be drawn based on the same uniform random number at each time step. Correlating the averages in the numerator and denominator of the acceptance ratio in this fashion may lead to better scaling with $T$ as in correlated pseudo-marginal methods \citep{deligiannidis2018correlated}.

%
%
%

\end{appendix}
\end{document}